\title{Meaningfulness and Genericity in a Subsuming Framework} 
\author{Delia Kesner}
    {Université Paris Cité - CNRS - IRIF, France}
    {}
    {0000-0003-4254-3129}
    {}
    \author{Victor Arrial}
    {Université Paris Cité - CNRS - IRIF, France}
    {}
    {0000-0002-0469-4279}
    {}
\author{Giulio Guerrieri}
    {University of Sussex, Department of Informatics, Brighton, United Kingdom}
    {g.guerrieri@sussex.ac.uk}
    {0000-0002-1607-7403}
    {}
\authorrunning{D. Kesner, V. Arrial, G. Guerrieri} 
\keywords{Lambda calculus, Solvability, Meaningfulness, Inhabitation, Genericity}
\NewDocumentCommand{\CBNSymb}{}{{\tt CBN}\xspace}
\NewDocumentCommand{\CBVSymb}{}{{\tt CBV}\xspace}
\NewDocumentCommand{\CBPVSymb}{}{{\tt CBPV}\xspace}
\NewDocumentCommand{\BANGSymb}{}{{\tt BANG}\xspace}
\begin{document}

\maketitle

\begin{abstract}
    This paper studies the notion of meaningfulness for a unifying
framework called \bangCalculusSymb-calculus, which subsumes both
call-by-name (\cbnCalculusSymb) and call-by-value (\cbvCalculusSymb).
We first characterize meaningfulness in \bangCalculusSymb by means of
typability and inhabitation in an associated non-idempotent
intersection type system previously proposed in the literature. We
validate the proposed notion of meaningfulness by showing two
properties (1) consistency of the theory $\mathcal{H}$ equating
meaningless terms and (2) genericity, stating that meaningless
subterms have no bearing on the significance of meaningful terms. The
theory $\mathcal{H}$ is also shown to have a unique consistent and
maximal extension. Last but not least, we show that the notions of
meaningfulness and genericity in the literature for \cbnCalculusSymb
and \cbvCalculusSymb are subsumed by the respectively ones proposed
here for the \bangCalculusSymb-calculus.

\end{abstract}


\section{Introduction}
\label{sec:Introduction}

A common line of research in logic and theoretical computer science is
to find unifying frameworks that subsume different paradigms, systems
or calculi. Examples are call-by-push-value~\cite{Levy99,Levy04},
polarized system LU~\cite{Girard93}, linear
calculi~\cite{MaraistOderskyTurnerWadler95,MaraistOderskyTurnerWadler99,RonchiRoversi97},
bang-calculus~\cite{Ehrhard16,EhrhardGuerrieri16,BucciarelliKesnerRiosViso20,BucciarelliKesnerRiosViso23},
system L~\cite{Munch09,CurienMunch10}, ecumenical
systems~\cite{Prawitz15}, monadic calculus~\cite{Moggi89,Moggi91}, and
others~\cite{RoccaP04,EspiritoPintoUustalu19,BakelTyeWu23}.

The relevance of these unifying frameworks lies in the range of
properties and models they encompass. Finding \emph{unifying  and
simple primitives, tools and techniques} to reason about properties of
different systems is challenging, and provides
a deeper and more
abstract understanding of these properties. The advantages of this
kind of approach are numerous, for instance the \emph{several-for-one
deal}: study a property in a unifying framework gives
appropriate intuitions and hints  for free
for all the subsumed systems. The aim of this paper is to go beyond
the state of the art in a framework subsumig the \emph{call-by-name}
and \emph{call-by-value} evaluation mechanisms, by unifying the
notions of \emph{meaningful} (and \emph{meaningless}) programs. 

\smallskip
\noindent\textbf{Call-by-name and call-by-value.} Every programming
language implements a particular evaluation strategy, specifying when
and how parameters are evaluated during function calls. For example,
in call-by-value (\CBVSymb), the argument is evaluated before being
passed to the function, while in call-by-name (\CBNSymb) the argument
is passed immediately to the function body, so that it may never be
evaluated, or may be re-evaluated several times. These models of
computation serve as the basis for many theoretical and practical
studies in programming languages and proof assistants, such as OCaml,
Haskell, Coq, Isabelle, etc.

The \CBNSymb strategy has garnered significant attention in the
literature and is generally perceived as well-established. In
contrast, the  \CBVSymb strategy has received limited attention.
Indeed, despite their similarities, \CBNSymb and \CBVSymb strategies
have predominantly been studied independently, leading to a fragmented
research. This approach not only duplicates research efforts—once for
\CBNSymb and again for \CBVSymb— but also generally results in ad-hoc
methods for dealing with the \CBVSymb case that are naively adapted
from the  \CBNSymb one.

Understanding the (logical) duality between \CBNSymb and \CBVSymb~(\eg
\cite{CurienHerbelin00}) marked a significant step towards properly
unifying these models. It paved the way for the emergence of
Call-by-Push-Value (\CBPVSymb), a unifying framework introduced by
P.B. Levy~\cite{Levy99} which \emphit{subsumes}, among others,
\CBNSymb and \CBVSymb\ denotational and operational semantics thanks
to the distinction between \emph{computations} and \emph{values},
according to the slogan ``a value is, a computation does''. This
framework attracts growing attention: proving advanced properties of a
single \emph{unifying paradigm}, and subsequently instantiate them for
a \emph{wide range} of computational models.

\smallskip
\noindent\textbf{The distant \BANGSymb-calculus.}
Drawing inspiration from Girard's Linear Logic (LL)~\cite{Girard87}
and the interpretation~\cite{Ehrhard16} of \CBPVSymb into LL, Ehrhard
and Guerrieri \cite{EhrhardGuerrieri16} introduced an (untyped)
restriction of \CBPVSymb, named \BANGSymb-calculus, already capable of
subsuming both \CBNSymb and \CBVSymb. It is obtained by enriching the
$\lambda$-calculus with two modalities $\oc$ and its dual $\derSymb$.
The modality $\oc$ actually plays a twofold role: it freezes the
evaluation of subterms (called \emph{thunk} in \CBPVSymb), and it
marks what can be duplicated or erased during evaluation (\ie copied
an arbitrary number of times, including zero). The modality $\derSymb$
annihilates the effect of $\oc$, effectively restoring computation and
eliminating duplicability. Embedding \CBNSymb or \CBVSymb into the
\BANGSymb-calculus simply consists in decorating $\lambda$-terms with
$\oc$ and $\derSymb$, thereby forcing one model of computation or the
other one. Thanks to these elementary modalities and embeddings, the
Bang Calculus eases the identification of shared behaviors and
properties of \CBNSymb and \CBVSymb, encompassing both syntactic and
semantic aspects of them.

The original \BANGSymb-calculus~\cite{EhrhardGuerrieri16} uses
some permutation rules, similar
to the ones used in \cite{Regnier94,CarraroGuerrieri14}, that unveil
hidden redexes and unblock reductions that otherwise would be stuck.
These permutation rules make the calculus \emph{adequate},
preventing that some normal forms are observationally equivalent to
non-terminating terms. A major
drawback is that the resulting combined reduction is not
confluent (Page 6 in~\cite{EhrhardGuerrieri16}). As an alternative,
the \emph{distant Bang calculus} (\bangCalculusSymb)~\cite{BucciarelliKesnerRiosViso20,BucciarelliKesnerRiosViso23}
is both adequate and confluent.  This is achieved by
enriching the syntax with \emph{explicit substitutions}, in the vein of
Accattoli and Kesner's linear substitution
calculus~\cite{AccattoliKesner10,AccattoliKesner12bis,Accattoli12,AccattoliBonelliKesnerLombardi14}
(generalizing in turn Milner's calculus~\cite{Milner2006,KOCcorr}),
thanks to rewrite rules that act \emph{at a distance}, so that permutation
rules are no longer needed. 

In this paper, we focus on \bangCalculusSymb, and its relations with
\cbnCalculusSymb \cite{AccattoliKesner12bis,Accattoli12} and
\cbvCalculusSymb \cite{AccattoliPaolini12}, which are \emph{distant}
adequate variants of the \CBNSymb and \CBVSymb-calculi. This unifying
framework has proven fruitful, subsuming numerous \cbnCalculusSymb and
\cbvCalculusSymb properties through their associated embedding, as for
instance big step semantics: evaluating the result from the
\cbnCalculusSymb/\cbvCalculusSymb embedding of a given program $t$
with the \bangCalculusSymb model actually corresponds to the embedding
of the result of evaluating the original program $t$ with the
\cbnCalculusSymb/\cbvCalculusSymb model. In other words,
\bangCalculusSymb is a language that breaks down the \cbnCalculusSymb
and \cbvCalculusSymb paradigms into elementary primitives.

Let us now review the state of the art by discussing some advanced
properties of programming languages that have been studied in the
literature by using the unifying approach \bangCalculusSymb.
Some of these results, including this work, strongly rely on
semantical tools such as quantitative types. To ensure clarity
regarding the state of the art, let us briefly discuss in first place
the main ideas behind quantitative types.

\smallskip
\noindent\textbf{Quantitative Type Systems.} \emphit{Intersection type
systems}~\cite{CDC78,CoppoDezani:NDJoFL-80} increase the typability
power of simply typed $\lambda$-terms by introducing a new
\emphit{intersection} type constructor $\wedge$, which is, in
principle, associative, commutative and \emphit{idempotent} ({\ie}
$\sigma \wedge \sigma = \sigma$). Intersection types allow terms to
have different types simultaneously, \eg a term $t$ has type $\sigma
\wedge \tau$ whenever $t$ has both the type $\sigma$ and the type
$\tau$. They turns out to constitute a powerful tool to reason about
\emphit{qualitative} properties of programs. For example, different
notions of normalization can be characterized using intersection
types~\cite{Pottinger80,CoppoDezaniVenneri81}, in the sense that a
term $t$ is typable in a given system if and only if $t$ is
normalizing (as a consequence, typability in these systems is
undecidable). Removing idempotence \cite{Gardner94,deCarvalho07} gives
rise to \emphit{non-idempotent} type systems for the
$\lambda$-calculus where a term of type $\sigma \land \sigma \land
\tau$ can  be seen as a resource that is used exactly once as a data
of type $\tau$ and twice as a data of type $\sigma$.  Interestingly,
such type systems do not only provide qualitative characterizations of
different operational properties, but also \emphit{quantitative} ones:
\eg a term $t$ is still typable if and only if $t$ is normalizing, but
additionally, any type derivation of $t$ gives an \emphit{upper bound}
to the execution time for $t$ (the number of steps to reach a normal
form)~\cite{deCarvalho18}. These upper bounds can be further refined
into \emphit{exact measure} through the use of \emphit{tight
non-idempotent typing systems}, as pioneered by~\cite{AccattoliGK20}.

\smallskip
\noindent\textbf{State of the Art.}  
This paper contributes to a broader initiative aimed at consolidating
the theory of \cbnCalculusSymb and \cbvCalculusSymb, by unifying them
into \bangCalculusSymb. Several results have already been factorized
and generalized in this framework, we now revisit some of them.

In \cite{GuerrieriManzonetto18} it is shown that the interpretation of
a term $t$ in any denotational model of \CBNSymb/\CBVSymb obtained
from LL is included in  the interpretation of the \CBNSymb/\CBVSymb
translation of $t$ in any denotational model of \BANGSymb obtained
from LL. The reverse inclusion also holds for  \CBNSymb but not for
\CBVSymb. In particular, these results apply to \emph{typability} in a
non-idempotent intersection type system inspired by LL. Indeed, typing
is preserved by Girard's translations, meaning that if a term is
typable in the \CBNSymb/\CBVSymb type system, then its
\CBNSymb/\CBVSymb translation is typable in the type system
$\bangBKRVTypeSys$ for \BANGSymb, using the same types.  The converse also holds for \CBNSymb but not for \CBVSymb. In
\cite{BucciarelliKesnerRiosViso20,BucciarelliKesnerRiosViso23}, the
\CBVSymb typing system is modified so that the reverse implication also
holds. Moreover an extension of Girard's \CBNSymb translation to
\cbnCalculusSymb and a \emph{new} \CBVSymb translation to
\cbvCalculusSymb are proposed. Similar typing preservation results
have been obtained in~\cite{KesnerViso22} for the translations
in~\cite{BucciarelliKesnerRiosViso20,BucciarelliKesnerRiosViso23}, but
for the more precise notion of tight typing
introduced~in~\cite{AccattoliGK20}.

Retrieving \emph{dynamic} properties from \BANGSymb into \CBNSymb and
\CBVSymb turns out to be a more intricate task, especially in their
\emph{adequate} (distant)
variant~\cite{BucciarelliKesnerRiosViso20,FaggianGuerrieri21,BucciarelliKesnerRiosViso23}.

In \cite{GuerrieriManzonetto18} it is also shown that \CBNSymb and
\CBVSymb can be simulated by \emphit{reduction} in \BANGSymb through
Girard's original translations. But the \CBVSymb translation fails to
preserve \emphit{normal forms}, as some \CBVSymb normal forms
translate to reducible terms in \BANGSymb. This issue is solved in
\bangCalculusSymb~\cite{BucciarelliKesnerRiosViso20,BucciarelliKesnerRiosViso23},
thanks to the \emph{new} \CBVSymb translation to \cbvCalculusSymb
mentioned before. In the end, reductions \emph{and} normal forms are
preserved by both the \CBNSymb and the new \CBVSymb translations.

Even if \cbnCalculusSymb and \cbvCalculusSymb can be both simulated by
\emphit{reduction} in \bangCalculusSymb, the converse, known as
\emph{reverse simulation}, fails: a \bangCalculusSymb reduction
sequence from a term in the image of the \cbvCalculusSymb embedding
may not correspond to a valid reduction sequence in \cbvCalculusSymb.
Yet another new \cbvCalculusSymb translation is proposed in
\cite{ArrialGuerrieriKesner24bis} so that simulation and reverse
simulation are now verified.

Another major contribution concerns the \emphit{inhabitation} problem:
given an environment $\Gamma$ (a type assignment for variables) and a
type $\sigma$, decide whether there is a term $t$ that can be typed
with $\sigma$ under the environment $\Gamma$. While inhabitation was
shown~\cite{Urzyczyn99} to be \emphit{undecidable} in \CBNSymb for
idempotent intersection type systems, it turns out to be
\emphit{decidable}~\cite{bkdlr14,BucciarelliKR18} in the
non-idempotent setting. Decidability of the inhabitation problem leads
to the development of automatic tools for type-based \emphit{program
synthesis}~\cite{MannaWaldinger80,BCDDLR17}, whose goal is to
construct a program ---the term $t$--- that
satisfies~some~high-level~formal~specification ---expressed as a type
$\sigma$ with some assumptions described by the environment $\Gamma$.
It has been proved in~\cite{ArrialGuerrieriKesner23} that the
algorithms deciding the inhabitation problem for \cbnCalculusSymb and
\cbvCalculusSymb can be inferred from the corresponding one for
\bangCalculusSymb, thus providing a unified solution to this
relevant~problem.

\smallskip
\noindent\textbf{Meaningfulness and Genericity.} In this work, we aim
to unify the notions of  meaningfulness and genericity  in
\cbnCalculusSymb and \cbvCalculusSymb so as to derive them from the
respective ones~in~\bangCalculusSymb.

A naive approach to set a semantics for the pure untyped
$\lambda$-calculus is to define the meaning of a $\beta$-normalizing
$\lambda$-term as its normal form, and equating all $\lambda$-terms
that do not $\beta$-normalize. The underlying idea is that, as
$\beta$-reduction represents evaluation and a normal form stands for
its outcome, all non-$\beta$-normalizing $\lambda$-terms are
considered as meaningless. However, this simplistic approach is
flawed, as thoroughly discussed in \cite{barendregt84nh}. For example, 
any $\lambda$-theory equating all non-$\beta$-normalizing $\lambda$-terms
is inherently inconsistent ---it effectively equates all
$\lambda$-terms, not just the meaningless ones!

Alternatively, during the 70s, Wadsworth
\cite{Wadsworth71,Wadsworth76} and Barendregt
\cite{Barendregt71,Barendregt73,Barendregt75,barendregt84nh} showed
that the meaningful (\CBNSymb) $\lambda$-terms can be identified with
the \emph{solvable} ones. Solvability is defined in a rather technical
way: a $\lambda$-term $t$ is \emph{solvable} if there is a special
kind of context, called \emph{head} context ${\tt H}$, sending $t$ to
the identity function $\Id = \lambda z.z$, meaning that ${\tt
H}\cbnCtxtPlug{t}$ $\beta$-reduces to $\Id$. Roughly, a solvable
$\lambda$-term $t$ may be divergent, but its diverging subterms can be
eliminated by supplying the right  arguments to $t$ via an appropriate
interaction with a suitable head context ${\tt H}$. For instance, in
$\CBNSymb$, $x\Omega$ is divergent but solvable using the head context
${\tt H} = \app{(\abs{x}{\Hole})}{(\abs{y}{\Id})}$. It turns out that
\emph{unsolvable} $\lambda$-terms constitutes a strict subset of the
non-$\beta$-normalizing ones. 
Moreover, the smallest $\lambda$-theory that equates all unsolvable
$\lambda$-terms is \emph{consistent} (\ie it does not equate all
terms).  In Barendregt's book \cite{barendregt84nh}, theses results
relies on a keystone property known as \emphit{genericity}, which
states that meaningless subterms are computationally irrelevant —in
the sense that they do not play any role— in the evaluation of
$\beta$-normalizing terms. Formally, if $t$ is \emph{unsolvable} and
${\tt C}\cbnCtxtPlug{t}$ $\beta$-reduces to \emph{some}
$\beta$-normal term $u$ for some context ${\tt C}$, then ${\tt
C}\cbnCtxtPlug{s}$ $\beta$-reduces to $u$ for \emph{every}
$\lambda$-term~$s$. This property stands as a fool guard that the
choice of meaningfulness is adequate.

Meaningfulness was also studied for first order rewriting
systems~\cite{KennawayOV99} and other strategies of the
$\lambda$-calculus~\cite{RoccaP04}. Notably, finding the correct
notion of meaningfulness for \CBVSymb has been a
challenge~\cite{AccattoliGuerrieri22,ArrialGuerrieriKesner24}.
Similarly, a notable extension of the \cbnCalculusSymb was
studied~\cite{BucciarelliKesnerRonchi21} in the framework of a
$\lambda$-calculus equipped with pattern matching for pairs. The use
of different data structures in the language ---functions and pairs---
makes the meaningfulness problem more challenging. Indeed, it was
shown that meaningfulness cannot be characterized only by means of
typability alone, as in \CBNSymb and \CBVSymb, but also requires  some
additional conditions stated in terms of the inhabitation problem
mentioned before. This result for the $\lambda$-calculus with patterns
inspired the characterization of meaningfulness that we provide in
this paper. Genericity for \cbnCalculusSymb and the more subtle case
of \cbvCalculusSymb was recently proved
in~\cite{ArrialGuerrieriKesner24}.

\smallskip
\noindent\textbf{Our Contributions.} We first define meaningfulness
for \bangCalculusSymb, and provide a characterization by means of
typability \emph{and} inhabitation. As a second contribution, we
validate this notion of meaningfulness twofold: meaningless terms
enjoy genericity, and the theory $\mathcal{H}$ obtained by equating
all the meaningless terms is consistent. Moreover, we show that
$\mathcal{H}$ admits a unique maximal consistent extension. Last but
not least, as a third contribution, we show that the notions of
meaningfulness in the literature for \cbnCalculusSymb and
\cbvCalculusSymb are subsumed by the one proposed here for
\bangCalculusSymb. We also obtained genericity for \cbnCalculusSymb
and \cbvCalculusSymb as a consequence of the genericity property for
\bangCalculusSymb.


\smallskip
\noindent\textbf{Roadmap.} \Cref{sec:dbang} recalls \bangCalculusSymb
and its quantitative type system $\bangBKRVTypeSys$.
\Cref{sec:Meaningfulness} defines meaningfulness for
\bangCalculusSymb, and characterizes it in terms of typability and
inhabitation in the type system $\bangBKRVTypeSys$.
\Cref{sec:typed-genericity} addresses genericity, while
\Cref{sec:subsuming} establishes a precise relationship between
meaningless and genericity in \cbnCalculusSymb/\cbvCalculusSymb and
their corresponding notions in \bangCalculusSymb. \Cref{s:conclusion}
discusses future and related work and concludes.

\section{The \bangCalculusSymb-Calculus}
\label{sec:dbang}

\subsection{Syntax and Operational Semantics}
\label{subsec:lambda!}

We introduce here the term syntax of the \bangCalculusSymb-calculus
\cite{BucciarelliKesnerRiosViso20,BucciarelliKesnerRiosViso23}.
Given a countably infinite set $\bangSetVariables$ of variables $x, y,
z, \dots$, the set of terms $\bangSetTerms$ is given by the following
inductive definition:
\begin{equation*}
	\textbf{(Terms)} \;\;t, u, s \;\coloneqq\;
            x \in \bangSetVariables
        \vsep \app[\,]{t}{u}
	    \vsep \abs{x}{t}
        \vsep \oc t
        \vsep \der{t}
        \vsep t\esub{x}{u}
\end{equation*}

The set $\bangSetTerms$ includes $\lambda$-terms (\defn{variables}
$x$, \defn{abstractions} $\abs{x}{t}$ and \defn{applications} $tu$) as
well as three additional constructors: a \defn{closure}
$t\esub{x}{u}$ representing a pending  \defn{explicit substitution
(ES)} $\esub{x}{u}$ on a term $t$, a \defn{bang} $\oc t$ to freeze the
execution of $t$, and a \defn{dereliction} $\der{t}$ to fire again the
frozen term $t$. The \defn{argument} of an application
$\app[\,]{t}{u}$ (resp. a closure $t\esub{x}{u}$) is the subterm $u$.
From now on, we set $I \coloneqq \abs{z}{z}$, $\Delta \coloneqq
\abs{x}{\app{x}{\oc x}}$, and $\Omega \coloneqq
\app{\Delta}{\oc\Delta}$.

Abstractions $\abs{x}{t}$ and closures $t\esub{x}{u}$ bind the
variable $x$ in the term $t$. \defn{Free} and
\defn{bound} variables are defined as expected, in particular
$\freeVar{\abs{x}{t}} \coloneqq \freeVar{t} \setminus \{x\}$ and
$\freeVar{t\esub{x}{u}} \coloneqq \freeVar{u} \cup (\freeVar{t}
\setminus \{x\})$. The usual notion of $\alpha$-conversion
\cite{barendregt84nh} is extended to $\bangSetTerms$,
and terms are identified up to $\alpha$-conversion. We denote by
$t\isub{x}{u}$ the usual (capture avoiding) meta-level substitution of
the term $u$ for all free occurrences of the variable $x$ in the
term~$t$.

The set of \defn{\listTxt contexts} $(\bangLCtxt)$, \defn{\surfaceTxt
contexts} $(\bangStratCtxt)$ and \defn{\fullTxt contexts}
$(\bangFCtxt)$, can be seen as terms containing exactly one
\defn{hole} $\Hole$, they are inductively defined as follows:
\begin{equation*}
	\begin{array}{rcl}
    \bangLCtxt &\Coloneqq& \Hole \vsep
		\bangLCtxt\esub{x}{t}
\\
	\bangStratCtxt &\Coloneqq& \Hole
		\vsep \app[\,]{\bangStratCtxt}{t}
		\vsep \app[\,]{t}{\bangStratCtxt}
		\vsep \abs{x}{\bangStratCtxt}
		\vsep \der{\bangStratCtxt}
		\vsep \bangStratCtxt\esub{x}{t}
		\vsep t\esub{x}{\bangStratCtxt}
\\
    \bangFCtxt &\Coloneqq& \Hole
        \vsep \app[\,]{\bangFCtxt}{t}
        \vsep \app[\,]{t}{\bangFCtxt}
        \vsep \abs{x}{\bangFCtxt}
        \vsep \der{\bangFCtxt}
        \vsep \bangFCtxt\esub{x}{t}
        \vsep t\esub{x}{\bangFCtxt}
        \vsep \oc\bangFCtxt
	\end{array}
\end{equation*}
\listTxt^ contexts and \surfaceTxt contexts are special cases of
\fullTxt contexts. The hole can occur everywhere in \fullTxt contexts,
while it is forbidden in \surfaceTxt contexts under a $\oc$. For
example, $\app[\,]{y}{(\abs{x}{\Hole})}$ is a \surfaceTxt context
hence a full context, while $(\oc\Hole)\esub{x}{\Id}$ is a \fullTxt
context but not a \surfaceTxt one. We write $\bangFCtxt<t>$ for the
term obtained by replacing the hole in $\bangFCtxt$ by the term $t$.

The following \defn{rewrite rules} are the base components of the
reductions of \bangCalculusSymb. Any term having the shape of the
left-hand side of one of these three rules is called a \defn{redex}.
\begin{equation*}
	\app{\bangLCtxt<\abs{x}{t}>}{u}
		\mapstoR[\bangSymbBeta]
	\bangLCtxt<t\esub{x}{u}>
\qquad
	t\esub{x}{\bangLCtxt<\oc u>}
		\mapstoR[\bangSymbSubs]
	\bangLCtxt<t\isub{x}{u}>
\qquad
	\der{\bangLCtxt<\oc t>}
		\mapstoR[\bangSymbSubs]
	\bangLCtxt<t>
\end{equation*}

Rule $\bangSymbBeta$ (\resp $\bangSymbSubs$) is assumed to be capture
free: no free variable of $u$ (resp. $t$) is captured by the list
context $\bangLCtxt$. The rule $\bangSymbBeta$ fires a standard
$\beta$-redex and generates an ES. The rule $\bangSymbSubs$ operates a
substitution provided its argument is a bang. The rule $\bangSymbBang$
opens a bang. All these rewrite rules act \emphit{at a
distance}~\cite{AccattoliKesner10,AccattoliKesner12bis,AccattoliBonelliKesnerLombardi14}:
the main constructors involved in the rule can be separated by a
finite---possibly empty---list context $\bangLCtxt$ of ES. This
mechanism unblocks redexes that otherwise would be stuck, \eg
$\app{(\abs{x}{x})\esub{y}{w}}{\oc z} \!\mapstoR[\bangSymbBeta]\!
x\esub{x}{\oc z}\esub{y}{w}$ fires a $\beta$-redex where $\bangLCtxt =
\Hole\esub{y}{w}$ is the list context in between the function
$\abs{x}{x}$ and the argument $\oc z$.

The \emphasis{\surfaceTxt reduction} $\bangArr_S$ is the \surfaceTxt
closure of the three rewrite rules $\bangSymbBeta$, $\bangSymbSubs$
and $\bangSymbBang$, \ie\ $\bangArr_S$ only fires redexes in
\surfaceTxt contexts (not under bang). Similarly, the
\emphasis{\fullTxt reduction} $\bangArr_F$ is the \fullTxt closure of
the three rewrite rules, so that $\bangArr_F$ reduces under full
contexts and thus the bang looses its freezing behavior. For example,
\begin{equation*}
	\app{(\abs{x}{\oc\der{\oc x}})}{\oc y}
		\;\bangArr_S\;
	(\oc\der{\oc x})\esub{x}{\oc y}
		\;\bangArr_S\;
	\oc (\der{\oc y})
		\;\bangArr_F\;
	\oc y
\end{equation*}

The first two steps are also $\bangArr_F$-steps, the last one is not a
$\bangArr_S$-step. We denote by $\bangArr*_S$ the reflexive-transitive
closure of $\bangArr_S$, and similarly for $\bangArr_F$. 
	A reduction $\bangArr_{\R}$ is
  \emphasis{confluent} if for all $t, u_1, u_2$ such that $t\bangArr^*_{\R}u_1$ and
  $t\bangArr^*_{\R}u_2$, there is $s$ such that $u_1\bangArr^*_{\R} s$ and
  $u_2\bangArr^*_{\R} s$.

\begin{restatable}{theorem}{RecBangSurfaceFullConfluence}
    \LemmaToFromProof{Bang_Surface_Full_Confluence}%
    The reductions $\bangArr_S$ and $\bangArr_F$ are
    confluent.
\end{restatable}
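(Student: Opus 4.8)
The plan is to use the Tait--Martin-L{\"o}f parallel reduction method, in Takahashi's streamlined variant, carried out uniformly for both reductions. Fix $X \in \{S,F\}$. First I would introduce a \emph{parallel reduction} $\Rrightarrow_X$ that contracts in one step a whole (possibly empty) family of $X$-redexes of a term: it is defined by induction on terms, with congruence clauses respecting the context discipline of $X$ --- so $\Rrightarrow_S$ never recurses below a $\oc$ whereas $\Rrightarrow_F$ does --- together with redex-contracting clauses for $\bangSymbBeta$, $\bangSymbSubs$ and $\bangSymbBang$ in which the surrounding list context, the abstraction body or stored term, and the argument may all be reduced in parallel, matching the at-a-distance shapes $\app{\bangLCtxt<\abs{x}{t}>}{u}$, $s\esub{x}{\bangLCtxt<\oc u>}$ and $\der{\bangLCtxt<\oc t>}$. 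Then I would check the two sandwich inclusions $\bangArr_X \subseteq {\Rrightarrow_X} \subseteq \bangArr*_X$ by routine inductions; they yield $\Rrightarrow_X^{*} = \bangArr*_X$, so confluence of $\bangArr_X$ reduces to confluence of $\Rrightarrow_X$.

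To obtain the latter I would prove that $\Rrightarrow_X$ has the diamond property, following Takahashi: define the \emph{full parallel reduct} $t^{\bullet}$, obtained by contracting all $X$-redexes of $t$ at once (again minding the redexes uncovered through list contexts), and establish the triangle property ``$t \Rrightarrow_X u$ implies $u \Rrightarrow_X t^{\bullet}$'' by induction on $t$. Taking $s := t^{\bullet}$ gives the diamond property of $\Rrightarrow_X$, which in turn implies its confluence and hence that of $\bangArr_X$.

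The technical core is a package of stability lemmas for $\Rrightarrow_X$, feeding both the sandwich inclusions and the triangle property: (i)~stability under meta-substitution, from $t \Rrightarrow_X t'$ and $u \Rrightarrow_X u'$ conclude $t\isub{x}{u} \Rrightarrow_X t'\isub{x}{u'}$; (ii)~stability of list contexts together with the dual factorisation that any $\Rrightarrow_X$-reduct of $\bangLCtxt<t>$ is again a list context filled with a $\Rrightarrow_X$-reduct of $t$; and (iii)~the commutations aligning the three distant redex shapes with their contracta, essentially pulling the substitutions of the surrounding list context outward. I expect (i) to be the main obstacle, and specifically its surface instance $X = S$: firing a $\bangSymbSubs$-redex $s\esub{x}{\bangLCtxt<\oc u>}$ relocates copies of $u$ to the free occurrences of $x$ in $s$, some of which may sit below a $\oc$ in $s$ and others not, so the redexes of $u$ change their surface status under substitution. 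Hence the definition of $\Rrightarrow_S$ and the precise statement of the substitution lemma must be arranged so that the argument being substituted is not parallel-reduced beyond what surface reduction permits at its destinations, while still keeping $\Rrightarrow_S \subseteq \bangArr*_S$; once this bookkeeping is set up correctly, the argument proceeds exactly as for $\Rrightarrow_F$, where the issue does not arise since full reduction already goes below $\oc$.

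If the uniform parallel-reduction handling of the surface case turns out to be too delicate, an alternative is the Hindley--Rosen lemma: split $\bangArr_X$ into the distant-$\beta$ step $\bangSymbBeta$ and the substitution-plus-dereliction fragment $\{\bangSymbSubs,\bangSymbBang\}$, show that the fragment is terminating --- hence confluent by Newman's lemma after a finite critical-pair check --- and that the distant-$\beta$ step is confluent --- again by parallel reduction, but now trivially, as $\bangSymbBeta$ only creates an explicit substitution and moves nothing below a $\oc$ --- and finally that the two fragments commute; their union $\bangArr_X$ is then confluent.
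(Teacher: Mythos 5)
Your fallback route is, up to a regrouping of the rules, essentially the proof the paper gives, while your primary route is genuinely different and is left unfinished at exactly its hardest point. Concretely, the paper does not reprove confluence of $\bangArr_S$ at all (it is imported from the literature on \bangCalculusSymb); the new content is $\bangArr_F$, handled by Hindley--Rosen with a different split from yours: the union of the $\bangSymbBeta$- and $\bangSymbBang$-closures is shown to satisfy the diamond property directly (both rules are left-linear and non-duplicating, so neither Newman nor a critical-pair analysis is needed for this fragment), while the duplicating rule $\bangSymbSubs$ alone is shown terminating via a multiset measure built from potential multiplicities of variables (the Accattoli--Kesner measure) and locally confluent by a direct induction on terms, hence confluent by Newman; the two fragments are then shown to strongly commute. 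Your grouping of $\bangSymbBang$ with $\bangSymbSubs$ still works (the multiset measure is invariant under $\bangSymbBang$ and term size strictly decreases, so a lexicographic argument gives termination of the combined fragment), but it buys nothing and costs an extra family of overlaps between $\bangSymbSubs$ and $\bangSymbBang$ in the local-confluence proof plus a separate confluence proof for distant $\bangSymbBeta$; note also that a literal ``finite critical-pair check'' is not available for rules at a distance, which is why the paper proves local confluence of the substitution fragment by a full induction on terms. As for your primary route, Takahashi-style parallel reduction is standard for $\bangArr_F$, but for $\bangArr_S$ you correctly identify the tension between the substitution lemma and $\oc$-freezing and then defer its resolution; since the argument of a fired $\bangSymbSubs$-redex sits under a $\oc$ and is already frozen \emph{before} the step, the lemma you need is the one-sided form in which only the body is parallel-reduced and the substituted term is held fixed, and making the complete development $t^{\bullet}$ well defined in the presence of overlapping at-a-distance redexes is itself nontrivial. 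Either finish that bookkeeping or commit to the Hindley--Rosen route; as written, the proposal names the substantial ingredients (the termination measure for $\bangSymbSubs$, the local-confluence induction, or alternatively the surface substitution lemma) without supplying them.
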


A term $t$ is a \defn{\surfaceTxt} (\resp \defn{\fullTxt})
\defn{normal form} if there is no $u$ such that $t \bangArr_S u$
(\resp $t \bangArr_F u$). A term $t$ is \defn{\surfaceTxt} (\resp
\defn{\fullTxt}) \defn{normalizing} if $t \bangArr*_S u$ (\resp $t \bangArr*_F u$) for some
\surfaceTxt (\resp \fullTxt) normal form $u$. Since $\bangArr_S
\subsetneq \bangArr_F$, some terms may be \surfaceTxt-normalizing
but not \fullTxt-normalizing, \eg $\abs{x}{\oc
(\der{\oc\Omega})}$.
\par
As a matter of fact, some ill-formed terms are not redexes but neither
represent a desired computation result. They are called \defn{clashes}
and have one of the following forms:
\begin{equation*}
	\app{\bangLCtxt<\oc s>}{u}
\qquad
	s\esub{x}{\bangLCtxt<\abs{x}{u}>}
\qquad
	\der{\bangLCtxt<\abs{x}{u}>}
\qquad
	\app{t}{(\bangLCtxt<\abs{x}{u}>)}
	\text{ if } t \neq \bangLCtxt'<\abs{y}{s}>
\end{equation*}
This \emphit{static} notion of clash is lifted to a \emphit{dynamic}
level. A term $t$ is a \defn{\surfaceTxt} (\resp \defn{\fullTxt})
\defn{clash-free} if it does not \surfaceTxt (\resp \fullTxt) reduce
to a term with a clash in \surfaceTxt (\resp~\fullTxt) position, \ie
if there are no \surfaceTxt (\resp \fullTxt) context $\bangStratCtxt$
(\resp $\bangFCtxt$) and clash $c$ such that $t \bangArr*_S
\bangStratCtxt<c>$ (\resp $t \bangArr*_F \bangFCtxt<c>$). For example,
$\app{x}{\oc (\app{y}{(\abs{z}{z})})}$ is \surfaceTxt clash-free but
not \fullTxt clash-free as it has a clash $y(\abs{z}{z})$ under a
bang. Both notions \mbox{are stable under reduction}.

Finally, some terms contain neither redexes nor clashes. A
\defn{\surfaceTxt} (\resp \defn{\fullTxt}) \defn{clash-free normal
form} is a \surfaceTxt (\resp \fullTxt) normal form which is also
\surfaceTxt (\resp \fullTxt) clash-free, as \eg the term $\app{x}{x}$.
These are the desired results of the computation, and they can even be
\emph{syntactically} characterized by a tree grammar. 
\begin{align*}
		\bangNe_S &\coloneqq x \in \bangSetVariables
			\vsep \app{\bangNe_S}{\bangNa_S}
			\vsep \der{\bangNe_S}
			\vsep \bangNe_S\esub{x}{\bangNe_S}
	&
		\bangNa_S &\coloneqq \oc t
			\vsep \bangNe_S
			\vsep \bangNa_S\esub{x}{\bangNe_S}
	\\
		\bangNb_S &\coloneqq \bangNe_S
			\vsep \abs{x}{\bangNo_S}
			\vsep \bangNb_S\esub{x}{\bangNe_S}
	&
		\bangNo_S &\coloneqq \bangNa_S
			\vsep \bangNb_S
\end{align*}

\begin{lemma}[\cite{BucciarelliKesnerRiosViso20}]%
	Let $t \in \bangSetTerms$, then $t \in \bangNo_S$ iff $t$ is a
	\surfaceTxt clash-free normal form.
\end{lemma}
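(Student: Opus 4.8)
The statement is a syntactic (grammar) characterization, so the plan is to prove both implications by induction; however, neither direction goes through on the bare biconditional, because reasoning about an application or an explicit substitution requires knowing more about the immediate subterms than just that they are surface clash-free normal forms. The fix is to characterize the four nonterminals $\bangNe_S$, $\bangNa_S$, $\bangNb_S$, $\bangNo_S$ simultaneously, while tracking two auxiliary flags: call a term \emph{functional} when it has the form $\bangLCtxt<\abs{x}{u}>$, and \emph{boxed} when it has the form $\bangLCtxt<\oc u>$. I would record the following elementary facts for use throughout: functionality and boxedness are insensitive to trailing explicit substitutions (i.e.\ $t_1\esub{x}{t_2}$ is functional/boxed iff $t_1$ is), whereas a variable, an application or a dereliction is never functional nor boxed, an abstraction is functional but not boxed, and a bang is boxed but not functional; moreover a surface normal form reduces only to itself, so it is surface clash-free iff it is not syntactically equal to $\bangStratCtxt<c>$ for any surface context $\bangStratCtxt$ and clash $c$; and the immediate subterms of a term sit in surface position, except the body of a $\oc{(\cdot)}$, so immediate non-bang subterms of a surface clash-free normal form are again surface clash-free normal forms.

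\emph{Grammar $\Rightarrow$ clash-free normal form.}
I would prove by induction on the grammar derivation that every $t \in \bangNe_S$ is a surface clash-free normal form which is neither functional nor boxed, every $t \in \bangNa_S$ is a surface clash-free normal form which is not functional, every $t \in \bangNb_S$ is a surface clash-free normal form which is not boxed, and every $t \in \bangNo_S$ is a surface clash-free normal form. In each production one checks that no rewrite rule fires at the root — for $\app{\bangNe_S}{\bangNa_S}$ the function is not functional, blocking $\bangSymbBeta$; for the explicit-substitution productions the argument lies in $\bangNe_S$, hence is not boxed, blocking $\bangSymbSubs$; for $\der{\bangNe_S}$ the body is not boxed, blocking $\bangSymbBang$ — that no clash sits at the root — for $\app{\bangNe_S}{\bangNa_S}$ the function is not boxed and the argument is not functional, killing the shapes $\app{\bangLCtxt<\oc s>}{u}$ and $\app{t}{(\bangLCtxt<\abs{x}{u}>)}$; for the explicit-substitution productions the argument is not functional, killing $s\esub{x}{\bangLCtxt<\abs{x}{u}>}$; for $\der{\bangNe_S}$ the body is not functional, killing $\der{\bangLCtxt<\abs{x}{u}>}$ — and that the inductive hypotheses handle every redex and clash strictly below the root in surface position. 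The production $\oc t$ of $\bangNa_S$ is unconstrained and harmless precisely because surface contexts never enter a bang; and the side flags are preserved by the head constructor of each production.

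\emph{Clash-free normal form $\Rightarrow$ grammar.}
Conversely I would prove by induction on $t \in \bangSetTerms$ that, if $t$ is a surface clash-free normal form, then $t \in \bangNe_S$ provided $t$ is neither functional nor boxed, $t \in \bangNa_S$ provided $t$ is not functional, $t \in \bangNb_S$ provided $t$ is not boxed, and $t \in \bangNo_S$ unconditionally; one then concludes using $\bangNe_S \subseteq \bangNa_S$, $\bangNe_S \subseteq \bangNb_S$ and $\bangNo_S = \bangNa_S \cup \bangNb_S$. The variable and bang cases are immediate, and an abstraction $\abs{x}{s}$ is treated by applying the induction hypothesis to its body $s$ (a surface clash-free normal form), landing in $\bangNb_S$. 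For $t = \der s$, being a surface normal form forces $s$ not boxed, being surface clash-free forces $s$ not functional (else the root is the clash $\der{\bangLCtxt<\abs{x}{u}>}$), so induction gives $s \in \bangNe_S$ and hence $t \in \bangNe_S$. For $t = \app{t_1}{t_2}$: $t_1$ is not functional (else a $\bangSymbBeta$-redex), not boxed (else the clash $\app{\bangLCtxt<\oc s>}{u}$), and then $t_2$ is not functional (else the clash $\app{t}{(\bangLCtxt<\abs{x}{u}>)}$, since $t_1$ is not functional); induction gives $t_1 \in \bangNe_S$ and $t_2 \in \bangNa_S$, whence $t \in \bangNe_S$. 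For $t = t_1\esub{x}{t_2}$: $t_2$ is not boxed (else a $\bangSymbSubs$-redex) and not functional (else the clash $s\esub{x}{\bangLCtxt<\abs{x}{u}>}$), so induction gives $t_2 \in \bangNe_S$; since $t$ is functional (resp.\ boxed) exactly when $t_1$ is, the four statements for $t_1$, obtained by induction, transfer to $t$ through the productions $\bangNe_S\esub{x}{\bangNe_S}$, $\bangNa_S\esub{x}{\bangNe_S}$, $\bangNb_S\esub{x}{\bangNe_S}$ and $\bangNo_S = \bangNa_S \cup \bangNb_S$.

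\emph{Expected difficulty.}
I do not expect a real obstacle: the delicate point is merely the bookkeeping around the list contexts $\bangLCtxt$, i.e.\ consistently using that functionality and boxedness ignore trailing explicit substitutions so these flags propagate correctly through $(\cdot)\esub{x}{u}$, together with the fact that surface reduction and surface clashes never look under a $\oc{(\cdot)}$, which is exactly what makes the liberal production $\oc t$ in $\bangNa_S$ sound. Everything else is a finite case analysis over the three redex shapes and the four clash shapes; no argument about multi-step reductions is needed, since in a normal form the presence of a clash is a purely static property.
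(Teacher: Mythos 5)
Your proposal is correct. Note that the paper does not prove this lemma itself — it is imported from \cite{BucciarelliKesnerRiosViso20} — so there is no in-paper argument to compare against; your strengthened mutual induction, tracking for each nonterminal whether the term is of the form $\bangLCtxt<\abs{x}{u}>$ or $\bangLCtxt<\oc u>$, is exactly the kind of invariant needed to make both directions go through, and your case analysis correctly matches the three redex shapes and four clash shapes against the grammar productions (including the observation that the liberal production $\oc t$ in $\bangNa_S$ is sound because surface contexts never enter a bang, and that for a normal form clash-freeness degenerates to a static condition).
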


\subsection{Quantitative Typing System}
\label{subsec:Bang_Typing_System}%

We present the quantitative typing system
$\bangBKRVTypeSys$~\cite{BucciarelliKesnerRiosViso20}, based
on~\cite{Gardner94,deCarvalho07}. It contains functional and
intersection types. Here, intersections
are associative, commutative but \emphit{not idempotent}, thus an
intersection type is represented by a (possibly empty) \emphit{finite}
\emphit{multiset} $\typeMulti{\sigma_i}_{i\in I}$. Formally, given a
countably infinite set $\mathcal{TV}$ of type variables $\alpha,
\beta, \gamma, \dots$, we define by mutual induction:
\begin{equation*}
	\begin{array}{rrcl}
		(\emphasis{Types}) & \sigma, \tau, \rho &\coloneqq& \alpha \in \mathcal{TV}
				\vsep \M
				\vsep \M\typeArrow\sigma
	\\
		(\emphasis{Multitypes}) & \M, \N &\coloneqq& \typeMulti{\sigma_i}_{i\in I} \text{ where $I$ is a finite set}
	\end{array}
\end{equation*}

A (\emphasis{type}) \emphasis{environment}, noted $\Gamma$ or
$\Delta$, is a function from variables to multitypes, assigning the
\emphasis{empty multitype} $\emul$ to all variables except a finite
number (possibly zero). The \emphasis{empty environment}, noted
$\emptyset$, maps every variable to $\emul$. The \emphasis{domain} of
$\Gamma$ is $\typeCtxtDom{\Gamma} = \{x \in \bangSetVariables \mid
\Gamma(x) \neq \emptymset\}$, the \emphasis{image} of $\Gamma$ is
$\typeCtxtIm{\Gamma} = \{\Gamma(x) \mid x \in \typeCtxtDom{\Gamma}\}$.
Given the environments $\Gamma$ and $\Delta$, $\Gamma + \Delta$ is the
environment mapping $x$ to $\Gamma(x) \uplus \Delta(x)$, where
$\uplus$ denotes multiset union; and $+_{i\in I} \Delta_i$ (with $I$
finite) is its obvious extension to the non-binary case, in particular
$+_{i\in I} \Delta_i = \emptyset$ if $I = \emptyset$. An environment
$\Gamma$ is denoted by $x_1 \!:\! \M_1, \dots, x_n \!:\! \M_n$ when the $x_i$'s are pairwise
distinct variables and 
$\Gamma(x_i) = \M_i$ for all $1 \leq i \leq n$,
and \mbox{$\Gamma(y) = \emptymset$ for $y
\notin \{x_1, \dots, x_n\}$}.

A \emphasis{typing} is a pair $\typing{\Gamma}{\sigma}$, where
$\Gamma$ is an  environment and  $\sigma$ is a type. A
(\emphasis{typing}) \emphasis{judgment} is a tuple of the form $\Gamma
\vdash t : \sigma$, where $\typing{\Gamma}{\sigma}$ is a typing and
$t$ is a term (the \emphasis{subject} of the judgment).
The typing system $\bangBKRVTypeSys$ for \bangCalculusSymb is defined
by the rules in \Cref{fig:BangBKRV_Typing_System_Rules}. The axiom
rule \bangBKRVVarRuleName is relevant, \ie there is no weakening.
Rules \bangBKRVAbsRuleName, \bangBKRVAppRuleName and
\bangBKRVEsRuleName are standard. Rule \bangBKRVBgRuleName has as many
premises as elements in the finite (possibly empty) index set $I$,
the conclusion types $\oc u$ with a multitype
\emphit{gathering} all the (possibly different) types in the premises
typing $u$. In particular, when $I = \emptyset$, the rule has no
premises, and it types \emph{any} term $\oc u$ with
$\emptymset$, leaving the \emph{subterm} $u$ \emph{untyped}. Rule
\bangBKRVDerRuleName forces the argument of a dereliction to be typed
by a multitype of cardinality 1.

\begin{figure}[t]
        $\begin{array}{c}
            \begin{array}{c} \\[-0.1cm]
                \begin{prooftree}
                    \inferBangBKRVVar{x : \mset{\sigma} \vdash x : \sigma}
                \end{prooftree}
            \end{array}
        \qquad
            \begin{prooftree}
                \hypo{\Gamma \vdash t : \M \typeArrow \sigma}
                \hypo{\Delta \vdash u : \M}
                \inferBangBKRVApp{\Gamma + \Delta \vdash \app[\,]{t}{u} : \sigma}
            \end{prooftree}
        \qquad
            \begin{prooftree}
                \hypo{(\Gamma_i \vdash t : \sigma_i)_{i \in I}}
                \inferBangBKRVBg{+_{i \in I} \Gamma_i \vdash \oc t : \mset{\sigma_i}_{i \in I}}
            \end{prooftree}
    \\[16pt]
            \begin{prooftree}
                \hypo{\Gamma, x : \M \vdash t : \sigma}
                \inferBangBKRVAbs{\Gamma \vdash \abs{x}{t} : \M \typeArrow \sigma}
            \end{prooftree}
        \qquad
            \begin{prooftree}
                \hypo{\Gamma, x : \M \vdash t : \sigma}
                \hypo{\Delta \vdash u : \M}
                \inferBangBKRVEs{\Gamma + \Delta \vdash t\esub{x}{u} : \sigma}
            \end{prooftree}
        \qquad
            \begin{prooftree}
                \hypo{\Gamma \vdash t : \mset{\sigma}}
                \inferBangBKRVDer{\Gamma \vdash \der{t} : \sigma}
            \end{prooftree}
        \end{array}$
    \caption{Type System $\bangBKRVTypeSys$ for the \bangCalculusSymb-calculus.}
    \label{fig:BangBKRV_Typing_System_Rules}%
\end{figure}

A (\emphasis{type}) \emphasis{derivation} in system $\bangBKRVTypeSys$
is a tree obtained by applying the rules in
\Cref{fig:BangBKRV_Typing_System_Rules}. The judgment at the root of
the type derivation is the \emphasis{conclusion} of the derivation. A
term $t$ is $\bangBKRVTypeSys$\emphasis{-typable} if there is $\Pi
\bangBKRVTr \Gamma \vdash t : \sigma$ for some typing
$\typing{\Gamma}{\sigma}$, meaning that
$\Pi$ is a derivation in system $\bangBKRVTypeSys$ with conclusion $
\Gamma \vdash t : \sigma$. We write $\bangBKRVTr\; \Gamma \vdash t :
\sigma$ if there exists $\Pi \bangBKRVTr \Gamma \vdash t : \sigma$.

System $\bangBKRVTypeSys$ characterizes \surfaceTxt-normalizing
 clash-free terms.

\begin{restatable}[\cite{BucciarelliKesnerRiosViso20,ArrialGuerrieriKesner23}]{theorem}{RecSRCharact}
	\label{lem:Bang_BKRV}%
	Let $t, u \in \bangSetTerms$.
	\begin{enumerate}
	\item \label{lem:Bang_BKRV_Surface_Typing_SRE}%
		If $t \bangArr_F u$, then for any typing
		$\typing{\Gamma}{\sigma}$, one has $\bangBKRVTr\; \Gamma
		\vdash t : \sigma$ if and only if $\bangBKRVTr\; \Gamma \vdash
		u : \sigma$.
	
  	\item\label{lem:Bang_BKRV_Characterization}%
		$t$ is $\bangBKRVTypeSys$-typable if and only if $t$
		\surfaceTxt-reduces to a \surfaceTxt clash-free normal form.
	\end{enumerate} 
\end{restatable}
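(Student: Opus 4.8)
The plan is to derive both items from the standard toolkit for non-idempotent intersection types: a \emph{substitution lemma} and its converse, an \emph{anti-substitution} (or splitting) lemma, both with a quantitative refinement. Concretely, I would first prove that (i) if $\bangBKRVTr \Gamma, x\colon\M \vdash t : \sigma$ and $\bangBKRVTr \Delta \vdash u : \M$ then $\bangBKRVTr \Gamma+\Delta \vdash t\isub{x}{u} : \sigma$, and (ii) conversely every derivation of $\Gamma \vdash t\isub{x}{u} : \sigma$ decomposes as $\Gamma = \Gamma'+\Delta$ with $\bangBKRVTr \Gamma', x\colon\M \vdash t : \sigma$ and $\bangBKRVTr \Delta \vdash u : \M$ for some $\Gamma',\Delta,\M$; both go by induction on $t$ and rely on relevance (no weakening), so that the axiom leaves on $x$ are exactly accounted for by $\M$. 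The refinement worth recording is that the derivation of $t\isub{x}{u}$ has size equal to that of $t$ plus those of $u$ minus $|\M|$, each of the $|\M|$ axioms on $x$ being traded for one $u$-derivation. These statements then lift from the root to an arbitrary list context $\bangLCtxt$ by an easy further induction, which is exactly what makes the \emph{at-a-distance} rewrite rules tractable.

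For Item~1 I would prove subject reduction and subject expansion for $\bangArr_F$ simultaneously, by induction on the full context $\bangFCtxt$ hosting the contracted redex. The base cases are the three rules applied within a list context: $\bangSymbBeta$ merely turns an $\mathtt{app}/\mathtt{abs}$ pair into an $\mathtt{es}$, while $\bangSymbSubs$ and $\bangSymbBang$ use (i)/(ii) to merge or split the premises of the $\mathtt{bg}$ rule typing the argument. The inductive step is routine for every context former; the only case worth a comment is $\oc\bangFCtxt$, where the $\mathtt{bg}$ rule either has empty index set — the reduced subterm is then untyped and the derivation is unchanged — or has $n\ge 1$ premises, to each of which the induction hypothesis applies.

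For Item~2, direction ($\Leftarrow$): if $t \bangArr*_S u$ with $u$ a \surfaceTxt clash-free normal form, then $u\in\bangNo_S$ by the syntactic characterization, and I would show by induction on the mutually defined grammar that every term of $\bangNo_S$ is $\bangBKRVTypeSys$-typable; for the neutral category $\bangNe_S$ it helps to strengthen this to ``typable with an arbitrarily prescribed type'', using that a bang is typable with the empty multitype, so argument positions are always fillable. Then $t$ is typable by subject expansion (Item~1), since every $\bangArr_S$-step is also a $\bangArr_F$-step. Direction ($\Rightarrow$): assume $t$ typable. Using the quantitative refinement above one checks that each $\bangArr_S$-step strictly decreases the size of the type derivation (the rules $\bangSymbBeta$, $\bangSymbSubs$, $\bangSymbBang$ delete strictly more rule occurrences than they introduce, with net decreases $1$, $|\M|+2$, $2$, where $\M$ is the type of the substituted bang); hence $t$ is $\bangArr_S$-strongly normalizing and $t \bangArr*_S u$ for some \surfaceTxt normal form $u$. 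To see that $u$ is clash-free I would show that no term with a clash in \surfaceTxt position is typable: an easy induction on $\bangLCtxt$ yields that $\bangLCtxt<\abs{x}{t}>$ is typable only with arrow types and $\bangLCtxt<\oc t>$ only with multitypes, which rules out each of the four clash shapes; and since a \surfaceTxt context never crosses a $\oc$, plugging an untypable term into one stays untypable. By subject reduction a typable $t$ cannot $\bangArr_S$-reduce to a term with a clash in \surfaceTxt position, so $u$ — a \surfaceTxt normal form that is also typable — is a \surfaceTxt clash-free normal form.

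I expect the main obstacle to be the quantitative substitution lemma together with its converse in the presence of at-a-distance (list-context) rewriting: getting the environment splitting and the exact size bookkeeping right is what simultaneously yields type preservation in Item~1 and the strict measure decrease in Item~2. The clash-untypability analysis, though elementary, is the other point that needs care, since it is specific to the clash-free strengthening of normalization rather than to plain normalization.
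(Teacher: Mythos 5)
The paper does not prove this theorem: it is recalled from \cite{BucciarelliKesnerRiosViso20,ArrialGuerrieriKesner23}, and your reconstruction follows essentially the route taken in those references — substitution and anti-substitution lemmas with quantitative bookkeeping, subject reduction and expansion by induction on the hosting context (with the $\oc$ case split on whether the index set is empty), typability of the grammar $\bangNo_S$ with the statement strengthened to ``any prescribed type'' for the neutral category, a strictly decreasing derivation-size measure for surface steps, and untypability of surface clashes. The one point to state more carefully is the premise of your substitution lemma: as written, $\bangBKRVTr\; \Delta \vdash u : \M$ is a single judgment assigning the multitype $\M$ to $u$, which in this system legitimately types, say, a variable or a dereliction, and does \emph{not} decompose into one derivation of $u$ per element of $\M$; what the axiom-for-derivation trade needs, and what the $\bangSymbSubs$ case actually supplies via the $\oc$-rule applied inside $\bangLCtxt<\oc u>$, is a family $(\Delta_i \vdash u : \sigma_i)_{i \in I}$ with $\M = \mset{\sigma_i}_{i \in I}$. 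With that premise fixed, your size accounting (net decreases $1$, $|\M|+2$, $2$) and the observation that a surface position always carries exactly one subderivation — because surface contexts never cross a $\oc$ — are exactly what make the termination and clash-freeness arguments go through.
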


\section{Meaningfulness = Typability + Inhabitation}
\label{sec:Meaningfulness}%

In this section we introduce the notion of meaningfulness for
\bangCalculusSymb and we establish a logical characterization of
meaningful terms via system $\bangBKRVTypeSys$. Intuitively, a term
$t$ is meaningful if it can be supplied by some arguments (possibly
binding some free variables of $t$) so that it reduces to some
observable term. In \bangCalculusSymb, the observables are the bang
terms since they are the only terms enabling substitution to be fired.

\begin{definition}
    A term $t$ is \bangCalculusSymb-\defn{meaningful} if there are a testing context $\bangTCtxt$
    and $u \in \bangSetTerms$ such that $\bangTCtxt<t>
    \bangArr*_S \oc u$, where testing contexts are defined by the
    grammar $\bangTCtxt \;\coloneqq\; \Hole \vsep \app{\bangTCtxt}{s} \vsep
    \app{(\abs{x}{\bangTCtxt})}{s}$.
\end{definition}

For example, $\Id$ is \bangCalculusSymb-meaningful using the testing
context $\bangTCtxt = \app[\,]{\Hole}{\oc\oc u}$. Both $\Omega$ and
$\app{x}{\Omega}$ are \bangCalculusSymb-meaningless: every testing
context they are plugged in cannot erase~$\Omega$, which is not
normalizing. Note that the hole in a testing context is always in the
functional position of an application, in particular if the hole is in
the scope of some $\lambda$, then this $\lambda$ must~be~applied.

\begin{figure}
    \hspace{-0.5cm}
    \begin{minipage}[c]{0.49\textwidth}%
        \vspace{0.5cm}
        \begin{equation*}
            \begin{prooftree}[separation=1em]
                \inferBangBKRVVar{x : \mset{\M \typeArrow \sigma} \vdash x : \M \typeArrow \sigma}
                \inferBangBKRVVar{x : \mset{\M} \vdash x : \M}
                \inferBangBKRVApp{x : \mset{\M \typeArrow \sigma, \M} \vdash \app{x}{x} : \sigma}
            \end{prooftree}
        \end{equation*}
    \caption{A type derivation of $\app{x}{x}$ in~\mbox{system $\bangBKRVTypeSys$.}}
    \label{fig:Example_Typing_Derivation_Meaningless}%
    \end{minipage}
~\hspace{0.5cm}~
    \begin{minipage}[c]{0.49\textwidth}%
        \begin{equation*}
            \hspace{2cm}
            \begin{prooftree}
                \inferBangBKRVVar{x : \mset{\alpha} \vdash x : \alpha}
                \inferBangBKRVBg{x : \mset{\alpha} \vdash \oc x : \mset{\alpha}}
                \inferBangBKRVAbs{\emptyset \vdash \abs{x}{\oc x} : \mset{\alpha} \typeArrow \mset{\alpha}}
            \end{prooftree}
        \end{equation*}
    \caption{\mbox{Inhabitation of $\mset{\alpha} \typeArrow \mset{\alpha}$ in system $\bangBKRVTypeSys$.}}
    \label{fig:Example_Inhabitation}%
    \end{minipage}
\end{figure}

In an adequate calculus, meaningfulness is usually characterized both
operationally (normalizability) and logically (typability): a term is
meaningful iff it is normalizing for a suitable subreduction of the
calculus iff it is typable in a suitable type system.  Surprisingly,
these characterizations are subtler in \bangCalculusSymb, because the
language has two (incompatible) data structures: abstractions (playing
the role of functions) and bangs (playing as values).

We are not aware of any operational characterization of
\bangCalculusSymb-meaningfulness; a natural candidate would be
normalizability by surface reduction, but this fails, even if we force
the obtained surface normal form to be clash-free. For instance, the
term $t_0 \coloneqq \app{x}{x}$ is \bangCalculusSymb-meaningless
despite being a surface clash-free normal form. Indeed, for
$\app{x}{x}$ to be \bangCalculusSymb-meaningful, a testing context
$\bangTCtxt$ would need to provide a term $u$ to substitute the
variable $x$, so that $\bangTCtxt<\app{x}{x}>$ would eventually reduce
to a bang. However, achieving this requires the mentioned term $u$ to
reduce to both an abstraction and a bang, which is impossible. Hence,
\bangCalculusSymb-meaningfulness is not only the ability to produce a
surface clash-free normal form but also to transform this result into
an observable.

Concerning a logical characterization of
\bangCalculusSymb-meaningfulness, typability is not enough, at least
in system $\bangBKRVTypeSys$. For instance, the
\bangCalculusSymb-meaningless term $t_0 \coloneqq \app{x}{x}$ seen
above is typable in system $\bangBKRVTypeSys$. Every
type derivation of $t_0$ has the form in
\Cref{fig:Example_Typing_Derivation_Meaningless}, which reveals the
conflict when assigning to $x$ both an arrow type $\M \typeArrow
\sigma$ (the type of terms eventually reducing to
abstractions) and a multitype $\M$ (the type of terms
eventually reducing to bangs). The inhabitation problem can
be used to detect such conflicts, allowing for a handy
characterization of meaningfulness.  Indeed, the multitype $\mset{\M
\typeArrow \sigma, \M}$ assigned to the variable $x$ in
\Cref{fig:Example_Typing_Derivation_Meaningless} is not inhabited.

While it is complex to syntactically establish operational conditions
such as (not) reducing to abstractions or bangs, this is easily
achieved semantically. Indeed, we establish a logical characterization
of \bangCalculusSymb-meaningfulness based on \emph{typability} and
\emph{inhabitation} in system $\bangBKRVTypeSys$, similarly to what
happens in the $\lambda$-calculus with
pairs~\cite{AKV19,BucciarelliKR15,BucciarelliKesnerRonchi21}.
Intuitively, suppose that a term $t$ is \bangCalculusSymb-meaningful,
so there is a testing context $\bangTCtxt$ such that $\bangTCtxt<t>$
reduces to an observable, \ie~a bang, which can be typed with the
typing $\typing{\emptyset}{\emptymset}$ in system $\bangBKRVTypeSys$.
By~\Cref{lem:Bang_BKRV}.\ref{lem:Bang_BKRV_Surface_Typing_SRE},
$\bangTCtxt<t>$ must also be typable by the same typing
$(\emptyset; \emptymset)$, meaning that $t$ is \emph{typable} by an
environment $x_1 \!:\! \M_1, \dots, x_n
\!:\! \M_m$ and a type $\N_1 \typeArrow \ldots \typeArrow \N_n
\typeArrow \emptymset$, where all the $\M_i$'s and
$\N_i$'s~are~\emph{inhabited}.

A similar argument holds for other type systems and
calculi~\cite{BucciarelliKesnerRonchi21} with their own notions of
meaningfulness and observable. The point is to identify the set of
types $\typeobs{\genericsystem}$ associated with the observables.
Given any type system $\genericsystem$, whose types are those of
\Cref{subsec:Bang_Typing_System}, and a set of types
$\typeobs{\genericsystem}$ for observable terms, the set of
\defn{arguments} $\genericArgs{\genericsystem}{\sigma}$ of a type
$\sigma$ is the set of multitypes appearing to the left of arrows,
until reaching the type of an observable. Formally, if $\sigma \in
\typeobs{\genericsystem}$ then $\genericArgs{\genericsystem}{\sigma}
\coloneqq \emptyset$, \emph{otherwise}
$\genericArgs{\genericsystem}{\alpha}\coloneqq \emptyset$,
$\genericArgs{\genericsystem}{\M \typeArrow \sigma} \coloneqq \{\M\}
\cup \genericArgs{\genericsystem}{\sigma}$, and
$\genericArgs{\genericsystem}{\M} = \emptyset$. For example, in system
$\bangBKRVTypeSys$ where $\typeobs{\bangBKRVTypeSys}$ contains all
multitypes, $\genericArgs{\bangBKRVTypeSys}{[\tau] \typeArrow (\M
\typeArrow \mset{\alpha})} = \{\mset{\tau}, \M\}$. The
\cbnCalculusSymb and
\cbnCalculusSymb~cases~are~discussed~in~\Cref{sec:subsuming}.

\NewDocumentCommand{\genericInhPred}{ m }					{{\tt inh}_{\genericsystem}\!\left(#1\right)}				
\NewDocumentCommand{\genericTr}{ }
                   {\tr_{\genericsystem}}
\NewDocumentCommand{\genTypeArgs}				{ e{_} m }
	{{\tt args}_{\genericsystem}\!\IfValueT{#1}{_{#1}}\left(#2\right)}

\begin{definition} 
    Let $\genericsystem$ be a type system and $\genericInhPred{\cdot}$
    be a predicate on the types of $\genericsystem$. A set $S$ of
    types is \defn{inhabited}, noted $\genericInhPred{S}$, if
    $\genericInhPred{\sigma}$ for all $\sigma \in S$. We write
    $\genericInhPred{\Gamma}$ if
    $\genericInhPred{\typeCtxtIm{\Gamma}}$. A typing
    $\typing{\Gamma}{\sigma}$ (\resp~judgment $\Gamma \vdash t :
    \sigma$) is $\genericsystem$-\defn{\testableTxt} if
    $\genericInhPred{\Gamma}$ and
    $\genericInhPred{\genTypeArgs{\sigma}}$. A term $t$ is
    \defn{$\genericsystem$-\testableTxt}\ if $\genericTr\; \Gamma
    \vdash t : \sigma$ for some $\genericsystem$-\testableTxt typing
    $\typing{\Gamma}{\sigma}$.
\end{definition}

A type $\sigma$ is \defn{inhabited} in system $\bangBKRVTypeSys$,
noted $\bangBKRVInhPred{\sigma}$, if $\Pi \bangBKRVTr \emptyset \vdash
t : \sigma$ for some $\Pi$ and $t$. For instance, in system
$\bangBKRVTypeSys$, the type $\emul$ is inhabited using rule
$\bangBKRVBgRuleName$ with empty premises, and the
environment $\emptyset$ is trivially
inhabited. The type $\mset{\alpha} \typeArrow \mset{\alpha}$ is also
inhabited, see \Cref{fig:Example_Inhabitation}.

\begin{restatable}{lemma}{RecBangInhabitationFromTesting}
    \LemmaToFromProof{Bang_O_|-_T<t>_:_[]_==>_Gam_|-_t_:_sig_and_Gam_and_args(sig)_inh}%
    Let $t \in \bangSetTerms$ and $\bangTCtxt$ be a testing context.
    If $\bangBKRVTr\; \emptyset \vdash \bangTCtxt<t> : \emptymset$,
    then $\bangBKRVTr \Gamma \vdash t : \sigma$ with
    $\bangBKRVInhPred{\Gamma}$ and
    $\bangBKRVInhPred{\bangTypeArgs{\sigma}}$.
\end{restatable}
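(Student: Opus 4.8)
The statement asserts that if a testing context $\bangTCtxt$ plugged with $t$ is typable with the trivial observable typing $(\emptyset; \emptymset)$, then $t$ itself is typable with a testable typing — i.e. one whose environment and argument multitypes are all inhabited. The natural approach is induction on the structure of the testing context $\bangTCtxt$, following its grammar $\bangTCtxt \coloneqq \Hole \vsep \app{\bangTCtxt}{s} \vsep \app{(\abs{x}{\bangTCtxt})}{s}$. The key point to extract along the way is not just typability of $t$ but the *inhabitation* side conditions, and for that the engine is the following observation: whenever a subterm $s$ is an argument supplied by the testing context and its type (multitype) actually gets used in the derivation, that very sub-derivation of $s$ witnesses the inhabitation of the corresponding multitype, modulo removing the free-variable assumptions — which is legitimate because those assumptions are themselves arguments higher up, and ultimately everything bottoms out in a closed derivation since the global environment is $\emptyset$.

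\textbf{Key steps.} First, the base case $\bangTCtxt = \Hole$: here $\bangTCtxt<t> = t$, the hypothesis gives $\bangBKRVTr\; \emptyset \vdash t : \emptymset$, and we take $\Gamma = \emptyset$ (trivially inhabited) and $\sigma = \emptymset$, for which $\bangTypeArgs{\emptymset} = \emptyset$ is vacuously inhabited. Second, the application case $\bangTCtxt = \app{\bangTCtxt'}{s}$: inverting the typing rule $\bangBKRVAppRuleName$ on $\bangBKRVTr\; \emptyset \vdash \app{\bangTCtxt'<t>}{s} : \emptymset$ yields a splitting $\emptyset = \Gamma_1 + \Gamma_2$ (so both $\Gamma_1$ and $\Gamma_2$ are empty), a multitype $\M$, derivations $\emptyset \vdash \bangTCtxt'<t> : \M \typeArrow \emptymset$ and $\emptyset \vdash s : \M$. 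I would then want to apply the induction hypothesis to $\bangTCtxt'$, but note the IH is stated for the *observable* target $\emptymset$, whereas here $\bangTCtxt'<t>$ has target $\M \typeArrow \emptymset$; the cleanest fix is to strengthen the induction hypothesis to: *for any $\N$, if $\bangBKRVTr\; \emptyset \vdash \bangTCtxt<t> : \N$ then $\bangBKRVTr\; \Gamma \vdash t : \sigma$ with $\bangBKRVInhPred{\Gamma}$ and $\bangBKRVInhPred{\bangTypeArgs{\sigma}}$*, reading $\bangTypeArgs{}$ as the argument multitypes accumulated until reaching a multitype (which $\N$ is). With this strengthening, the IH applied to $\bangTCtxt'$ and target $\M \typeArrow \emptymset$ gives $\Gamma \vdash t : \sigma$ with $\Gamma$ inhabited and $\bangTypeArgs{\sigma}$ inhabited; and since $\M$ is the left argument of the arrow type of $\bangTCtxt'<t>$, the derivation $\emptyset \vdash s : \M$ exhibits that $\M$ is inhabited, which is precisely what is needed to absorb $\M$ into the argument set. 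Third, the abstraction case $\bangTCtxt = \app{(\abs{x}{\bangTCtxt'})}{s}$: inverting $\bangBKRVAppRuleName$ then $\bangBKRVAbsRuleName$ gives a multitype $\M$ with $\emptyset, x : \M \vdash \bangTCtxt'<t> : \N$ (again $\N$ of the strengthened form) and $\emptyset \vdash s : \M$; apply the (suitably generalized to non-empty environments on the bound variable only) IH to get $\Gamma, x : \M \vdash t : \sigma$ with the appropriate inhabitation facts, then use $\emptyset \vdash s : \M$ to discharge the $x : \M$ assumption — substituting the derivation of $s$ for the axioms on $x$, or more simply keeping $x$ in the environment and observing $\M$ inhabited — so the residual environment on $t$ stays inhabited.

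\textbf{Main obstacle.} The genuinely delicate point is the bookkeeping around the environment when the hole of the testing context sits under a $\lambda$: the IH must be stated so that the environment is allowed to carry the bound variables introduced by the enclosing $\abs{x}{\cdot}$'s of the testing context, and one must track that each such binding receives a multitype that is independently inhabited (witnessed by the argument $s$ that the testing context pairs with it). So the real work is finding the right induction hypothesis — generalizing both the target type (from $\emptymset$ to an arbitrary multitype $\N$) and the ambient environment (from $\emptyset$ to one supported on the variables bound by $\bangTCtxt$) — after which each inductive case is a routine inversion of $\bangBKRVAppRuleName$ / $\bangBKRVAbsRuleName$ plus the remark that a closed derivation of an argument is exactly an inhabitation witness. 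I do not expect any step to require heavy computation; the subtlety is entirely in the formulation.
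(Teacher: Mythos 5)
Your proposal is correct and follows essentially the same route as the paper: the paper likewise strengthens the statement to an induction on $\bangTCtxt$ over arbitrary $\bangBKRVTypeSys$-testable typings (inhabited environment \emph{and} inhabited $\bangTypeArgs{\sigma}$, which subsumes your generalisation of both the target type and the ambient environment), and each case is an inversion of $\bangBKRVAppRuleName$/$\bangBKRVAbsRuleName$. The only cosmetic difference is that the step you describe informally — turning a derivation of the argument $s$ under an inhabited environment into a closed inhabitation witness for $\M$ by discharging the assumptions with a testing context — is isolated in the paper as a separate auxiliary lemma, which is exactly the "bottoms out in a closed derivation" argument you sketch.
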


Inhabitation serves as a crucial tool to produce an observable from a
typable term. As discussed previously, any type of a variable $x$ from
the environment of a meaningful term $t$ should be inhabited. Hence,
we need the type environment $\Gamma$ to be inhabited. However,
relying solely on the inhabitation of $\Gamma$ is not sufficient, as
illustrated by the typable term $\bangBKRVTr\; \emptyset \vdash
\abs{x}{\app{x}{x}} : \mset{\mset{\M} \typeArrow \tau, \M} \typeArrow
\tau$, which, despite having a trivially inhabited environment, is
\bangCalculusSymb-meaningless. We thus also test the inhabitation of type arguments of
the given type $\sigma$. This therefore means that
$\bangBKRVTypeSys$-testability is sufficient to ensure \bangCalculusSymb-meaningfulness.
Surprisingly, this actually provides a characterization of
\bangCalculusSymb-meaningfulness.

\begin{restatable}[Logical Characterization]{theorem}{RecBangMeaningfulTypabilityInhabitation}
    \LemmaToFromProof{Bang_Meaningfulness_Typing_and_Inhabitation}%
    \label{t:logical-characterization}
    Let $t \in \bangSetTerms$. Then $t$ is
    \bangCalculusSymb-meaningful if and only if $\,\bangBKRVTr \Gamma \vdash t : \sigma$ for some
    $\bangBKRVTypeSys$-testable typing $\typing{\Gamma}{\sigma}$.
\end{restatable}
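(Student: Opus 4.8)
The plan is to prove the two directions separately, using the machinery already set up. The easy direction is "meaningful $\Rightarrow$ testable". Assume $t$ is \bangCalculusSymb-meaningful, so there is a testing context $\bangTCtxt$ and a term $u$ with $\bangTCtxt<t> \bangArr*_S \oc u$. Since a bang $\oc u$ is typable with the typing $\typing{\emptyset}{\emptymset}$ (use rule \bangBKRVBgRuleName with empty index set), subject expansion along $\bangArr_S \subseteq \bangArr_F$ via \Cref{lem:Bang_BKRV}.\ref{lem:Bang_BKRV_Surface_Typing_SRE} gives $\bangBKRVTr\; \emptyset \vdash \bangTCtxt<t> : \emptymset$. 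Now apply \Cref{lem:Bang_O_|-_T<t>_:_[]_==>_Gam_|-_t_:_sig_and_Gam_and_args(sig)_inh} (the inhabitation-from-testing lemma) to extract $\bangBKRVTr\; \Gamma \vdash t : \sigma$ with $\bangBKRVInhPred{\Gamma}$ and $\bangBKRVInhPred{\bangTypeArgs{\sigma}}$; this is exactly a $\bangBKRVTypeSys$-testable typing, so $t$ is $\bangBKRVTypeSys$-testable.

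The hard direction is "testable $\Rightarrow$ meaningful", which is the main obstacle. Suppose $\bangBKRVTr\; \Gamma \vdash t : \sigma$ with $\Gamma = x_1\!:\!\M_1,\dots,x_n\!:\!\M_m$ and $\sigma = \N_1 \typeArrow \cdots \typeArrow \N_k \typeArrow \rho$, where $\rho$ is either a multitype or a type variable that we can only reach once the arrows are exhausted — by definition of $\bangTypeArgs{\cdot}$, each $\M_i$ and each $\N_j$ lies in a set we know to be inhabited, so there are closed terms $v_i$ with $\bangBKRVTr\; \emptyset \vdash v_i : \M_i$ and closed terms $w_j$ with $\bangBKRVTr\; \emptyset \vdash w_j : \N_j$. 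The idea is to build the testing context that first feeds the arguments $w_1,\dots,w_k$ to $t$ and then binds each free variable $x_i$ to $v_i$: concretely, something like $\bangTCtxt = \app{(\abs{x_1}{\cdots \abs{x_m}{(\app{\app{\Hole}{w_1}}{\cdots}{w_k})}})}{v_1}\cdots v_m$, which is a legal testing context since every $\lambda$ introduced is immediately applied. By the typing rules, $\bangTCtxt<t>$ is then typable with $\typing{\emptyset}{\rho}$. If $\rho = \emptymset$ we are already aiming at a bang; if $\rho$ is some other multitype or a type variable, we postcompose with further inhabited arguments (or a dereliction/bang wrapper) to drive the subject toward a typing of the form $\typing{\emptyset}{\emptymset}$. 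Then by \Cref{lem:Bang_BKRV}.\ref{lem:Bang_BKRV_Characterization}, $\bangTCtxt<t>$ \surfaceTxt-reduces to a \surfaceTxt clash-free normal form $N$, and $N$ is still typed by $\typing{\emptyset}{\emptymset}$ using \Cref{lem:Bang_BKRV}.\ref{lem:Bang_BKRV_Surface_Typing_SRE}. The crux is then a syntactic analysis of the grammar $\bangNo_S$ of \surfaceTxt clash-free normal forms: a closed normal form typable with the empty multitype $\emptymset$ must be a bang (an element of $\bangNa_S$ of the shape $\oc s$, since a neutral term $\bangNe_S$ headed by a free variable cannot be closed, and an abstraction would force an arrow type, not a multitype). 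Hence $N = \oc s$ for some $s$, so $\bangTCtxt<t> \bangArr*_S \oc s$ and $t$ is \bangCalculusSymb-meaningful.

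The subtle points I expect to fight with: (i) making the construction of $\bangTCtxt$ uniform across the cases $\rho = \emptymset$, $\rho$ a nonempty multitype, and $\rho$ a type variable — one clean way is to observe that whenever a term is typable with \emph{any} testable typing it is in particular $\bangBKRVTypeSys$-typable, hence \surfaceTxt-normalizes to a clash-free normal form, and then argue directly that the right interaction context turns that normal form into a bang; (ii) the variable-capture bookkeeping when the $x_i$ occur free in the inhabitants $v_j$ or $w_j$ — this is handled by choosing the closed inhabitants (which exist, since inhabitation in $\bangBKRVTypeSys$ is with respect to the empty environment) and by $\alpha$-renaming; and (iii) the closed-normal-form-typed-by-$\emptymset$-is-a-bang lemma, which needs a careful induction on the $\bangNo_S$ grammar together with the shape of type derivations, ruling out neutral and abstraction forms. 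This last lemma is really the heart of the argument and is where the two data structures of \bangCalculusSymb — abstractions versus bangs — are separated at the level of types, mirroring the conflict already illustrated by $\app{x}{x}$.
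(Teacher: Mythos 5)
Your proof is correct and follows essentially the same route as the paper's: the forward direction via subject expansion and the inhabitation-extraction lemma, and the backward direction by building a testing context from closed inhabitants of $\Gamma$ and $\bangTypeArgs{\sigma}$, reducing $\bangTCtxt\cbnCtxtPlug{t}$ (typed with a non-functional type in the empty environment) to a surface clash-free normal form, and concluding with the key syntactic lemma that such a closed normal form must be a bang --- exactly the paper's decomposition into its two testing-context-construction lemmas plus the closed-normal-form analysis. The only minor slip is the parenthetical suggestion of a ``dereliction/bang wrapper,'' which would not be a testing context, but your main argument does not rely on it.
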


A \emph{\bangTheory-theory} is an equivalence $\equiv$ on
$\bangSetTerms$ containing $\bangArr_F$ and closed under full
contexts. Let $\bangTheoryH$ (also noted $\equiv_{\bangTheoryH}$) be
the smallest \bangTheory-theory equating all
\bangCalculusSymb-meaningless terms. \Cref{t:logical-characterization}
entails that $\bangTheoryH$ is \defn{consistent}, that is, it does not
equate~all~terms.

\begin{restatable}[Consistency of $\bangTheoryH$]{proposition}{RecConsistencyThH}
    \LemmaToFromProof{Bang_Theory_H_Consistent}%
    \label{prop:consistency-H}
    There exist $t, u \in \bangSetTerms$ such that $t
    \not\equiv_{\bangTheoryH} u$.
\end{restatable}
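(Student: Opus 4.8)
The plan is to exhibit two concrete terms that cannot be identified by $\bangTheoryH$. The natural candidates are a meaningful term, say the identity $\Id = \abs{z}{z}$, and a meaningless one, say $\Omega$. Since $\bangTheoryH$ equates all meaningless terms but is generated as the \emph{smallest} $\bangTheory$-theory doing so, the key is to build a semantic (or syntactic) invariant that is preserved by $\equiv_{\bangTheoryH}$, is shared by all meaningless terms, and yet distinguishes $\Id$ from $\Omega$.

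First I would observe that, by \Cref{t:logical-characterization}, $\Omega$ is \bangCalculusSymb-meaningless (as already noted in the text: no testing context can erase the non-normalizing $\Omega$), hence $\Omega \equiv_{\bangTheoryH} t_0$ where $t_0 = \app{x}{x}$, and indeed $\Omega \equiv_{\bangTheoryH} s$ for every meaningless $s$. So it suffices to show $\Id \not\equiv_{\bangTheoryH} \Omega$. For this I would set up the invariant ``being $\bangBKRVTypeSys$-testable'': by \Cref{t:logical-characterization} a term is testable iff it is meaningful, so \emph{no} meaningless term is testable, whereas $\Id$ \emph{is} testable (it is meaningful, witnessed by $\bangTCtxt = \app[\,]{\Hole}{\oc\oc u}$, or directly by the derivation analogous to \Cref{fig:Example_Inhabitation}). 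The crux is then: show that if $t \equiv_{\bangTheoryH} u$ and $t$ is meaningful, then $u$ is meaningful. Equivalently, the set of meaningful terms is closed under $\equiv_{\bangTheoryH}$. Granting this, $\Id$ meaningful and $\Omega$ meaningless forces $\Id \not\equiv_{\bangTheoryH}\Omega$, which is exactly the claim.

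To prove that closure, I would proceed by induction on the generation of $\equiv_{\bangTheoryH}$ as the least equivalence relation containing $\bangArr_F$, containing all pairs of meaningless terms, and closed under full contexts, reflexivity, symmetry and transitivity. The reflexivity, symmetry and transitivity cases are immediate. For the $\bangArr_F$ case I need: meaningfulness is preserved (and reflected) by $\bangArr_F$; this follows from \Cref{lem:Bang_BKRV}.\ref{lem:Bang_BKRV_Surface_Typing_SRE} (subject reduction/expansion for full reduction preserves the typing, hence $\bangBKRVTypeSys$-testability, hence meaningfulness by \Cref{t:logical-characterization}). For the ``meaningless $\equiv$ meaningless'' base case there is nothing to check, since the statement to be preserved is vacuous there (both sides meaningless). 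The delicate case is closure under full contexts: if $t \equiv_{\bangTheoryH} u$ with (by IH) meaningfulness being a $\equiv_{\bangTheoryH}$-invariant on the already-generated pairs, I must show $\bangFCtxt<t>$ meaningful iff $\bangFCtxt<u>$ meaningful. This is the main obstacle: contextual closure can plug a meaningful term where its behaviour matters, and the generating pairs include \emph{all} meaningless-meaningless pairs, so I cannot argue ``$t$ and $u$ behave the same''. Instead I would reduce to the case where $t$ and $u$ are both meaningless (the genuinely new generators): then $\bangFCtxt<t>$ and $\bangFCtxt<u>$ are related by $\equiv_{\bangTheoryH}$ and I must show they are simultaneously meaningful or meaningless — which is precisely an instance of \emph{genericity} for \bangCalculusSymb, namely that replacing a meaningless subterm by any term does not affect meaningfulness of the ambient term. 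So the honest dependency is: \Cref{prop:consistency-H} follows from \Cref{t:logical-characterization} together with the genericity property announced for \Cref{sec:typed-genericity}.

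Concretely, then, the proof I would write is short: take $t = \Id$ and $u = \Omega$; $\Id$ is meaningful (exhibit a testing context or a $\bangBKRVTypeSys$-testable typing and invoke \Cref{t:logical-characterization}); $\Omega$ is meaningless (no testing context erases the non-normalizing $\Omega$, so it has no $\bangBKRVTypeSys$-testable typing, again by \Cref{t:logical-characterization}); and meaningfulness is an invariant of $\equiv_{\bangTheoryH}$, by the induction on generators sketched above, using \Cref{lem:Bang_BKRV}.\ref{lem:Bang_BKRV_Surface_Typing_SRE} for the reduction case and genericity for the contextual case. Hence $\Id \not\equiv_{\bangTheoryH} \Omega$. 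I expect the genericity-driven contextual-closure step to be the part that genuinely needs the machinery of the later sections; everything else is bookkeeping on the inductive definition of the theory plus the already-established logical characterization.
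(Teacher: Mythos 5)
Your argument is correct, but it routes through a different enveloping theory than the paper does, and it leans on heavier machinery. The paper's own proof introduces the equational theory $\equiv_{\bangBKRVTypeSys}$ identifying two terms exactly when they admit the same set of $\bangBKRVTypeSys$-testable typings, checks that this is a consistent $\bangTheory$-theory (closure under $\bangArr_F$ comes from \Cref{lem:Bang_BKRV}.\ref{lem:Bang_BKRV_Surface_Typing_SRE}, closure under full contexts from the compositionality of the type system together with the bottom-up propagation of testability in \Cref{lem:Bang_BKRV_and_Inh_Equiv_AGKMean}, and consistency from the fact that $xx$ has no testable typing while $\abs{x}{x\oc x}$ has one), and then observes via \Cref{t:logical-characterization} that this theory equates all meaningless terms, so by minimality $\equiv_{\bangTheoryH}\,\subseteq\,\equiv_{\bangBKRVTypeSys}$ and consistency of $\bangTheoryH$ follows. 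You instead take the enveloping theory to be, in effect, $\bangTheoryH*$: your invariant only survives the contextual-closure step of the induction once you strengthen it from ``$t$ meaningful iff $u$ meaningful'' to ``for every full context $\bangFCtxt$, $\bangFCtxt<t>$ is meaningful iff $\bangFCtxt<u>$ is'' — you gesture at this but should state it explicitly as the induction hypothesis, since the weak form genuinely fails to close under contexts — and establishing that strengthened invariant on the meaningless--meaningless generators is precisely qualitative surface genericity (\Cref{lem:Bang_Qualitative_Surface_Genericity}). There is no circularity, because genericity is proved from the logical characterization alone, so your proof is sound; it is essentially the ``containing $\bangTheoryH$'' plus ``consistency'' halves of the paper's later theorem that $\bangTheoryH*$ is the unique maximal consistent extension of $\bangTheoryH$. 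The trade-off is clear: the paper's route is more economical and self-contained at the point where the proposition is stated, needing only subject reduction/expansion and the characterization of meaningfulness, whereas your route front-loads genericity but delivers the stronger conclusion $\bangTheoryH \subseteq \bangTheoryH*$ for free.
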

We also corroborate our definition of meaningfulness by proving that
it fulfills a genericity property and showing that $\mathcal{H}$
admits a unique maximal consistent extension
(\Cref{sec:typed-genericity}) and subsumes the well-established
notions of meaningfulness for \cbnCalculusSymb and \cbvCalculusSymb
(\Cref{sec:subsuming}).

\section{Typed and Surface Genericity in \bangCalculusSymb}
\label{sec:typed-genericity} 

In \Cref{sec:Meaningfulness}, we proved that
$\bangCalculusSymb$-meaningfulness is captured by typability in system
$\bangBKRVTypeSys$ with some $\bangBKRVTypeSys$-testable typing.
While this concise characterization as ``meaningfulness = typability +
inhabitation''~\cite{BucciarelliKesnerRonchi21} provides a high level
understanding, its practical manipulation might pose some challenges.
Suppose we study some properties of a $\bangCalculusSymb$-meaningful
term $t$ through the logical characterization
(\Cref{lem:Bang_Meaningfulness_Typing_and_Inhabitation}), thus having
a type derivation $\Pi \bangBKRVTr \Gamma \vdash t: \sigma$ with
$\typing{\Gamma}{\sigma}$ $\bangBKRVTypeSys$-testable. If we proceed
by induction on $\Pi$, then there is no guarantee that all the
judgments appearing in $\Pi$ have $\bangBKRVTypeSys$-testable typings
as well, which would make the reasoning awkward and the logical
characterization of
\Cref{lem:Bang_Meaningfulness_Typing_and_Inhabitation} difficult to
exploit. But this is not the case. Upcoming
\Cref{lem:Bang_BKRV_and_Inh_Equiv_AGKMean} states that
$\bangBKRVTypeSys$-testability propagates bottom-up: if the conclusion
of a derivation $\Pi$ has a $\bangBKRVTypeSys$-testable typing, then
so does every other~judgment~in~$\Pi$.

We write $\Pi \bangAGKMeanTr \Gamma \vdash t : \sigma$ if $\Pi
\bangBKRVTr \Gamma \vdash t : \sigma$ and each judgment in
$\Pi$ is
$\bangBKRVTypeSys$-testable, and $\Pi \bangAGKMeanTr t$ if
$\Pi \bangAGKMeanTr \Gamma \vdash t : \sigma$ for some typing
$\typing{\Gamma}{\sigma}$.

\begin{restatable}{lemma}{RecBangBKRVEquivAGKMean}
  \LemmaToFromProof{Bang_BKRV_and_Inh_Equiv_AGKMean}%
  Let $t \in \bangSetTerms$: $\Pi \bangBKRVTr \Gamma \vdash t :
  \sigma$ with
  $\typing{\Gamma}{\sigma}$ $\bangBKRVTypeSys$-testable iff $\Pi
  \bangAGKMeanTr \Gamma \vdash t : \sigma$.
\end{restatable}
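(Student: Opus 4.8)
The statement is an "iff" but one direction is trivial: if $\Pi \bangAGKMeanTr \Gamma \vdash t : \sigma$ then by definition every judgment in $\Pi$ is $\bangBKRVTypeSys$-testable, in particular its conclusion $\typing{\Gamma}{\sigma}$, which gives $\Pi \bangBKRVTr \Gamma \vdash t : \sigma$ with testable conclusion. The content is in the forward direction: assuming only that the \emph{root} typing $\typing{\Gamma}{\sigma}$ is $\bangBKRVTypeSys$-testable, we must show that \emph{every} judgment occurring in $\Pi$ is $\bangBKRVTypeSys$-testable. I would prove this by induction on the derivation $\Pi$, inspecting each of the six typing rules of \Cref{fig:BangBKRV_Typing_System_Rules}, and showing in each case that testability of the conclusion propagates to every premise; the induction hypothesis then finishes the job for the subderivations.

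The key technical fact feeding the case analysis is a short collection of "inhabitation is hereditary downward on type structure" observations, which I would isolate as auxiliary claims: (i) if $\bangBKRVInhPred{\Gamma + \Delta}$ then $\bangBKRVInhPred{\Gamma}$ and $\bangBKRVInhPred{\Delta}$ (splitting an inhabited environment keeps both parts inhabited, since the multitype assigned to each variable by $\Gamma$ or $\Delta$ is a sub-multiset of the inhabited one — and a sub-multiset of an inhabited multitype is inhabited, because one may delete premises of the $\bangBKRVBgRuleName$ rule); (ii) if $\bangBKRVInhPred{\bangTypeArgs{\M \typeArrow \sigma}}$ then $\bangBKRVInhPred{\M}$ and $\bangBKRVInhPred{\bangTypeArgs{\sigma}}$, which is immediate from the defining equation $\bangTypeArgs{\M \typeArrow \sigma} = \{\M\} \cup \bangTypeArgs{\sigma}$; and (iii) $\bangBKRVInhPred{\M}$ iff every $\sigma$ in $\M$ is inhabited, and each such $\sigma$ then also has $\bangBKRVInhPred{\bangTypeArgs{\sigma}}$ — because $\bangTypeArgs{\sigma}$, for $\sigma$ of arrow shape $\N_1 \typeArrow \cdots \typeArrow \N_k \typeArrow \rho$, collects multitypes $\N_j$ each of which is inhabited whenever $\sigma$ is (decompose an inhabitant of $\sigma$: if $t$ inhabits $\N_1 \typeArrow \tau$ one extracts from the axiom/abstraction structure an inhabitant of $\N_1$, and similarly down the spine). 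With these in hand the rule cases go through mechanically: for $\bangBKRVVarRuleName$ there are no premises; for $\bangBKRVAppRuleName$, the conclusion environment is $\Gamma + \Delta$ and the conclusion type is $\sigma$, and testability of $(\Gamma + \Delta; \sigma)$ gives via (i) that both $\Gamma$ and $\Delta$ are inhabited, while the left premise has type $\M \typeArrow \sigma$ with $\bangBKRVInhPred{\bangTypeArgs{\M \typeArrow \sigma}}$ following from $\bangBKRVInhPred{\M}$ (which holds since $\Delta \vdash u : \M$ forces $\M$ to be inhabited — $u$ itself, suitably closed, witnesses it, or more carefully one uses that the multitype typing an argument is realized) and $\bangBKRVInhPred{\bangTypeArgs{\sigma}}$; the right premise types $u$ with $\M$ under $\Delta$, both inhabited; for $\bangBKRVAbsRuleName$ with conclusion type $\M \typeArrow \sigma$, (ii) hands us exactly $\bangBKRVInhPred{\M}$ and $\bangBKRVInhPred{\bangTypeArgs{\sigma}}$, and the premise environment is $\Gamma, x : \M$, inhabited because $\Gamma$ is and $\M$ is; $\bangBKRVEsRuleName$ is analogous to the application case; $\bangBKRVBgRuleName$ and $\bangBKRVDerRuleName$ are handled using (iii) together with the fact that $\bangTypeArgs{\M} = \emptyset$ so the argument condition is vacuous for the bang conclusion and one only needs the premise types $\sigma_i$ (resp.\ the single $\sigma$) to satisfy $\bangBKRVInhPred{\bangTypeArgs{\sigma_i}}$, which is (iii).

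\textbf{The main obstacle.} The delicate point is establishing fact (iii) — that an inhabited arrow type $\N_1 \typeArrow \cdots \typeArrow \N_k \typeArrow \rho$ has each $\N_j$ inhabited — and, relatedly, justifying in the $\bangBKRVAppRuleName$ and $\bangBKRVEsRuleName$ cases that the multitype $\M$ typing the argument is itself inhabited. One cannot simply say "$u$ inhabits $\M$" because $u$ need not be closed, and inhabitation as defined requires the empty environment. The clean way around this is to prove a separate lemma: if $\bangBKRVTr \Delta \vdash u : \M$ with $\bangBKRVInhPred{\Delta}$, then $\bangBKRVInhPred{\M}$ — obtained by substituting, into $u$, closed inhabitants for each free variable according to $\Delta$ (using a substitution/typing-stability property of the system), yielding a closed term of the required type; and symmetrically that decomposing an inhabitant of an arrow type along its spine of $\bangBKRVAbsRuleName$ and $\bangBKRVVarRuleName$ applications produces closed inhabitants of each argument multitype. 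Once this substitution lemma is in place, every rule case is routine, and the induction closes. I expect this auxiliary substitution-into-inhabitants argument to be where the real work sits; everything else is bookkeeping over the six rules and the definition of $\bangTypeArgs{\cdot}$.
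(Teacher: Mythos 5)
Your overall plan coincides with the paper's: the backward direction is immediate from the definition of $\bangAGKMeanTr$, and the forward direction is an induction on $\Pi$ with one case per rule, driven by the facts that an inhabited sum environment has both summands inhabited and that $\bangTypeArgs{\M \typeArrow \sigma} = \{\M\} \cup \bangTypeArgs{\sigma}$. One presentational difference: in the \bangBKRVAppRuleName/\bangBKRVEsRuleName cases, where the argument multitype $\M$ of the right premise must be shown inhabited, the paper does not substitute closed inhabitants into the (possibly open) subject $u$. Instead, \Cref{lem:Bang_building_testing_ctxt_from_type_ctxt} shows that from $\bangBKRVTr\; \Delta \vdash u : \M$ with $\bangBKRVInhPred{\Delta}$ one obtains $\bangBKRVTr\; \emptyset \vdash \bangTCtxt<u> : \M$ by wrapping $u$ in $\app{(\abs{x}{\Hole})}{s}$ for each $x \in \typeCtxtDom{\Delta}$, using only the existing \bangBKRVAbsRuleName and \bangBKRVAppRuleName rules. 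The closed inhabitant is thus a larger term containing $u$, and no substitution-stability lemma is needed; your substitution route would also work but is strictly more expensive.

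The genuine gap is your fact (iii): the claim that an inhabited arrow type $\N_1 \typeArrow \cdots \typeArrow \N_k \typeArrow \rho$ has each $\N_j$ inhabited is false, and the counterexample is the very one the paper uses to motivate testability. The closed term $\abs{x}{\app{x}{x}}$ satisfies $\bangBKRVTr\; \emptyset \vdash \abs{x}{\app{x}{x}} : \mset{\mset{\M} \typeArrow \tau, \M} \typeArrow \tau$, so that arrow type is inhabited, yet its argument multitype $\mset{\mset{\M} \typeArrow \tau, \M}$ is not inhabited (a closed inhabitant would have to reduce both to an abstraction and to a bang). Your proposed justification — decomposing an inhabitant along its abstraction spine to extract inhabitants of the $\N_j$ — fails precisely because the \bangBKRVAbsRuleName rule records in $\N_1$ whatever multitype the body happens to demand of $x$, and nothing forces that demand to be realizable by a closed term; were (iii) true, testability would collapse to typability for closed terms, defeating the paper's whole point. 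You invoke (iii) in the \bangBKRVBgRuleName case to obtain $\bangBKRVInhPred{\bangTypeArgs{\sigma_i}}$ for the premises, and that is exactly where the problem bites and cannot be patched by local bookkeeping: in the one-premise \bangBKRVBgRuleName derivation concluding $\emptyset \vdash \oc\abs{x}{\app{x}{x}} : \mset{\mset{\mset{\M} \typeArrow \tau, \M} \typeArrow \tau}$, the conclusion typing is testable (the environment is empty and the type is a multitype, so both conditions are vacuous), while the premise judgment $\emptyset \vdash \abs{x}{\app{x}{x}} : \mset{\mset{\M} \typeArrow \tau, \M} \typeArrow \tau$ is not testable. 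So the $\oc$-case needs a genuinely different treatment (or the invariant being propagated must be reformulated); note by contrast that the \bangBKRVDerRuleName case is harmless, since its premise type is a multitype whose argument set is empty, so you do not need (iii) there at all.
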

\begin{proof}
    ($\Leftarrow$): Trivial.
    ($\Rightarrow$): By an induction on $\Pi$.
\end{proof}

We can therefore easily reason using the
logical characterization of meaningfulness, resulting in the following first genericity result for
\bangCalculusSymb.

\begin{restatable}[Typed Genericity]{theorem}{RecBangMeaningfulTypeMonotonicity}
  \LemmaToFromProof{Bang_AGKMean_Typed_Genericity}%
  \label{t:typed-genericity}
    Let $t \in \bangSetTerms$ be \bangCalculusSymb-meaningless and
      $\bangFCtxt$ be a full context. If $\bangAGKMeanTr\; \Gamma
      \vdash \bangFCtxt<t> : \sigma$, then $\bangAGKMeanTr\; \Gamma
      \vdash \bangFCtxt<u> : \sigma$ for all $u \in \bangSetTerms$.
    \end{restatable}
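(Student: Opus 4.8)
The plan is to proceed by induction on the full context $\bangFCtxt$, establishing along the way a strengthened statement that keeps track of the typing of the hole. The key observation is that a derivation $\Phi \bangAGKMeanTr \Gamma \vdash \bangFCtxt<t> : \sigma$ decomposes, following the shape of $\bangFCtxt$, into an ``outer'' part that uses the judgments assigned to the occurrence of $t$ and an ``inner'' family of subderivations typing $t$ itself. Since every judgment in $\Phi$ is $\bangBKRVTypeSys$-testable (by definition of $\bangAGKMeanTr$, and using \Cref{lem:Bang_BKRV_and_Inh_Equiv_AGKMean}), each subderivation typing $t$ has conclusion $\Delta_j \vdash t : \tau_j$ with $\typing{\Delta_j}{\tau_j}$ $\bangBKRVTypeSys$-testable. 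The crucial point is that $t$ is \bangCalculusSymb-meaningless, so by the logical characterization (\Cref{t:logical-characterization}) $t$ \emph{cannot} be typed by any $\bangBKRVTypeSys$-testable typing. Hence the family of inner subderivations typing $t$ must be \emph{empty}: the only way $t$ can occur in a $\bangAGKMeanTr$-derivation is under a $\oc$ with the rule \bangBKRVBgRuleName\ applied with $I = \emptyset$, which types $\oc(\bangFCtxt'<t>)$ with $\emptymset$ while leaving the subterm entirely untyped.

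Concretely, I would strengthen the statement to: \emph{if $\bangAGKMeanTr \Gamma \vdash \bangFCtxt<t> : \sigma$ for \bangCalculusSymb-meaningless $t$, then either (i) the hole of $\bangFCtxt$ lies under some $\oc$ whose typing in the derivation uses \bangBKRVBgRuleName\ with empty index set, or (ii) $\bangFCtxt<t>$ is typed without using any judgment about $t$}; in both cases the same derivation structure types $\bangFCtxt<u>$ for every $u$. The induction on $\bangFCtxt$ then runs smoothly: for the base case $\bangFCtxt = \Hole$, a $\bangAGKMeanTr$-derivation of $\Gamma \vdash t : \sigma$ would make $\typing{\Gamma}{\sigma}$ $\bangBKRVTypeSys$-testable, contradicting meaninglessness of $t$ via \Cref{t:logical-characterization} --- so this case is vacuous. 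For the inductive constructors $\app{\bangFCtxt'}{s}$, $\app{s}{\bangFCtxt'}$, $\abs{x}{\bangFCtxt'}$, $\der{\bangFCtxt'}$, $\bangFCtxt'\esub{x}{s}$, $s\esub{x}{\bangFCtxt'}$, the last rule of $\Phi$ is forced by the constructor, its premise concerning $\bangFCtxt'<t>$ is again a $\bangAGKMeanTr$-derivation (by \Cref{lem:Bang_BKRV_and_Inh_Equiv_AGKMean}, testability propagates bottom-up), the induction hypothesis applies, and we reassemble the derivation with $u$ in place of $t$, keeping the same types. For the remaining constructor $\oc\bangFCtxt'$: the last rule is \bangBKRVBgRuleName\ with some index set $I$; each premise $\Gamma_i \vdash \bangFCtxt'<t> : \sigma_i$ would be $\bangAGKMeanTr$ and, by induction hypothesis on $\bangFCtxt'$, already impossible unless that premise does not actually constrain $t$ --- but any genuine premise typing $\bangFCtxt'<t>$ would (peeling $\bangFCtxt'$ down to the hole) eventually force a $\bangBKRVTypeSys$-testable typing of $t$. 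Hence $I = \emptyset$, the subterm $\oc\bangFCtxt'<t>$ is typed with $\emptymset$ and no premises, and replacing $t$ by $u$ gives $\oc\bangFCtxt'<u>$ typed identically.

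The main obstacle I anticipate is making the ``decomposition of $\Phi$ along $\bangFCtxt$'' precise and verifying that testability really does propagate to \emph{every} subderivation touching the hole --- in particular handling the multiplicative rules \bangBKRVAppRuleName\ and \bangBKRVEsRuleName, where the environment and the type of a premise are split, so one must check that the $\bangBKRVTypeSys$-testability of the conclusion's typing (i.e. inhabitation of $\Gamma$ and of $\bangTypeArgs{\sigma}$) descends to each premise's typing. This is exactly what \Cref{lem:Bang_BKRV_and_Inh_Equiv_AGKMean} delivers, so the argument reduces to invoking it at each inductive step; the only genuinely new input beyond that lemma is \Cref{t:logical-characterization}, used as the contradiction that kills every branch in which $t$ would be non-trivially typed. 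Once the strengthened invariant is in place, reconstructing the derivation for $\bangFCtxt<u>$ is a routine rule-by-rule rewrite that preserves $\Gamma$ and $\sigma$, yielding $\bangAGKMeanTr \Gamma \vdash \bangFCtxt<u> : \sigma$ as required.
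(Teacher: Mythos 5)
Your proposal follows essentially the same route as the paper's proof: a structural induction on $\bangFCtxt$ whose base case $\bangFCtxt = \Hole$ is vacuous (a $\bangAGKMeanTr$-derivation of $t$ would make $t$ \bangCalculusSymb-meaningful by \Cref{t:logical-characterization} and \Cref{lem:Bang_BKRV_and_Inh_Equiv_AGKMean}, contradicting the hypothesis), and whose inductive cases reassemble the derivation with $u$ in place of $t$ after applying the induction hypothesis to the premise(s) concerning $\bangFCtxt'$. The ``strengthened invariant'' you carry along is not needed: since any subderivation of a $\bangAGKMeanTr$-derivation is again a $\bangAGKMeanTr$-derivation, the plain statement of the theorem already serves as a usable induction hypothesis in every case, including the one for \bangBKRVBgRuleName.

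One concrete claim is wrong, although dropping it recovers the correct argument. In the case $\bangFCtxt = \oc\bangFCtxt'$ you conclude ``Hence $I = \emptyset$'' for the outermost application of \bangBKRVBgRuleName. This is false in general: take $\bangFCtxt = \oc\oc\Hole$ and type $\oc\oc t$ by applying the outer bang rule with $I = \{1\}$ and $\sigma_1 = \emptymset$, the single premise $\emptyset \vdash \oc t : \emptymset$ being obtained from the inner bang rule with an empty index set. Every judgment in this derivation is $\bangBKRVTypeSys$-testable, yet the outer index set is nonempty; the hole is erased only at the \emph{inner} bang. What is true --- and all that is needed --- is that each premise $\Gamma_i \vdash \bangFCtxt'<t> : \sigma_i$ is itself a $\bangAGKMeanTr$-derivation, so the induction hypothesis on $\bangFCtxt'$ converts it into one for $\bangFCtxt'<u>$ with the same typing, exactly as in the other constructors and exactly as the paper does. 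The fact that $t$ itself is never explicitly typed anywhere in the derivation is a \emph{consequence} of the induction (via the vacuous base case), not something to be imposed on the outermost bang.
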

\begin{proof}
By induction on $\bangFCtxt$, using both
\Cref{lem:Bang_Meaningfulness_Typing_and_Inhabitation,lem:Bang_BKRV_and_Inh_Equiv_AGKMean}.
\end{proof}

This proof relies entirely on the fact that the meaningless subterm
$t$ cannot be explicitly typed in any of the judgments of $\Pi$, as
typing $t$ in $\bangAGKMeanTypeSys$ is equivalent to being
\bangCalculusSymb-meaningful (by
\Cref{t:logical-characterization,lem:Bang_BKRV_and_Inh_Equiv_AGKMean}).
Thus, the typed genericity theorem fails when weakening the hypothesis
from  $\bangAGKMeanTypeSys$-typability to
$\bangBKRVTypeSys$-typability. Consider for example the
\bangCalculusSymb-meaningless term $t = \app{x}{x}$ and the context
$\bangFCtxt = \app[\,]{y}{\Hole}$, then $\bangFCtxt<t>$ is
$\bangBKRVTypeSys$-typable as witnessed by $\bangBKRVTr\; y : \mset{\N
\typeArrow \alpha}, x : \mset{\M \typeArrow \N, \M} \vdash
\bangFCtxt<t> : \alpha$ --note that the type of $x$ is not inhabited--
while $\bangFCtxt<\Omega>$ is~not~$\bangBKRVTypeSys$-typable.

As a consequence, we can prove a  qualitative
(surface) genericity result. We call it \emph{surface},
despite it universally quantify over full contexts, as
$\bangCalculusSymb$-meaningful is defined in terms of
surface reduction. The corresponding results for \cbnCalculusSymb and
\cbvCalculusSymb are also called \emph{surface} in~\cite{ArrialGuerrieriKesner24} and \emph{light} in~\cite{AccattoliLancelot24}, later generalized to a \emph{stratified}
notion in~\cite{ArrialGuerrieriKesner24}.

\begin{restatable}[Qualitative Surface Genericity]{corollary}{RecBangQualitativeSurfaceGenericity}
    \label{lem:Bang_Qualitative_Surface_Genericity}%
    Let $\bangFCtxt$ be a full context. If $\bangFCtxt<t>$ is
    \bangCalculusSymb-meaningful for some \bangCalculusSymb-meaningless $t \in
    \bangSetTerms$, then $\bangFCtxt<u>$ is
    \bangCalculusSymb-meaningful \mbox{for all $u \in \bangSetTerms$}.
\end{restatable}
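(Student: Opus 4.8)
The plan is to derive this corollary purely by combining the three results already established in this section, with no fresh induction: the logical characterization of meaningfulness (\Cref{t:logical-characterization}), the bottom-up propagation of testability (\Cref{lem:Bang_BKRV_and_Inh_Equiv_AGKMean}), and typed genericity (\Cref{t:typed-genericity}).

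First I would unfold the hypothesis: assume $\bangFCtxt<t>$ is \bangCalculusSymb-meaningful for some \bangCalculusSymb-meaningless $t \in \bangSetTerms$. By the ``only if'' direction of \Cref{t:logical-characterization}, there is a $\bangBKRVTypeSys$-testable typing $\typing{\Gamma}{\sigma}$ and a derivation $\Pi \bangBKRVTr \Gamma \vdash \bangFCtxt<t> : \sigma$. Applying \Cref{lem:Bang_BKRV_and_Inh_Equiv_AGKMean} to $\Pi$ upgrades this to $\Pi \bangAGKMeanTr \Gamma \vdash \bangFCtxt<t> : \sigma$, i.e.\ every judgment occurring in $\Pi$ (in particular its conclusion) carries a $\bangBKRVTypeSys$-testable typing.

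Next, since $t$ is \bangCalculusSymb-meaningless and $\bangFCtxt$ is a full context, \Cref{t:typed-genericity} applies directly to $\bangAGKMeanTr \Gamma \vdash \bangFCtxt<t> : \sigma$: for every $u \in \bangSetTerms$ we obtain $\bangAGKMeanTr \Gamma \vdash \bangFCtxt<u> : \sigma$. Reading off the conclusion of this derivation gives $\bangBKRVTr \Gamma \vdash \bangFCtxt<u> : \sigma$ with $\typing{\Gamma}{\sigma}$ still $\bangBKRVTypeSys$-testable (it is one of the testable judgments guaranteed by $\bangAGKMeanTr$). Hence, by the ``if'' direction of \Cref{t:logical-characterization}, $\bangFCtxt<u>$ is \bangCalculusSymb-meaningful for every $u \in \bangSetTerms$, which is exactly the statement.

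I do not expect any genuine obstacle here: the substantive content — that a meaningless subterm never gets explicitly typed inside a testable derivation, so it can be replaced by an arbitrary term while preserving the derivation and its testable typing — has already been absorbed into \Cref{t:typed-genericity}. The only minor care point is bookkeeping: meaningfulness is phrased via testable typability of the \emph{whole} term rather than of an arbitrary sub-judgment, so one must explicitly extract the conclusion judgment of the transported $\bangAGKMeanTr$-derivation, which is immediate from the definition of $\bangAGKMeanTr$.
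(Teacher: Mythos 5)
Your argument is correct and is essentially identical to the paper's own proof: both chain \Cref{t:logical-characterization} and \Cref{lem:Bang_BKRV_and_Inh_Equiv_AGKMean} to obtain a $\bangAGKMeanTr$-derivation for $\bangFCtxt<t>$, then invoke \Cref{t:typed-genericity} to transport it to $\bangFCtxt<u>$, and conclude by the same two results in the reverse direction. The only difference is that you spell out the bookkeeping about extracting the testable conclusion judgment, which the paper leaves implicit.
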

\begin{proof}
Let $u \!\in\! \bangSetTerms$. As $\bangFCtxt<t>$ is
\bangCalculusSymb-meaningful, then there is $\Pi \bangAGKMeanTr
\bangFCtxt<t>$ by~\Cref{t:logical-characterization}
and~\Cref{lem:Bang_BKRV_and_Inh_Equiv_AGKMean}. As $t$ is
\bangCalculusSymb-meaningless, then there is $\Pi' \bangAGKMeanTr
\bangFCtxt<u>$ by~\Cref{lem:Bang_AGKMean_Typed_Genericity}, and hence
$\bangFCtxt<u>$ is \bangCalculusSymb-meaningful
by~\Cref{t:logical-characterization}
and~\Cref{lem:Bang_BKRV_and_Inh_Equiv_AGKMean}.
\end{proof}

Genericity is a sanity check on meaningfulness: it holds only if all
\bangCalculusSymb-meaningless terms are \emph{truly} meaningless.
Still, some truly meaningless terms might be misinterpreted as
\bangCalculusSymb-meaningful. Indeed, when crafting a notion of
\bangCalculusSymb-meaningless that would satisfy genericity, one might
not take \emph{all} truly meaningless terms. 
The \bangTheory-theory $\bangTheoryH*$ is introduced to avoid that.
Let $\bangTheoryH*$, also noted $\equiv_{\bangTheoryH*}$, be the relation on $\bangSetTerms$ defined by:
\begin{equation*}
  \bangTheoryH* \coloneqq \{(t,u) \vsep \forall \; \bangFCtxt \text{ full context},\, \bangFCtxt<t> \text{\bangCalculusSymb-meaningful} \Leftrightarrow \bangFCtxt<u> \text{\bangCalculusSymb-meaningful}\}
\end{equation*}
We expect $\bangTheoryH*$ to be an extension of the theory
$\bangTheoryH$. Moreover, to check that all truly meaningless terms
are actually \bangCalculusSymb-meaningless, we also want this theory
to \defn{maximal}, meaning that no term can be additional equated
without compromising consistency.

\begin{restatable}{theorem}{RecMaximalityConsistency}
	\LemmaToFromProof{Bang_Theory_H*_Consistent}%
  	$\bangTheoryH*$ is the unique maximal consistent \bangTheory-theory
  	containing $\bangTheoryH$.
\end{restatable}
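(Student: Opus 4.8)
The plan is to establish four facts and then assemble them by order-theoretic bookkeeping: (i) $\bangTheoryH*$ is a \bangTheory-theory; (ii) $\bangTheoryH*$ is consistent; (iii) $\bangTheoryH\subseteq\bangTheoryH*$; and (iv) every consistent \bangTheory-theory containing $\bangTheoryH$ is contained in $\bangTheoryH*$. Facts (i)--(iii) say $\bangTheoryH*$ is a consistent \bangTheory-theory above $\bangTheoryH$, and (iv) says it is the greatest such, hence in particular the unique maximal one: any consistent \bangTheory-theory strictly above $\bangTheoryH*$ would by (iv) be contained in it, and any maximal consistent \bangTheory-theory $\mathcal{T}\supseteq\bangTheoryH$ is by (iv) contained in the consistent $\bangTheoryH*$, so equals it by maximality. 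For (i), reflexivity, symmetry and transitivity of $\bangTheoryH*$ are immediate since $\Leftrightarrow$ is an equivalence. $\bangTheoryH*$ contains $\bangArr_F$ because $\bangCalculusSymb$-meaningfulness is $\bangArr_F$-invariant: by \Cref{t:logical-characterization} a term is $\bangCalculusSymb$-meaningful iff it has a $\bangBKRVTypeSys$-testable typing, and by \Cref{lem:Bang_BKRV}.\ref{lem:Bang_BKRV_Surface_Typing_SRE} a $\bangArr_F$-step preserves the whole set of typings of a term. Closure of $\bangTheoryH*$ under full contexts holds because plugging a full context into a full context is again a full context, so a test ``$\bangFCtxt'<\bangFCtxt<\cdot>>$'' is a test ``$(\bangFCtxt'\!\circ\!\bangFCtxt)<\cdot>$''.

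For (ii), the pair $(\Id,\Omega)$ witnesses consistency: already for the empty context, $\Id$ is $\bangCalculusSymb$-meaningful (via $\bangTCtxt=\app[\,]{\Hole}{\oc\oc u}$) whereas $\Omega$ is $\bangCalculusSymb$-meaningless, so $(\Id,\Omega)\notin\bangTheoryH*$. For (iii), since $\bangTheoryH$ is the least \bangTheory-theory equating all $\bangCalculusSymb$-meaningless terms and $\bangTheoryH*$ is a \bangTheory-theory by (i), it suffices to check that $\bangTheoryH*$ equates any two $\bangCalculusSymb$-meaningless terms $t$ and $u$; this is Qualitative Surface Genericity (\Cref{lem:Bang_Qualitative_Surface_Genericity}) used twice: if $\bangFCtxt<t>$ is meaningful then $\bangFCtxt<s>$ is meaningful for every $s$, in particular $\bangFCtxt<u>$, and symmetrically with $t$ and $u$ swapped. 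For (iv), I would reduce it to a separation-style lemma: \emph{if $\mathcal{T}$ is a \bangTheory-theory with $\bangTheoryH\subseteq\mathcal{T}$ that equates some $\bangCalculusSymb$-meaningful term $M$ with some $\bangCalculusSymb$-meaningless term $N$, then $\mathcal{T}$ equates all terms}. Granting this, (iv) follows: if $(t,u)\in\mathcal{T}\setminus\bangTheoryH*$ with $\mathcal{T}$ consistent and $\bangTheoryH\subseteq\mathcal{T}$, choose a full context $\bangFCtxt$ with (say) $\bangFCtxt<t>$ $\bangCalculusSymb$-meaningful and $\bangFCtxt<u>$ $\bangCalculusSymb$-meaningless; then $(\bangFCtxt<t>,\bangFCtxt<u>)\in\mathcal{T}$ by context closure, and the lemma contradicts consistency of $\mathcal{T}$.

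To prove the lemma: since $\bangTheoryH\subseteq\mathcal{T}$ and $N$ is $\bangCalculusSymb$-meaningless, $N\equiv_{\mathcal{T}}\Omega$, hence $M\equiv_{\mathcal{T}}\Omega$. As $M$ is $\bangCalculusSymb$-meaningful there are a testing context $\bangTCtxt$ and a term $v$ with $\bangTCtxt<M>\bangArr*_S\oc v$; since $\bangArr*_S\subseteq\bangArr*_F\subseteq\mathcal{T}$, testing contexts are full contexts, and $\mathcal{T}$ is closed under full contexts, we get $\oc v\equiv_{\mathcal{T}}\bangTCtxt<M>\equiv_{\mathcal{T}}\bangTCtxt<\Omega>$. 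I would use the sublemma that testing contexts are closed under composition, from which $\bangCalculusSymb$-meaninglessness is preserved by testing contexts (if $\bangTCtxt<\Omega>$ were $\bangCalculusSymb$-meaningful, composing its witnessing testing context with $\bangTCtxt$ would witness $\Omega$ $\bangCalculusSymb$-meaningful, absurd); hence $\bangTCtxt<\Omega>$ is $\bangCalculusSymb$-meaningless, so $\bangTCtxt<\Omega>\equiv_{\mathcal{T}}\Omega$ and therefore $\oc v\equiv_{\mathcal{T}}\Omega$. Finally, for an arbitrary term $a$, pick $w\notin\freeVar{a}\cup\freeVar{v}$ and consider the context $\app{(\abs{w}{a})}{\Hole}$: on one hand $\app{(\abs{w}{a})}{\oc v}\bangArr_S a\esub{w}{\oc v}\bangArr_S a$ (the generated substitution fires because $\oc v$ is a bang, then vanishes because $w\notin\freeVar{a}$); on the other hand $\app{(\abs{w}{a})}{\Omega}$ is $\bangCalculusSymb$-meaningless, since it is not $\bangBKRVTypeSys$-typable (any derivation for it ends with the application rule, whose right premise would be a typing of $\Omega$, and $\Omega$ has none by \Cref{lem:Bang_BKRV}.\ref{lem:Bang_BKRV_Characterization}), whence $\app{(\abs{w}{a})}{\Omega}\equiv_{\mathcal{T}}\Omega$. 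Combining, $a\equiv_{\mathcal{T}}\app{(\abs{w}{a})}{\oc v}\equiv_{\mathcal{T}}\app{(\abs{w}{a})}{\Omega}\equiv_{\mathcal{T}}\Omega$ for every $a$, so $\mathcal{T}$ equates all terms.

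The main obstacle is exactly this separation lemma for (iv): the passage from ``a single observable $\oc v$ is $\mathcal{T}$-equated with $\Omega$'' to ``$\mathcal{T}$ equates everything'' hinges on the context $\app{(\abs{w}{a})}{\Hole}$, which is sound precisely because $\oc v$ is genuinely a bang (so the generated substitution can fire on it) whereas $\Omega$ placed in the argument of an application destroys typability. The secondary delicate point, stability of $\bangCalculusSymb$-meaninglessness under testing contexts, rests on testing contexts forming a monoid under composition; both ingredients are specific to the calculus, whereas everything else reduces to \Cref{t:logical-characterization}, \Cref{lem:Bang_Qualitative_Surface_Genericity}, \Cref{lem:Bang_BKRV}, and routine bookkeeping.
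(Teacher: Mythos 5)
Your proposal is correct and follows essentially the same route as the paper: the same decomposition into theory-hood, consistency, containment of $\bangTheoryH$ via surface genericity, and a separation lemma showing that any \bangTheory-theory above $\bangTheoryH$ equating a meaningful term with a meaningless one collapses, proved with the same erasing-application trick $\app{(\abs{w}{\cdot})}{\Hole}$ exploiting that a bang argument fires the substitution while a meaningless argument destroys typability. The only differences are cosmetic (you first isolate $\oc v \equiv_{\mathcal{T}} \Omega$ and then erase against an arbitrary term, where the paper directly equates an arbitrary meaningful $s$ with $\Omega$).
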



\section{Subsuming CBN and CBV Meaningfulness}
\label{sec:subsuming}

In this section we show that the notions of meaningfulness for
\cbnCalculusSymb and \cbvCalculusSymb in the
literature~\cite{ArrialGuerrieriKesner24} are subsumed by the one
proposed in~\Cref{sec:Meaningfulness} for \bangCalculusSymb. We also
deduce surface genericity for \cbnCalculusSymb and \cbvCalculusSymb as
a consequence of surface genericity for \bangCalculusSymb.

\subsection{\cbnCalculusSymb and \cbvCalculusSymb Calculi}

Both \cbnCalculusSymb~\cite{AccattoliKesner10,AK12,Accattoli12} and
\cbvCalculusSymb~\cite{AccattoliPaolini12} are specified  using ES and
action at a distance, as explained in~\Cref{subsec:lambda!} for
\bangCalculusSymb, and they share the same term syntax. The sets
$\cbvSetTerms$ of \defn{terms} and $\cbvSetValues$ of \defn{values}
are inductively defined below.
\begin{equation*}
    \textbf{(Terms)} \ \ 
        t, u \;\Coloneqq\; v
            \vsep \app[\,]{t}{u}
            \vsep t\esub{x}{u}
\hspace{1cm}
    \textbf{(Values)} \ \
        v \;\Coloneqq\; x 
            \vsep \abs{x}{t}
\end{equation*}
Note that the syntax  contains neither $\derSymb$~nor~$\oc$. The
distinction between terms and values is irrelevant in \cbnCalculusSymb
but crucial in \cbvCalculusSymb. The two calculi also share the same
\defn{\listTxt contexts} $\cbnLCtxt$, but use specialized
\defn{\surfaceTxt contexts} $\cbnStratCtxt$ and $\cbvStratCtxt$
respectively. Again, contexts can be seen as terms with exactly one
\defn{hole} $\Hole$ and are inductively defined below.
\begin{equation*}
    \begin{array}{r rcl}
        \defn{(List Contexts)}&
            \cbnLCtxt &\Coloneqq& \Hole
                \vsep \cbnLCtxt\esub{x}{t}
    \\[-2pt]
        \defn{(\cbnCalculusSymb Surface Contexts)}&
            \cbnStratCtxt &\Coloneqq&  \Hole
                \vsep \app[\,]{\cbnStratCtxt}{t}
                \vsep \abs{x}{\cbnStratCtxt}
                \vsep \cbnStratCtxt\esub{x}{t}
    \\[-2pt]
        \defn{(\cbvCalculusSymb Surface Contexts)}&
            \cbvStratCtxt &\Coloneqq&  \Hole
                \vsep \app[\,]{\cbvStratCtxt}{t}
                \vsep \app[\,]{t}{\cbvStratCtxt}
                \vsep \cbvStratCtxt\esub{x}{t}
                \vsep t\esub{x}{\cbvStratCtxt}
    \\[-2pt]
        \defn{(Full Contexts)}&
            \cbnFCtxt &\Coloneqq& \Hole
                \vsep \app[\,]{\cbnFCtxt}{t}
                \vsep \app[\,]{t}{\cbnFCtxt}
                \vsep \abs{x}{\cbnFCtxt}
                \vsep \cbnFCtxt\esub{x}{t}
                \vsep t\esub{x}{\cbnFCtxt}
    \end{array}
\end{equation*}
The differences between \cbnCalculusSymb and \cbvCalculusSymb are in
the previous notions of \emph{surface} contexts, as well as the
following \emph{rewrite rules}.
\begin{align*}
        \app{\cbnLCtxt<\abs{x}{t}>}{u}
            &\mapstoR[\cbnSymbBeta]
        \cbnLCtxt<t\esub{x}{u}>
    &
        t\esub{x}{u}
            &\mapstoR[\cbnSymbSubs]
        t\isub{x}{u}
    &
        t\esub{x}{\cbvLCtxt<v>}
            &\mapstoR[\cbvSymbSubs]
        \cbvLCtxt<t\isub{x}{v}>
\end{align*}

Rules \cbnSymbBeta and \cbvSymbSubs are
both capture-free: no free variable of $u$ (\resp $t$) is captured by
the list context $\cbnLCtxt$. 
The \defn{\cbnCalculusSymb \surfaceTxt reduction} $\cbnArr_S$ is
defined as the union of the \cbnCalculusSymb \surfaceTxt closure of
rewrite rules \cbnSymbBeta and \cbnSymbSubs, while the
\defn{\cbvCalculusSymb \surfaceTxt reduction} $\cbvArr_S$ is the union
of the \cbvCalculusSymb \surfaceTxt closure of the rewrite rules
\cbvSymbBeta and \cbvSymbSubs. 
Finally, we use $\cbnArr*_S$ (\resp
$\cbvArr*_S$) to denote the reflexive transitive closure of the
relation $\cbnArr_S$ (\resp $\cbvArr_S$).
\begin{example}
    \label{example:reduction_cbn_and_cbv}%
    For example, $t_0 \coloneqq
    \app{(\abs{x}{\app{\app{y}{x}}{x}})}{(\app{\Id}{\Id})} \cbnArr_S
    (\app{\app{y}{x}}{x})\esub{x}{\app{\Id}{\Id}} \cbnArr_S
    \app{\app{y}{(\app{\Id}{\Id})}}{(\app{\Id}{\Id})} \eqqcolon t_1$ and $t_0
    = \app{(\abs{x}{\app{\app{y}{x}}{x}})}{(\app{\Id}{\Id})} \cbvArr_S
    (\app{\app{y}{x}}{x})\esub{x}{\app{\Id}{\Id}} \cbvArr_S
    (\app{\app{y}{x}}{x})\esub{x}{z\esub{z}{\Id}} \cbvArr_S
    (\app{\app{y}{x}}{x})\esub{x}{\Id} \cbvArr_S
    \app{\app{y}{\Id}}{\Id} \eqqcolon t_2$.
\end{example}
Note that the {\cbnCalculusSymb surface reduction} is nothing but (a
non-deterministic  diamond variant of) the well-known \emph{head}
reduction~\cite{barendregt84nh}.

\smallskip
The quantitative type systems $\cbnBKRVTypeSys$ for \cbnCalculusSymb and $\cbvBKRVTypeSys$ for \cbvCalculusSymb\
are presented in \cref{fig:cbnBKRV_Typing_System_Rules,fig:cbvBKRV_Typing_System_Rules},  respectively, they were studied in
\cite{BucciarelliKesnerRiosViso20,BucciarelliKesnerRiosViso23}.
\emphasis{Types} and \emphasis{judgments} are the same as for system
$\bangBKRVTypeSys$.  A derivation $\Pi$ in system $\cbnBKRVTypeSys$ with
conclusion $\Gamma \vdash t : \sigma$ is noted $\Pi \cbnBKRVTr \Gamma
\vdash t : \sigma$; we write $\cbnBKRVTr\; \Gamma \vdash t : \sigma$ if there is a $\Pi \cbnBKRVTr \Gamma
\vdash t : \sigma$. We use similar notations
for system $\cbvBKRVTypeSys$.

\begin{figure}[t]
        \begin{equation*}
            \begin{array}{c}
                \begin{array}{c} \\[-0.1cm]
                    \begin{prooftree}
                        \inferBangBKRVVar{x : \mset{\sigma} \vdash x : \sigma}
                    \end{prooftree}
                \end{array}
            \qquad
                \begin{prooftree}
                    \hypo{\Gamma \vdash t : \mset{\tau_i}_{i \in I} \typeArrow \sigma}
                    \hypo{(\Delta_i \vdash u : \tau_i)_{i \in I}}
                    \inferBangBKRVApp{\Gamma +_{i \in I} \Delta_i \vdash \app[\,]{t}{u} : \sigma}
                \end{prooftree}
        \\[16pt]
                \begin{prooftree}
                    \hypo{\Gamma, x : \M \vdash t : \sigma}
                    \inferBangBKRVAbs{\Gamma \vdash \abs{x}{t} : \M \typeArrow \sigma}
                \end{prooftree}
            \qquad
                \begin{prooftree}
                    \hypo{\Gamma, x : \mset{\tau_i}_{i \in I} \vdash t : \sigma}
                    \hypo{(\Delta_i \vdash u : \tau_i)_{i \in I}}
                    \inferBangBKRVEs{\Gamma +_{i \in I} \Delta_i \vdash t\esub{x}{u} : \sigma}
                \end{prooftree}
            \end{array}
        \end{equation*}
    \caption{Type System $\cbnBKRVTypeSys$ for the \cbnCalculusSymb-calculus.}
    \label{fig:cbnBKRV_Typing_System_Rules}%
\end{figure}
\begin{figure}[t]
        \begin{equation*}
            \begin{array}{c}
                \begin{array}{c} \\[-0.1cm]
                    \begin{prooftree}
                        \inferBangBKRVVar{x : \M \vdash x : \M}
                    \end{prooftree}
                \end{array}
            \qquad
                \begin{prooftree}
                    \hypo{\Gamma \vdash t : \mset{\M \typeArrow \sigma}}
                    \hypo{\Delta \vdash u : \M}
                    \inferBangBKRVApp{\Gamma + \Delta \vdash \app[\,]{t}{u} : \sigma}
                \end{prooftree}
        \\[16pt]
                \begin{prooftree}
                    \hypo{(\Gamma_i, x : \M_i \vdash t : \sigma_i)_{i \in I}}
                    \inferBangBKRVAbs{+_{i \in I} \Gamma_i \vdash \abs{x}{t} : \mset{\M_i \typeArrow \sigma_i}_{i \in I}}
                \end{prooftree}
            \qquad
                \begin{prooftree}
                    \hypo{\Gamma, x : \M \vdash t : \sigma}
                    \hypo{\Delta \vdash u : \M}
                    \inferBangBKRVEs{\Gamma + \Delta \vdash t\esub{x}{u} : \sigma}
                \end{prooftree}
            \end{array}
        \end{equation*}
    \caption{Type System $\cbvBKRVTypeSys$ for the \cbvCalculusSymb-calculus.}
    \label{fig:cbvBKRV_Typing_System_Rules}%
\end{figure}

The salient property of type systems $\cbnBKRVTypeSys$ and
$\cbvBKRVTypeSys$ is characterizing normalization in \cbnCalculusSymb
and \cbvCalculusSymb, respectively.
\begin{lemma}[\cite{BucciarelliKesnerRiosViso20,BucciarelliKesnerRiosViso23}]
    Let $t \in \cbnSetTerms$, then:
    \begin{itemize}
    \item[\bltI] $t$ is $\cbnCalculusSymb$ \surfaceTxt normalizing iff
        it is $\cbnBKRVTypeSys$-typable.
    \item[\bltI] $t$ is $\cbvCalculusSymb$ \surfaceTxt normalizing iff
        it is $\cbvBKRVTypeSys$-typable.
    \end{itemize}
\end{lemma}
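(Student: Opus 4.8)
The plan is to prove each of the two equivalences by the now-standard quantitative intersection type technique, following the same pattern as \Cref{lem:Bang_BKRV} for $\bangCalculusSymb$: a \emph{quantitative subject reduction} for the ``typable $\Rightarrow$ normalizing'' direction, and a combination of \emph{typability of surface normal forms} with \emph{subject expansion} for the converse. The \cbnCalculusSymb and \cbvCalculusSymb items are treated by entirely parallel arguments; the only genuine differences are the shape of the surface normal forms and the value restriction carried by the rule $\cbvSymbSubs$, which forces an auxiliary lemma decomposing a typing $\Gamma \vdash v : \M$ of a value $v$ into typings $\Gamma_i \vdash v : \tau_i$ for the components $\tau_i$ of $\M$ (needed because the \cbvCalculusSymb axiom types a variable directly with a multitype, unlike the \cbnCalculusSymb and $\bangCalculusSymb$ ones).

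\textbf{Soundness (typable $\Rightarrow$ normalizing).} First prove a \emph{substitution lemma}: from $\Pi \cbnBKRVTr \Gamma, x\!:\!\M \vdash t : \sigma$ and derivations $\Pi_i$ of $\Delta_i \vdash u : \tau_i$ for each $\tau_i \in \M$, one builds a derivation of $\Gamma +_{i} \Delta_i \vdash t\isub{x}{u} : \sigma$ whose size equals that of $\Pi$ minus the $|\M|$ consumed $x$-axiom leaves plus the sizes of the $\Pi_i$; non-idempotence is exactly what makes this accounting exact. Using it, show that every surface step strictly decreases the size of the typing derivation: a $\cbnSymbBeta$ step trades an application node plus an abstraction node for a single explicit-substitution node, and a $\cbnSymbSubs$ step consumes that node and applies the substitution lemma, with the list context $\cbnLCtxt$ around the fired redex merely relocated. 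Since the size is a natural number, $\cbnArr_S$ admits no infinite sequence starting from a typable term, hence a surface normal form is reached. The \cbvCalculusSymb case is identical, using the value-decomposition lemma to handle $\cbvSymbSubs$ firing through a list context $\cbvLCtxt$.

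\textbf{Completeness (normalizing $\Rightarrow$ typable).} Proceed in two steps. First, by induction on the grammar of surface normal forms --- head normal forms, possibly under abstractions and decorated with explicit substitutions, for \cbnCalculusSymb, and the corresponding stuck forms with the value restriction on explicit substitutions for \cbvCalculusSymb --- show every surface normal form is typable in the relevant system: the head (\resp stuck) variable receives an appropriate arrow type, its arguments are typed by the induction hypothesis, and, wherever the rules permit, spurious arguments are left untyped via the empty multitype $\emptymset$ (the analogue of the $\oc$ rule with empty index set in $\bangCalculusSymb$). Second, prove \emph{subject expansion}: if $t \cbnArr_S u$ and $u$ is $\cbnBKRVTypeSys$-typable, then so is $t$; this rests on an \emph{anti-substitution lemma}, the converse of the substitution lemma above, that splits a derivation of $t\isub{x}{u}$ back into a derivation of $t$ under an assumption on $x$ together with one derivation of $u$ per use of $x$. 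Chaining subject expansion along a reduction $t \cbnArr*_S u$ to a surface normal form $u$ yields typability of $t$. Same for \cbvCalculusSymb, \emph{mutatis mutandis}.

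\textbf{Main obstacle.} The bulk of the work is the bookkeeping imposed by \emph{action at a distance}: every rule fires through an arbitrary list context $\cbnLCtxt$ (\resp $\cbvLCtxt$) of explicit substitutions sitting between the interacting constructors, so the substitution, anti-substitution, subject reduction and subject expansion statements must all be formulated modulo such contexts, and in \cbvCalculusSymb one must additionally propagate the information that the substituted object is a value. Making the size measure \emph{strictly} decrease at every single surface step --- in particular checking that erased subterms, typed with $\emptymset$, contribute nothing and that duplicated subterms are accounted for exactly --- is the delicate point, and is precisely where non-idempotence of the intersection is used. As an alternative, one could instead derive the statement from \Cref{lem:Bang_BKRV} through the Girard-style embeddings of \cbnCalculusSymb and \cbvCalculusSymb into $\bangCalculusSymb$~\cite{BucciarelliKesnerRiosViso20,BucciarelliKesnerRiosViso23}, using that these embeddings both preserve and reflect surface reduction (hence surface normal forms) and typability; but this only relocates the same technical effort into the correctness proofs of those embeddings.
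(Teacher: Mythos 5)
The paper does not prove this lemma; it only cites \cite{BucciarelliKesnerRiosViso20,BucciarelliKesnerRiosViso23}, and your plan is a faithful reconstruction of the argument actually used there: quantitative (weighted) subject reduction via an exact substitution lemma for soundness, and typability of surface normal forms plus subject expansion via an anti-substitution lemma for completeness, with the list-context bookkeeping for distance and the value restriction in \cbvCalculusSymb handled as you describe. Your approach is correct and essentially the same as the cited source's.
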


Both \cbnCalculusSymb and \cbvCalculusSymb can be embedded into
\bangCalculusSymb by decorating each term with the $\oc$ and
$\der{\cdot}$ modalities. The embedding for \cbnCalculusSymb is
straightforward, while various embeddings for \cbvCalculusSymb have
been proposed in the literature 
\cite{Girard87,MaraistOderskyTurnerWadler95,MaraistOderskyTurnerWadler99,GuerrieriManzonetto18,BucciarelliKesnerRiosViso20,BucciarelliKesnerRiosViso23,ArrialGuerrieriKesner24bis}, each with its own strengths and weaknesses. In this
work, we use the embeddings from
\cite{BucciarelliKesnerRiosViso20,BucciarelliKesnerRiosViso23} defined
below:
\begin{equation*}
    \begin{array}{rcl c rcl}
        \cbnToBangBKRV{x}
            &\coloneqq& x
    &~\hspace{1cm}~&
        \cbvToBangBKRV{x}
            &\coloneqq& \oc x
    \\
        \cbnToBangBKRV{(\abs{x}{t})}
            &\coloneqq& \abs{x}{\cbnToBangBKRV{t}}
    &~\hspace{1cm}~&
        \cbvToBangBKRV{(\abs{x}{t})}
            &\coloneqq& \oc\abs{x}{\cbvToBangBKRV{t}}
    \\
        \cbnToBangBKRV{(\app{t}{u})}
            &\coloneqq& \app[\,]{\cbnToBangBKRV{t}}{\oc\cbnToBangBKRV{u}}
    &~\hspace{1cm}~&
        \cbvToBangBKRV{(\app{t}{u})}
            &\coloneqq& \left\{\begin{array}{ll}
                \app{\bangLCtxt<s>}{\cbvToBangBKRV{u}}
                    &\text{if} \; \cbvToBangBKRV{t} = \bangLCtxt<\oc s>
            \\
                \app{\der{\cbvToBangBKRV{t}}}{\cbvToBangBKRV{u}}
                    &\text{otherwise}
            \end{array}\right.
    \\
        \cbnToBangBKRV{(t\esub{x}{u})}
            &\coloneqq& \cbnToBangBKRV{t}\esub{x}{\oc\cbnToBangBKRV{u}}
    &~\hspace{1cm}~&
        \cbvToBangBKRV{(t\esub{x}{u})}
            &\coloneqq& \cbvToBangBKRV{t}\esub{x}{\cbvToBangBKRV{u}}
    \end{array}
\end{equation*}

These translations are extended to contexts as expected by setting
$\cbnToBangBKRV{\Hole} \coloneqq \Hole$ and $\cbvToBangBKRV{\Hole}
\coloneqq \Hole$.
\begin{example}
    \label{example:embedding_cbn_and_cbv}%
    Recalling \Cref{example:reduction_cbn_and_cbv}, one has
    $\cbnToBangBKRV{t}_0 = \app[\,]{(\abs{x}{\app[\,]{\app[\,]{y}{\oc
    x}}{\oc x}})}{\oc (\app{\Id}{\oc\Id})}$, $\cbnToBangBKRV{t}_1 =
    \app[\,]{\app[\,]{y}{\oc(\app{\Id}{\oc\Id})}}{\oc(\app{\Id}{\oc\Id})}$,
    $\cbvToBangBKRV{t}_0 = \app{(\abs{x}{(\app{\der{\app[\,]{y}{\oc
    x}}}{\oc x})})}{(\app[\,]{(\abs{z}{\oc z})}{\oc(\abs{z}{\oc
    z})})}$ and $\cbvToBangBKRV{t}_2 =
    \app{\der{\app[\,]{y}{\oc(\abs{z}{\oc z})}}}{\oc(\abs{z}{\oc
    z})}$.
\end{example}
Let us give some intuition on these embeddings. In \cbnCalculusSymb,
any argument (right-hand side of application or substitution) can be
erased/duplicated, as bang terms in the \bangCalculusSymb-calculus, so
that arguments must be translated to bang terms. In \cbvCalculusSymb,
only values can be erased/duplicated so that values must be translated
to bang terms. However, this remark alone is not sufficient to achieve
a \cbvCalculusSymb embedding enjoying good properties, and in
particular to translate \cbvCalculusSymb-normal forms into
\bangCalculusSymb-normal forms. The translation of applications is
precisely designed in order to guarantee this property.

These embeddings preserve reductions, which will allows us to show
that meaningfulness if preserved through embedding.

\begin{lemma}[Simulation \cite{BucciarelliKesnerRiosViso20,BucciarelliKesnerRiosViso23}]
    \label{lem:Bang_BKRV_Reduction}%
    Let $t, u \in \cbnSetTerms$.
    \begin{enumerate}
    \item \label{lem:Bang_BKRV_Reduction_if_Cbn_BKRV_Reduction}%
        If $t \cbnArr*_S u$ then $\cbnToBangBKRV{t} \bangArr*_S
        \cbnToBangBKRV{u}$.

    \item \label{lem:Bang_BKRV_Reduction_if_Cbv_BKRV_Reduction}%
        If $t \cbvArr*_S u$ then $\cbvToBangBKRV{t} \bangArr*_S
        \cbvToBangBKRV{u}$.
    \end{enumerate}
\end{lemma}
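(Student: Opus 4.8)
The plan is to reduce everything to the one-step case and then proceed by induction on the surface context in which the fired redex sits. Since $\bangArr*_S$ is transitive and reflexive, it suffices to prove: if $t \cbnArr_S u$ then $\cbnToBangBKRV{t} \bangArr*_S \cbnToBangBKRV{u}$, and if $t \cbvArr_S u$ then $\cbvToBangBKRV{t} \bangArr*_S \cbvToBangBKRV{u}$; the general statement then follows by iterating. A one-step reduction $t \cbnArr_S u$ (\resp $t \cbvArr_S u$) decomposes as $t = \cbnStratCtxt<r>$ and $u = \cbnStratCtxt<r'>$ (\resp with $\cbvStratCtxt$) for some root redex $r \mapsto r'$, so the proof has two ingredients: a \emph{redex lemma} handling the root rewrite rules, and a \emph{context lemma} propagating the simulation through surface contexts.

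First I would establish the needed auxiliary lemmas. (i) A \emph{substitution commutation} lemma: $\cbnToBangBKRV{t\isub{x}{u}} = \cbnToBangBKRV{t}\isub{x}{\oc\cbnToBangBKRV{u}}$, and for \cbvCalculusSymb, $\cbvToBangBKRV{t\isub{x}{v}}$ is obtained from $\cbvToBangBKRV{t}$ by substituting for $x$ the term $s$ where $\cbvToBangBKRV{v} = \oc s$ (recalling that values translate to bangs), both proved by straightforward induction on $t$ — the \cbvCalculusSymb case requiring care with the case split in the translation of applications. (ii) A lemma stating that $\cbnToBangBKRV{\cbnLCtxt}$ is a \bangCalculusSymb list context and $\cbvToBangBKRV{\cbvLCtxt}$ is one as well, together with the invariant that $\cbvToBangBKRV{(\cbvLCtxt<v>)}$ always has the shape $\bangLCtxt<\oc s>$. (iii) A context lemma: for \cbnCalculusSymb, $\cbnToBangBKRV{\cbnStratCtxt}$ is a \bangCalculusSymb surface context and $\cbnToBangBKRV{(\cbnStratCtxt<r>)} = \cbnToBangBKRV{\cbnStratCtxt}<\cbnToBangBKRV{r}>$, so a root simulation lifts immediately; for \cbvCalculusSymb, the non-compositional clause for applications means $\cbvToBangBKRV{(\cbvStratCtxt<r>)}$ is not literally $\cbvToBangBKRV{\cbvStratCtxt}<\cbvToBangBKRV{r}>$, so here I would instead prove directly, by induction on $\cbvStratCtxt$, that $\cbvToBangBKRV{r} \bangArr*_S \cbvToBangBKRV{r'}$ implies $\cbvToBangBKRV{(\cbvStratCtxt<r>)} \bangArr*_S \cbvToBangBKRV{(\cbvStratCtxt<r'>)}$, allowing along the way a few extra $\bangSymbBang$ steps ($\der{\bangLCtxt<\oc s>} \bangArr_S \bangLCtxt<s>$) to reconcile the two branches of the application clause when the translated head changes shape under reduction.

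For the redex lemma I would check each root rule. For \cbnSymbBeta, $\cbnToBangBKRV{(\app{\cbnLCtxt<\abs{x}{t}>}{u})} = \app{\cbnToBangBKRV{\cbnLCtxt}<\abs{x}{\cbnToBangBKRV{t}}>}{\oc\cbnToBangBKRV{u}} \mapstoR[\bangSymbBeta] \cbnToBangBKRV{\cbnLCtxt}<\cbnToBangBKRV{t}\esub{x}{\oc\cbnToBangBKRV{u}}> = \cbnToBangBKRV{(\cbnLCtxt<t\esub{x}{u}>)}$. For \cbnSymbSubs, $\cbnToBangBKRV{(t\esub{x}{u})} = \cbnToBangBKRV{t}\esub{x}{\oc\cbnToBangBKRV{u}} \mapstoR[\bangSymbSubs] \cbnToBangBKRV{t}\isub{x}{\cbnToBangBKRV{u}} = \cbnToBangBKRV{(t\isub{x}{u})}$ using (i). For \cbvSymbSubs, $t\esub{x}{\cbvLCtxt<v>} \mapsto \cbvLCtxt<t\isub{x}{v}>$: using the invariant $\cbvToBangBKRV{(\cbvLCtxt<v>)} = \bangLCtxt<\oc s>$, the translated closure $\cbvToBangBKRV{t}\esub{x}{\bangLCtxt<\oc s>}$ fires a $\bangSymbSubs$ step yielding $\bangLCtxt<\cbvToBangBKRV{t}\isub{x}{s}>$, which equals $\cbvToBangBKRV{(\cbvLCtxt<t\isub{x}{v}>)}$ by (i) and the context lemma; the \cbvSymbBeta case is analogous to \cbnSymbBeta, after unfolding whichever branch of the application clause applies to $\cbvToBangBKRV{(\abs{x}{t})} = \oc\abs{x}{\cbvToBangBKRV{t}}$.

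The main obstacle is the \cbvCalculusSymb case, entirely due to the case-split in $\cbvToBangBKRV{(\app{t}{u})}$: establishing that this split behaves well under reduction (i.e., that a step on the head $t$ never turns a $\bangLCtxt<\oc s>$-shaped translation into a non-$\bang$-shaped one, and that transitions from the ``otherwise'' branch to the ``if'' branch cost exactly one administrative $\bangSymbBang$ step) is the delicate point, as is verifying that holes never end up under a $\oc$ so that all the produced steps are genuinely \emph{surface} steps. The \cbnCalculusSymb case, by contrast, is a routine structural induction.
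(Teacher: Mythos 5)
The paper does not actually prove this lemma: it is imported verbatim from \cite{BucciarelliKesnerRiosViso20,BucciarelliKesnerRiosViso23}, so there is no in-paper proof to compare against. Your plan is the standard one used in those works — reduce to a single step, prove a root-redex lemma via a substitution-commutation lemma, and lift through surface contexts, with the \cbvCalculusSymb application clause handled by a direct induction that inserts administrative $\bangSymbBang$-steps when the head's translation changes from the ``otherwise'' shape to the $\bangLCtxt\langle\oc s\rangle$ shape. The points you single out as delicate (stability of the $\bangLCtxt\langle\oc s\rangle$ shape under reduction of the head, holes never falling under a $\oc$ so that the produced steps remain surface steps, and the fact that \cbvCalculusSymb values translate to bangs so that substitution preserves the branch taken in the application clause) are exactly the right ones.

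One concrete error to fix: your \cbnCalculusSymb substitution lemma (i) is misstated. As written, $\cbnToBangBKRV{(t\isub{x}{u})} = \cbnToBangBKRV{t}\isub{x}{\oc\cbnToBangBKRV{u}}$ fails already for $t = x$, where the left-hand side is $\cbnToBangBKRV{u}$ but the right-hand side is $\oc\cbnToBangBKRV{u}$. The correct statement is $\cbnToBangBKRV{(t\isub{x}{u})} = \cbnToBangBKRV{t}\isub{x}{\cbnToBangBKRV{u}}$: since $\cbnToBangBKRV{x} = x$ and the translation inserts the $\oc$ around \emph{arguments}, every occurrence of $x$ in $\cbnToBangBKRV{t}$ already sits under whatever $\oc$ the surrounding application or ES clause produced, so the substituted term must be the bare translation of $u$. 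This is in fact the version you silently use in the \cbnSymbSubs case of your redex lemma, where the fired $\bangSymbSubs$-step $\cbnToBangBKRV{t}\esub{x}{\oc\cbnToBangBKRV{u}} \mapstoR[\bangSymbSubs] \cbnToBangBKRV{t}\isub{x}{\cbnToBangBKRV{u}}$ substitutes the term \emph{under} the bang; with lemma (i) as you stated it, that step of the chain would not close. With this correction the argument goes through.
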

\begin{example}
    In \Cref{example:reduction_cbn_and_cbv}, we showed that $t_0
    \cbnArr*_S t_1$ and  $t_0 \cbvArr*_S t_2$. Recalling
    \Cref{example:embedding_cbn_and_cbv}, one has $\cbnToBangBKRV{t}_0
    \bangArr_S (\app[\,]{\app[\,]{y}{\oc x}}{\oc
    x})\esub{x}{\oc(\app{\Id}{\oc\Id})} \bangArr_S
    \cbnToBangBKRV{t}_1$ and $\cbvToBangBKRV{t}_0 \bangArr_S
    (\app{\der{\app[\,]{y}{\oc x}}}{\oc
    x})\esub{x}{(\app[\,]{(\abs{z}{\oc z})}{\oc(\abs{z}{\oc z})})}
    \bangArr_S (\app{\der{\app[\,]{y}{\oc x}}}{\oc x})\esub{x}{(\oc
    z)\esub{z}{\oc(\abs{z}{\oc z})}}%
    \bangArr_S (\app{\der{\app[\,]{y}{\oc x}}}{\oc
    x})\esub{x}{\oc(\abs{z}{\oc z})} \bangArr_S \cbvToBangBKRV{t}_2$.
\end{example}

As the \cbvCalculusSymb-embedding uses $\derSymb$,
some $\bangSymbBang$-step might be needed in the simulation process.

\smallskip
These embeddings also preserve typing, which will make possible to
project \bangCalculusSymb meaningfulness and genericity onto
\cbnCalculusSymb and \cbvCalculusSymb. More precisely, the two
embeddings are proven to be sound and complete with respect to system
$\bangBKRVTypeSys$.

\begin{proposition}[\cite{BucciarelliKesnerRiosViso20,BucciarelliKesnerRiosViso23}]
    \label{lem:Bang_BKRV_Typable}%
    Let $t \in \cbnSetTerms$.
    \begin{enumerate}
    \item \label{lem:Bang_BKRV_Typable_iff_Cbn_BKRV_Typable}%
        One has $\cbnBKRVTr\; \Gamma \vdash t : \sigma$ if and only if $\bangBKRVTr\; \Gamma \vdash \cbnToBangBKRV{t} :
        \sigma$.

    \item \label{lem:Bang_BKRV_Typable_iff_Cbv_BKRV_Typable}%
        One has $\cbvBKRVTr\; \Gamma \vdash t : \sigma$ if and only if $\bangBKRVTr\; \Gamma \vdash \cbvToBangBKRV{t} :
        \sigma$.
    \end{enumerate}
\end{proposition}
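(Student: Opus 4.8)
The plan is to prove both biconditionals by structural induction on the term $t$, performing at each step an inversion on the typing rules — which, restricted to a fixed subject, are essentially syntax-directed in all three systems — so as to peel the topmost constructor off both sides at once, and then invoking the induction hypothesis on the immediate subterms. Since every link in the resulting chain is an equivalence, soundness and completeness are obtained simultaneously. The \cbnCalculusSymb case is routine; the only mild point is that the \cbnCalculusSymb application and explicit-substitution rules carry a family of premises for the argument, which on the \bangCalculusSymb side is exactly what the rule $\bangBKRVBgRuleName$ produces when typing the $\oc$-decorated argument.

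Concretely, for point~1: if $t=x$ then $\cbnToBangBKRV{x}=x$ and $\bangBKRVVarRuleName$ is the only applicable rule on either side, so the typings coincide; if $t=\abs{x}{u}$ then $\cbnToBangBKRV{t}=\abs{x}{\cbnToBangBKRV{u}}$ and inversion on $\bangBKRVAbsRuleName$ matches the \cbnCalculusSymb abstraction rule through the induction hypothesis; if $t=\app{r}{u}$ then $\cbnToBangBKRV{t}=\app{\cbnToBangBKRV{r}}{\oc\cbnToBangBKRV{u}}$ and a $\bangBKRVTypeSys$-derivation must end with $\bangBKRVAppRuleName$ applied to derivations of $\cbnToBangBKRV{r}:\M\typeArrow\sigma$ and $\oc\cbnToBangBKRV{u}:\M$; inversion on $\bangBKRVBgRuleName$ writes $\M=\mset{\tau_i}_{i\in I}$ and splits the latter into a family $(\Delta_i\vdash\cbnToBangBKRV{u}:\tau_i)_{i\in I}$ with $\Gamma_2=+_{i\in I}\Delta_i$, which is precisely the premise shape of the \cbnCalculusSymb application rule, so the induction hypothesis (once on $r$, once per $i\in I$ on $u$) closes the case; the explicit-substitution case $t=u\esub{x}{r}$ is identical with $\bangBKRVEsRuleName$ in place of $\bangBKRVAppRuleName$.

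For point~2, variables and abstractions are now shipped under a $\oc$: from $\cbvToBangBKRV{x}=\oc x$ and $\cbvToBangBKRV{(\abs{x}{u})}=\oc\abs{x}{\cbvToBangBKRV{u}}$, inversion on $\bangBKRVBgRuleName$ forces the type to be a multitype $\mset{\rho_i}_{i\in I}$ and yields a family of derivations with the $\rho_i$; these match the \cbvCalculusSymb variable rule (which assigns a multitype directly) and the \cbvCalculusSymb abstraction rule (which is itself multi-premise), using the induction hypothesis in the latter. Explicit substitution is immediate, as $\cbvToBangBKRV{(u\esub{x}{r})}=\cbvToBangBKRV{u}\esub{x}{\cbvToBangBKRV{r}}$ and the two $\esub{}{}$-rules coincide. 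The delicate case is application, because $\cbvToBangBKRV{(\app{r}{u})}$ is defined by a case split on whether $\cbvToBangBKRV{r}$ has the form $\bangLCtxt<\oc s>$. I would neutralise this by observing that $\app{\der{\cbvToBangBKRV{r}}}{\cbvToBangBKRV{u}}$ always has the same $\bangBKRVTypeSys$-typings as $\cbvToBangBKRV{(\app{r}{u})}$: when $\cbvToBangBKRV{r}$ is not of the form $\bangLCtxt<\oc s>$ the two terms are literally equal, and when $\cbvToBangBKRV{r}=\bangLCtxt<\oc s>$ one has $\app{\der{\bangLCtxt<\oc s>}}{\cbvToBangBKRV{u}}\bangArr_F\app{\bangLCtxt<s>}{\cbvToBangBKRV{u}}=\cbvToBangBKRV{(\app{r}{u})}$ by firing the $\bangSymbBang$-redex $\der{\bangLCtxt<\oc s>}$ inside the full context $\app{\Hole}{\cbvToBangBKRV{u}}$, so \Cref{lem:Bang_BKRV}.\ref{lem:Bang_BKRV_Surface_Typing_SRE} applies. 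It then suffices to analyse $\bangBKRVTr\Gamma\vdash\app{\der{\cbvToBangBKRV{r}}}{\cbvToBangBKRV{u}}:\sigma$: inversion on $\bangBKRVAppRuleName$ gives $\Gamma=\Gamma_1+\Gamma_2$ with $\bangBKRVTr\Gamma_1\vdash\der{\cbvToBangBKRV{r}}:\M\typeArrow\sigma$ and $\bangBKRVTr\Gamma_2\vdash\cbvToBangBKRV{u}:\M$; inversion on $\bangBKRVDerRuleName$ rewrites the first as $\bangBKRVTr\Gamma_1\vdash\cbvToBangBKRV{r}:\mset{\M\typeArrow\sigma}$; and the induction hypotheses on $r$ and $u$ turn these into \cbvCalculusSymb-derivations of $r:\mset{\M\typeArrow\sigma}$ and of $u:\M$, which the \cbvCalculusSymb application rule recombines into $\cbvBKRVTr\Gamma_1+\Gamma_2\vdash\app{r}{u}:\sigma$. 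Every implication in the chain being reversible, this is the desired equivalence.

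The main obstacle is precisely this last case: the case split baked into the \cbvCalculusSymb embedding prevents one from reading off the $\bangBKRVTypeSys$-derivation structurally before the ``hidden'' $\bangSymbBang$-redexes $\der{\bangLCtxt<\oc s>}$ are contracted. I would handle it either via the subject-reduction/expansion step above, or, for a self-contained argument, by first proving a small auxiliary lemma — by induction on the list context $\bangLCtxt$ — that $\bangBKRVTr\Gamma\vdash\bangLCtxt<\oc s>:\mset{\sigma}$ iff $\bangBKRVTr\Gamma\vdash\bangLCtxt<s>:\sigma$, and composing it with inversion on $\bangBKRVDerRuleName$. Everything else reduces to bookkeeping on environments ($\Gamma+\Delta$, multiset union) and on the finite index sets of the $\bangBKRVBgRuleName$ rule.
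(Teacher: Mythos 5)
This proposition is stated in the paper as an imported result from \cite{BucciarelliKesnerRiosViso20,BucciarelliKesnerRiosViso23} and carries no proof in the present text, so there is no in-paper argument to compare against; judged on its own, your reconstruction is correct and follows the standard route. The \cbnCalculusSymb direction is exactly the expected syntax-directed matching (the family of argument premises of the \cbnCalculusSymb application/ES rules corresponding to one instance of $\bangBKRVBgRuleName$ over the $\oc$-decorated argument), and you correctly identify the only delicate point: the case split in $\cbvToBangBKRV{(\app{r}{u})}$ on whether $\cbvToBangBKRV{r} = \bangLCtxt<\oc s>$. Both of your ways of neutralising it are sound: the reduction step $\app{\der{\bangLCtxt<\oc s>}}{\cbvToBangBKRV{u}} \bangArr_F \app{\bangLCtxt<s>}{\cbvToBangBKRV{u}}$ is indeed an instance of the $\bangSymbBang$ rule fired in a full (in fact surface) context, so \Cref{lem:Bang_BKRV}.\ref{lem:Bang_BKRV_Surface_Typing_SRE} gives the typing equivalence in both directions; alternatively, the auxiliary lemma relating derivations of $\bangLCtxt<\oc s> : \mset{\sigma}$ and $\bangLCtxt<s> : \sigma$ (by induction on $\bangLCtxt$, commuting the ES rule past the bang/dereliction) keeps the argument purely structural and is essentially how the cited works proceed. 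The only point worth making explicit in a written-up version is that the induction hypothesis in the \cbvCalculusSymb variable, abstraction and application cases is invoked at multitypes (e.g.\ at $\mset{\M \typeArrow \sigma}$ for $r$ and at $\M$ for $u$), which is fine since the statement is quantified over arbitrary types $\sigma$, multitypes included.
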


A straightforward corollary is that \cbnCalculusSymb and
\cbvCalculusSymb inhabitation properties are well subsumed in
\bangCalculusSymb, as illustrated in~\cite{ArrialGuerrieriKesner23}.
In simpler words, any type inhabited in \cbnCalculusSymb (\resp
\cbvCalculusSymb) is also inhabited in \bangCalculusSymb. As expected,
the converse is false.

\smallskip
In \cbvCalculusSymb and \bangCalculusSymb, typing an arbitrary term
and typing an argument is similar,  as it
can be seen in the right premise $\Delta \vdash u : \M$ of the typing
rules $\bangBKRVAppRuleName$ and $\bangBKRVEsRuleName$ of systems
$\cbvBKRVTypeSys$ and $\bangBKRVTypeSys$. This is not the case in
\cbnCalculusSymb, as the right premise of the $\bangBKRVAppRuleName$
and $\bangBKRVEsRuleName$ rules of system $\cbnBKRVTypeSys$ requires,
not a \emph{single} derivation, but a
\emph{set} $(\Delta_i \vdash u : \tau_i)_{i \in I}$ of
typing derivation for the same term $u$. In the
logical
characterization (\Cref{t:logical-characterization}), we
check that arguments of a
  given type can be inhabited. We therefore need to
reflect the typability of arguments
  ---rather than typablity of arbitrary terms--- in the definition of \cbnCalculusSymb inhabitation.

\begin{definition}
    In system $\cbnBKRVTypeSys$, a non-multitype $\sigma$ is
    \defn{inhabited}, noted $\cbnBKRVInhPred{\sigma}$, if $\Pi
    \cbnBKRVTr \emptyset \vdash t : \sigma$ for some $\Pi$ and $t$. A
    multitype $\mset{\tau_i}_{i \in I}$ is \defn{inhabited} in system
    $\cbnBKRVTypeSys$, noted $\cbnBKRVInhPred{\mset{\tau_i}_{i \in
    I}}$ if there exists $u \in \cbnSetTerms$ such that for each $i
    \in I$, $\cbnBKRVTr\; \emptyset \vdash u : \tau_i$.

    In system $\cbvBKRVTypeSys$, a type $\sigma$ is \defn{inhabited},
    noted $\cbvBKRVInhPred{\sigma}$, if $\Pi \cbvBKRVTr \emptyset
    \vdash t : \sigma$ for some $\Pi$ and $t$.
\end{definition}
In particular, the type $\emptymset$ is inhabited
in both \cbnCalculusSymb and \cbvCalculusSymb
(\ie
$\cbnBKRVInhPred{\emptymset}$ and $\cbvBKRVInhPred{\emptymset}$).
Similarly, the environment $\emptyset$ is also trivially inhabited in both
(\ie $\cbnBKRVInhPred{\emptyset}$ and $\cbvBKRVInhPred{\emptyset}$).

\subsection{\cbnCalculusSymb Meaningfulness and Surface Genericity}
\label{subsec:Cbn_Meaningfulness}

In this subsection, our attention shifts towards the
\cbnCalculusSymb-calculus, where we show  that its notion of
meaningfulness is subsumed by that of \bangCalculusSymb.
This observation enables us to project the surface genericity theorem
accordingly.
We start by introducing \cbnCalculusSymb-meaningfulness.
\begin{definition}
    \label{def:cbn-meaningful}%
    A term $t\in \cbnSetTerms$ is \cbnCalculusSymb-\defn{meaningful}
    if there is a testing context $\cbnTCtxt$ such that $\cbnTCtxt<t>
    \cbnArr*_S \Id$, where testing contexts are defined by $\cbnTCtxt \Coloneqq \Hole \vsep
    \app[\,]{\cbnTCtxt}{u} \vsep
    \app[\,]{(\abs{x}{\cbnTCtxt})}{u}$.\footnotemark
    \footnotetext{Usually, \cbnCalculusSymb-meaningfulness (aka
    \emph{solvability}) is defined using contexts of the form
    $(\abs{x_1\dots x_m}\Hole)N_1\dots N_n$ ($m,n \geq 0$)
    \cite{Barendregt75,barendregt84nh,RoccaP04}, instead of testing
    contexts. It is easy to check that the two definitions are
    equivalent in \cbnCalculusSymb. The benefit of our definition is
    that the same testing contexts are also used to define
    \cbvCalculusSymb-meaningfulness~(\Cref{subsec:Cbv_Meaningfulness}).}
\end{definition}
For example $t = \app{x}{(\abs{y}{\Omega})}$ is
\cbnCalculusSymb-meaningful as $\cbvTCtxt<t> \cbnArr*_S \Id$ for
$\cbvTCtxt = \app{(\abs{x}{\Hole})}{(\abs{z}{\Id})}$, while $\Omega$
and $\abs{x}{\Omega}$ are \cbnCalculusSymb-meaningless as for whatever
testing context $\Omega$ and $\abs{x}{\Omega}$ are plugged into,
$\Omega$ will not be erased. Finally, based on
this definition of meaningfulness it is quite natural to
defined  the types of
observable terms in \cbnCalculusSymb as the identity types (\ie
$\typeobs{\cbnBKRVTypeSys} \coloneqq \{[\sigma] \typeArrow \sigma
\vsep \sigma \text{ type}\}$).

Unlike  \bangCalculusSymb,
\cbnCalculusSymb-meaningfulness can be characterized both
\emph{operationally}, through surface normalizability, and
\emph{logically}, through typability in system~$\cbnBKRVTypeSys$.
Moreover, this logical characterization
turns out to be equivalent to \emph{$\cbnBKRVTypeSys$-testability}, meaning
that \cbnCalculusSymb-meaningfulness can also be characterized via typability and inhabitation, as \mbox{already observed in~\cite{BucciarelliKR21}}.
 
\begin{restatable}[Characterizations of \cbnCalculusSymb-Meaningfulness~\cite{BucciarelliKR15,BucciarelliKR21,BucciarelliKesnerRiosViso20}]{lemma}{RecCbnMeaningfulnessCharacterizations}
    \LemmaToFromProof{cbnBKRV_characterizes_meaningfulness}%
    Let $t \in \cbvSetTerms$.
    \begin{enumerate}
    \item \textbf{(Operational)} %
        \label{lem:cbnBKRV_characterizes_meaningfulness_operational}%
        $t$ is \cbnCalculusSymb-meaningful
        iff $t$ is \cbnCalculusSymb surface-normalizing.

    \item \hspace{0.4cm}\textbf{(Logical)}\hspace{0.2cm} %
        \label{lem:cbnBKRV_characterizes_meaningfulness_logical}%
        (1) $t$ is \cbnCalculusSymb-meaningful iff
        (2) $t$ is
        $\cbnBKRVTypeSys$-typable iff (3) $t$ is
        $\cbnBKRVTypeSys$-testable.
  \end{enumerate}
\end{restatable}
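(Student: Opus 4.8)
The plan is to derive all the stated equivalences from the classical theory of solvability, together with the already available characterization of $\cbnCalculusSymb$ surface normalization by $\cbnBKRVTypeSys$-typability (stated just above), and the inhabitation machinery behind \Cref{t:logical-characterization}.

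First, for the operational characterization and for $(1)\Leftrightarrow(2)$: since $\cbnCalculusSymb$ surface reduction is a diamond variant of head reduction it enjoys uniform normalization, so a term reaching a surface normal form admits no infinite surface reduction. A routine induction shows every testing context is a $\cbnCalculusSymb$ surface context; hence an infinite surface reduction of $t$ would lift to one of $\cbnTCtxt<t>$, contradicting $\cbnTCtxt<t> \cbnArr*_S \Id$ (where $\Id$ is a surface normal form), which gives $(1)\Rightarrow$``surface normalizing''. Conversely, a $\cbnCalculusSymb$ surface normal form must be a head normal form $\abs{x_1}{\cdots\abs{x_n}{y\, s_1 \cdots s_m}}$ (no surface $\beta$-redex, and no explicit substitution at all along the spine, as rule $\cbnSymbSubs$ is unconstrained); a testing context that applies $n$ arbitrary arguments to strip the abstractions and substitutes the head variable --- directly via an applied abstraction $\abs{y}{\cdot}$ if $y$ is free, or through the $k$-th supplied argument if $y=x_k$ --- by the eraser $\abs{z_1}{\cdots\abs{z_m}{\Id}}$ then forces $\cbnTCtxt<t> \cbnArr*_S \Id$, so ``surface normalizing''$\Rightarrow(1)$. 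Combining these two implications with the stated equivalence ``$\cbnCalculusSymb$ surface normalizing $\Leftrightarrow$ $\cbnBKRVTypeSys$-typable'' yields both the operational statement and $(1)\Leftrightarrow(2)$.

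Next, for $(2)\Leftrightarrow(3)$: $(3)\Rightarrow(2)$ is immediate. For $(2)\Rightarrow(3)$, a $\cbnBKRVTypeSys$-typable $t$ surface reduces to some head normal form $s = \abs{x_1}{\cdots\abs{x_n}{y\, s_1 \cdots s_m}}$, and I would re-derive a typing of $s$ of an ``erasing'' shape: with $\rho \coloneqq \underbrace{\emul \typeArrow \cdots \typeArrow \emul}_{m} \typeArrow (\mset{\alpha} \typeArrow \alpha)$, type the head variable by the axiom with $\rho$, apply the application rule with empty index $I = \emptyset$ for each $s_j$ (leaving the $s_j$ untyped), and close by the abstraction rule, assigning $\emul$ to each $x_i$ except the $k$-th slot, which gets $\mset{\rho}$, when $y = x_k$. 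This yields $\cbnBKRVTr \Gamma \vdash s : \sigma$ with $\Gamma = \emptyset$ if $y$ is bound and $\Gamma = (y : \mset{\rho})$ otherwise, and with $\sigma$ an iterated arrow all of whose left multitypes are $\emul$ (save possibly one $\mset{\rho}$) ending in the observable type $\mset{\alpha} \typeArrow \alpha \in \typeobs{\cbnBKRVTypeSys}$. This typing is $\cbnBKRVTypeSys$-testable: $\cbnBKRVInhPred{\emul}$ holds vacuously, $\cbnBKRVInhPred{\mset{\rho}}$ holds since $\rho$ is inhabited by $\abs{w_1}{\cdots\abs{w_m}{\Id}}$, and the computation of $\genericArgs{\cbnBKRVTypeSys}{\sigma}$ stops at $\mset{\alpha} \typeArrow \alpha$; hence $\cbnBKRVInhPred{\Gamma}$ and $\cbnBKRVInhPred{\genericArgs{\cbnBKRVTypeSys}{\sigma}}$. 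Finally, this testable typing is pulled back from $s$ to $t$ along $t \cbnArr*_S s$ by subject expansion for $\cbnBKRVTypeSys$, itself obtainable by transporting reduction and typing through the $\bangCalculusSymb$-embedding and back via \Cref{lem:Bang_BKRV_Reduction}, \Cref{lem:Bang_BKRV} and \Cref{lem:Bang_BKRV_Typable}.

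The main obstacle is step $(2)\Rightarrow(3)$: the typing witnessing mere $\cbnBKRVTypeSys$-typability need not have an inhabited environment nor inhabited argument multitypes, so one really must descend to the surface normal form, re-derive the specific erasing typing above (whose components are manifestly inhabited), and only then transport it back by subject expansion. A close variant --- mirroring the $\bangCalculusSymb$ development --- would instead use $\cbnTCtxt<t> \cbnArr*_S \Id$ and subject expansion to get $\cbnBKRVTr \emptyset \vdash \cbnTCtxt<t> : \mset{\alpha} \typeArrow \alpha$, and then peel off the testing context, its supplied arguments being exactly the inhabitants needed, to extract a $\cbnBKRVTypeSys$-testable typing of $t$, in the style of the proof of \Cref{t:logical-characterization}; this also hinges on subject expansion and on carefully tracking inhabitants through the application/abstraction layers of $\cbnTCtxt$.
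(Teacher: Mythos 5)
Your proposal is correct, and its closing ``close variant'' is in fact exactly the paper's proof of the only step the paper argues in detail: for $(1) \Rightarrow (3)$ the paper observes that $\Id$ has the $\cbnBKRVTypeSys$-testable typing $(\emptyset;\, \mset{\mset{\alpha}\typeArrow\alpha}\typeArrow\mset{\alpha}\typeArrow\alpha)$ (testable for free, since the type is an observable and so has no arguments to inhabit), transports it to the plugged term by subject expansion, and then peels off the testing context with the dedicated lemma that extracts a testable typing of $t$ from a testable typing of the plugged term. Everything else in the paper's proof is delegated to the literature: part 1 and $(1)\Leftrightarrow(2)$ are cited, and $(3)\Rightarrow(2)$ is noted as trivial. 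Where you differ is that (i) you reprove the cited facts directly --- the uniform-normalization argument for $(1)\Rightarrow$ surface-normalizing and the classical head-context construction for the converse are both sound, the latter relying correctly on the fact that $\cbnSymbSubs$ is unconstrained so surface normal forms are genuine head normal forms --- and (ii) your primary route for $(2)\Rightarrow(3)$ builds an explicitly inhabited ``erasing'' typing of the head normal form and pulls it back by subject expansion, rather than going through the testing context. That route is valid (the type $\emul\typeArrow\cdots\typeArrow\emul\typeArrow(\mset{\alpha}\typeArrow\alpha)$ and the multitype $\mset{\rho}$ are indeed inhabited by erasers, and the argument-set computation stops at the observable $\mset{\alpha}\typeArrow\alpha$), but it presupposes the operational characterization, whereas the paper's route needs only the definition of meaningfulness plus the peeling lemma; your remark that CBN subject expansion can be recovered through the \bangCalculusSymb{} embedding via \Cref{lem:Bang_BKRV_Reduction}, \Cref{lem:Bang_BKRV} and \Cref{lem:Bang_BKRV_Typable} is also a legitimate, if roundabout, substitute for the direct citation the paper uses.
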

\smallskip
Thanks to the specific shape of \cbnCalculusSymb-normal forms, we can
always type a \cbnCalculusSymb-meaningful term $t$ by a typing
$\typing{\Gamma}{\sigma}$ such that the \emph{non-empty} multitypes in
$\Gamma$ and $\cbnTypeArgs{\sigma}$ are of the form $\mset{\emptymset
\typeArrow \cdots \emptymset \typeArrow \mset{\alpha} \typeArrow
\alpha}$. These types are trivially inhabited by erasers of the form
$\abs{x_1}{\cdots\abs{x_n}{\Id}}$, used in the proof that
$\cbnBKRVTypeSys$-typability implies \cbnCalculusSymb-meaningfulness.

Having an operational characterization of meaningfulness seems to
point out that transforming a result into something observable is a
trivial operation in \cbnCalculusSymb. Indeed, using simulation
(\Cref{lem:Bang_BKRV_Reduction}.(\ref{lem:Bang_BKRV_Reduction_if_Cbv_BKRV_Reduction})),
we easily show that \cbnCalculusSymb-meaningful is preserved by the
\cbnCalculusSymb-embedding, thus confirming this intuition. Moreover,
and thanks to the logical characterization
(\Cref{lem:cbnBKRV_characterizes_meaningfulness}.(\ref{lem:cbnBKRV_characterizes_meaningfulness_logical})),
we show that the converse also holds, yielding the following~result.

\begin{restatable}{theorem}{RecCbnMeaningfulEquivBangMeaningful}
    \label{lem:Cbn_Meaningful_iff_Bang_Meaningful}
    \label{cor:CbnMeaningfulEquivBangMeaningful}
    Let $t \in \cbnSetTerms$, then $t$ is \cbnCalculusSymb-meaningful
    iff $\cbnToBangBKRV{t}$ is \bangCalculusSymb-meaningful.
\end{restatable}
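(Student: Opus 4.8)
The plan is to prove both directions by going through the logical characterizations rather than manipulating testing contexts directly. For the forward direction, suppose $t$ is \cbnCalculusSymb-meaningful. By \Cref{lem:cbnBKRV_characterizes_meaningfulness}.(\ref{lem:cbnBKRV_characterizes_meaningfulness_logical}), $t$ is $\cbnBKRVTypeSys$-typable, say $\cbnBKRVTr\; \Gamma \vdash t : \sigma$ with $\typing{\Gamma}{\sigma}$ $\cbnBKRVTypeSys$-testable. By \Cref{lem:Bang_BKRV_Typable}.(\ref{lem:Bang_BKRV_Typable_iff_Cbn_BKRV_Typable}), $\bangBKRVTr\; \Gamma \vdash \cbnToBangBKRV{t} : \sigma$. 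It then remains to check that $\typing{\Gamma}{\sigma}$, seen in \bangCalculusSymb, is $\bangBKRVTypeSys$-testable, i.e.\ that $\bangBKRVInhPred{\Gamma}$ and $\bangBKRVInhPred{\bangTypeArgs{\sigma}}$. This should follow from the concrete description of the multitypes involved: as noted after \Cref{lem:cbnBKRV_characterizes_meaningfulness}, every non-empty multitype in $\Gamma$ or in $\cbnTypeArgs{\sigma}$ can be taken of the form $\mset{\emptymset \typeArrow \cdots \typeArrow \mset{\alpha} \typeArrow \alpha}$, and such a type is inhabited in \bangCalculusSymb by an eraser $\abs{x_1}{\cdots\abs{x_n}{\Id}}$ (type it directly in $\bangBKRVTypeSys$, using the empty-premise $\bangBKRVBgRuleName$ rule to discharge each $\emptymset$ and the derivation of \Cref{fig:Example_Inhabitation} for the innermost $\mset{\alpha}\typeArrow\mset{\alpha}$, up to a $\derSymb$ if needed). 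One also needs to compare $\cbnTypeArgs{\sigma}$ with $\bangTypeArgs{\sigma}$: since $\typeobs{\cbnBKRVTypeSys} = \{[\sigma]\typeArrow\sigma\}$ while $\typeobs{\bangBKRVTypeSys}$ is all multitypes, the \bangCalculusSymb-arguments of $\sigma$ form a subset of the \cbnCalculusSymb-arguments, plus possibly one extra multitype $\mset\tau$ when the \cbnCalculusSymb-computation stops at an observable $[\tau]\typeArrow\tau$; that extra $\mset\tau$ is inhabited because $[\tau]\typeArrow\tau$ itself is (by an eraser of the same shape). Then by \Cref{t:logical-characterization}, $\cbnToBangBKRV{t}$ is \bangCalculusSymb-meaningful.

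For the converse, suppose $\cbnToBangBKRV{t}$ is \bangCalculusSymb-meaningful. By \Cref{t:logical-characterization}, $\bangBKRVTr\; \Gamma \vdash \cbnToBangBKRV{t} : \sigma$ for some $\bangBKRVTypeSys$-testable $\typing{\Gamma}{\sigma}$; in particular $\cbnToBangBKRV{t}$ is $\bangBKRVTypeSys$-typable. By \Cref{lem:Bang_BKRV_Typable}.(\ref{lem:Bang_BKRV_Typable_iff_Cbn_BKRV_Typable}), $\cbnBKRVTr\; \Gamma \vdash t : \sigma$, so $t$ is $\cbnBKRVTypeSys$-typable, hence \cbnCalculusSymb-meaningful by \Cref{lem:cbnBKRV_characterizes_meaningfulness}.(\ref{lem:cbnBKRV_characterizes_meaningfulness_logical}). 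Note that for this direction we do not even need the inhabitation conditions: plain $\bangBKRVTypeSys$-typability of $\cbnToBangBKRV{t}$ suffices, because on the \cbnCalculusSymb-side typability alone already characterizes meaningfulness. This asymmetry is exactly the manifestation of the fact that in \cbnCalculusSymb there is no clash between the two incompatible data structures of \bangCalculusSymb, so the inhabitation side-condition is automatically met.

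The main obstacle is the bookkeeping in the forward direction: one must make precise the relationship between $\cbnTypeArgs{\sigma}$, $\bangTypeArgs{\sigma}$, and $\cbnBKRVInhPred{\cdot}$ versus $\bangBKRVInhPred{\cdot}$, since the two notions of ``inhabited'' are literally different predicates (the \cbnCalculusSymb one speaks of inhabiting a whole multitype by a single term used with several types, the \bangCalculusSymb one of inhabiting each type by a term typed with the empty environment). The cleanest route is a small lemma: \emph{if $\cbnBKRVInhPred{\M}$ then $\bangBKRVInhPred{\M}$, and if $\cbnBKRVInhPred{\sigma}$ for a non-multitype $\sigma$ then $\bangBKRVInhPred{\sigma}$}, proved by translating the inhabiting \cbnCalculusSymb-term(s) via $\cbnToBangBKRV{(\cdot)}$ and invoking \Cref{lem:Bang_BKRV_Typable}.(\ref{lem:Bang_BKRV_Typable_iff_Cbn_BKRV_Typable}) again (for a multitype $\mset{\tau_i}_{i\in I}$ inhabited by $u$, take $\oc\cbnToBangBKRV{u}$, whose $\bangBKRVBgRuleName$-derivation gathers the $\tau_i$; then strip the $\oc$ appropriately, or rather observe directly that each $\tau_i$ is inhabited by $\cbnToBangBKRV{u}$). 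Once this lemma is in place, together with the observation $\bangTypeArgs{\sigma} \subseteq \cbnTypeArgs{\sigma} \cup \{\text{one extra inhabited multitype}\}$, the $\bangBKRVTypeSys$-testability of $\typing{\Gamma}{\sigma}$ follows immediately and the theorem closes.
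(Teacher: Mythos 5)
Your $(\Leftarrow)$ direction is exactly the paper's argument (logical characterization in \bangCalculusSymb, typing preservation, logical characterization in \cbnCalculusSymb), and it is correct — indeed, as you observe, plain typability suffices there.

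Your $(\Rightarrow)$ direction departs from the paper, which gives an \emph{operational} proof: it translates the testing context, uses the simulation lemma to get $\cbnToBangBKRV{\cbnTCtxt}\cbnCtxtPlug{\cbnToBangBKRV{t}} \bangArr*_S \abs{x}{x}$, and then post-composes with $\app{\Hole}{\oc\oc y}$ to reach a bang. Your semantic route — transport a $\cbnBKRVTypeSys$-testable typing along \Cref{lem:Bang_BKRV_Typable}.\ref{lem:Bang_BKRV_Typable_iff_Cbn_BKRV_Typable} and check it is $\bangBKRVTypeSys$-testable — has a genuine gap precisely at the point you identify as ``bookkeeping''. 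The two notions of observable do not align: $\typeobs{\cbnBKRVTypeSys}$ consists of identity types while $\typeobs{\bangBKRVTypeSys}$ consists of multitypes, so $\bangTypeArgs{\sigma}$ keeps unfolding \emph{past} the identity type at which $\cbnTypeArgs{\sigma}$ stops. Concretely, take $t = \Id$ and the $\cbnBKRVTypeSys$-testable typing $\emptyset \vdash \Id : \mset{\alpha}\typeArrow\alpha$ (testable because $\mset{\alpha}\typeArrow\alpha \in \typeobs{\cbnBKRVTypeSys}$, so $\cbnTypeArgs{\mset{\alpha}\typeArrow\alpha}=\emptyset$). Its transport to $\bangBKRVTypeSys$ satisfies $\bangTypeArgs{\mset{\alpha}\typeArrow\alpha} = \{\mset{\alpha}\}$, and $\mset{\alpha}$ is \emph{not} inhabited in system $\bangBKRVTypeSys$: a derivation $\emptyset \vdash u : \alpha$ for a type variable $\alpha$ is impossible (by \Cref{lem:Bang_BKRV}.\ref{lem:Bang_BKRV_Characterization} and \Cref{lem:Bang_O_|-_t_:_M_and_SNF(t)_==>_t_=_!u}, $u$ would reduce to some $\oc s$, which can only be typed with a multitype). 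So the transported typing is not $\bangBKRVTypeSys$-testable and \Cref{t:logical-characterization} cannot be applied to it. Your justification that ``the extra multitype $\mset{\tau}$ is inhabited because $\mset{\tau}\typeArrow\tau$ itself is'' conflates $\cbnBKRVTypeSys$-inhabitation of the identity type with $\bangBKRVTypeSys$-inhabitation of its argument $\mset{\tau}$; the former does not give the latter. Moreover $\bangTypeArgs{\mset{\tau}\typeArrow\tau} = \{\mset{\tau}\}\cup\bangTypeArgs{\tau}$ may contribute several extra multitypes, not one. The remark after \Cref{lem:cbnBKRV_characterizes_meaningfulness} constrains the multitypes \emph{in} $\Gamma$ and $\cbnTypeArgs{\sigma}$, but says nothing about the tail of $\sigma$ beyond the first \cbnCalculusSymb-observable, which is exactly where the obstruction lives. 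To repair your route you would need an extra lemma guaranteeing that every \cbnCalculusSymb-meaningful term admits a testable typing whose result type, after the \cbnCalculusSymb-arguments, is (say) $\emptymset$ or otherwise has all its \bangCalculusSymb-arguments inhabited — a nontrivial strengthening that the paper sidesteps by arguing operationally. (Your auxiliary lemma that $\cbnBKRVInhPred{\M}$ implies $\bangBKRVInhPred{\M}$ via $\oc\cbnToBangBKRV{u}$ is correct and handles $\Gamma$ and $\cbnTypeArgs{\sigma}$; only the extra arguments are problematic.)
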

\begin{proof} ~
\begin{description}
\item[$(\Rightarrow)$] We present here an operational proof.
    Let $t$ be \cbnCalculusSymb-meaningful, thus
    $\cbnTCtxt<t>
    \cbnArr*_S \Id$ for some testing context $\cbnTCtxt$. 
    By induction on $\cbnTCtxt$, one has that
    $\cbnToBangBKRV{(\cbnTCtxt<t>)} =
    \cbnToBangBKRV{\cbnTCtxt}\cbnCtxtPlug{\cbnToBangBKRV{t}}$. By
    simulation
    (\Cref{lem:Bang_BKRV_Reduction}.\ref{lem:Bang_BKRV_Reduction_if_Cbn_BKRV_Reduction}),
    one deduces that
    $\cbnToBangBKRV{\cbnTCtxt}\cbnCtxtPlug{\cbnToBangBKRV{t}}
    \bangArr*_S \abs{x}{x}$ thus
    $\app{\cbnToBangBKRV{\cbnTCtxt}\cbnCtxtPlug{\cbnToBangBKRV{t}}}{\oc\oc
    y} \bangArr*_S \app{(\abs{x}{x})}{\oc\oc y} \bangArr*_S \oc y$.
    Notice that $\app{\cbnToBangBKRV{\cbnTCtxt}}{\oc\oc y}$ is a
    \bangCalculusSymb-testing context. We thus conclude that
    $\cbnToBangBKRV{t}$ is \bangCalculusSymb-meaningful.

\item[$(\Leftarrow)$] Let $\cbnToBangBKRV{t}$ be
    \bangCalculusSymb-meaningful, then using
    \Cref{lem:Bang_Meaningfulness_Typing_and_Inhabitation}, it is
    $\bangBKRVTypeSys$-testable and thus $\bangBKRVTypeSys$-typable. 
    By
    \Cref{lem:Bang_BKRV_Typable}.\ref{lem:Bang_BKRV_Typable_iff_Cbn_BKRV_Typable},
    $t$ is $\cbnBKRVTypeSys$-typable and hence $t$ is
    \cbnCalculusSymb-meaningful by
    \Cref{lem:cbnBKRV_characterizes_meaningfulness}.
\qedhere
\end{description}

\end{proof}

\Cref{cor:CbnMeaningfulEquivBangMeaningful} states that
\cbnCalculusSymb-meaningfulness precisely aligns with
\bangCalculusSymb-meaningfulness on its image, strengthening the idea
that these two notions are adequately chosen.
Thanks to \Cref{cor:CbnMeaningfulEquivBangMeaningful}, we can now
project surface genericity from \bangCalculusSymb to \cbnCalculusSymb.

\begin{restatable}[\cbnCalculusSymb Qualitative Surface Genericity]{theorem}{RecCbnQualitativeSurfaceGenericity}
    \label{lem:Cbn_Qualitative_Surface_Genericity}%
    Let $\cbnFCtxt$ be a full context. If $\cbnFCtxt<t>$ is
    \cbnCalculusSymb-meaningful for some
    \cbnCalculusSymb-meaningless $t \in \cbnSetTerms$, then
    $\cbnFCtxt<u>$ is \cbnCalculusSymb-meaningful \mbox{for
    every $u \in \cbnSetTerms$}.
\end{restatable}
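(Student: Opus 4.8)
The plan is to reduce \cbnCalculusSymb qualitative surface genericity to its \bangCalculusSymb counterpart (\Cref{lem:Bang_Qualitative_Surface_Genericity}) via the embedding $\cbnToBangBKRV{\cdot}$, using \Cref{cor:CbnMeaningfulEquivBangMeaningful} to travel back and forth between \cbnCalculusSymb-meaningfulness and \bangCalculusSymb-meaningfulness. The structural fact that makes the transfer go through is that the \cbnCalculusSymb-embedding is a plain homomorphism, hence commutes with plugging: for every \cbnCalculusSymb full context $\cbnFCtxt$ and every $r \in \cbnSetTerms$ one has $\cbnToBangBKRV{(\cbnFCtxt<r>)} = \cbnToBangBKRV{\cbnFCtxt}<\cbnToBangBKRV{r}>$, and moreover $\cbnToBangBKRV{\cbnFCtxt}$ is a \emph{full} \bangCalculusSymb context (the hole may end up under a $\oc$, since arguments get banged by the translation, which is exactly why we need full contexts on the \bangCalculusSymb side and not merely surface ones).

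First I would establish this commutation-and-shape fact by a straightforward induction on $\cbnFCtxt$, following the clauses defining $\cbnToBangBKRV{\cdot}$ on applications, abstractions and explicit substitutions, with base case $\cbnToBangBKRV{\Hole} = \Hole$. Then, assume $\cbnFCtxt<t>$ is \cbnCalculusSymb-meaningful for some \cbnCalculusSymb-meaningless $t$. By \Cref{cor:CbnMeaningfulEquivBangMeaningful}, the term $\cbnToBangBKRV{(\cbnFCtxt<t>)} = \cbnToBangBKRV{\cbnFCtxt}<\cbnToBangBKRV{t}>$ is \bangCalculusSymb-meaningful, and (by the contrapositive of the same result) $\cbnToBangBKRV{t}$ is \bangCalculusSymb-meaningless. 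Applying \bangCalculusSymb qualitative surface genericity (\Cref{lem:Bang_Qualitative_Surface_Genericity}) to the \bangCalculusSymb full context $\cbnToBangBKRV{\cbnFCtxt}$, we obtain that $\cbnToBangBKRV{\cbnFCtxt}<s>$ is \bangCalculusSymb-meaningful for \emph{every} $s \in \bangSetTerms$; in particular for $s = \cbnToBangBKRV{u}$ with $u \in \cbnSetTerms$ arbitrary. Since $\cbnToBangBKRV{\cbnFCtxt}<\cbnToBangBKRV{u}> = \cbnToBangBKRV{(\cbnFCtxt<u>)}$ by the commutation fact, \Cref{cor:CbnMeaningfulEquivBangMeaningful} gives that $\cbnFCtxt<u>$ is \cbnCalculusSymb-meaningful, which concludes.

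I do not expect a serious obstacle: the genuinely technical ingredients — the logical characterization of \bangCalculusSymb-meaningfulness and \bangCalculusSymb genericity itself — are already in hand, and $\cbnToBangBKRV{\cdot}$ being a homomorphism reduces the commutation lemma to bookkeeping. The only point deserving a line of care is checking that $\cbnToBangBKRV{\cbnFCtxt}$ is a full \bangCalculusSymb context rather than a surface one, since a hole in argument position becomes a hole under a $\oc$; this is harmless because \Cref{lem:Bang_Qualitative_Surface_Genericity} is precisely stated for full contexts. By contrast, the analogous \cbvCalculusSymb statement will be more delicate exactly because the \cbvCalculusSymb-embedding is not a homomorphism (its clause for applications is case-split), so this clean commutation fails and an extra argument is needed there.
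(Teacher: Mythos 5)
Your proof is correct and follows essentially the same route as the paper's: embed via $\cbnToBangBKRV{\cdot}$, use the commutation with plugging proved by induction on $\cbnFCtxt$, transfer meaningfulness and meaninglessness across the embedding with \Cref{cor:CbnMeaningfulEquivBangMeaningful}, and then invoke \bangCalculusSymb qualitative surface genericity on the full context $\cbnToBangBKRV{\cbnFCtxt}$. Your remark that $\cbnToBangBKRV{\cbnFCtxt}$ is only a full (not surface) \bangCalculusSymb context, because a hole in argument position lands under a $\oc$, is exactly the right point of care and is why the \bangCalculusSymb result must be stated for full contexts.
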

\begin{proof}
Let $t \in \cbnSetTerms$ be \cbnCalculusSymb-meaningless and
$\cbnFCtxt$ be a full context. Suppose that $\cbnFCtxt<t>$ is
\cbnCalculusSymb-meaningful, then using
\Cref{lem:Cbn_Meaningful_iff_Bang_Meaningful},
and the fact that
$\cbnToBangBKRV{(\cbnFCtxt<t>)} =
\cbnToBangBKRV{\cbnFCtxt}\cbnCtxtPlug{\cbnToBangBKRV{t}}$ (simple
  induction on $\cbnFCtxt$),  is
\bangCalculusSymb-meaningful, and $\cbnToBangBKRV{t}$ is
$\bangCalculusSymb$-meaningless. By
\Cref{lem:Bang_Qualitative_Surface_Genericity}, for any $u \in
\cbnSetTerms$,
$\cbnToBangBKRV{\cbnFCtxt}\cbnCtxtPlug{\cbnToBangBKRV{u}} =
\cbnToBangBKRV{(\cbnFCtxt<u>)}$ is \bangCalculusSymb-meaningful, and hence $\cbnFCtxt<u>$ is
\cbnCalculusSymb-meaningful using
\Cref{lem:Cbn_Meaningful_iff_Bang_Meaningful}.
\end{proof}

\subsection{\cbvCalculusSymb Meaningfulness and Surface Genericity}
\label{subsec:Cbv_Meaningfulness}%

We now move to the \cbvCalculusSymb-calculus, where we show that its
notion of meaningfulness is subsumed by that of the
\bangCalculusSymb-calculus, and then project surface genericity
theorem~accordingly.

\smallskip
Adapting meaningfulness from \cbnCalculusSymb to \cbvCalculusSymb by
replacing \cbnCalculusSymb-reduction with \cbvCalculusSymb-reduction
may seem initially promising. This notion, known as
\cbvCalculusSymb-solvability, has appealing
properties~\cite{paolini99tia,RoccaP04,AccattoliPaolini12,CarraroGuerrieri14,GuerrieriPaoliniRonchi17,AccattoliGuerrieri22bis}.
Unfortunately, Accattoli and Guerrieri showed that genericity fails in
such setting~\cite{AccattoliGuerrieri22bis}, and that equating
unsolvable terms yields an inconsistent
theory~\cite{AccattoliGuerrieri22bis}. Consequently,
\cbvCalculusSymb-meaningfulness cannot be identified with
\cbvCalculusSymb-solvability. Identifying appropriate notions to
capture \cbvCalculusSymb meaningful $\lambda$-terms and formally
validating these notions has been a longstanding and
challenging~open~question.

Paolini and Ronchi Della Rocca \cite{paolini99tia,RoccaP04} introduced
the notion of \emph{potentially valuability} for
\CBVSymb, also studied in
\cite{PaoliniPimentelRonchi06,AccattoliPaolini12,CarraroGuerrieri14,Garcia-PerezN16}
and renamed \emph{(\cbvCalculusSymb) scrutability} in
\cite{AccattoliGuerrieri22bis}. This notion, which we introduce below,
proves to be suitable \cbvCalculusSymb-meaningfulness. Notably, it
aligns seamlessly with \bangCalculusSymb-meaningfulness through the
\cbvCalculusSymb-embedding and thus enjoys a genericity theorem.

\begin{definition}
    \label{def:cbv-meaningful}%
    A term $t\in \cbvSetTerms$ is
    \cbvCalculusSymb-\defn{meaningful} if there exists a testing
    context $\cbvTCtxt$ and a value $v$ such that
    $\cbvTCtxt<t> \cbvArr*_S v$, where testing contexts are defined by
    $\cbvTCtxt\;\Coloneqq\; \Hole \vsep \app[\,]{\cbvTCtxt}{u} \vsep
    \app[\,]{(\abs{x}{\cbvTCtxt})}{u}$.
\end{definition}
For example $t = \app{x}{(\abs{y}{z})}$ is \cbvCalculusSymb-meaningful
as $\cbvTCtxt<t> \cbvArr*_S \abs{y}{z}$ for $\cbvTCtxt =
\app{(\abs{x}{\Hole})}{(\abs{z}{z})}$, while $\Omega$ and
$\app{x}{\Omega}$ are \cbvCalculusSymb-meaningless as for whatever
testing context $\Omega$ and $\app{x}{\Omega}$ are plugged into,
$\Omega$ will not be erased.

Notice that this definition closely mirrors that of
\bangCalculusSymb-meaningfulness, with  the primary difference being the
replacement of \bangCalculusSymb values
for  those of \cbvCalculusSymb. Since
values are typed with multitypes, it is natural to take them as types
of the observable terms in \cbvCalculusSymb (\ie $\typeobs{\cbvBKRVTypeSys}
\coloneqq \{\M \vsep \M \text{ multitype}\}$). Consequently, and thanks to the
preservation of typing
(\Cref{lem:Bang_BKRV_Typable}.\ref{lem:Bang_BKRV_Typable_iff_Cbv_BKRV_Typable}),
one easily shows that testability is preserved through the
\cbvCalculusSymb translation: if a term $t$ is
$\cbvBKRVTypeSys$-testable, then its image $\cbvToBangBKRV{t}$ is
$\bangBKRVTypeSys$-testable.

\smallskip
As in \cbnCalculusSymb and unlike 
\bangCalculusSymb, \cbvCalculusSymb-meaningfulness can
actually be characterized both \emph{operationally}, through surface
normalizability, and \emph{logically}, through typability in
system~$\cbvBKRVTypeSys$. Moreover, the
logical characterization turns out to be equivalent to
$\cbvBKRVTypeSys$-testability, meaning that
\cbvCalculusSymb-meaningfulness is also characterized by means of
typability and inhabitation.
\begin{restatable}[Characterizations of \cbvCalculusSymb-Meaningfulness~\cite{AccattoliPaolini12,AccattoliGuerrieri22bis,BucciarelliKesnerRiosViso20}]{lemma}{RecCbvMeaningfulnessCharacterizations}
    \LemmaToFromProof{cbvBKRV_characterizes_meaningfulness}%
    Let $t \in \cbvSetTerms$.
    \begin{enumerate}
    \item \textbf{(Operational)} %
        \label{lem:cbvBKRV_characterizes_meaningfulness_operational}%
        $t$ is \cbvCalculusSymb-meaningful iff $t$ is \cbvCalculusSymb
        surface-normalizing.

    \item \hspace{0.4cm}\textbf{(Logical)}\hspace{0.2cm} %
        \label{lem:cbvBKRV_characterizes_meaningfulness_logical}%
        (1) $t$ is \cbvCalculusSymb-meaningful iff (2) $t$ is
        $\cbvBKRVTypeSys$-typable iff (3) $t$ is
        $\cbvBKRVTypeSys$-testable.
  \end{enumerate}
\end{restatable}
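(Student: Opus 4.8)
The plan is to establish the logical part (item~2) as the two equivalences $(1)\Leftrightarrow(2)$ and $(2)\Leftrightarrow(3)$, and then to read off the operational part (item~1) by combining $(1)\Leftrightarrow(2)$ with the typing characterization of $\cbvCalculusSymb$ surface normalization from \cite{BucciarelliKesnerRiosViso20,BucciarelliKesnerRiosViso23} --- call it $(\star)$: $t$ is $\cbvCalculusSymb$ surface-normalizing iff $t$ is $\cbvBKRVTypeSys$-typable. I rely throughout on subject reduction and expansion for $\cbvBKRVTypeSys$ along $\cbvArr_S$, which is standard for these non-idempotent systems (and anyway transfers from $\bangBKRVTypeSys$ through the embedding, via simulation and typing preservation, \Cref{lem:Bang_BKRV_Reduction,lem:Bang_BKRV_Typable}, and subject reduction/expansion for $\bangBKRVTypeSys$, \Cref{lem:Bang_BKRV}); on the fact that $\cbvCalculusSymb$ surface reduction never enters an abstraction, so every value is a $\cbvCalculusSymb$ surface normal form; and on erasers $\abs{x_1}{\cdots\abs{x_m}{\Id}}$, playing the role they play for $\cbnCalculusSymb$.

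For $(1)\Rightarrow(2)$: if $\cbvTCtxt<t>\cbvArr*_S v$ with $v$ a value, then $v$ is a surface normal form, hence $\cbvBKRVTypeSys$-typable by $(\star)$, so $\cbvTCtxt<t>$ is $\cbvBKRVTypeSys$-typable by subject expansion; a routine induction on $\cbvTCtxt$ then peels it off and types $t$, the key point being that the $\cbvBKRVTypeSys$ application rule forces the function of an application to receive a \emph{singleton} multitype $\mset{\M\typeArrow\sigma}$, so that typing $\app{\cbvTCtxt<t>}{u}$ (resp.\ $\app{(\abs{x}{\cbvTCtxt<t>})}{u}$) forces a typing of the inner $\cbvTCtxt<t>$ --- by the application rule (resp.\ the application rule followed by the abstraction rule with a single premise) --- while $\Hole$ is the base case. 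The direction $(3)\Rightarrow(2)$ is immediate.

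The technical core is a normal-form analysis mirroring the $\cbnCalculusSymb$ one (\Cref{lem:cbnBKRV_characterizes_meaningfulness} and the discussion after it): \emph{every $\cbvBKRVTypeSys$-typable $\cbvCalculusSymb$ surface normal form $p$ admits a $\cbvBKRVTypeSys$-testable typing} --- in fact a rigid one, $\Gamma\vdash p:\emptymset$, in which every non-empty multitype of $\Gamma$ has an ``eraser shape'', i.e.\ is a singleton of a nested arrow type whose domains are all $\emptymset$, hence is inhabited by a suitably-nested eraser $\abs{x_1}{\cdots\abs{x_m}{\Id}}$. This is proved by induction on the grammar of $\cbvCalculusSymb$ surface normal forms (values, and terms headed by a free variable possibly under stuck explicit substitutions): a value is typed directly by $\emptyset\vdash v:\emptymset$; a term headed by a variable $y$ applied to surface normal forms $p_1,\dots,p_m$ is typed by giving $y$ the eraser-shaped type of arity $m$ and typing each $p_i$ with $\emptymset$ via the induction hypothesis. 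Such a typing is testable because its environment is eraser-inhabited and, crucially, $\emptymset\in\typeobs{\cbvBKRVTypeSys}$ --- the observable types of $\cbvCalculusSymb$ being all multitypes --- so $\emptymset$ has no argument multitypes left to inhabit. Granting this, $(2)\Rightarrow(3)$ follows: by $(\star)$, $t\cbvArr*_S p$ for a surface normal form $p$, which is $\cbvBKRVTypeSys$-typable by subject reduction; the analysis gives $p$ a testable typing, which subject expansion transports to $t$. And $(2)\Rightarrow(1)$ follows too: from the eraser-shaped testable typing $\Gamma\vdash t:\emptymset$ just produced, substitute each $x\in\typeCtxtDom{\Gamma}$ by an eraser realizing $\Gamma(x)$ --- this is done by a testing context $\cbvTCtxt$ built from nested $\lambda$-applications --- so that $\cbvTCtxt<t>$ is typed by $\typing{\emptyset}{\emptymset}$, hence surface-normalizing by $(\star)$; its surface normal form is then typed with the empty environment and so must be a value (a variable-headed normal form would have that variable in its environment, the variable rule being relevant), i.e.\ $\cbvTCtxt<t>$ surface-reduces to a value and $t$ is $\cbvCalculusSymb$-meaningful. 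The cycle $(1)\Rightarrow(2)\Rightarrow(3)\Rightarrow(2)\Rightarrow(1)$ gives $(1)\Leftrightarrow(2)\Leftrightarrow(3)$, and item~1 follows from $(1)\Leftrightarrow(2)$ together with $(\star)$.

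I expect the main obstacle to be the normal-form analysis underpinning both $(2)\Rightarrow(3)$ and $(2)\Rightarrow(1)$: isolating a precise grammar of $\cbvCalculusSymb$ surface normal forms --- more delicate than for $\cbnCalculusSymb$ because, with explicit substitutions and action at a distance, stuck substitutions may nest inside the arguments of a variable-headed term --- and checking that each such form carries an eraser-shaped testable typing. This is the $\cbvCalculusSymb$ counterpart of the reliance of the $\cbnCalculusSymb$ theory on the rigid shape of head normal forms \cite{barendregt84nh}, and its operational content is essentially the theory of $\cbvCalculusSymb$ potential valuability / scrutability \cite{AccattoliPaolini12,AccattoliGuerrieri22bis}. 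A recurring minor nuisance is that testing contexts are \emph{not} $\cbvCalculusSymb$ surface contexts --- they bind the hole beneath abstractions --- so transmitting reductions, or peeling off typings, through them first requires firing the outermost $\beta$-redexes they introduce.
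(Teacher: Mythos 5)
Your strategy is sound in outline, but it is a genuinely different --- and much heavier --- route than the paper's. The paper cites item~1 and the equivalence $(1)\Leftrightarrow(2)$ from the literature on \cbvCalculusSymb scrutability \cite{AccattoliPaolini12,CarraroGuerrieri14,AccattoliGuerrieri22bis}, observes that $(3)\Rightarrow(2)$ is trivial, and closes the cycle by proving only $(1)\Rightarrow(3)$, purely operationally: the value $v$ reached from $\cbvTCtxt<t>$ carries the testable typing $\emptyset\vdash v:\emptymset$ (nullary abstraction rule), subject expansion lifts this typing to $\cbvTCtxt<t>$, and a dedicated peeling lemma (\Cref{lem:Cbv_O_|-_T<t>_:_[]_==>_Gam_|-_t_:_sig_and_Gam_and_args(sig)_inh}) transports \emph{testability} --- not merely typability --- through the testing context. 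No analysis of surface normal forms is needed anywhere. You instead re-derive $(2)\Rightarrow(1)$ and $(2)\Rightarrow(3)$ from an eraser-typing of \cbvCalculusSymb surface normal forms, which is precisely the technical content the paper outsources; your approach buys self-containment (modulo the normalization characterization $(\star)$), the paper's buys brevity at the price of external citations. Note also that your $(1)\Rightarrow(2)$ yields only typability, which is why you are forced into the normal-form analysis to recover testability, whereas the paper's peeling lemma delivers testability directly.

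The one place where your sketch, as written, does not close is the key normal-form lemma. First, the environment multitypes need not be singletons: in $(xz)(xw)$ the two occurrences of $x$ are forced to carry distinct eraser types, so $x$ receives a two-element multitype. This is harmless --- a single sufficiently deep eraser inhabits any finite multiset of iterated-eraser types --- but the statement must say ``multiset of eraser-shaped types''. Second, and more importantly, the induction hypothesis ``$p$ is typable with target $\emptymset$'' is too weak to pass through stuck explicit substitutions: in $p\esub{x}{q}$ with $q$ stuck, $q$ must be typed with whatever multitype $x$ receives in the chosen typing of $p$, and that multitype is generally non-empty (e.g.\ in $(xx)\esub{x}{yy}$ the subterm $yy$ must be typed with $\mset{\emptymset\typeArrow\emptymset}$). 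You must generalize the statement to: every neutral surface normal form can be typed, under an eraser-shaped environment, with \emph{any} prescribed eraser-shaped multitype as target. With that strengthening the induction goes through, and the rest of your argument --- $(2)\Rightarrow(3)$ by subject expansion, $(2)\Rightarrow(1)$ by saturating the environment with erasers and using relevance of the variable rule to force the surface normal form of the resulting closed term to be a value --- is correct.
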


The notion of observable aligns in \cbvCalculusSymb and
\bangCalculusSymb, at least from the type perspective. This yields a
simple fully semantical proof of the preservation of
\cbvCalculusSymb-meaningfulness.

\begin{theorem}
    \label{lem:Cbv_Meaningful_iff_Bang_Meaningful}%
    Let $t \in \cbvSetTerms$, then $t$ is \cbvCalculusSymb-meaningful
    iff $\cbvToBangBKRV{t}$ is \bangCalculusSymb-meaningful.
\end{theorem}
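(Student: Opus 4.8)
The plan is to follow the skeleton of the proof of \Cref{cor:CbnMeaningfulEquivBangMeaningful}, but to run \emph{both} implications purely semantically, as the paragraph preceding the theorem already suggests. The three ingredients are: (i) the logical characterization of \cbvCalculusSymb-meaningfulness, \Cref{lem:cbvBKRV_characterizes_meaningfulness}, which makes \cbvCalculusSymb-meaningfulness, $\cbvBKRVTypeSys$-typability and $\cbvBKRVTypeSys$-testability interchangeable; (ii) the logical characterization of \bangCalculusSymb-meaningfulness, \Cref{t:logical-characterization}, identifying it with $\bangBKRVTypeSys$-testability; and (iii) the soundness and completeness of the \cbvCalculusSymb-embedding with respect to typing, \Cref{lem:Bang_BKRV_Typable}.\ref{lem:Bang_BKRV_Typable_iff_Cbv_BKRV_Typable}, which transports typing derivations back and forth along $\cbvToBangBKRV{\cdot}$ keeping the typing $\typing{\Gamma}{\sigma}$ unchanged. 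Since the two type systems have the same types, and since multitypes are the observable types both in $\cbvBKRVTypeSys$ (by the choice of $\typeobs{\cbvBKRVTypeSys}$) and in $\bangBKRVTypeSys$, the set of argument multitypes of a type $\sigma$ is literally the same set in both systems; this alignment is what the proof rests on.

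For the forward direction, suppose $t$ is \cbvCalculusSymb-meaningful. By \Cref{lem:cbvBKRV_characterizes_meaningfulness}.\ref{lem:cbvBKRV_characterizes_meaningfulness_logical}, there is a $\cbvBKRVTypeSys$-testable typing $\typing{\Gamma}{\sigma}$ with $\cbvBKRVTr\; \Gamma \vdash t : \sigma$, so every multitype in $\typeCtxtIm{\Gamma}$ and every argument multitype of $\sigma$ is $\cbvBKRVInhPred{\cdot}$. I would first transport the derivation: by \Cref{lem:Bang_BKRV_Typable}.\ref{lem:Bang_BKRV_Typable_iff_Cbv_BKRV_Typable}, $\bangBKRVTr\; \Gamma \vdash \cbvToBangBKRV{t} : \sigma$. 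Then I would transport inhabitation: if $\M$ is a multitype with $\cbvBKRVInhPred{\M}$, choose a witness $u$ with $\cbvBKRVTr\; \emptyset \vdash u : \M$; by \Cref{lem:Bang_BKRV_Typable}.\ref{lem:Bang_BKRV_Typable_iff_Cbv_BKRV_Typable} again, $\bangBKRVTr\; \emptyset \vdash \cbvToBangBKRV{u} : \M$, hence $\bangBKRVInhPred{\M}$. Applying this to the finitely many multitypes in $\typeCtxtIm{\Gamma}$ and to the argument multitypes of $\sigma$ (which are the same on both sides) shows that $\typing{\Gamma}{\sigma}$ is $\bangBKRVTypeSys$-testable, so $\cbvToBangBKRV{t}$ is $\bangBKRVTypeSys$-testable and therefore \bangCalculusSymb-meaningful by \Cref{t:logical-characterization}. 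This is exactly the remark ``testability is preserved through the \cbvCalculusSymb translation'' stated before the theorem, spelled out.

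For the backward direction, suppose $\cbvToBangBKRV{t}$ is \bangCalculusSymb-meaningful. By \Cref{t:logical-characterization} it is $\bangBKRVTypeSys$-testable, hence in particular $\bangBKRVTr\; \Gamma \vdash \cbvToBangBKRV{t} : \sigma$ for some typing $\typing{\Gamma}{\sigma}$. By \Cref{lem:Bang_BKRV_Typable}.\ref{lem:Bang_BKRV_Typable_iff_Cbv_BKRV_Typable}, $\cbvBKRVTr\; \Gamma \vdash t : \sigma$, so $t$ is $\cbvBKRVTypeSys$-typable, and then $t$ is \cbvCalculusSymb-meaningful by \Cref{lem:cbvBKRV_characterizes_meaningfulness}.\ref{lem:cbvBKRV_characterizes_meaningfulness_logical} (the implication from $\cbvBKRVTypeSys$-typability to \cbvCalculusSymb-meaningfulness). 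This half uses only typability preservation, not inhabitation, and is entirely routine.

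\textbf{Main obstacle.} Once \Cref{lem:cbvBKRV_characterizes_meaningfulness}, \Cref{t:logical-characterization} and \Cref{lem:Bang_BKRV_Typable} are available, the statement is essentially a corollary obtained by chaining equivalences, so I do not expect a genuine difficulty. The only point requiring care is the transfer of \emph{inhabitation} in the forward direction, namely that a $\cbvBKRVTypeSys$-inhabited multitype is $\bangBKRVTypeSys$-inhabited: this is immediate by applying typing preservation to the inhabiting term, but it relies on the fact that \cbvCalculusSymb-inhabitation of a multitype is defined through a \emph{single} typing derivation (unlike \cbnCalculusSymb, which needs a finite family of derivations), so that it matches the \bangCalculusSymb notion with no extra work. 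A minor bookkeeping point is to confirm that the argument-multitype operator is computed identically in $\cbvBKRVTypeSys$ and $\bangBKRVTypeSys$ — true because multitypes are observable in both — so that the testability condition transfers verbatim. Finally, this semantic route deliberately sidesteps the nuisance an operational $(\Rightarrow)$ proof would meet: the \cbvCalculusSymb-embedding of applications is not fully compositional (the case split in the embedding of $\app{t}{u}$ on whether $\cbvToBangBKRV{t}$ has the form $\bangLCtxt<\oc s>$), so pushing a testing context through the translation, as was done for \cbnCalculusSymb, would require a dedicated definition and lemma for translated \cbvCalculusSymb-testing contexts.
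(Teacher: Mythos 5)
Your proposal is correct and follows essentially the same route as the paper: both directions are run semantically, with $(\Leftarrow)$ chaining \Cref{t:logical-characterization}, typing preservation, and the logical characterization of \cbvCalculusSymb-meaningfulness exactly as in the paper, and $(\Rightarrow)$ using preservation of testability through the embedding. The only difference is that you spell out the testability-preservation step (transporting inhabitation witnesses via \Cref{lem:Bang_BKRV_Typable}.\ref{lem:Bang_BKRV_Typable_iff_Cbv_BKRV_Typable} and noting that the argument-multitype computation coincides because multitypes are the observables in both systems), which the paper leaves as a remark preceding the theorem; your elaboration is accurate.
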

\begin{proof}~
\begin{description}
\item[$(\Rightarrow)$] We present here a semantical proof. Let $t$ be
    \cbvCalculusSymb-meaningful, then using
    \Cref{lem:cbvBKRV_characterizes_meaningfulness}, one has that $t$
    is $\cbvBKRVTypeSys$-testable thus, by preservation of
    testability, $\cbvToBangBKRV{t}$ is $\bangBKRVTypeSys$-testable
    and one concludes that $\cbvToBangBKRV{t}$ is
    \bangCalculusSymb-meaningful according to
    \Cref{t:logical-characterization}.

\item[$(\Leftarrow)$] Let $\cbvToBangBKRV{t}$ be
    \bangCalculusSymb-meaningful, then using
    \Cref{lem:Bang_Meaningfulness_Typing_and_Inhabitation}, it is
    $\bangBKRVTypeSys$-testable thus $\bangBKRVTypeSys$-typable. 
    By
    \Cref{lem:Bang_BKRV_Typable}.\ref{lem:Bang_BKRV_Typable_iff_Cbv_BKRV_Typable},
    $t$ is $\cbvBKRVTypeSys$-typable and thus $t$ is
    \cbvCalculusSymb-meaningful by
    \Cref{lem:cbvBKRV_characterizes_meaningfulness}.
\qedhere
\end{description}
\end{proof}


\Cref{lem:Cbv_Meaningful_iff_Bang_Meaningful} states that
\cbvCalculusSymb-meaningfulness precisely aligns with
\bangCalculusSymb-meaningfulness on its image, strengthening the idea
that these two notions are adequately chosen. Thanks to
\Cref{lem:Cbv_Meaningful_iff_Bang_Meaningful}, we can now project
surface genericity from \bangCalculusSymb to \cbvCalculusSymb.

\begin{restatable}[\cbvCalculusSymb Qualitative Surface Genericity]{theorem}{RecCbvQualitativeSurfaceGenericity}
    \label{lem:Cbv_Qualitative_Surface_Genericity}%
    Let $\cbvFCtxt$ be a full context. If $\cbvFCtxt<t>$ is
    \cbvCalculusSymb-meaningful for some
    \cbvCalculusSymb-meaningless $t \in \cbvSetTerms$, then
    $\cbvFCtxt<u>$ is \cbvCalculusSymb-meaningful \mbox{for
    every $u \in \cbvSetTerms$}.
\end{restatable}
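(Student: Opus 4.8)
The plan is to transport the statement to \bangCalculusSymb through the \cbvCalculusSymb-embedding $\cbvToBangBKRV{\cdot}$ and conclude with \bangCalculusSymb qualitative surface genericity (\Cref{lem:Bang_Qualitative_Surface_Genericity}), following the same pattern as the proof of the \cbnCalculusSymb case (\Cref{lem:Cbn_Qualitative_Surface_Genericity}). Fix $t \in \cbvSetTerms$ that is \cbvCalculusSymb-meaningless, a full context $\cbvFCtxt$ with $\cbvFCtxt<t>$ \cbvCalculusSymb-meaningful, and an arbitrary $u \in \cbvSetTerms$. As \Cref{lem:Cbv_Meaningful_iff_Bang_Meaningful} is an equivalence, its contrapositive holds too, so $\cbvToBangBKRV{t}$ is \bangCalculusSymb-meaningless while $\cbvToBangBKRV{(\cbvFCtxt<t>)}$ is \bangCalculusSymb-meaningful; and again by that lemma it suffices to prove that $\cbvToBangBKRV{(\cbvFCtxt<u>)}$ is \bangCalculusSymb-meaningful.

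The step that departs from the \cbnCalculusSymb case is that $\cbvToBangBKRV{\cdot}$ is \emph{not} compositional on contexts: its application clause inspects the shape of the translated function, so there is no fixed \bangCalculusSymb context $\bangFCtxt$ with $\cbvToBangBKRV{(\cbvFCtxt<s>)} = \bangFCtxt<\cbvToBangBKRV{s}>$ for every $s$. I would get around this by associating to $\cbvFCtxt$ a \bangCalculusSymb full context $\bangFCtxt$ defined by structural recursion, translating every hole-free subterm as usual but using \emph{uniformly} the forcing application clause (the one prefixing the translated function with $\der{\cdot}$) whenever the hole lies in the function position of an application; the hole then always ends up under a \bangCalculusSymb full-context constructor, so $\bangFCtxt$ is a legitimate \bangCalculusSymb full context and, crucially, it does \emph{not} depend on the plugged term. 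I would then prove, by induction on $\cbvFCtxt$, that $\bangFCtxt<\cbvToBangBKRV{s}> \bangArr*_F \cbvToBangBKRV{(\cbvFCtxt<s>)}$ for every $s \in \cbvSetTerms$. Every case simply propagates the induction hypothesis through a \bangCalculusSymb full-context constructor; the only one needing an extra rewrite is the application case with the hole in the function, where the inserted $\der{\cdot}$ in front of the (already $\bangArr*_F$-reduced) translated subcontext either yields $\cbvToBangBKRV{(\cbvFCtxt<s>)}$ exactly or is one $\bangArr_F$-step $\der{\bangLCtxt<\oc r>} \bangArr_F \bangLCtxt<r>$ away from it, the two subcases reflecting the two branches of the embedding's application clause.

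The last ingredient is that $\bangArr_F$ preserves \bangCalculusSymb-meaningfulness: by \Cref{lem:Bang_BKRV}.\ref{lem:Bang_BKRV_Surface_Typing_SRE} it preserves typability with the \emph{same} typing, hence the existence of a $\bangBKRVTypeSys$-testable typing, which by \Cref{t:logical-characterization} is precisely \bangCalculusSymb-meaningfulness. Assembling: from $\cbvToBangBKRV{(\cbvFCtxt<t>)}$ \bangCalculusSymb-meaningful and $\bangFCtxt<\cbvToBangBKRV{t}> \bangArr*_F \cbvToBangBKRV{(\cbvFCtxt<t>)}$ we get that $\bangFCtxt<\cbvToBangBKRV{t}>$ is \bangCalculusSymb-meaningful; since $\cbvToBangBKRV{t}$ is \bangCalculusSymb-meaningless, \Cref{lem:Bang_Qualitative_Surface_Genericity} gives that $\bangFCtxt<v'>$ is \bangCalculusSymb-meaningful for every $v' \in \bangSetTerms$, in particular for $v' \coloneqq \cbvToBangBKRV{u}$; then $\bangFCtxt<\cbvToBangBKRV{u}> \bangArr*_F \cbvToBangBKRV{(\cbvFCtxt<u>)}$ together with the $\bangArr_F$-invariance just noted yields that $\cbvToBangBKRV{(\cbvFCtxt<u>)}$ is \bangCalculusSymb-meaningful, whence $\cbvFCtxt<u>$ is \cbvCalculusSymb-meaningful by \Cref{lem:Cbv_Meaningful_iff_Bang_Meaningful}. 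The main obstacle is the middle paragraph: pinning down the right plug-independent translated context $\bangFCtxt$ and establishing the $\bangArr*_F$-simulation — essentially a careful accounting of how the context-sensitive application clause of the \cbvCalculusSymb-embedding introduces only redundant forcings $\der{\oc r}$ that reduce to $r$.
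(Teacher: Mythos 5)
Your proof is correct and follows the same overall route as the paper's: push everything through the \cbvCalculusSymb-embedding, apply \bangCalculusSymb surface genericity (\Cref{lem:Bang_Qualitative_Surface_Genericity}), and come back via \Cref{lem:Cbv_Meaningful_iff_Bang_Meaningful}. Where you genuinely diverge is on the one delicate step, the non-compositionality of $\cbvToBangBKRV{\cdot}$ on contexts. The paper asserts $\cbvToBangBKRV{(\cbvFCtxt<t>)} = \cbvToBangBKRV{\cbvFCtxt}\cbvCtxtPlug{\cbvToBangBKRV{t}}$ by induction on $\cbvFCtxt$ and then, after applying genericity, appeals to typing preservation (\Cref{lem:Bang_BKRV_Typable}.\ref{lem:Bang_BKRV_Typable_iff_Cbv_BKRV_Typable}) to pass from $\cbvToBangBKRV{\cbvFCtxt}\cbvCtxtPlug{\cbvToBangBKRV{u}}$ to $\cbvToBangBKRV{(\cbvFCtxt<u>)}$ --- a step left rather implicit, precisely because (as you observe) no single translated context can satisfy that equality uniformly in the plugged term, the application clause of the embedding being sensitive to the shape of the translated function. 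Your replacement --- a plug-independent full context $\bangFCtxt$ that always inserts the forcing $\der{\cdot}$, the simulation $\bangFCtxt<\cbvToBangBKRV{s}> \bangArr*_F \cbvToBangBKRV{(\cbvFCtxt<s>)}$ where the only discrepancies are redundant $\der{\bangLCtxt<\oc r>} \bangArr_F \bangLCtxt<r>$ steps, and the invariance of \bangCalculusSymb-meaningfulness under $\bangArr_F$ obtained from \Cref{lem:Bang_BKRV}.\ref{lem:Bang_BKRV_Surface_Typing_SRE} together with \Cref{t:logical-characterization} --- is a clean and fully rigorous way to discharge exactly the point the paper's own proof glosses over. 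Both arguments prove the same theorem by the same reduction; yours makes the bookkeeping around the context translation explicit, at the cost of one auxiliary induction.
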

\begin{proof}
Let $t \in \cbvSetTerms$ be \cbvCalculusSymb-meaningless and
$\cbvFCtxt$ be a full context. Suppose that $\cbvFCtxt<t>$ is
\cbvCalculusSymb-meaningful, then using
\Cref{lem:Cbv_Meaningful_iff_Bang_Meaningful},
$\cbvToBangBKRV{(\cbvFCtxt<t>)}$ is \bangCalculusSymb-meaningful, and
$\cbvToBangBKRV{t}$ is $\bangCalculusSymb$-meaningless. By induction
on $\cbnFCtxt$, $\cbvToBangBKRV{(\cbvFCtxt<t>)} =
\cbvToBangBKRV{\cbvFCtxt}\cbvCtxtPlug{\cbvToBangBKRV{t}}$ thus
$\cbvToBangBKRV{\cbvFCtxt}\cbvCtxtPlug{\cbvToBangBKRV{t}}$ is
\bangCalculusSymb-meaningful. By
\Cref{lem:Bang_Qualitative_Surface_Genericity}, for any $u \in
\cbvSetTerms$,
$\cbvToBangBKRV{\cbvFCtxt}\cbvCtxtPlug{\cbvToBangBKRV{u}}$ is
\bangCalculusSymb-meaningful. So, by typing preservation
(\Cref{lem:Bang_BKRV_Typable}.\ref{lem:Bang_BKRV_Typable_iff_Cbv_BKRV_Typable})
again, $\cbvToBangBKRV{(\cbvFCtxt<u>)}$ is
\bangCalculusSymb-meaningful, and hence $\cbvFCtxt<u>$ is \cbvCalculusSymb-meaningful using
\Cref{lem:Cbv_Meaningful_iff_Bang_Meaningful}.
\end{proof}

\section{Conclusion and Future Work}
\label{s:conclusion}

We define a notion of meaningful term, in a unifying well-established
framework \bangCalculusSymb that is able to capture both \CBNSymb and
\CBVSymb\ calculi. We validate this notion of meaningfulness by
providing a (high-level) characterization based on both typability and
inhabitation, and showing a (surface) genericity result. Furthermore,
both meaningfulness and genericity in \bangCalculusSymb are shown to
capture their respective notions in \cbnCalculusSymb and
\cbvCalculusSymb.

Several questions remain to be explored. First of all, a notion of
stratified reduction, a finer operational semantics generalizing
surface reduction to different levels, has been recently defined for
\cbnCalculusSymb and \cbvCalculusSymb~\cite{ArrialGuerrieriKesner24}.
Stratified reduction is a key tool to show a full genericity result
for both strategies. We plan to transfer these techniques to the more
general framework of \bangCalculusSymb, so that full genericity for
\cbnCalculusSymb and \cbvCalculusSymb can be simply obtained by
projecting the more general notion of full genericity for
\bangCalculusSymb\ via \CBNSymb/\CBVSymb translations.

We also plan to further study the properties of the theories
$\mathcal{H}$ (\resp $\mathcal{H}^*$) generated by equating all the
meaningful terms (resp. all observational equivalent terms) in
\bangCalculusSymb, which could be possibly related. We believe that
these theories can be related to the corresponding ones in
\cbnCalculusSymb and \cbvCalculusSymb.

We would like to extend our study to other natural objects in the
theory of programming, such as Böhm trees for \bangCalculusSymb\ and
their related theorems (\eg approximation and separability). Böhm
trees for \bangCalculusSymb are expected to encompass both
\cbnCalculusSymb \cite{barendregt84nh} and
\cbvCalculusSymb~\cite{KerinecManzonettoPagani20} ones. 

Unifying frameworks such as \bangCalculusSymb should also provide
other general results for \cbnCalculusSymb and \cbvCalculusSymb, such
as standardization, separability, etc. All this is left to future
work. Finally, a more ambitious goal would be to generalize these
results to models of computations with effects, such as global memory,
non-determinism, exceptions, etc. This would approach our study on
\bangCalculusSymb to a more general unifying framework such as
call-by-push-value~\cite{Levy99,Levy04}.

\nocite{AccattoliGK20} 
\bibliography{main}

\newpage 

\appendix
\setbool{inAppendix}{true}

{
\renewcommand{\bangSymbStratified}{\bangSymbFull}
\renewcommand{\indexOmega}{{ }}
\renewcommand{\bangStratCtxt}{\bangCtxtStyle{F}\bangCtxtPlugOption}

\section{Proofs of \Cref{sec:dbang}}
  
\subsection{Confluence of $\bangArr_{S_\indexOmega}<dB> \cup
\bangArr_{S_\indexOmega}<d!>$}

\begin{definition}
  Let us now extend the reduction relation to contexts. We define the
    context reduction relation $\bangArr_{S_\indexOmega}<dB>$ (\resp
    $\bangArr_{S_\indexOmega}<d!>$) to be the union of the
    $\bangSymbStratified_\indexOmega$-closures of the relations
    $\bangSymbBeta_1, \bangSymbBeta_2$ and $\bangSymbBeta_3$ (\resp
    $\bangSymbBang_1$ and $\bangSymbBang_2$) defined as follows:
    \begin{equation*}
        \begin{array}{rcl}
                \app{\bangLCtxt<\abs{x}{\bangStratCtxt_\indexOmega}>}{u}
                    &\mapstoR[\bangSymbBeta_1]&
                \bangLCtxt<\bangStratCtxt_\indexOmega\esub{x}{u}>
            \\
                \app{\bangLCtxt<\abs{x}{t}>}{\bangStratCtxt_\indexOmega}
                    &\mapstoR[\bangSymbBeta_2]&
                \bangLCtxt<t\esub{x}{\bangStratCtxt_\indexOmega}>
            \\
                \app{\bangLCtxt_1<\bangLCtxt_2<\abs{x}{t}>\esub{y}{\bangStratCtxt_\indexOmega}>}{u}
                    &\mapstoR[\bangSymbBeta_3]&
                \bangLCtxt_1<\bangLCtxt_2<t\esub{x}{u}>\esub{y}{\bangStratCtxt_\indexOmega}>
            \\[0.2cm]
                \der{\bangLCtxt<\oc \bangStratCtxt_\indexOmega>}
                    &\mapstoR[\bangSymbBang_1]&
                \bangLCtxt<\bangStratCtxt_\indexOmega>
            \\
                \der{\bangLCtxt_1<\bangLCtxt_2<\oc s>\esub{x}{\bangStratCtxt_\indexOmega}>}
                    &\mapstoR[\bangSymbBang_2]&
                \bangLCtxt_1<\bangLCtxt_2<s>\esub{x}{\bangStratCtxt_\indexOmega}>
        \end{array}
    \end{equation*}
\end{definition}

\begin{lemma}
    \label{lem:Bang_Full_dB_d!_Context_Reduction_Lifted_to_Terms}%
    Let $\bangStratCtxt_\indexOmega^1, \bangStratCtxt_\indexOmega^2$
    be two contexts such that $\bangStratCtxt_\indexOmega^1
    \bangArr_{S_\indexOmega}<R> \bangStratCtxt_\indexOmega^2$ for some
    $\rel \in \{\bangSymbBeta, \bangSymbBang\}$. Then, for any term $t
    \in \bangSetTerms$, one has that $\bangStratCtxt_\indexOmega^1<t>
    \bangArr_{S_\indexOmega}<R> \bangStratCtxt_\indexOmega^2<t>$.
\end{lemma}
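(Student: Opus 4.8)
The plan is to prove the statement by induction on the context $\bangStratCtxt_\indexOmega^1$ --- equivalently, by induction on the full context $\bangFCtxt$ at which the base rewrite rule fires in the derivation of $\bangStratCtxt_\indexOmega^1 \bangArr_{S_\indexOmega}<R> \bangStratCtxt_\indexOmega^2$. By definition the context reduction $\bangArr_{S_\indexOmega}<R>$ is the full-context closure of the base rules $\bangSymbBeta_1,\bangSymbBeta_2,\bangSymbBeta_3$ (if $\rel = \bangSymbBeta$) or $\bangSymbBang_1,\bangSymbBang_2$ (if $\rel = \bangSymbBang$), so there are two kinds of cases: the \emph{root case} $\bangFCtxt = \Hole$, in which $\bangStratCtxt_\indexOmega^1$ rewrites to $\bangStratCtxt_\indexOmega^2$ directly by one of those base rules, and the \emph{congruence cases}, in which $\bangFCtxt$ has a nontrivial outermost constructor. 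The guiding intuition is that each base context rule, once its distinguished hole has been filled with a term, becomes an honest one-step $\rel$-contraction at the level of terms, so that a single context step is mirrored by a single term step.

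For the root case I would examine the base rules one by one. The key point is that filling the distinguished hole with $t$ commutes with all term constructors and with list-context composition (up to the usual $\alpha$-renaming should $\freeVar{t}$ meet a binder around the hole). For instance, for $\bangSymbBeta_1$, writing $t' \coloneqq \bangStratCtxt_\indexOmega<t>$, the term $\bangStratCtxt_\indexOmega^1<t>$ equals $\app{\bangLCtxt<\abs{x}{t'}>}{u}$, which is a $\bangSymbBeta$-redex contracting to $\bangLCtxt<t'\esub{x}{u}> = \bangStratCtxt_\indexOmega^2<t>$; rule $\bangSymbBeta_2$ is symmetric (the hole lies in the argument of the application), $\bangSymbBang_1$ is analogous with the pattern $\der{\bangLCtxt<\oc\,\cdot\,>}$ and rule $\bangSymbBang$, and $\bangSymbBeta_3,\bangSymbBang_2$ are treated likewise, additionally using that $\bangLCtxt_1<\bangLCtxt_2<\Hole>\esub{y}{t'}>$ is again a list context (list contexts compose), so the action-at-a-distance of $\bangSymbBeta$ and $\bangSymbBang$ still applies. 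The capture-freeness side conditions of the term rules constrain only $\bangLCtxt$ and $u$, which the plugging leaves untouched, hence they are preserved.

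In the congruence cases the base rule fires strictly inside $\bangStratCtxt_\indexOmega^1$, i.e.\ under one of the full-context constructors $\app{\cdot}{u}$, $\app{u}{\cdot}$, $\abs{x}{\cdot}$, $\der{\cdot}$, $\cdot\esub{x}{u}$, $u\esub{x}{\cdot}$ or $\oc\,\cdot$. Then $\bangStratCtxt_\indexOmega^1$ and $\bangStratCtxt_\indexOmega^2$ share that outermost constructor, their corresponding immediate subcontexts are related by a strictly smaller instance of the context reduction, and filling the hole of $\bangFCtxt$ commutes with filling the distinguished hole with $t$. The induction hypothesis applied to the smaller instance gives a $\bangArr_{S_\indexOmega}<R>$-step between the two terms obtained after plugging $t$, and re-closing this step under the outermost constructor yields $\bangStratCtxt_\indexOmega^1<t> \bangArr_{S_\indexOmega}<R> \bangStratCtxt_\indexOmega^2<t>$; this last move is legitimate exactly because the term relation $\bangArr_{S_\indexOmega}<R>$ is itself a full-context closure, in particular closed under $\oc\,\cdot$ (which matters here, since in this section we work with the full rather than the surface closure). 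The argument is largely routine bookkeeping; the places that need genuine care --- and where I would expect to slip --- are keeping the two holes apart (the hole of $\bangFCtxt$ where the rule fires versus the hole filled with $t$) and the list-context recombination in the $\bangSymbBeta_3$ and $\bangSymbBang_2$ cases.
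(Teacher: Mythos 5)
Your proposal is correct and follows essentially the same route as the paper's proof: the substance in both is the case analysis on the base rules $\bangSymbBeta_1,\bangSymbBeta_2,\bangSymbBeta_3,\bangSymbBang_1,\bangSymbBang_2$, checking that plugging $t$ into the distinguished hole turns each context-level base step into a term-level redex contraction (with the list-context recomposition in the $\bangSymbBeta_3$/$\bangSymbBang_2$ cases). The paper dispenses with your structural induction on the surrounding context by unfolding the contextual closure directly --- one outer context wrapping a base redex --- and re-closing in a single step, but this is only a presentational difference.
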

\stableProof{%
    \begin{proof}
We distinguish two cases:
\begin{itemize}
\item[\bltI] $\rel = \bangSymbBeta$: By definition, there exist three
    contexts $\bangStratCtxt_\indexOmega,
    \bangStratCtxt_\indexOmega^{1'}$ and
    $\bangStratCtxt_\indexOmega^{2'}$ such that
    $\bangStratCtxt_\indexOmega^1 =
    \bangStratCtxt_\indexOmega<\bangStratCtxt_\indexOmega^{1'}>$,
    $\bangStratCtxt_\indexOmega^2 =
    \bangStratCtxt_\indexOmega<\bangStratCtxt_\indexOmega^{2'}>$ and
    $\bangStratCtxt_\indexOmega^{1'} \mapstoR[\rel']
    \bangStratCtxt_\indexOmega^{2'}$ for some $\rel' \in
    \{\bangSymbBeta_1, \bangSymbBeta_2, \bangSymbBeta_3\}$. We
    distinguish three cases:
    \begin{itemize}
    \item[\bltII] $\rel' = \bangSymbBeta_1$: Then
        $\bangStratCtxt_\indexOmega^{1'} =
        \app{\bangLCtxt<\abs{x}{\bangStratCtxt'_\indexOmega}>}{u}$ and
        $\bangStratCtxt_\indexOmega^{2'} =
        \bangLCtxt<\bangStratCtxt'_\indexOmega\esub{x}{u}>$. Thus
        $\bangStratCtxt_\indexOmega^{1'}<t> =
        \app{\bangLCtxt<\abs{x}{\bangStratCtxt'_\indexOmega<t>}>}{u}$
        and $\bangStratCtxt_\indexOmega^{2'}<t> =
        \bangLCtxt<\bangStratCtxt'_\indexOmega<t>\esub{x}{u}>$.
        Therefore $\bangStratCtxt_\indexOmega^{1'}<t>
        \mapstoR[\bangSymbBeta] \bangStratCtxt_\indexOmega^{2'}<t>$
        hence $\bangStratCtxt_\indexOmega^1<t>
        \bangArr_{S_\indexOmega}<dB> \bangStratCtxt_\indexOmega^2<t>$.

    \item[\bltII] $\rel' = \bangSymbBeta_2$: Then
        $\bangStratCtxt_\indexOmega^{1'} =
        \app{\bangLCtxt<\abs{x}{t}>}{\bangStratCtxt'_\indexOmega}$ and
        $\bangStratCtxt_\indexOmega^{2'} =
        \bangLCtxt<t\esub{x}{\bangStratCtxt'_\indexOmega}>$. Thus
        $\bangStratCtxt_\indexOmega^{1'}<t> =
        \app{\bangLCtxt<\abs{x}{t}>}{\bangStratCtxt'_\indexOmega<t>}$
        and $\bangStratCtxt_\indexOmega^{2'}<t> =
        \bangLCtxt<t\esub{x}{\bangStratCtxt'_\indexOmega<t>}>$.
        Therefore $\bangStratCtxt_\indexOmega^{1'}<t>
        \mapstoR[\bangSymbBeta] \bangStratCtxt_\indexOmega^{2'}<t>$
        hence $\bangStratCtxt_\indexOmega^1<t>
        \bangArr_{S_\indexOmega}<dB> \bangStratCtxt_\indexOmega^2<t>$.

    \item[\bltII] $\rel' = \bangSymbBeta_3$: Then
        $\bangStratCtxt_\indexOmega^{1'} =
        \app{\bangLCtxt_1<\bangLCtxt_2<\abs{x}{t}>\esub{y}{\bangStratCtxt'_\indexOmega}>}{u}$
        and $\bangStratCtxt_\indexOmega^{2'} =
        \bangLCtxt_1<\bangLCtxt_2<t\esub{x}{u}>\esub{y}{\bangStratCtxt'_\indexOmega}>$.
        Thus $\bangStratCtxt_\indexOmega^{1'}<t> =
        \app{\bangLCtxt_1<\bangLCtxt_2<\abs{x}{t}>\esub{y}{\bangStratCtxt'_\indexOmega<t>}>}{u}$
        and $\bangStratCtxt_\indexOmega^{2'}<t> =
        \bangLCtxt_1<\bangLCtxt_2<t\esub{x}{u}>\esub{y}{\bangStratCtxt'_\indexOmega<t>}>$.
        Therefore $\bangStratCtxt_\indexOmega^{1'}<t>
        \mapstoR[\bangSymbBeta] \bangStratCtxt_\indexOmega^{2'}<t>$
        hence $\bangStratCtxt_\indexOmega^1<t>
        \bangArr_{S_\indexOmega}<dB> \bangStratCtxt_\indexOmega^2<t>$.
    \end{itemize}

\item[\bltI] $\rel = \bangSymbBang$: By definition, there exist three
    contexts $\bangStratCtxt_\indexOmega,
    \bangStratCtxt_\indexOmega^{1'}$ and
    $\bangStratCtxt_\indexOmega^{2'}$ such that
    $\bangStratCtxt_\indexOmega^1 =
    \bangStratCtxt_\indexOmega<\bangStratCtxt_\indexOmega^{1'}>$,
    $\bangStratCtxt_\indexOmega^2 =
    \bangStratCtxt_\indexOmega<\bangStratCtxt_\indexOmega^{2'}>$ and
    $\bangStratCtxt_\indexOmega^{1'} \mapstoR[\rel']
    \bangStratCtxt_\indexOmega^{2'}$ for some $\rel' \in
    \{\bangSymbBang_1, \bangSymbBang_2\}$.
    \begin{itemize}
    \item[\bltII] $\rel' = \bangSymbBang_1$: Then
        $\bangStratCtxt_\indexOmega^{1'} = \der{\bangLCtxt<\oc
        \bangStratCtxt'_\indexOmega>}$ and
        $\bangStratCtxt_\indexOmega^{2'} =
        \bangLCtxt<\bangStratCtxt'_\indexOmega>$. Thus
        $\bangStratCtxt_\indexOmega^{1'}<t> = \der{\bangLCtxt<\oc
        \bangStratCtxt'_\indexOmega<t>>}$ and
        $\bangStratCtxt_\indexOmega^{2'}<t> =
        \bangLCtxt<\bangStratCtxt'_\indexOmega<t>>$. Therefore
        $\bangStratCtxt_\indexOmega^{1'}<t> \mapstoR[\bangSymbBang]
        \bangStratCtxt_\indexOmega^{2'}<t>$ hence
        $\bangStratCtxt_\indexOmega^1<t> \bangArr_{S_\indexOmega}<dB>
        \bangStratCtxt_\indexOmega^2<t>$.

    \item[\bltII] $\rel' = \bangSymbBang_2$: Then
        $\bangStratCtxt_\indexOmega^{1'} =
        \der{\bangLCtxt_1<\bangLCtxt_2<\oc
        s>\esub{x}{\bangStratCtxt'_\indexOmega}>}$ and
        $\bangStratCtxt_\indexOmega^{2'} =
        \bangLCtxt_1<\bangLCtxt_2<s>\esub{x}{\bangStratCtxt'_\indexOmega}>$.
        Thus $\bangStratCtxt_\indexOmega^{1'}<t> =
        \der{\bangLCtxt_1<\bangLCtxt_2<\oc
        s>\esub{x}{\bangStratCtxt'_\indexOmega<t>}>}$ and
        $\bangStratCtxt_\indexOmega^{2'}<t> =
        \bangLCtxt_1<\bangLCtxt_2<s>\esub{x}{\bangStratCtxt'_\indexOmega<t>}>$.
        Therefore $\bangStratCtxt_\indexOmega^{1'}<t>
        \mapstoR[\bangSymbBang] \bangStratCtxt_\indexOmega^{2'}<t>$
        hence $\bangStratCtxt_\indexOmega^1<t>
        \bangArr_{S_\indexOmega}<dB> \bangStratCtxt_\indexOmega^2<t>$.
        \qedhere
    \end{itemize}
\end{itemize}
\end{proof}%
} \deliaLu \giulioLu

\begin{lemma}
    \label{lem:Bang_Full_dB_d!_Confluent}%
    The reduction $\bangArr_{S_\indexOmega}<dB> \cup
    \bangArr_{S_\indexOmega}<d!>$ is diamond.
\end{lemma}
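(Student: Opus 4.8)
The plan is to reduce the diamond property for context reduction to the diamond property of $\bangArr_{S_\indexOmega}<dB> \cup \bangArr_{S_\indexOmega}<d!>$ on plain terms, using the preceding Lemma~\ref{lem:Bang_Full_dB_d!_Context_Reduction_Lifted_to_Terms}. The key structural remark is that $\bangArr_{S_\indexOmega}<dB>$ and $\bangArr_{S_\indexOmega}<d!>$ never duplicate or erase a subterm: rule $\bangSymbBeta$ only \emph{creates} an explicit substitution (without firing it), and rule $\bangSymbBang$ only removes a dereliction/bang pair, leaving every surrounding explicit substitution in place. The one rule that could duplicate or erase, namely the substitution-firing rule $\bangSymbSubs$, is deliberately excluded from $\bangArr_{S_\indexOmega}<dB> \cup \bangArr_{S_\indexOmega}<d!>$. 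Consequently, in a one-hole context the hole itself is never duplicated nor erased, so every reduct of a context is again a one-hole context; and, unlike for ordinary $\beta$, there are no residuals to keep track of, so two distinct one-step reductions can always be closed by exactly one further step on each side.

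Concretely, given a context $C$ with two distinct reducts $C_1$ and $C_2$, I would fill the hole of $C$ with a fresh variable $w$. By Lemma~\ref{lem:Bang_Full_dB_d!_Context_Reduction_Lifted_to_Terms}, the two context steps become term steps from $C\langle w\rangle$ to $C_1\langle w\rangle$ and to $C_2\langle w\rangle$ (distinct, since $w$ is fresh), to which the term-level diamond of $\bangArr_{S_\indexOmega}<dB> \cup \bangArr_{S_\indexOmega}<d!>$ applies, yielding a term $T$ reachable in one step from both. Two observations then transport this back to contexts. First, a bare variable can never sit in the active position of a $\bangSymbBeta$- or $\bangSymbBang$-redex (those positions demand an abstraction, a bang, or a dereliction), so filling the hole with $w$ creates no new redex: every term redex of $C\langle w\rangle$ either is disjoint from $w$ or has $w$ inside the body, inside an argument, or inside an explicit substitution of the surrounding list context --- i.e.\ it is exactly the image of a context redex matching one of the shapes $\bangSymbBeta_1$, $\bangSymbBeta_2$, $\bangSymbBeta_3$, $\bangSymbBang_1$, $\bangSymbBang_2$. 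Second, by the structural remark above, $w$ still occurs exactly once in $C_1\langle w\rangle$, $C_2\langle w\rangle$, and hence in $T$; so $T = C_3\langle w\rangle$ for a unique context $C_3$. A converse of Lemma~\ref{lem:Bang_Full_dB_d!_Context_Reduction_Lifted_to_Terms} --- obtained by reading each of the five rule shapes backwards and locating the hole --- then lifts the two closing term steps to context steps $C_1 \to C_3$ and $C_2 \to C_3$, which is the diamond.

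I expect the work to concentrate in two places. First, the term-level diamond of $\bangArr_{S_\indexOmega}<dB> \cup \bangArr_{S_\indexOmega}<d!>$ must either be cited from~\cite{BucciarelliKesnerRiosViso20} or proved here by a critical-pair analysis; thanks to the absence of duplication there are no genuine critical pairs, and the only real subtlety is the bookkeeping of list contexts --- composing and splitting sequences of explicit substitutions so that the residual redexes and their list contexts line up, exactly as in the proof of Lemma~\ref{lem:Bang_Full_dB_d!_Context_Reduction_Lifted_to_Terms}. Second, the reflection lemma used in the last step, while immediate in spirit, still requires going through the five rule shapes with the hole placed in each admissible position. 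If one prefers to stay at the level of contexts throughout, the diamond can equally be obtained by a direct case analysis on the relative positions of the two contracted redexes in $C$: when they are disjoint, or one lies inside a passive part (argument, bang body, or an explicit substitution of the surrounding list context) of the other, the steps commute immediately; and the finitely many overlapping combinations of the five rules have no genuine critical pair by the same action-at-a-distance reasoning, at the cost of a longer but wholly routine enumeration.
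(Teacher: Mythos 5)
There is a mismatch between what you prove and what the lemma asserts. Despite being stated right after the definition of the \emph{context} reduction, the lemma (as the paper proves it, and as it is used in the Hindley--Rosen argument for \Cref{lem:Bang_Surface_Full_Confluence}) is the diamond property of $\bangArr_{S_\indexOmega}<dB> \cup \bangArr_{S_\indexOmega}<d!>$ \emph{on terms}: the proof opens with ``Let $t, u_1, u_2 \in \bangSetTerms$\dots''. The hole-tracking context reduction and \Cref{lem:Bang_Full_dB_d!_Context_Reduction_Lifted_to_Terms} are auxiliary devices used \emph{inside} that proof, to close the diagram in the sub-cases where the second redex sits in the enclosing context or in the list context $\bangLCtxt$ of the first. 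Your proposal inverts this structure: you take the term-level diamond as something ``to be cited from~\cite{BucciarelliKesnerRiosViso20} or proved by a critical-pair analysis'' and spend the bulk of the argument deriving a context-level diamond from it via fresh-variable plugging. But the term-level diamond is precisely the content of the lemma; it cannot be cited, and your one-sentence sketch of it (``no duplication, hence no genuine critical pairs, plus list-context bookkeeping'') is exactly where all the work lies. The paper carries it out as a case analysis on the position of the second redex relative to the first (in the surrounding context, in $\bangLCtxt$, in the body, or in the argument), each case closing in a single step on each side.

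A secondary issue: even for the statement you chose to prove, the reflection step at the end is not automatic. The context reduction is defined as the closure of five rule shapes \emph{all of which require the hole to occur inside the contracted redex}; a closing term step on $C_1\langle w\rangle$ whose redex is disjoint from the fresh variable $w$ therefore does not correspond to a context step $C_1 \to C_3$ under the given definition, so the lift back from terms to contexts needs either an extra clause in the definition or a separate argument for the disjoint case. Your observation that $w$, being a variable, can never occupy the active position of a $\bangSymbBeta$- or $\bangSymbBang$-redex is correct and rules out \emph{new} redexes, but it does not rule out redexes not containing $w$ at all. The cleanest fix is simply to follow the paper's orientation: prove the diamond directly on terms by the positional case analysis, invoking \Cref{lem:Bang_Full_dB_d!_Context_Reduction_Lifted_to_Terms} only to promote a step occurring inside a context component to a step on the filled term.
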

\stableProof{%
    \begin{proof}
Let $t, u_1, u_2 \in \bangSetTerms$ such that $u_1 \neq u_2$ with $t
\bangArr_{S_\indexOmega}<R_1> u_1$ and $t
\bangArr_{S_\indexOmega}<R_2> u_2$ for some $\rel_1, \rel_2 \in
\{\bangSymbBeta, \bangSymbBang\}$. Let us show that there exists $s
\in \bangSetTerms$ such that $u_1 \bangArr_{S_\indexOmega}<R_2> s$ and
$u_2 \bangArr_{S_\indexOmega}<R_1> s$. We distinguish two cases:
\begin{itemize}
\item[\bltI] $\rel_1 = \bangSymbBeta$: By definition, there exist
    $\bangStratCtxt_\indexOmega$, $\bangLCtxt$, $s_1, s_2 \in
    \bangSetTerms$ such that $t =
    \bangStratCtxt_\indexOmega<\app{\bangLCtxt<\abs{x}{s_1}>}{s_2}>$
    and $u_1 =
    \bangStratCtxt_\indexOmega<\bangLCtxt<s_1\esub{x}{s_2}>>$. Since
    $u_1 \neq u_2$, then the step $t \bangArr_{S_\indexOmega}<R_2>
    u_2$ either happens in $\bangStratCtxt_\indexOmega$, $\bangLCtxt$,
    $s_1$ or $s_2$.
    \begin{itemize}
    \item[\bltII] $\bangStratCtxt_\indexOmega
        \bangArr_{S_\indexOmega}<R_2> \bangStratCtxt'_\indexOmega$
        with $u_2 =
        \bangStratCtxt'_\indexOmega<\app{\bangLCtxt<\abs{x}{s_1}>}{s_2}>$:
        We set $s :=
        \bangStratCtxt'_\indexOmega<\bangLCtxt<s_1\esub{x}{s_2}>>$
        which concludes this case since using
        \Cref{lem:Bang_Full_dB_d!_Context_Reduction_Lifted_to_Terms}:
        \begin{equation*}
            \begin{array}{ccc}
                \bangStratCtxt_\indexOmega<\app{\bangLCtxt<\abs{x}{s_1}>}{s_2}>     &\bangArr_{S_\indexOmega}<R_1>  &\bangStratCtxt_\indexOmega<\bangLCtxt<s_1\esub{x}{s_2}>>
            \\[0.1cm]
                \bangDownArr_{S_\indexOmega}<R_2>                                   &                               &\bangDownArr_{S_\indexOmega}<R_2>
            \\[0.1cm]
                \bangStratCtxt'_\indexOmega<\app{\bangLCtxt<\abs{x}{s_1}>}{s_2}>    &\bangArr_{S_\indexOmega}<R_1>  &\bangStratCtxt'_\indexOmega<\bangLCtxt<s_1\esub{x}{s_2}>>
            \end{array}
        \end{equation*}

    \item[\bltII] $\bangLCtxt_\indexOmega
        \bangArr_{S_\indexOmega}<R_2> \bangLCtxt'_\indexOmega$ with
        $u_2 =
        \bangStratCtxt_\indexOmega<\app{\bangLCtxt'<\abs{x}{s_1}>}{s_2}>$:
        We set $s :=
        \bangStratCtxt_\indexOmega<\bangLCtxt'<s_1\esub{x}{s_2}>>$
        which concludes this case since using
        \Cref{lem:Bang_Full_dB_d!_Context_Reduction_Lifted_to_Terms}:
        \begin{equation*}
            \begin{array}{ccc}
                \bangStratCtxt_\indexOmega<\app{\bangLCtxt<\abs{x}{s_1}>}{s_2}>     &\bangArr_{S_\indexOmega}<R_1>  &\bangStratCtxt_\indexOmega<\bangLCtxt<s_1\esub{x}{s_2}>>
            \\[0.1cm]
                \bangDownArr_{S_\indexOmega}<R_2>                                   &                               &\bangDownArr_{S_\indexOmega}<R_2>
            \\[0.1cm]
                \bangStratCtxt_\indexOmega<\app{\bangLCtxt'<\abs{x}{s_1}>}{s_2}>    &\bangArr_{S_\indexOmega}<R_1>  &\bangStratCtxt_\indexOmega<\bangLCtxt'<s_1\esub{x}{s_2}>>
            \end{array}
        \end{equation*}

    \item[\bltII] $s_1 \bangArr_{S_\indexOmega}<R_2> s'_1$ with $u_2 =
        \bangStratCtxt_\indexOmega<\app{\bangLCtxt<\abs{x}{s'_1}>}{s_2}>$:
        We set $s :=
        \bangStratCtxt_\indexOmega<\bangLCtxt<s'_1\esub{x}{s_2}>>$
        which concludes this case since:
        \begin{equation*}
            \begin{array}{ccc}
                \bangStratCtxt_\indexOmega<\app{\bangLCtxt<\abs{x}{s_1}>}{s_2}>     &\bangArr_{S_\indexOmega}<R_1>  &\bangStratCtxt_\indexOmega<\bangLCtxt<s_1\esub{x}{s_2}>>
            \\[0.1cm]
                \bangDownArr_{S_\indexOmega}<R_2>                                   &                               &\bangDownArr_{S_\indexOmega}<R_2>
            \\[0.1cm]
                \bangStratCtxt_\indexOmega<\app{\bangLCtxt<\abs{x}{s'_1}>}{s_2}>    &\bangArr_{S_\indexOmega}<R_1>  &\bangStratCtxt_\indexOmega<\bangLCtxt<s'_1\esub{x}{s_2}>>
            \end{array}
        \end{equation*}

    \item[\bltII] $s_2 \bangArr_{S_\indexOmega}<R_2> s'_2$ with $u_2 =
        \bangStratCtxt_\indexOmega<\app{\bangLCtxt<\abs{x}{s_1}>}{s'_2}>$:
        We set $s :=
        \bangStratCtxt_\indexOmega<\bangLCtxt<s_1\esub{x}{s'_2}>>$
        which concludes this case since:
        \begin{equation*}
            \begin{array}{ccc}
                \bangStratCtxt_\indexOmega<\app{\bangLCtxt<\abs{x}{s_1}>}{s_2}>     &\bangArr_{S_\indexOmega}<R_1>  &\bangStratCtxt_\indexOmega<\bangLCtxt<s_1\esub{x}{s_2}>>
            \\[0.1cm]
                \bangDownArr_{S_\indexOmega}<R_2>                                   &                               &\bangDownArr_{S_\indexOmega}<R_2>
            \\[0.1cm]
                \bangStratCtxt_\indexOmega<\app{\bangLCtxt<\abs{x}{s_1}>}{s'_2}>    &\bangArr_{S_\indexOmega}<R_1>  &\bangStratCtxt_\indexOmega<\bangLCtxt<s_1\esub{x}{s'_2}>>
            \end{array}
        \end{equation*}
    \end{itemize}

\item[\bltI] $\rel_1 = \bangSymbBang$: By definition, there exist
    $\bangStratCtxt_\indexOmega$, $\bangLCtxt$, $s \in \bangSetTerms$
    such that $t = \bangStratCtxt_\indexOmega<\der{\bangLCtxt<\oc
    s>}>$ and $u_1 = \bangStratCtxt_\indexOmega<\bangLCtxt<s>>$. Since
    $u_1 \neq u_2$, then the step $t \bangArr_{S_\indexOmega}<R_2>
    u_2$ either happens in $\bangStratCtxt_\indexOmega$, $\bangLCtxt$
    or $s$.
    \begin{itemize}
    \item[\bltII] $\bangStratCtxt_\indexOmega
        \bangArr_{S_\indexOmega}<R_2> \bangStratCtxt'_\indexOmega$
        with $u_2 = \bangStratCtxt'_\indexOmega<\der{\bangLCtxt<\oc
        s>}>$: We set $s :=
        \bangStratCtxt'_\indexOmega<\bangLCtxt<s>>$ which concludes
        this case since using
        \Cref{lem:Bang_Full_dB_d!_Context_Reduction_Lifted_to_Terms}:
        \begin{equation*}
            \begin{array}{ccc}
                \bangStratCtxt_\indexOmega<\der{\bangLCtxt<\oc s>}>     &\bangArr_{S_\indexOmega}<R_1>  &\bangStratCtxt_\indexOmega<\bangLCtxt<s>>
            \\[0.1cm]
                \bangDownArr_{S_\indexOmega}<R_2>                       &                               &\bangDownArr_{S_\indexOmega}<R_2>
            \\[0.1cm]
                \bangStratCtxt'_\indexOmega<\der{\bangLCtxt<\oc s>}>    &\bangArr_{S_\indexOmega}<R_1>  &\bangStratCtxt'_\indexOmega<\bangLCtxt<s>>
            \end{array}
        \end{equation*}

    \item[\bltII] $\bangLCtxt_\indexOmega
        \bangArr_{S_\indexOmega}<R_2> \bangLCtxt'_\indexOmega$ with
        $u_2 = \bangStratCtxt_\indexOmega<\der{\bangLCtxt'<\oc s>}>$:
        We set $s := \bangStratCtxt_\indexOmega<\bangLCtxt'<s>>$ which
        concludes this case since using
        \Cref{lem:Bang_Full_dB_d!_Context_Reduction_Lifted_to_Terms}:
        \begin{equation*}
            \begin{array}{ccc}
                \bangStratCtxt_\indexOmega<\der{\bangLCtxt<\oc s>}>     &\bangArr_{S_\indexOmega}<R_1>  &\bangStratCtxt_\indexOmega<\bangLCtxt<s>>
            \\[0.1cm]
                \bangDownArr_{S_\indexOmega}<R_2>                       &                               &\bangDownArr_{S_\indexOmega}<R_2>
            \\[0.1cm]
                \bangStratCtxt_\indexOmega<\der{\bangLCtxt'<\oc s>}>    &\bangArr_{S_\indexOmega}<R_1>  &\bangStratCtxt_\indexOmega<\bangLCtxt'<s>>
            \end{array}
        \end{equation*}

    \item[\bltII] $s \bangArr_{S_\indexOmega}<R_2> s'$ with $u_2 =
        \bangStratCtxt_\indexOmega<\der{\bangLCtxt<\oc s'>}>$: We set
        $s := \bangStratCtxt_\indexOmega<\bangLCtxt<s'>>$ which
        concludes this case since:
        \begin{equation*}
            \begin{array}{ccc}
                \bangStratCtxt_\indexOmega<\der{\bangLCtxt<\oc s>}>     &\bangArr_{S_\indexOmega}<R_1>  &\bangStratCtxt_\indexOmega<\bangLCtxt<s>>
            \\[0.1cm]
                \bangDownArr_{S_\indexOmega}<R_2>                       &                               &\bangDownArr_{S_\indexOmega}<R_2>
            \\[0.1cm]
                \bangStratCtxt_\indexOmega<\der{\bangLCtxt<\oc s'>}>    &\bangArr_{S_\indexOmega}<R_1>  &\bangStratCtxt_\indexOmega<\bangLCtxt<s'>>
            \end{array}
        \end{equation*}
    \end{itemize}
\end{itemize}
\end{proof}
} \deliaLu \giulioLu

\begin{corollary}
    \label{lem:Bang_Full_dB_U_d!_Confluent}%
    The reduction $\bangArr_{S_\indexOmega}<dB> \cup
    \bangArr_{S_\indexOmega}<d!>$ is confluent.
\end{corollary}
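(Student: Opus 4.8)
The plan is to derive confluence of $\bangArr_{S_\indexOmega}<dB> \cup \bangArr_{S_\indexOmega}<d!>$ directly from the diamond property already established in \Cref{lem:Bang_Full_dB_d!_Confluent}, by appealing to the classical fact of abstract rewriting that the (one-step) diamond property of a relation entails the Church--Rosser property of its reflexive-transitive closure. Throughout, write $\rightarrow$ for the reduction $\bangArr_{S_\indexOmega}<dB> \cup \bangArr_{S_\indexOmega}<d!>$, and $\Rightarrow$ for its reflexive closure; note that $\Rightarrow^*$ and $\rightarrow^*$ coincide, so it suffices to establish confluence of $\Rightarrow$.

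First I would check that $\Rightarrow$ enjoys the one-step diamond property: given $t \Rightarrow u_1$ and $t \Rightarrow u_2$, if $u_1 = t$ then $u_2$ itself is the common reduct (symmetrically if $u_2 = t$), if $u_1 = u_2$ the claim is trivial, and if both steps are genuine $\rightarrow$-steps with $u_1 \neq u_2$ then \Cref{lem:Bang_Full_dB_d!_Confluent} provides $s$ with $u_1 \rightarrow s$ and $u_2 \rightarrow s$, hence $u_1 \Rightarrow s$ and $u_2 \Rightarrow s$. Next I would run the standard tiling argument in two stages: a strip lemma, stating that a one-step $\Rightarrow$-reduct and a $\Rightarrow^*$-reduct of a common term have a common reduct, proved by induction on the length of the $\Rightarrow^*$-reduction by completing one diamond per step; and then the general statement, that any two $\Rightarrow^*$-reducts of a common term have a common reduct, proved by induction on the length of one of the two reductions using the strip lemma at each step. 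Rewriting $\Rightarrow^*$ as $\rightarrow^*$ then yields exactly the notion of confluence defined in the excerpt.

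Since all the genuinely combinatorial content --- the analysis of the local peaks formed by a $\bangSymbBeta$-step and a $\bangSymbBang$-step in every relative position (disjoint occurrences, occurrences nested through the list contexts $\bangLCtxt$, and occurrences lying inside a subterm), together with the context-lifting provided by \Cref{lem:Bang_Full_dB_d!_Context_Reduction_Lifted_to_Terms} --- has already been discharged in \Cref{lem:Bang_Full_dB_d!_Confluent}, the present corollary reduces to the abstract rewriting lemma above, and I do not expect any real obstacle. The only point calling for a little care is the bookkeeping of reflexive closures, so that a diamond degenerating to a single point still tiles the global diagram correctly; this is precisely why I phrase everything in terms of $\Rightarrow$ rather than $\rightarrow$.
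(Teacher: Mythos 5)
Your proposal is correct and matches the paper's approach: the paper also derives this corollary as an immediate consequence of the diamond property established in \Cref{lem:Bang_Full_dB_d!_Confluent}, and your expansion of the standard diamond-implies-confluence tiling argument (via the reflexive closure, to handle the $u_1 = u_2$ degeneracies) is exactly the routine abstract-rewriting content the paper leaves implicit.
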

\begin{proof}
    Immediate consequence of \Cref{lem:Bang_Full_dB_d!_Confluent}.
\end{proof}



\subsection{Confluence of $\bangArr_{S_\indexOmega}<s!>$}

\subsubsection{Local Confluence of $\bangArr_{S_\indexOmega}<s!>$}

\begin{definition}
    Let us now extend the $\bangSymbStratified_\indexOmega$-reduction
    relation to contexts. We define the context reduction relation
    $\bangArr_{S_\indexOmega}<s!>$ to be the union of the
    $\bangStratCtxt_\indexOmega$-closures of the following relations
    $\bangSymbSubs_1$ and $\bangSymbSubs_2$:
    \begin{equation*}
        \begin{array}{rcl}
                t\esub{x}{\bangLCtxt_1<\bangLCtxt_2<\oc u>\esub{y}{\bangStratCtxt_\indexOmega}>}
                    &\mapstoR[\bangSymbSubs_1]&
                \bangLCtxt_1<\bangLCtxt_2<t\isub{x}{u}>\esub{y}{\bangStratCtxt_\indexOmega}>
            \\
                t\esub{x}{\bangLCtxt<\oc\bangStratCtxt_\indexOmega>}
                    &\mapstoR[\bangSymbSubs_2]&
                \bangLCtxt<t\isub{x}{\bangStratCtxt_\indexOmega}>
        \end{array}
    \end{equation*}
\end{definition}

\begin{lemma}
    \label{lem:Bang_Full_s!_Context_Reduction_Lifted_to_Terms}%
    Let $\bangStratCtxt_\indexOmega^1, \bangStratCtxt_\indexOmega^2$
    be two contexts such that $\bangStratCtxt_\indexOmega^1
    \bangArr_{S_\indexOmega}<s!> \bangStratCtxt_\indexOmega^2$, then
    for any term $t \in \bangSetTerms$, one has that
    $\bangStratCtxt_\indexOmega^1<t> \bangArr_{S_\indexOmega}<s!>
    \bangStratCtxt_\indexOmega^2<t>$.
\end{lemma}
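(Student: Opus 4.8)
The plan is to follow exactly the template of \Cref{lem:Bang_Full_dB_d!_Context_Reduction_Lifted_to_Terms}, replacing the case analysis of $\bangSymbBeta$- and $\bangSymbBang$-context steps by that of $\bangSymbSubs_1$- and $\bangSymbSubs_2$-context steps. First I would unfold the hypothesis $\bangStratCtxt_\indexOmega^1 \bangArr_{S_\indexOmega}<s!> \bangStratCtxt_\indexOmega^2$ along the definition of the $\bangStratCtxt_\indexOmega$-closure: there are contexts $\bangStratCtxt_\indexOmega$, $\bangStratCtxt_\indexOmega^{1'}$, $\bangStratCtxt_\indexOmega^{2'}$ with $\bangStratCtxt_\indexOmega^1 = \bangStratCtxt_\indexOmega<\bangStratCtxt_\indexOmega^{1'}>$, $\bangStratCtxt_\indexOmega^2 = \bangStratCtxt_\indexOmega<\bangStratCtxt_\indexOmega^{2'}>$ and $\bangStratCtxt_\indexOmega^{1'} \mapstoR[\rel'] \bangStratCtxt_\indexOmega^{2'}$ for some $\rel' \in \{\bangSymbSubs_1, \bangSymbSubs_2\}$. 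Since hole-filling commutes with context composition, $\bangStratCtxt_\indexOmega^i<t> = \bangStratCtxt_\indexOmega<\bangStratCtxt_\indexOmega^{i'}<t>>$ for $i \in \{1,2\}$, so it is enough to establish $\bangStratCtxt_\indexOmega^{1'}<t> \mapstoR[\bangSymbSubs] \bangStratCtxt_\indexOmega^{2'}<t>$, \ie to fire the plain substitution rule of the calculus, and then close this step under the full context $\bangStratCtxt_\indexOmega$. Then I would split on $\rel'$.

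If $\rel' = \bangSymbSubs_1$, then $\bangStratCtxt_\indexOmega^{1'} = s\esub{x}{\bangLCtxt_1<\bangLCtxt_2<\oc u>\esub{y}{\bangStratCtxt'_\indexOmega}>}$ and $\bangStratCtxt_\indexOmega^{2'} = \bangLCtxt_1<\bangLCtxt_2<s\isub{x}{u}>\esub{y}{\bangStratCtxt'_\indexOmega}>$, so the hole sits inside $\bangStratCtxt'_\indexOmega$, hence in the argument of the innermost explicit substitution $\esub{y}{\bangStratCtxt'_\indexOmega}$, a position that the rewrite neither relocates nor substitutes into. Filling the hole with $t$ therefore only replaces $\bangStratCtxt'_\indexOmega$ by $\bangStratCtxt'_\indexOmega<t>$ on both sides, yielding $\bangStratCtxt_\indexOmega^{1'}<t> = s\esub{x}{\bangLCtxt_1<\bangLCtxt_2<\oc u>\esub{y}{\bangStratCtxt'_\indexOmega<t>}>}$ and $\bangStratCtxt_\indexOmega^{2'}<t> = \bangLCtxt_1<\bangLCtxt_2<s\isub{x}{u}>\esub{y}{\bangStratCtxt'_\indexOmega<t>}>$; since $\bangLCtxt_1<\bangLCtxt_2\esub{y}{\bangStratCtxt'_\indexOmega<t>}>$ is again a list context and $\oc u$ is still a bang underneath it, $\bangStratCtxt_\indexOmega^{1'}<t>$ is a genuine $\bangSymbSubs$-redex whose contractum is exactly $\bangStratCtxt_\indexOmega^{2'}<t>$, which closes this case.

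If $\rel' = \bangSymbSubs_2$, then $\bangStratCtxt_\indexOmega^{1'} = s\esub{x}{\bangLCtxt<\oc\bangStratCtxt'_\indexOmega>}$ and $\bangStratCtxt_\indexOmega^{2'} = \bangLCtxt<s\isub{x}{\bangStratCtxt'_\indexOmega}>$, the hole now sitting underneath the bang. Filling the hole with $t$ turns $\oc\bangStratCtxt'_\indexOmega$ into the genuine bang $\oc(\bangStratCtxt'_\indexOmega<t>)$, so $\bangStratCtxt_\indexOmega^{1'}<t> = s\esub{x}{\bangLCtxt<\oc(\bangStratCtxt'_\indexOmega<t>)>}$ is a $\bangSymbSubs$-redex contracting to $\bangLCtxt<s\isub{x}{\bangStratCtxt'_\indexOmega<t>}>$; on the other hand, filling the hole with $t$ in $\bangStratCtxt_\indexOmega^{2'} = \bangLCtxt<s\isub{x}{\bangStratCtxt'_\indexOmega}>$ likewise yields $\bangLCtxt<s\isub{x}{\bangStratCtxt'_\indexOmega<t>}>$ (hole-filling distributes through the meta-level substitution), so the two terms agree and we conclude.

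Apart from this structural bookkeeping, the only genuinely delicate points, and where I expect the sole friction, are the interactions between the meta-level substitution and the hole. In the $\bangSymbSubs_2$ case one must observe that the hole really does survive $\isub{x}{\cdot}$ (it is carried along inside each residual copy of $\bangStratCtxt'_\indexOmega$, consistently with the one-hole convention for contexts) and that the subsequent plugging of $t$ captures no variable, which is precisely what $\alpha$-conversion together with the capture-freeness side condition attached to $\bangSymbSubs$ guarantee; in the $\bangSymbSubs_1$ case one must check that the enlarged list context $\bangLCtxt_1<\bangLCtxt_2\esub{y}{\bangStratCtxt'_\indexOmega<t>}>$ captures no free variable of $u$, which is again ensured by the side condition already present at the context level. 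Everything else is a verbatim transcription of the argument of \Cref{lem:Bang_Full_dB_d!_Context_Reduction_Lifted_to_Terms}.
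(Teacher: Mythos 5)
Your proof is correct and follows essentially the same route as the paper's: unfold the surface closure into an outer context plus a $\bangSymbSubs_1$/$\bangSymbSubs_2$ root step, observe that plugging $t$ into the hole (which sits either in the argument of the trailing explicit substitution or under the bang) yields a genuine $\bangSymbSubs$-redex whose contractum is the plugging of the reduct, and close under the outer context. The extra care you take about hole-filling commuting with meta-level substitution and about capture-freeness is sound and, if anything, slightly more explicit than the paper's own argument.
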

\stableProof{%
    \begin{proof}
By definition, there exist three contexts $\bangStratCtxt_\indexOmega,
\bangStratCtxt_\indexOmega^{1'}$ and $\bangStratCtxt_\indexOmega^{2'}$
such that $\bangStratCtxt_\indexOmega^1 =
\bangStratCtxt_\indexOmega<\bangStratCtxt_\indexOmega^{1'}>$,
$\bangStratCtxt_\indexOmega^2 =
\bangStratCtxt_\indexOmega<\bangStratCtxt_\indexOmega^{2'}>$ and
$\bangStratCtxt_\indexOmega^{1'} \mapstoR[\rel]
\bangStratCtxt_\indexOmega^{2'}$ for some $\rel \in \{\bangSymbSubs_1,
\bangSymbSubs_2\}$. We distinguish two cases:
\begin{itemize}
\item[\bltIII] $\rel = \bangSymbSubs_1$: Then
    $\bangStratCtxt_\indexOmega^{1'} =
    t\esub{x}{\bangLCtxt_1<\bangLCtxt_2<\oc
    u>\esub{y}{\bangStratCtxt_\indexOmega}>}$ and
    $\bangStratCtxt_\indexOmega^{2'} =
    \bangLCtxt_1<\bangLCtxt_2<t\isub{x}{u}>\esub{y}{\bangStratCtxt_\indexOmega}>$.
    Thus $\bangStratCtxt_\indexOmega^{1'}<t> =
    t\esub{x}{\bangLCtxt_1<\bangLCtxt_2<\oc
    u>\esub{y}{\bangStratCtxt_\indexOmega<t>}>}$ and
    $\bangStratCtxt_\indexOmega^{2'}<t> =
    \bangLCtxt_1<\bangLCtxt_2<t\isub{x}{u}>\esub{y}{\bangStratCtxt_\indexOmega<t>}>$.
    Therefore $\bangStratCtxt_\indexOmega^{1'}<t>
    \mapstoR[\bangSymbSubs] \bangStratCtxt_\indexOmega^{2'}<t>$ hence
    $\bangStratCtxt_\indexOmega^1<t> \bangArr_{S_\indexOmega}<s!>
    \bangStratCtxt_\indexOmega^2<t>$.

\item[\bltIII] $\rel = \bangSymbSubs_2$: Then
    $\bangStratCtxt_\indexOmega^{1'} =
    t\esub{x}{\bangLCtxt<\oc\bangStratCtxt_\indexOmega>}$ and
    $\bangStratCtxt_\indexOmega^{2'} =
    \bangLCtxt<t\isub{x}{\bangStratCtxt_\indexOmega}>$. Thus
    $\bangStratCtxt_\indexOmega^{1'}<t> =
    t\esub{x}{\bangLCtxt<\oc\bangStratCtxt_\indexOmega<t>>}$ and
    $\bangStratCtxt_\indexOmega^{2'}<t> =
    \bangLCtxt<t\isub{x}{\bangStratCtxt_\indexOmega<t>}>$. Therefore
    $\bangStratCtxt_\indexOmega^{1'}<t> \mapstoR[\bangSymbSubs]
    \bangStratCtxt_\indexOmega^{2'}<t>$ hence
    $\bangStratCtxt_\indexOmega^1<t> \bangArr_{S_\indexOmega}<s!>
    \bangStratCtxt_\indexOmega^2<t>$.
    \qedhere
\end{itemize}
\end{proof}%
} \giulioLu \deliaLu

\begin{remark}
    We also trivially extend the notion of substitution to
    $\bangStratCtxt_\indexOmega$-contexts by setting $\Hole\isub{x}{t}
    := \Hole$.
\end{remark}

\begin{lemma}
    \label{lem:Bang_Full_s!_Locally_Confluent}%
    The relation $\bangArr_{S_\indexOmega}<s!>$ is locally confluent.
\end{lemma}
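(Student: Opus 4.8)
The plan is a standard critical-pair argument. Read $\Hole$ as an inert constant --- it heads no redex and, by the convention $\Hole\isub{x}{t} = \Hole$, is untouched by substitution --- so that $\bangArr_{S_\indexOmega}<s!>$ is essentially the closure under full contexts of the single rewrite rule $\bangSymbSubs$, namely $t\esub{x}{\bangLCtxt<\oc u>} \mapstoR[\bangSymbSubs] \bangLCtxt<t\isub{x}{u}>$, the shapes $\bangSymbSubs_1$ and $\bangSymbSubs_2$ merely recording the two ways such a redex can straddle the hole, with \Cref{lem:Bang_Full_s!_Context_Reduction_Lifted_to_Terms} on hand to turn a context step into a term step. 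I would take a context $\bangStratCtxt_\indexOmega$ with $\bangStratCtxt_\indexOmega \bangArr_{S_\indexOmega}<s!> \bangStratCtxt_\indexOmega^1$ obtained by contracting a redex $R_1$ and $\bangStratCtxt_\indexOmega \bangArr_{S_\indexOmega}<s!> \bangStratCtxt_\indexOmega^2$ by contracting a redex $R_2$, and look for a common reduct $\bangStratCtxt_\indexOmega^3$ reachable from both $\bangStratCtxt_\indexOmega^1$ and $\bangStratCtxt_\indexOmega^2$ by $\bangArr_{S_\indexOmega}<s!>$-sequences. Only such sequences, not a single step on each side, can be expected in general: firing $t\esub{x}{\bangLCtxt<\oc u>}$ copies the argument $u$ once per free occurrence of $x$ in $t$, so a redex inside $u$ gets duplicated, or erased when $x \notin \freeVar{t}$; hence $\bangArr_{S_\indexOmega}<s!>$ is locally confluent but not diamond.

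The core is a case analysis on the relative position of $R_1$ and $R_2$. If $R_1 = R_2$ then $\bangStratCtxt_\indexOmega^1 = \bangStratCtxt_\indexOmega^2$; if they are disjoint, each has a unique residual after the other step and the diagram closes with one step on each side. Otherwise, up to symmetry $R_1 = t\esub{x}{\bangLCtxt<\oc u>}$ contains $R_2$, and there are four sub-cases: (i) $R_2$ lies in the body $t$; (ii) $R_2$ lies in the banged argument $u$; (iii) $R_2$ lies strictly inside one of the explicit substitutions forming the list context $\bangLCtxt$; (iv) $R_2$ is itself a $\bangSymbSubs$-redex whose explicit substitution belongs to the spine of $\bangLCtxt$, so that the patterns of $R_1$ and $R_2$ overlap on that shared spine. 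In (i) and (iii), firing $R_1$ leaves exactly one residual of $R_2$ --- in (iii) the list context being extruded unchanged --- so the two steps commute; in (ii) the outer step replicates $R_2$ (once per free occurrence of $x$ in $t$, possibly zero times) and the two branches join after finitely many further steps; in (iv) each step leaves exactly one residual of the other and the two reducts coincide, which comes down to the substitution identity $(t\isub{x}{u})\isub{y}{w} = t\isub{x}{(u\isub{y}{w})}$, valid because the explicit-substitution-bound variable $y$ is not free in $t$. Every sub-case in which a step has the form $\bangSymbSubs_1$ or $\bangSymbSubs_2$ is handled exactly like its plain counterpart, relocating the hole via \Cref{lem:Bang_Full_s!_Context_Reduction_Lifted_to_Terms} and the convention $\Hole\isub{x}{t} = \Hole$.

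The main obstacle is case (iv): it is precisely the at-a-distance formulation of $\bangSymbSubs$, which fires through a list context of explicit substitutions, that makes the rule overlap with itself, and there one has to track carefully how the two (possibly nested) list contexts and the two banged subterms interleave --- including when $R_2$ uses only a prefix of $\bangLCtxt$'s spine --- verifying that the residual of each redex is genuinely present after the other is contracted and that the two joins really coincide modulo the substitution lemma. This, together with the replication and erasure in case (ii) that force joining by $\bangArr_{S_\indexOmega}<s!>$-sequences rather than a one-step diagram, and the small amount of bookkeeping needed to re-place the context hole in the $\bangSymbSubs_1$ and $\bangSymbSubs_2$ variants, produces a handful of somewhat tedious diagrams; none of them is conceptually deep --- the care lies entirely in getting the at-a-distance overlaps right.
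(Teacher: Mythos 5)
Your proposal is correct and matches the paper's proof in all essentials: the paper also establishes local (not diamond) confluence by a case analysis on the relative positions of the two $\bangSymbSubs$-redexes (organized there as an induction on the term), joins the duplication/erasure case with a $\bangArr*_{S_\indexOmega}<s!>$-sequence whose length depends on the number of free occurrences of the substituted variable, resolves the at-a-distance self-overlap on the spine of the list context via exactly the substitution identity you state (using $\alpha$-conversion to ensure the inner bound variable is not free in the body), and handles the hole-relocating variants through \Cref{lem:Bang_Full_s!_Context_Reduction_Lifted_to_Terms}. No gaps.
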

\stableProof{%
    \begin{proof}
Let $t, u_1, u_2 \in \bangSetTerms$ such that $t
\bangArr_{S_\indexOmega}<s!> u_1$, $t \bangArr_{S_\indexOmega}<s!>
u_2$ and $u_1 \neq u_2$. We reason by induction on $t$:
\begin{itemize}
\item[\bltI] $t = x$: Impossible since it contradicts $t
    \bangArr_{S_\indexOmega}<s!> u_1$.

\item[\bltI] $t = \abs{x}{t'}$: Since $t$ cannot be a
    $\bangSymbSubs$-redex, then necessarily
    $u_1 = \abs{x}{t'_1}$ and $u_2 = \abs{x}{t'_2}$ with $t'
    \bangArr_{S_\indexOmega}<s!> t'_1$ and $t'
    \bangArr_{S_\indexOmega}<s!> t'_2$. By \ih on $t'$, there exists
    $s' \in \bangSetTerms$ such that $t_1
    \bangArr*_{S_\indexOmega}<s!> s'$ and $t_2
    \bangArr*_{S_\indexOmega}<s!> s'$. We set $s := \abs{x}{s'}$
    concluding this case since by contextual closure:
    \begin{equation*}
        \begin{array}{lll}
            \abs{x}{t'}                             &\bangArr_{S_\indexOmega}<s!>   &\abs{x}{t'_1}
    \\[0.2cm]
            \;\;\bangDownArr_{S_\indexOmega}<s!>    &                               &\;\;\bangDownArr*_{S_\indexOmega}<s!>
    \\[0.2cm]
            \abs{x}{t'_2}                           &\bangArr*_{S_\indexOmega}<s!>  &\abs{x}{s'}
        \end{array}
    \end{equation*}

\item[\bltI] $t = \app{t_1}{t_2}$: Since $t$ cannot be a
    $\bangSymbSubs$-redex, four different
    cases can be distinguished:
    \begin{itemize}
    \item[\bltII] $u_1 = \app{u'_1}{t_2}$, $u_2 = \app{u'_2}{t_2}$
        with $t_1 \bangArr_{S_\indexOmega}<s!> u'_1$ and $t_1
        \bangArr_{S_\indexOmega}<s!> u'_2$: By \ih on $t_1$, there
        exists $s_1 \in \bangSetTerms$ such that $u'_1
        \bangArr*_{S_\indexOmega}<s!> s_1$ and $u'_2
        \bangArr*_{S_\indexOmega}<s!> s_1$. We set $s :=
        \app{s_1}{t_2}$ concluding this case since by contextual
        closure:
        \begin{equation*}
            \begin{array}{lll}
                \app{t_1}{t_2}                          &\bangArr_{S_\indexOmega}<s!>   &\app{u'_1}{t_2}
        \\[0.2cm]
                \;\;\bangDownArr_{S_\indexOmega}<s!>    &                               &\;\;\bangDownArr*_{S_\indexOmega}<s!>
        \\[0.2cm]
                \app{u'_2}{t_2}                         &\bangArr*_{S_\indexOmega}<s!>  &\app{s_1}{t_2}
            \end{array}
        \end{equation*}

    \item[\bltII] $u_1 = \app{u'_1}{t_2}$, $u_2 = \app{t_1}{u'_2}$
        with $t_1 \bangArr_{S_\indexOmega}<s!> u'_1$ and $t_2
        \bangArr_{S_\indexOmega}<s!> u'_2$: We set $s :=
        \app{u'_1}{u'_2}$ concluding this case since by contextual
        closure:
        \begin{equation*}
            \begin{array}{lll}
                \app{t_1}{t_2}                          &\bangArr_{S_\indexOmega}<s!>   &\app{u'_1}{t_2}
        \\[0.2cm]
                \;\;\bangDownArr_{S_\indexOmega}<s!>    &                               &\;\;\bangDownArr_{S_\indexOmega}<s!>
        \\[0.2cm]
                \app{t_1}{u'_2}                         &\bangArr_{S_\indexOmega}<s!>   &\app{u'_1}{u'_2}
            \end{array}
        \end{equation*}

    \item[\bltII] $u_1 = \app{t_1}{u'_1}$, $u_2 = \app{u'_1}{t_2}$
        with $t_2 \bangArr_{S_\indexOmega}<s!> u'_1$ and $t_1
        \bangArr_{S_\indexOmega}<s!> u'_2$: We set $s :=
        \app{u'_1}{u'_2}$ concluding this case since by contextual
        closure:
        \begin{equation*}
            \begin{array}{lll}
                \app{t_1}{t_2}                          &\bangArr_{S_\indexOmega}<s!>   &\app{t_1}{u'_2}
        \\[0.2cm]
                \;\;\bangDownArr_{S_\indexOmega}<s!>    &                               &\;\;\bangDownArr_{S_\indexOmega}<s!>
        \\[0.2cm]
                \app{u'_1}{t_2}                         &\bangArr_{S_\indexOmega}<s!>   &\app{u'_1}{u'_2}
            \end{array}
        \end{equation*}

    \item[\bltII] $u_1 = \app{t_1}{u'_1}$, $u_2 = \app{t_1}{u'_2}$
        with $t_2 \bangArr_{S_\indexOmega}<s!> u'_1$ and $t_2
        \bangArr_{S_\indexOmega}<s!> u'_2$: By \ih on $t_2$, there
        exists $s_2 \in \bangSetTerms$ such that $t_1
        \bangArr*_{S_\indexOmega}<s!> s_2$ and $t_2
        \bangArr*_{S_\indexOmega}<s!> s_2$. We set $s :=
        \app{t_1}{s_2}$ concluding this case since by contextual
        closure:
        \begin{equation*}
            \begin{array}{lll}
                \app{t_1}{t_2}                          &\bangArr_{S_\indexOmega}<s!>   &\app{t_1}{u'_1}
        \\[0.2cm]
                \;\;\bangDownArr_{S_\indexOmega}<s!>    &                               &\;\;\bangDownArr*_{S_\indexOmega}<s!>
        \\[0.2cm]
                \app{t_1}{u'_2}                         &\bangArr*_{S_\indexOmega}<s!>  &\app{t_1}{s'}
            \end{array}
        \end{equation*}
    \end{itemize}

\item[\bltI] $t = t_1\esub{x}{t_2}$: Notice that $t$ can be a
    $\bangSymbSubs$-redex but since $u_1
    \neq u_2$, three cases have to be distinguished:
    \begin{itemize}
    \item[\bltII] $t$ is the redex reduced in $t
        \bangArr_{S_\indexOmega}<s!> u_1$: Then $t_2 = \bangLCtxt<\oc
        t'_2>$ and $u_1 = \bangLCtxt<t_1\isub{x}{t'_2}>$. We
        distinguish two cases:
        \begin{itemize}
        \item[\bltIII] $u_2 = u'_1\esub{x}{t_2}$ and $t_1
            \bangArr_{S_\indexOmega}<s!> u'_1$: We set $s =
            \bangLCtxt<u'_1\isub{x}{t'_2}>$. By induction on $t_1$,
            one has that $t_1\isub{x}{t'_2}
            \bangArr_{S_\indexOmega}<s!> u'_1\isub{x}{t'_2}$ thus by
            contextual closure, one has that:
            \begin{equation*}
                \begin{array}{lll}
                    t_1\esub{x}{\bangLCtxt<\oc t'_2>}               &\bangArr_{S_\indexOmega}<s!>   &u'_1\esub{x}{\bangLCtxt<\oc t'_2>}
            \\[0.2cm]
                    \hspace{0.75cm}\bangDownArr_{S_\indexOmega}<s!> &                               &\hspace{0.75cm}\bangDownArr_{S_\indexOmega}<s!>
            \\[0.2cm]
                    \bangLCtxt<t_1\isub{x}{t'_2}>                   &\bangArr_{S_\indexOmega}<s!>   &\bangLCtxt<u'_1\isub{x}{t'_2}>
                \end{array}
            \end{equation*}
        \end{itemize}

        \item[\bltIII] $u_2 = t_1\esub{x}{u'_2}$ and $t_2
            \bangArr_{S_\indexOmega}<s!> u'_2$: Since $t_2 =
            \bangLCtxt<\oc t'_2>$, three cases can be distinguished:
            \begin{itemize}
            \item[\bltIV] $\bangLCtxt =
                \bangLCtxt_1<\bangLCtxt_3\esub{y}{\bangLCtxt_2<\oc
                t''_2>}>$ and $u'_2 =
                \bangLCtxt_1<\bangLCtxt_2<\bangLCtxt_3<\oc
                t'_2>\isub{y}{t''_2}>>$: We set $s' :=
                \bangLCtxt_1<\bangLCtxt_2<\bangLCtxt_3\isub{x}{t'_2}\bangCtxtPlug{t_1\isub{x}{t'_2\isub{y}{t''_2}}}>>$.
                By induction on $\bangLCtxt_3$, one has that
                $\bangLCtxt_3<t'_2>\isub{x}{t'_2} =
                \bangLCtxt_3\isub{x}{t'_2}\bangCtxtPlug{t_1\isub{x}{t'_2}}$.
                By \mbox{$\alpha$-conversion}, $y \notin
                \freeVar{t_1}$ and therefore, by induction on $t_1$,
                $t_1\isub{x}{t'_2}\isub{y}{t''_2} =
                t_1\isub{y}{t''_2}\isub{x}{t'_2\isub{y}{t''_2}}$ thus:
                \begin{equation*}
                    \begin{array}{lll}
                        t_1\esub{x}{\bangLCtxt_1<\bangLCtxt_3<\oc t'_2>\esub{y}{\bangLCtxt_2<\oc t''_2>}>}  &\bangArr_{S_\indexOmega}<s!>   &t_1\esub{x}{\bangLCtxt_1<\bangLCtxt_2<\bangLCtxt_3\isub{y}{t''_2}\bangCtxtPlug{\oc t'_2\isub{y}{t''_2}}>>}
                \\[0.2cm]
                        \hspace{2cm}\bangDownArr_{S_\indexOmega}<s!>                                        &                               &\hspace{2cm}\bangDownArr_{S_\indexOmega}<s!>
                \\[0.2cm]
                        \bangLCtxt_1<\bangLCtxt_3<t_1\isub{x}{t'_2}>\esub{y}{\bangLCtxt_2<\oc t''_2>}>      &\bangArr_{S_\indexOmega}<s!>   &\bangLCtxt_1<\bangLCtxt_2<\bangLCtxt_3\isub{x}{t'_2}\bangCtxtPlug{t_1\isub{x}{t'_2\isub{y}{t''_2}}}>>
                \\
                                                                                                            &                               &=\bangLCtxt_1<\bangLCtxt_2<\bangLCtxt_3\isub{x}{t'_2}\bangCtxtPlug{t_1\isub{y}{t''_2}\isub{x}{t'_2\isub{y}{t''_2}}}>>
                    \end{array}
                \end{equation*}

            \item[\bltIV] $u'_2 = \bangLCtxt'<\oc t'_2>$ and
                $\bangLCtxt \bangArr_{S_\indexOmega}<s!> \bangLCtxt'$:
                We set $s' := \bangLCtxt'<t_1\isub{x}{t'_2}>$ which
                concludes this case since using
                \Cref{lem:Bang_Full_s!_Context_Reduction_Lifted_to_Terms},
                one has:
                \begin{equation*}
                    \begin{array}{lll}
                        t_1\esub{x}{\bangLCtxt<\oc t'_2>}               &\bangArr_{S_\indexOmega}<s!>   &t_1\esub{x}{\bangLCtxt'<\oc t'_2>}
                \\[0.2cm]
                        \hspace{0.75cm}\bangDownArr_{S_\indexOmega}<s!> &                               &\hspace{0.75cm}\bangDownArr_{S_\indexOmega}<s!>
                \\[0.2cm]
                        \bangLCtxt<t_1\isub{x}{t'_2}>                   &\bangArr_{S_\indexOmega}<s!>   &\bangLCtxt'<t_1\isub{x}{t'_2}>
                    \end{array}
                \end{equation*}

            \item[\bltIV] $u'_2 = \bangLCtxt<\oc t''_2>$ and $t'_2
                \bangArr_{S_\indexOmega}<s!> t''_2$: We set $s' :=
                \bangLCtxt<t_1\isub{x}{t''_2}>$. By induction on
                $t_1$, one has that $t_1\isub{x}{t'_2}
                \bangArr*_{S_\indexOmega}<s!> t_1\isub{x}{t''_2}$
                thus:
                \begin{equation*}
                    \begin{array}{lll}
                        t_1\esub{x}{\bangLCtxt<\oc t'_2>}               &\bangArr_{S_\indexOmega}<s!>   &t_1\esub{x}{\bangLCtxt<\oc t''_2>}
                \\[0.2cm]
                        \hspace{0.75cm}\bangDownArr_{S_\indexOmega}<s!> &                               &\hspace{0.75cm}\bangDownArr_{S_\indexOmega}<s!>
                \\[0.2cm]
                        \bangLCtxt<t_1\isub{x}{t'_2}>                   &\bangArr*_{S_\indexOmega}<s!>  &\bangLCtxt<t_1\isub{x}{t''_2}>
                    \end{array}
                \end{equation*}
            \end{itemize}

    \item[\bltII] $t$ is the redex reduced in $t
        \bangArr_{S_\indexOmega}<s!> u_2$: Same as the previous case.

    \item[\bltII] $t$ is not the redex reduced in $t
        \bangArr_{S_\indexOmega}<s!> u_1$ nor $t
        \bangArr_{S_\indexOmega}<s!> u_2$: We distinguish four cases:
        \begin{itemize}
        \item[\bltIII] $u_1 = u'_1\esub{x}{t_2}$, $u_2 =
            u'_2\esub{x}{t_2}$ with $t_1 \bangArr_{S_\indexOmega}<s!>
            u'_1$ and $t_1 \bangArr_{S_\indexOmega}<s!> u'_2$: By \ih
            on $t_1$, there exists $s_1 \in \bangSetTerms$ such that
            $u'_1 \bangArr*_{S_\indexOmega}<s!> s_1$ and $u'_2
            \bangArr*_{S_\indexOmega}<s!> s_1$. We set $s :=
            s_1\esub{x}{t_2}$ concluding this case since by contextual
            closure:
            \begin{equation*}
                \begin{array}{lll}
                    t_1\esub{x}{t_2}                            &\bangArr_{S_\indexOmega}<s!>   &u'_1\esub{x}{t_2}
            \\[0.2cm]
                    \;\;\;\;\bangDownArr_{S_\indexOmega}<s!>    &                               &\;\;\;\;\bangDownArr*_{S_\indexOmega}<s!>
            \\[0.2cm]
                    u'_2\esub{x}{t_2}                           &\bangArr*_{S_\indexOmega}<s!>  &s_1\esub{x}{t_2}
                \end{array}
            \end{equation*}

        \item[\bltIII] $u_1 = u'_1\esub{x}{t_2}$, $u_2 =
            t_1\esub{x}{u'_2}$ with $t_1 \bangArr_{S_\indexOmega}<s!>
            u'_1$ and $t_2 \bangArr_{S_\indexOmega}<s!> u'_2$: We set
            $s := u'_1\esub{x}{u'_2}$ concluding this case since by
            contextual closure:
            \begin{equation*}
                \begin{array}{lll}
                    t_1\esub{x}{t_2}                            &\bangArr_{S_\indexOmega}<s!>   &u'_1\esub{x}{t_2}
            \\[0.2cm]
                    \;\;\;\;\bangDownArr_{S_\indexOmega}<s!>    &                               &\;\;\;\;\bangDownArr_{S_\indexOmega}<s!>
            \\[0.2cm]
                    t_1\esub{x}{u'_2}                           &\bangArr_{S_\indexOmega}<s!>   &u'_1\esub{x}{u'_2}
                \end{array}
            \end{equation*}

        \item[\bltIII] $u_1 = t_1\esub{x}{u'_1}$, $u_2 =
            u'_1\app{x}{t_2}$ with $t_2 \bangArr_{S_\indexOmega}<s!>
            u'_1$ and $t_1 \bangArr_{S_\indexOmega}<s!> u'_2$: We set
            $s := u'_1\esub{x}{u'_2}$ concluding this case since by
            contextual closure:
            \begin{equation*}
                \begin{array}{lll}
                    t_1\esub{x}{t_2}                            &\bangArr_{S_\indexOmega}<s!>   &t_1\esub{x}{u'_2}
            \\[0.2cm]
                    \;\;\;\;\bangDownArr_{S_\indexOmega}<s!>    &                               &\;\;\;\;\bangDownArr_{S_\indexOmega}<s!>
            \\[0.2cm]
                    u'_1\esub{x}{t_2}                           &\bangArr_{S_\indexOmega}<s!>   &u'_1\esub{x}{u'_2}
                \end{array}
            \end{equation*}

        \item[\bltIII] $u_1 = t_1\esub{x}{u'_1}$, $u_2 =
            t_1\esub{x}{u'_2}$ with $t_2 \bangArr_{S_\indexOmega}<s!>
            u'_1$ and $t_2 \bangArr_{S_\indexOmega}<s!> u'_2$: By \ih
            on $t_2$, there exists $s_2 \in \bangSetTerms$ such that
            $t_1 \bangArr*_{S_\indexOmega}<s!> s_2$ and $t_2
            \bangArr*_{S_\indexOmega}<s!> s_2$. We set $s :=
            t_1\esub{x}{s_2}$ concluding this case since by contextual
            closure:
            \begin{equation*}
                \begin{array}{lll}
                    t_1\esub{x}{t_2}                            &\bangArr_{S_\indexOmega}<s!>   &t_1\esub{x}{u'_1}
            \\[0.2cm]
                    \;\;\;\;\bangDownArr_{S_\indexOmega}<s!>    &                               &\;\;\;\;\bangDownArr*_{S_\indexOmega}<s!>
            \\[0.2cm]
                    t_1\esub{x}{u'_2}                           &\bangArr*_{S_\indexOmega}<s!>  &t_1\esub{x}{s'}
                \end{array}
            \end{equation*}
        \end{itemize}
    \end{itemize}

\item[\bltI] $t = \der{t'}$: Since $t$ cannot be a
    $\bangSymbSubs$-redex, then necessarily
    $u_1 = \der{t'_1}$ and $u_2 = \der{t'_2}$ with $t'
    \bangArr_{S_\indexOmega}<s!> t'_1$ and $t'
    \bangArr_{S_\indexOmega}<s!> t'_2$. By \ih on $t'$, there exists
    $s' \in \bangSetTerms$ such that $t_1
    \bangArr*_{S_\indexOmega}<s!> s'$ and $t_2
    \bangArr*_{S_\indexOmega}<s!> s'$. We set $s := \der{s'}$
    concluding this case since by contextual closure:
    \begin{equation*}
        \begin{array}{lll}
            \der{t'}                                &\bangArr_{S_\indexOmega}<s!>   &\der{t'_1}
    \\[0.2cm]
            \;\;\;\bangDownArr_{S_\indexOmega}<s!>  &                               &\;\;\;\bangDownArr*_{S_\indexOmega}<s!>
    \\[0.2cm]
            \der{t'_2}                              &\bangArr*_{S_\indexOmega}<s!>  &\der{s'}
        \end{array}
    \end{equation*}

\item[\bltI] $t = \oc t'$: Since $t$ cannot be a
    $\bangSymbSubs$-redex, then necessarily
    $u_1 = \oc t'_1$ and $u_2 = \oc t'_2$ with $t'
    \bangArr_{S_\indexOmega}<s!> t'_1$ and $t'
    \bangArr_{S_\indexOmega}<s!> t'_2$. By \ih on $t'$, there exists
    $s' \in \bangSetTerms$ such that $t_1
    \bangArr*_{S_\indexOmega}<s!> s'$ and $t_2
    \bangArr*_{S_\indexOmega}<s!> s'$. We set $s := \oc s'$ concluding
    this case since by contextual closure:
    \begin{equation*}
        \begin{array}{lll}
            \oc t'                              &\bangArr_{S_\indexOmega}<s!>   &\oc t'_1
    \\[0.2cm]
            \bangDownArr_{S_\indexOmega}<s!>    &                               &\bangDownArr*_{S_\indexOmega}<s!>
    \\[0.2cm]
            \oc t'_2                            &\bangArr*_{S_\indexOmega}<s!>  &\oc s'
        \end{array}
    \end{equation*}
\end{itemize}
\end{proof}
} \deliaLu \giulioLu%

\subsubsection{Termination of $\bangArr_{S_\indexOmega}<s!>$}

We use the measure defined in~\cite{AccattoliKesner12bis} for a simpler framework. 

\begin{definition}
    The \emphasis{potential multiplicity} $\bangPotMult_x{t}$ of the
    variable $x$ in the term $t$ is a natural number defined as
    follows: if $x \notin \freeVar{t}$, then $\bangPotMult_x{t} = 0$
    and otherwise (where $y \neq x$ by $\alpha$-conversion):
    \begin{equation*}
        \begin{array}{rcl}
            \bangPotMult_x{x}
                &\coloneqq& 1
        \\
            \bangPotMult_x{\abs{y}{t}}
                &\coloneqq& \bangPotMult_x{t}
        \\
            \bangPotMult_x{\app{t_1}{t_2}}
                &\coloneqq& \bangPotMult_x{t_1} + \bangPotMult_x{t_2}
        \\
            \bangPotMult_x{t_1\esub{y}{t_2}}
                &\coloneqq& \bangPotMult_x{t_1} + \max(1, \bangPotMult_y{t_1}) \cdot \bangPotMult_x{t_2}
        \\
            \bangPotMult_x{\der{t}}
                &\coloneqq& \bangPotMult_x{t}
        \\
            \bangPotMult_x{\oc t}
                &\coloneqq& \bangPotMult_x{t}
        \end{array}
    \end{equation*}
\end{definition}

\NewDocumentCommand{\multisetScProd}{ }{\cdot}

The potential multiplicity of $x$ in $t$ is the number of free
occurrences of $x$ in the \emph{unfolding} of $t$, except that for
subterms of the form $s\esub{y}{u}$ with $y \notin \freeVar{s}$ the
potential multiplicity of $x$ in $u$ is counted as if there were $1$
free occurrence of $y$ in $s$.

By exploiting potential multiplicities we can define a measure of the
global degree of sharing of a given term, and use this measure to
prove that the $\bangArr_{S_\indexOmega}<s!>$ reduction terminates. We
consider multisets of integers. We use $\emptymset$ to denote the
empty multiset, $\uplus$ to denote multiset union, and $\preceq$
(resp. $\prec$) for the standard order (resp. strict order) on
multisets. Given $n, k \in \Nat$ and a finite multiset $M = \mset{i_1,
\dots , i_k} $ over natural numbers, $n \multisetScProd M$ denotes
 the multiset $\mset{n
\multisetScProd i_1, \dots, n \multisetScProd i_k}$ (in particular, $n
\multisetScProd M = \emptymset$ if $M = \emptymset$).

\begin{definition}
    The \emphasis{multiset measure} $\multiSize{t}$ of $t \in \bangSetTerms$ is a finite multiset over $\Nat$ defined by:
    \begin{align*}
            \multiSize{y}
                &\coloneqq \emptymset
        \\
            \multiSize{\abs{x}{t}}
                &\coloneqq \multiSize{t}
        \\
            \multiSize{\app{t_1}{t_2}}
                &\coloneqq \multiSize{t_1} \uplus \multiSize{t_2}
        \\
            \multiSize{t_1\esub{x}{t_2}}
                &\coloneqq \mset{\bangPotMult_x{t_1}} \uplus \multiSize{t_1} \uplus \max(1, \bangPotMult_x{t_1}) \cdot \multiSize{t_2}
        \\
            \multiSize{\der{t}}
                &\coloneqq \multiSize{t}
        \\
            \multiSize{\oc t}
                &\coloneqq \multiSize{t}
    \end{align*}
\end{definition}

\begin{lemma}
    \label{lem:Bang_Substitution_compat_PotMult}%
    Let $t, u \in \bangSetTerms$. For any variables $x, y$ such that $x \neq y$ and $y \notin
    \freeVar{u}$, one has that $\bangPotMult_y{t} =
    \bangPotMult_y{t\isub{x}{u}}$.
\end{lemma}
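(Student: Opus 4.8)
The plan is to proceed by structural induction on $t$, keeping $x$ and $y$ universally quantified throughout (so that the induction hypothesis may be instantiated at different variables when descending into binders; the term $u$ stays fixed). The guiding intuition is that replacing the free occurrences of $x$ in $t$ by $u$ neither erases nor creates any (unfolded) free occurrence of $y$: none is erased because $y \neq x$, and none is created because $y \notin \freeVar{u}$.

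First I would dispatch the base case $t = z$: if $z = x$ then $t\isub{x}{u} = u$ and both sides are $0$ (the left-hand side since $y \neq x$, the right-hand side since $y \notin \freeVar{u}$); if $z \neq x$ then $t\isub{x}{u} = z$ and the two sides coincide trivially (they are $1$ if $z = y$, and $0$ otherwise). The constructors $\app{t_1}{t_2}$, $\der{t'}$ and $\oc t'$ are immediate, since $\isub{x}{u}$ distributes over them and $\bangPotMult_y{\cdot}$ is defined componentwise on them, so the induction hypothesis on the immediate subterms (at the pair $x, y$) closes each case. For $t = \abs{z}{t'}$, $\alpha$-conversion lets me assume $z \notin \{x, y\} \cup \freeVar{u}$, whence $(\abs{z}{t'})\isub{x}{u} = \abs{z}{(t'\isub{x}{u})}$ and $\bangPotMult_y{\abs{z}{t'}} = \bangPotMult_y{t'} = \bangPotMult_y{t'\isub{x}{u}} = \bangPotMult_y{(\abs{z}{t'})\isub{x}{u}}$ by the induction hypothesis on $t'$.

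The only case requiring attention is the closure $t = t_1\esub{z}{t_2}$. Choosing again $z \notin \{x, y\} \cup \freeVar{u}$ by $\alpha$-conversion, we have $(t_1\esub{z}{t_2})\isub{x}{u} = (t_1\isub{x}{u})\esub{z}{(t_2\isub{x}{u})}$, and unfolding the definition of $\bangPotMult_y{\cdot}$ on closures reduces the goal to the three equalities $\bangPotMult_y{t_1} = \bangPotMult_y{t_1\isub{x}{u}}$, $\bangPotMult_y{t_2} = \bangPotMult_y{t_2\isub{x}{u}}$ and $\bangPotMult_z{t_1} = \bangPotMult_z{t_1\isub{x}{u}}$. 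The first two are the induction hypothesis on $t_1$ and $t_2$ at the pair $x, y$; the third is the induction hypothesis on $t_1$ at the pair $x, z$, which is legitimate precisely because the $\alpha$-conversion guaranteed $z \neq x$ and $z \notin \freeVar{u}$. Substituting these into the defining clause for $\bangPotMult_y{t_1\esub{z}{t_2}}$ yields the result. The only obstacle, and a minor one, is this bookkeeping with the freshness side-conditions on the bound variable $z$; everything else is routine.
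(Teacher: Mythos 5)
Your proof is correct and follows essentially the same route as the paper's: structural induction on $t$ with the variables kept universally quantified, the base case split on whether $z = x$, and the closure case handled by invoking the induction hypothesis on $t_1$ a second time at the pair $(x,z)$, legitimised by the $\alpha$-conversion freshness conditions $z \neq x$ and $z \notin \freeVar{u}$. If anything, you are more explicit than the paper about why that second instantiation of the induction hypothesis is permitted.
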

\stableProof{%
    \begin{proof}
Since $y \notin \freeVar{u}$, one deduces that $\bangPotMult_y{u} =
0$. We proceed by induction on $t$:
\begin{itemize}
\item[\bltI] $t = z$: We distinguish two cases:
    \begin{itemize}
    \item[\bltII] $z=x$: Thus $t\isub{x}{u} = u$ and since $x \neq
        y$ then $\bangPotMult_y{t} = 0$ hence
        $\bangPotMult_y{t\isub{x}{u}} = \bangPotMult_y{t}$.

    \item[\bltII] $z \neq x$: Thus $t\isub{x}{u} = z = t$ hence
        $\bangPotMult_y{t\isub{x}{u}} = \bangPotMult_y{t}$.
    \end{itemize}

\item[\bltI] $t = \abs{z}{t'}$: We can suppose without loss of
	generality that $z \notin \freeVar{u} \cup \{x\}$, thus
	$t\isub{x}{u} = \abs{z}{(t'\isub{x}{u})}$. By \ih on $t'$, one has
	that $\bangPotMult_x{t'} = \bangPotMult_x{t'\isub{x}{u}}$ thus:
    \begin{equation*}
        \begin{array}{rclcl}
            \bangPotMult_y{t}
            &=& \bangPotMult_y{\abs{z}{t'}}
        \\
            &=& \bangPotMult_y{t'}
        \\
            &\eqih& \bangPotMult_y{t'\isub{x}{u}}
        \\
            &=& \bangPotMult_y{\abs{z}{(t'\isub{x}{u})}}
            &=& \bangPotMult_y{t\isub{x}{u}}
        \end{array}
    \end{equation*}

\item[\bltI] $t = \app{t_1}{t_2}$: By \ih $t_1$ and $t_2$, one has
    that $\bangPotMult_y{t_1} = \bangPotMult_y{t_1\isub{x}{u}}$ and
    $\bangPotMult_y{t_2} = \bangPotMult_y{t_2\isub{x}{u}}$ thus
    \begin{equation*}
        \begin{array}{rclcl}
            \bangPotMult_y{t}
            &=& \bangPotMult_y{\app{t_1}{t_2}}
        \\
            &=& \bangPotMult_y{t_1} + \bangPotMult_y{t_2}
        \\
            &\eqih& \bangPotMult_y{t_1\isub{x}{u}} + \bangPotMult_y{t_2\isub{x}{u}}
        \\
            &=& \bangPotMult_y{\app{(t_1\isub{x}{u})}{(t_2\isub{x}{u})}}
            &=& \bangPotMult_y{t\isub{x}{u}}
        \end{array}
    \end{equation*}

\item[\bltI] $t = t_1\esub{z}{t_2}$: We can suppose without loss of
    generality that $z \notin \freeVar{u} \cup \{x\}$, thus
    $t\isub{x}{u} = (t_1\isub{x}{u})\esub{z}{t_2 \isub{x}{u}}$. By \ih
    $t_1$ and $t_2$, one has that $\bangPotMult_y{t_1} =
    \bangPotMult_y{t_1\isub{x}{u}}$, $\bangPotMult_y{t_2} =
    \bangPotMult_y{t_2\isub{x}{u}}$ and $\bangPotMult_z{t_1} =
    \bangPotMult_z{t_1\isub{x}{u}}$. Thus
    \begin{equation*}
        \begin{array}{rclcl}
            \bangPotMult_y{t}
            &=& \bangPotMult_y{t_1\esub{z}{t_2}}
        \\
            &=& \bangPotMult_y{t_1} + \max(1, \bangPotMult_z{t_1}) \times \bangPotMult_y{t_2}
        \\
            &\eqih& \bangPotMult_y{t_1\isub{x}{u}} + \max(1, \bangPotMult_z{t_1\isub{x}{u}}) \times \bangPotMult_y{t_2\isub{x}{u}}
        \\
            &=& \bangPotMult_y{t_1\isub{x}{u}\esub{z}{t_2\isub{x}{u}}}
            &=& \bangPotMult_y{t\isub{x}{u}}
        \end{array}
    \end{equation*}

\item[\bltI] $t = \der{t'}$: By \ih on $t'$, one has that
    $\bangPotMult_x{t'} = \bangPotMult_x{t'\isub{x}{u}}$ thus:
    \begin{equation*}
        \begin{array}{rclcl}
            \bangPotMult_y{t}
            &=& \bangPotMult_y{\der{t'}}
        \\
            &=& \bangPotMult_y{t'}
        \\
            &\eqih& \bangPotMult_y{t'\isub{x}{u}}
        \\
            &=& \bangPotMult_y{\der{(t'\isub{x}{u})}}
            &=& \bangPotMult_y{t\isub{x}{u}}
        \end{array}
    \end{equation*}

\item[\bltI] $t = \oc t'$: By \ih on $t'$, one has that
    $\bangPotMult_x{t'} = \bangPotMult_x{t'\isub{x}{u}}$ thus:
    \begin{equation*}
        \begin{array}{rclcl}
            \bangPotMult_y{t}
            &=& \bangPotMult_y{\oc t'}
        \\
            &=& \bangPotMult_y{t'}
        \\
            &\eqih& \bangPotMult_y{t'\isub{x}{u}}
        \\
            &=& \bangPotMult_y{\oc (t'\isub{x}{u})}
            &=& \bangPotMult_y{t\isub{x}{u}}
        \end{array}
    \end{equation*}
\qedhere
\end{itemize}
\end{proof}
} \deliaLu \giulioLu%

\begin{lemma}
    \label{lem:Bang_Substitution_decreases_PotMult}%
    Let $t, u \in \bangSetTerms$ be terms, then
    $\bangPotMult_y{\bangLCtxt<t\isub{x}{u}>} \leq
    \bangPotMult_y{t\esub{x}{\bangLCtxt<\oc u>}}$.
\end{lemma}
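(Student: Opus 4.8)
The plan is to prove the statement by induction on the list context $\bangLCtxt$, after first isolating how the potential multiplicity behaves under meta-level substitution. We may assume $y \neq x$ (this is the only case used elsewhere, since such invocations always take $y$ to be a variable bound by a surrounding explicit substitution, hence distinct from $x$), and, by $\alpha$-conversion, that every variable bound in $t$ or by $\bangLCtxt$ is fresh for $x$, $y$ and $\freeVar{u}$. The key auxiliary fact is: for all $t,u \in \bangSetTerms$ with $y \neq x$ and the above freshness conventions, $\bangPotMult_y{t\isub{x}{u}} = \bangPotMult_y{t} + \bangPotMult_x{t}\cdot\bangPotMult_y{u}$. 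This is proved by a routine induction on $t$: the variable cases and the cases $t \in \{\abs{z}{t'},\der{t'},\oc t'\}$ are immediate from the induction hypothesis and the defining clauses of the potential multiplicity, and the application case is additivity; for $t = t_1\esub{z}{t_2}$ one has $t\isub{x}{u} = (t_1\isub{x}{u})\esub{z}{t_2\isub{x}{u}}$, and since $z \neq x$ and $z \notin \freeVar{u}$, \Cref{lem:Bang_Substitution_compat_PotMult} gives $\bangPotMult_z{t_1\isub{x}{u}} = \bangPotMult_z{t_1}$, so that unfolding the definition on the explicit substitution and applying the induction hypothesis to $t_1$ and $t_2$ yields the equality by elementary arithmetic. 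It is important that this is an equality: the coarser bound $\bangPotMult_y{t\isub{x}{u}} \leq \bangPotMult_y{t} + \max(1,\bangPotMult_x{t})\cdot\bangPotMult_y{u}$ would not be tight enough to carry the induction through the explicit-substitution case.

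With this in hand, I proceed by induction on $\bangLCtxt$. If $\bangLCtxt = \Hole$, then, using $\bangPotMult_y{\oc u} = \bangPotMult_y{u}$, the auxiliary equality, and $n \leq \max(1,n)$, we get $\bangPotMult_y{t\esub{x}{\oc u}} = \bangPotMult_y{t} + \max(1,\bangPotMult_x{t})\cdot\bangPotMult_y{u} \geq \bangPotMult_y{t} + \bangPotMult_x{t}\cdot\bangPotMult_y{u} = \bangPotMult_y{t\isub{x}{u}}$. If $\bangLCtxt = \bangLCtxt'\esub{z}{w}$, then $z \notin \freeVar{t}$, hence $\bangPotMult_z{t} = 0$. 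Unfolding $\bangPotMult_y{\cdot}$ on the outermost explicit substitution and applying the induction hypothesis for $\bangLCtxt'$ twice—once with observed variable $y$, once with $z$—together with monotonicity of $n \mapsto \max(1,n)$ and of multiplication, gives $\bangPotMult_y{\bangLCtxt<t\isub{x}{u}>} \leq \bangPotMult_y{(t\esub{x}{\bangLCtxt'<\oc u>})\esub{z}{w}}$. It then remains to compare this with the target $\bangPotMult_y{t\esub{x}{\bangLCtxt'<\oc u>\esub{z}{w}}} = \bangPotMult_y{t\esub{x}{\bangLCtxt<\oc u>}}$; expanding $\bangPotMult_y{\cdot}$ on both sides and cancelling the common summands (here $\bangPotMult_z{t} = 0$ simplifies the intermediate expressions) reduces the comparison to the inequality of coefficients of $\bangPotMult_y{w}$, namely $\max(1,\ \max(1,\bangPotMult_x{t})\cdot\bangPotMult_z{\bangLCtxt'<\oc u>}) \leq \max(1,\bangPotMult_x{t})\cdot\max(1,\bangPotMult_z{\bangLCtxt'<\oc u>})$, which is an instance of the elementary fact $\max(1,ab) \leq \max(1,a)\cdot\max(1,b)$ for $a,b \in \Nat$. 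This completes the induction and hence the proof.

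I expect the main obstacle to be the arithmetic bookkeeping in the inductive step: one must push $\bangPotMult_y{\cdot}$ through the two distinct nestings of explicit substitutions, $(t\esub{x}{\cdot})\esub{z}{\cdot}$ versus $t\esub{x}{\cdot\esub{z}{\cdot}}$, recognise that they differ exactly by the gap between $\max(1,ab)$ and $\max(1,a)\max(1,b)$, and keep the $\alpha$-renaming of the binders of $\bangLCtxt$ consistent throughout so that \Cref{lem:Bang_Substitution_compat_PotMult} applies precisely where it is invoked. Everything else is a mechanical unfolding of the definitions of the potential multiplicity.
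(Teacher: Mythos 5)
Your proof is correct, and the outer induction on $\bangLCtxt$ together with its inductive step (two instantiations of the induction hypothesis on $\bangLCtxt'$, one at $y$ and one at the bound variable $z$, plus $\bangPotMult_z{t}=0$ and the elementary fact $\max(1,ab)\leq\max(1,a)\cdot\max(1,b)$) is essentially the paper's own argument for that case. Where you genuinely diverge is the base case $\bangLCtxt=\Hole$: the paper proves $\bangPotMult_y{t\isub{x}{u}} \leq \bangPotMult_y{t\esub{x}{\oc u}}$ directly by a second induction on $t$, splitting the application and closure cases into four subcases each according to whether $x$ occurs free in the two subterms; that subcase analysis is exactly what lets it replace $\max(1,\bangPotMult_x{t_i})$ by $\bangPotMult_x{t_i}$ where needed and push the inequality through the closure case. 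You instead establish the exact identity $\bangPotMult_y{t\isub{x}{u}} = \bangPotMult_y{t} + \bangPotMult_x{t}\cdot\bangPotMult_y{u}$ (for $y\neq x$ under the usual freshness conventions, invoking \Cref{lem:Bang_Substitution_compat_PotMult} in the closure case) and get the base case in one line from $n\leq\max(1,n)$. Your diagnosis of why the identity matters is accurate: the inequality-only induction hypothesis is not compositional in the closure case unless one also splits on free occurrences of $x$, so either the exact identity or the subcase split is required; your version is shorter, avoids the subcase explosion, and produces a reusable substitution lemma. Finally, restricting to $y\neq x$ is harmless, since $x$ is bound in $t\esub{x}{\bangLCtxt<\oc u>}$ and the definition of $\bangPotMult$ already assumes, by $\alpha$-conversion, that the observed variable is distinct from bound ones; this matches how the lemma is invoked elsewhere in the appendix.
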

\stableProof{%
    \begin{proof} 
By induction on $\bangLCtxt$:
\begin{itemize}
\item[\bltI] $\bangLCtxt = \Hole$: We proceed by induction on $t$. If $y \notin \freeVar{t\isub{x}{u}}$, then $\bangPotMult_y{t\isub{x}{u}} = 0 \leq \bangPotMult_y{t\esub{x}{\oc u}}$. 
	Let us suppose $y \in \freeVar{t\isub{x}{u}}$. Cases: 
    \begin{itemize}
    \item[\bltII] $t = z$: We distinguish two subcases:
        \begin{itemize}
        \item[\bltIII] $z = x$: Then $t\isub{x}{u} = u$ thus
            $\bangPotMult_y{t\isub{x}{u}} = \bangPotMult_y{u} \leq
            \bangPotMult_y{t} + \max(1, \bangPotMult_x{t}) \cdot
            \bangPotMult_y{u} = \bangPotMult_y{t} + \max(1,
            \bangPotMult_x{t}) \cdot \bangPotMult_y{\oc u} =
            \bangPotMult_y{t\esub{x}{\oc u}}$.

        \item[\bltIII] $z \neq x$: Then $t\isub{x}{u} = t$ thus
            $\bangPotMult_y{t\isub{x}{u}} = \bangPotMult_y{t} \leq
            \bangPotMult_y{t} + \max(1, \bangPotMult_x{t}) \cdot
            \bangPotMult_y{\oc u} = \bangPotMult_y{t\esub{x}{\oc u}}$.
        \end{itemize}

    \item[\bltII] $t = \abs{z}{t'}$: We can suppose without loss of generality that $z \notin \freeVar{u} \cup \{x\}$, so $t\isub{x}{u} =
        \abs{z}{(t'\isub{x}{u})}$. By \ih on $t'$,
        $\bangPotMult_y{t'\isub{x}{u}} \leq
        \bangPotMult_y{t'\esub{x}{\oc u}}$. Thus:
        \begin{equation*}
            \begin{array}{rcl}
                \bangPotMult_y{t\isub{x}{u}}
                    &=& \bangPotMult_y{\abs{z}{(t'\isub{x}{u})}}
                \\
                    &=& \bangPotMult_y{t'\isub{x}{u}}
                \\
                    &\leqih& \bangPotMult_y{t'\esub{x}{\oc u}}
                \\
                    &=& \bangPotMult_y{t'} + \max(1, \bangPotMult_x{t'}) \cdot \bangPotMult_y{\oc u}
                \\
                    &=& \bangPotMult_y{\abs{z}{t'}} + \max(1, \bangPotMult_x{\abs{z}{t'}}) \cdot \bangPotMult_y{\oc u}
                \\
                    &=& \bangPotMult_y{t\esub{x}{\oc u}}
            \end{array}
        \end{equation*}

    \item[\bltII] $t = \app{t_1}{t_2}$: Then $t\isub{x}{u} =
        \app{(t_1\isub{x}{u})}{(t_2\isub{x}{u})}$ and by \ih on $t_1$
        and $t_2$, one has that $\bangPotMult_y{t_1\isub{x}{u}} \leq
        \bangPotMult_y{t_1\esub{x}{\oc u}}$ and
        $\bangPotMult_y{t_2\isub{x}{u}} \leq
        \bangPotMult_y{t_2\esub{x}{\oc u}}$. We distinguish four
        cases:
        \begin{itemize}
        \item[\bltIII] $x \notin \freeVar{t_1}$ and $x \notin
            \freeVar{t_2}$: Then $t\isub{x}{u} = t$ and
            $\bangPotMult_x{t}=0$, thus:
            \begin{equation*}
                \begin{array}{rcl}
                    \bangPotMult_y{t\isub{x}{u}}
                        &=& \bangPotMult_y{t}
                    \\
                        &\leq& \bangPotMult_y{t}
                        + \bangPotMult_y{\oc u} \\
                        &=& \bangPotMult_y{t}
                        + \max(1, \bangPotMult_x{t}) \cdot \bangPotMult_y{\oc u}
                    \\
                        &=& \bangPotMult_y{t\esub{x}{\oc u}}
                \end{array}
            \end{equation*}

        \item[\bltIII] $x \in \freeVar{t_1}$ and $x \notin
            \freeVar{t_2}$: Then $\bangPotMult_x{t_1} \geq 1$ and
            $\bangPotMult_x{t_2} = 0$. By \ih on $t_1$, one has that
            $\bangPotMult_y{t_1\isub{x}{u}} \leq
            \bangPotMult_y{t_1\esub{x}{\oc u}}$ and
            $\bangPotMult_z{t_1\isub{x}{u}} \leq
            \bangPotMult_z{t_1\esub{x}{\oc u}}$.  Therefore:
            \begin{equation*}
                \begin{array}{rcl}
                    \bangPotMult_y{t\isub{x}{u}}
                        &=& \bangPotMult_y{\app{(t_1\isub{x}{u})}{(t_2\isub{x}{u})}}
                    \\
                        &=& \bangPotMult_y{t_1\isub{x}{u}} + \bangPotMult_y{t_2}
                    \\
                        &\leqih& \bangPotMult_y{t_1\esub{x}{\oc u}} + \bangPotMult_y{t_2}
                    \\
                        &=& \bangPotMult_y{t_1} + \max(1, \bangPotMult_x{t_1}) \cdot \bangPotMult_y{\oc u}
                        + \bangPotMult_y{t_2} 
                    \\
                        &=& \left(\bangPotMult_y{t_1} + \bangPotMult_y{t_2}\right)
                        + \max(1, \bangPotMult_x{t_1})  \cdot \bangPotMult_y{\oc u}
                    \\
                        &=& \bangPotMult_y{\app{t_1}{t_2}}
                        + \max(1, \bangPotMult_x{\app{t_1}{t_2}}) \cdot \bangPotMult_y{\oc u}
                    \\
                        &=& \bangPotMult_y{t\esub{x}{\oc u}}
                \end{array}
            \end{equation*}

        \item[\bltIII] $x \notin \freeVar{t_1}$ and $x \in
            \freeVar{t_2}$: Then $\bangPotMult_x{t_1} = 0$ and
            $\bangPotMult_x{t_2} \geq 1$. By \ih on $t_2$, one has
            that $\bangPotMult_y{t_2\isub{x}{u}} \leq
            \bangPotMult_y{t_2\esub{x}{\oc u}}$. Therefore:
            \begin{equation*}
                \begin{array}{rcl}
                    \bangPotMult_y{t\isub{x}{u}}
                        &=& \bangPotMult_y{\app{(t_1\isub{x}{u})}{(t_2\isub{x}{u})}}
                    \\
                        &=& \bangPotMult_y{t_1} + \bangPotMult_y{t_2\isub{x}{u}}
                    \\
                        &\leqih& \bangPotMult_y{t_1} + \bangPotMult_y{t_2\esub{x}{\oc u}}
                    \\
                        &=& \bangPotMult_y{t_1} + \bangPotMult_y{t_2} + \max(1, \bangPotMult_x{t_2}) \cdot \bangPotMult_y{\oc u}
                    \\
                        &=& \left(\bangPotMult_y{t_1} + \bangPotMult_y{t_2}\right)
                        + \max(1, \bangPotMult_x{t_2}) \cdot \bangPotMult_y{\oc u}
                    \\
                        &=& \bangPotMult_y{\app{t_1}{t_2}}
                        + \max(1, \bangPotMult_x{\app{t_1}{t_2}}) \cdot \bangPotMult_y{\oc u}
                    \\
                        &=& \bangPotMult_y{t\esub{x}{\oc u}}
                \end{array}
            \end{equation*}

        \item[\bltIII] $x \in \freeVar{t_1}$ and $x \in
            \freeVar{t_2}$: Then $\bangPotMult_x{t_1} \geq 1$ and
            $\bangPotMult_x{t_2} \geq 1$. By \ih on $t_1$, one has
            that $\bangPotMult_y{t_1\isub{x}{u}} \leq
            \bangPotMult_y{t_1\esub{x}{\oc u}}$ and
            $\bangPotMult_y{t_2\isub{x}{u}} \leq
            \bangPotMult_y{t_2\esub{x}{\oc u}}$.   Therefore:
            \begin{equation*}
                \begin{array}{rcl}
                    \bangPotMult_y{t\isub{x}{u}}
                        &=& \bangPotMult_y{\app{(t_1\isub{x}{u})}{(t_2\isub{x}{u})}}
                    \\
                        &=& \bangPotMult_y{t_1\isub{x}{u}} + \bangPotMult_y{t_2\isub{x}{u}}
                    \\
                        &\leqih& \bangPotMult_y{t_1\esub{x}{\oc u}} + \bangPotMult_y{t_2\esub{x}{\oc u}}
                    \\
                        &=& \left(\bangPotMult_y{t_1} + \max(1, \bangPotMult_x{t_1}) \cdot \bangPotMult_y{\oc u}\right)
                        + \left(\bangPotMult_y{t_2} + \max(1, \bangPotMult_x{t_2}) \cdot \bangPotMult_y{\oc u}\right)
                    \\
                        &=& \left(\bangPotMult_y{t_1} + \bangPotMult_y{t_2}\right)
                        + \left(\max(1, \bangPotMult_x{t_1}) + \max(1, \bangPotMult_x{t_2})\right) \cdot \bangPotMult_y{\oc u}
                    \\
                        &=& \bangPotMult_y{\app{t_1}{t_2}}
                        + \max(1, \bangPotMult_x{\app{t_1}{t_2}}) \cdot \bangPotMult_y{\oc u}
                    \\
                        &=& \bangPotMult_y{t\esub{x}{\oc u}}
                \end{array}
            \end{equation*}
        \end{itemize}

    \item[\bltII] $t = t_1\esub{z}{t_2}$: Then $t\isub{x}{u} =
        t_1\isub{x}{u}\esub{z}{t_2\isub{x}{u}}$ and by \ih on $t_1$
        and $t_2$, one has that $\bangPotMult_y{t_1\isub{x}{u}} \leq
        \bangPotMult_y{t_1\esub{x}{\oc u}}$ and
        $\bangPotMult_y{t_2\isub{x}{u}} \leq
        \bangPotMult_y{t_2\esub{x}{\oc u}}$. We distinguish four
        cases:
        \begin{itemize}
        \item[\bltIII] $x \notin \freeVar{t_1}$ and $x \notin
            \freeVar{t_2}$: Then $t\isub{x}{u} = t$, thus:
            \begin{equation*}
                \bangPotMult_y{t\isub{x}{u}}
                = \bangPotMult_y{t}
                \leq \bangPotMult_y{t} + \max(1, \bangPotMult_x{t}) \cdot \bangPotMult_y{\oc u}
                = \bangPotMult_y{t\esub{x}{\oc u}}
            \end{equation*}

        \item[\bltIII] $x \in \freeVar{t_1}$ and $x \notin
            \freeVar{t_2}$: Then $\bangPotMult_x{t_1} \geq 1$ and
            $\bangPotMult_x{t_2} = 0$. By \ih on $t_1$, one has that
            $\bangPotMult_y{t_1\isub{x}{u}} \leq
            \bangPotMult_y{t_1\esub{x}{\oc u}}$ and
            $\bangPotMult_z{t_1\isub{x}{u}} \leq
            \bangPotMult_z{t_1\esub{x}{\oc u}}$.  Moreover, by
            $\alpha$-conversion $z \notin \freeVar{t_2}$ and $z \notin
            \freeVar{u}$ thus $\bangPotMult_z{t_2} = 0$ and
            $\bangPotMult_z{\oc u} = 0$. Therefore:
            \begin{equation*}
                \begin{array}{rcl}
                    \bangPotMult_y{t\isub{x}{u}}
                    &=& \bangPotMult_y{t_1\isub{x}{u}\esub{z}{t_2\isub{x}{u}}}
                \\
                    &=& \bangPotMult_y{t_1\isub{x}{u}\esub{z}{t_2}}
                \\
                    &=& \bangPotMult_y{t_1\isub{x}{u}} + \max(1, \bangPotMult_z{t_1\isub{x}{u}}) \cdot \bangPotMult_y{t_2}
                \\
                    &\overset{\ih}{\leq}& \bangPotMult_y{t_1\esub{x}{\oc u}} + \max(1, \bangPotMult_z{t_1\esub{x}{\oc u}}) \cdot \bangPotMult_y{t_2}
                \\
                    &=& \bangPotMult_y{t_1} + \max(1, \bangPotMult_x{t_1}) \cdot \bangPotMult_y{\oc u}
                        + \max(1, \bangPotMult_z{t_1} + \max(1, \bangPotMult_x{t_1}) \cdot \bangPotMult_z{\oc u}) \cdot \bangPotMult_y{t_2}
                \\
                    &=& \bangPotMult_y{t_1} + \bangPotMult_x{t_1} \cdot \bangPotMult_y{\oc u}
                        + \max(1, \bangPotMult_z{t_1}) \cdot \bangPotMult_y{t_2}
                \\
                    &=& \bangPotMult_y{t_1} + \max(1, \bangPotMult_z{t_1}) \cdot \bangPotMult_y{t_2}
                    + \bangPotMult_x{t_1} \cdot \bangPotMult_y{\oc u}
                \\
                    &=& \bangPotMult_y{t_1} + \max(1, \bangPotMult_z{t_1}) \cdot \bangPotMult_y{t_2}
                    + \max(1, \bangPotMult_x{t_1} + \max(1, \bangPotMult_z{t_1}) \cdot \bangPotMult_x{t_2}) \cdot \bangPotMult_y{\oc u}
                \\
                    &=& \bangPotMult_y{t} + \max(1, \bangPotMult_x{t}) \cdot \bangPotMult_y{\oc u}
                \\
                    &=& \bangPotMult_y{t\esub{x}{\oc u}}
                \end{array}
            \end{equation*}

        \item[\bltIII] $x \notin \freeVar{t_1}$ and $x \in
            \freeVar{t_2}$: Then $\bangPotMult_x{t_1} = 0$ and
            $\bangPotMult_x{t_2} \geq 1$. By \ih on $t_2$, one has
            that $\bangPotMult_y{t_2\isub{x}{u}} \leq
            \bangPotMult_y{t_2\esub{x}{\oc u}}$. Moreover, by
            $\alpha$-conversion $z \notin \freeVar{t_2}$ thus
            $\bangPotMult_z{t_2} = 0$. Therefore:
            \begin{equation*}
                \begin{array}{rcl}
                    \bangPotMult_y{t\isub{x}{u}}
                    &=& \bangPotMult_y{t_1\isub{x}{u}\esub{z}{t_2\isub{x}{u}}}
                \\
                    &=& \bangPotMult_y{t_1\esub{z}{t_2\isub{x}{u}}}
                \\
                    &=& \bangPotMult_y{t_1} + \max(1, \bangPotMult_z{t_1}) \cdot \bangPotMult_y{t_2\isub{x}{u}}
                \\
                    &\leq& \bangPotMult_y{t_1} + \max(1, \bangPotMult_z{t_1}) \cdot \bangPotMult_y{t_2\esub{x}{\oc u}}
                \\
                    &=& \bangPotMult_y{t_1} + \max(1, \bangPotMult_z{t_1}) \cdot 
                    (\bangPotMult_y{t_2} + \max(1, \bangPotMult_x{t_2}) \cdot \bangPotMult_y{\oc u})
                \\
                    &=& \bangPotMult_y{t_1} + \max(1, \bangPotMult_z{t_1}) \cdot 
                    (\bangPotMult_y{t_2} + \bangPotMult_x{t_2} \cdot \bangPotMult_y{\oc u})
                \\
                    &=& \bangPotMult_y{t_1} + \max(1, \bangPotMult_z{t_1}) \cdot \bangPotMult_y{t_2}
                    + \max(1, \bangPotMult_z{t_1}) \cdot \bangPotMult_x{t_2} \cdot \bangPotMult_y{\oc u}
                \\
                    &=& \bangPotMult_y{t_1} + \max(1, \bangPotMult_z{t_1}) \cdot \bangPotMult_y{t_2}
                    + \max(1, \max(1, \bangPotMult_z{t_1}) \cdot \bangPotMult_x{t_2}) \cdot \bangPotMult_y{\oc u}
                \\
                    &=& \bangPotMult_y{t_1} + \max(1, \bangPotMult_z{t_1}) \cdot \bangPotMult_y{t_2}
                    + \max(1, \bangPotMult_x{t_1} + \max(1, \bangPotMult_z{t_1}) \cdot \bangPotMult_x{t_2}) \cdot \bangPotMult_y{\oc u}
                \\
                    &=& \bangPotMult_y{t} + \max(1, \bangPotMult_x{t}) \cdot \bangPotMult_y{\oc u}
                \\
                    &=& \bangPotMult_y{t\esub{x}{\oc u}}
                \end{array}
            \end{equation*}

        \item[\bltIII] $x \in \freeVar{t_1}$ and $x \in
            \freeVar{t_2}$: Then $\bangPotMult_x{t_1} \geq 1$ and
            $\bangPotMult_x{t_2} \geq 1$. By \ih on $t_1$, one has
            that $\bangPotMult_y{t_1\isub{x}{u}} \leq
            \bangPotMult_y{t_1\esub{x}{\oc u}}$ and
            $\bangPotMult_y{t_2\isub{x}{u}} \leq
            \bangPotMult_y{t_2\esub{x}{\oc u}}$. Moreover, by
            $\alpha$-conversion $z \notin \freeVar{t_2}$ and $z \notin
            \freeVar{\oc u}$ thus $\bangPotMult_z{t_2} = 0$ and
            $\bangPotMult_z{\oc u} = 0$. Therefore:
            \begin{equation*}
                \begin{array}{l}
                    \bangPotMult_y{t\isub{x}{u}}
                \\
                    \quad = \; \bangPotMult_y{t_1\isub{x}{u}\esub{z}{t_2\isub{x}{u}}}
                \\
                    \quad = \; \bangPotMult_y{t_1\isub{x}{u}} + \max(1, \bangPotMult_z{t_1\isub{x}{u}}) \cdot \bangPotMult_y{t_2\isub{x}{u}}
                \\
                    \quad \leq \; \bangPotMult_y{t_1\esub{x}{\oc u}} + \max(1, \bangPotMult_z{t_1\esub{x}{\oc u}}) \cdot \bangPotMult_y{t_2\esub{x}{\oc u}}
                \\
                    \quad = \; \bangPotMult_y{t_1} + \max(1, \bangPotMult_x{t_1}) \cdot \bangPotMult_y{\oc u}
                    + \max(1, \bangPotMult_z{t_1}
                    \\ \qquad \quad + \max(1, \bangPotMult_x{t_1}) \cdot \bangPotMult_z{\oc u})
                    \cdot (\bangPotMult_y{t_2} + \max(1, \bangPotMult_x{t_2}) \cdot \bangPotMult_y{\oc u})
                \\
                    \quad = \; \bangPotMult_y{t_1} + \bangPotMult_x{t_1} \cdot \bangPotMult_y{\oc u}
                    + \max(1, \bangPotMult_z{t_1}) \cdot (\bangPotMult_y{t_2} + \bangPotMult_x{t_2} \cdot \bangPotMult_y{\oc u})
                \\
                    \quad \leq \; \bangPotMult_y{t_1} + \bangPotMult_x{t_1} \cdot \bangPotMult_y{\oc u} \cdot \bangPotMult_x{t_2}
                    + \max(1, \bangPotMult_z{t_1}) \cdot (\bangPotMult_y{t_2} + \bangPotMult_x{t_2} \cdot \bangPotMult_y{\oc u})
                \\
                    \quad = \; \bangPotMult_y{t_1} + \max(1, \bangPotMult_z{t_1}) \cdot \bangPotMult_y{t_2}
                    + (\bangPotMult_x{t_1} + \max(1, \bangPotMult_z{t_1})) \cdot \bangPotMult_x{t_2} \cdot \bangPotMult_y{\oc u}
                \\
                    \quad = \; \bangPotMult_y{t_1} + \max(1, \bangPotMult_z{t_1}) \cdot \bangPotMult_y{t_2}
                    + \max(1, \bangPotMult_x{t_1} + \max(1, \bangPotMult_z{t_1}) \cdot \bangPotMult_x{t_2}) \cdot \bangPotMult_y{\oc u}
                \\
                    \quad = \; \bangPotMult_y{t} + \max(1, \bangPotMult_x{t}) \cdot \bangPotMult_y{\oc u}
                \\
                    \quad = \; \bangPotMult_y{t\esub{x}{\oc u}}
                \end{array}
            \end{equation*}
        \end{itemize}

    \item[\bltII] $t = \der{t'}$: Then $t\isub{x}{u} =
        \der{t'\isub{x}{\oc u}}$ and by \ih on $t'$, one has that
        $\bangPotMult_y{t'\isub{x}{u}} \leq
        \bangPotMult_y{t'\esub{x}{\oc u}}$. Thus:
        \begin{equation*}
            \begin{array}{rcl}
                \bangPotMult_y{t\isub{x}{u}}
                    &=& \bangPotMult_y{\der{t'\isub{x}{u}}}
                \\
                    &=& \bangPotMult_y{t'\isub{x}{u}}
                \\
                    &\leqih& \bangPotMult_y{t'\esub{x}{\oc u}}
                \\
                    &=& \bangPotMult_y{t'} + \max(1, \bangPotMult_x{t'}) \cdot \bangPotMult_y{\oc u}
                \\
                    &=& \bangPotMult_y{\der{t'}} + \max(1, \bangPotMult_x{\der{t'}}) \cdot \bangPotMult_y{\oc u}
                \\
                    &=& \bangPotMult_y{t\esub{x}{\oc u}}
            \end{array}
        \end{equation*}

    \item[\bltII] $t = \oc t'$: Then $t\isub{x}{u} = \oc
        (t'\isub{x}{u})$ and by \ih on $t'$, one has that
        $\bangPotMult_y{t'\isub{x}{u}} \leq
        \bangPotMult_y{t'\esub{x}{\oc u}}$. Thus:
        \begin{equation*}
            \begin{array}{rcl}
                \bangPotMult_y{t\isub{x}{u}}
                    &=& \bangPotMult_y{\oc (t'\isub{x}{u})}
                \\
                    &=& \bangPotMult_y{t'\isub{x}{u}}
                \\
                    &\leqih& \bangPotMult_y{t'\esub{x}{\oc u}}
                \\
                    &=& \bangPotMult_y{t'} + \max(1, \bangPotMult_x{t'}) \cdot \bangPotMult_y{\oc u}
                \\
                    &=& \bangPotMult_y{\oc t'} + \max(1, \bangPotMult_x{\oc t'}) \cdot \bangPotMult_y{\oc u}
                \\
                    &=& \bangPotMult_y{t\esub{x}{\oc u}}
            \end{array}
        \end{equation*}
    \end{itemize}

\item[\bltI] $\bangLCtxt = \bangLCtxt'\esub{z}{s}$: By \ih on
    $\bangLCtxt'$, one has that
    $\bangPotMult_y{\bangLCtxt'<t\isub{x}{u}>} \leq
    \bangPotMult_y{t\esub{x}{\bangLCtxt'<\oc u>}}$. By
    $\alpha$-conversion, $z \notin \freeVar{t}$ thus
    $\bangPotMult_z{t} = 0$ and therefore:

    \begin{equation*}
        \begin{array}{l}
            \bangPotMult_y{\bangLCtxt<t\isub{x}{u}>}
        \\
            \quad = \; \bangPotMult_y{\bangLCtxt'<t\isub{x}{u}>\esub{z}{s}}
        \\
            \quad = \; \bangPotMult_y{\bangLCtxt'<t\isub{x}{u}>} + \max(1, \bangPotMult_z{\bangLCtxt'<t\isub{x}{u}>}) \cdot \bangPotMult_y{s}
        \\
            \quad \overset{\ih}{\leq} \; \bangPotMult_y{t\esub{x}{\bangLCtxt'<\oc u>}} + \max(1, \bangPotMult_z{t\esub{x}{\bangLCtxt'<\oc u>}}) \cdot \bangPotMult_y{s}
        \\
            \quad = \; \bangPotMult_y{t} + \max(1, \bangPotMult_x{t}) \cdot \bangPotMult_y{\bangLCtxt'<\oc u>}
            + \max(1, \bangPotMult_z{t} + \max(1, \bangPotMult_x{t}) \cdot \bangPotMult_z{\bangLCtxt'<\oc u>}) \cdot \bangPotMult_y{s}
        \\
            \quad = \; \bangPotMult_y{t} + \max(1, \bangPotMult_x{t}) \cdot \bangPotMult_y{\bangLCtxt'<\oc u>}
            + \max(1, \max(1, \bangPotMult_x{t}) \cdot \bangPotMult_z{\bangLCtxt'<\oc u>}) \cdot \bangPotMult_y{s}
        \\
            \quad = \; \bangPotMult_y{t} + \max(1, \bangPotMult_x{t}) \cdot \bangPotMult_y{\bangLCtxt'<\oc u>}
            + \max(1, \bangPotMult_x{t}) \cdot \max(1, \bangPotMult_z{\bangLCtxt'<\oc u>}) \cdot \bangPotMult_y{s}
        \\
            \quad = \; \bangPotMult_y{t} + \max(1, \bangPotMult_x{t}) \cdot
            (\bangPotMult_y{\bangLCtxt'<\oc u>} + \max(1, \bangPotMult_z{\bangLCtxt'<\oc u>}) \cdot \bangPotMult_y{s})
        \\
            \quad = \; \bangPotMult_y{t} + \max(1, \bangPotMult_x{t}) \cdot \bangPotMult_y{\bangLCtxt'<\oc u>\esub{z}{s}}
        \\
            \quad = \; \bangPotMult_y{t} + \max(1, \bangPotMult_x{t}) \cdot \bangPotMult_y{\bangLCtxt<\oc u>}
        \\
            \quad = \; \bangPotMult_y{t\esub{x}{\bangLCtxt<\oc u>}}
        \end{array}
    \end{equation*}
\end{itemize}
\end{proof}
} \deliaLu \giulioLu%

\begin{lemma}
    \label{lem:Bang_MultiSize_Decreases_Substitution}%
    Let $s_1, s_2 \in \bangSetTerms$, then
    $\multiSize{s_1\esub{x}{\bangLCtxt<\oc s_2>}} \succ
    \multiSize{\bangLCtxt<s_1\isub{x}{s_2}>}$.
\end{lemma}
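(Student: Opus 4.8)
The plan is to prove the statement by induction on the list context $\bangLCtxt$. Throughout I use the following standard facts about the multiset order on finite multisets over $\Nat$: $\uplus$ is monotone for both $\preceq$ and $\prec$ (so a $\preceq$-step composed with a $\prec$-step under $\uplus$ yields a $\prec$-step); scalar multiplication $k \cdot (\cdot)$ by a positive integer is monotone for $\preceq$ and distributes over $\uplus$; $n \cdot M \preceq m \cdot M$ whenever $n \le m$; and, crucially, $M \uplus \mset{n} \succ M$ for \emph{every} $n \in \Nat$, since adding any element (even $0$) strictly increases a multiset. I also repeatedly use the arithmetic identities $(n + m) \cdot M = n \cdot M \uplus m \cdot M$ and $k \cdot \mset{n} = \mset{k \cdot n}$.

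Before the main induction I would establish a \emph{substitution estimate}: for all $t, u \in \bangSetTerms$, $\multiSize{t\isub{x}{u}} \preceq \multiSize{t} \uplus \bangPotMult_x{t} \cdot \multiSize{u}$, by a routine induction on $t$. The cases $t = y$, $t = \app{t_1}{t_2}$, $t = \abs{y}{t'}$, $t = \der{t'}$ and $t = \oc t'$ follow immediately by unfolding $\multiSize{\cdot}$ and $\bangPotMult_x{\cdot}$ and applying the induction hypothesis (for application one also uses $(n+m)\cdot M = n\cdot M \uplus m\cdot M$ and $\bangPotMult_x{\app{t_1}{t_2}} = \bangPotMult_x{t_1} + \bangPotMult_x{t_2}$). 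The only delicate case is $t = t_1\esub{y}{t_2}$: taking $y \neq x$ and $y \notin \freeVar{u}$ by $\alpha$-conversion, one has $t\isub{x}{u} = t_1\isub{x}{u}\esub{y}{t_2\isub{x}{u}}$, and \Cref{lem:Bang_Substitution_compat_PotMult} gives $\bangPotMult_y{t_1\isub{x}{u}} = \bangPotMult_y{t_1}$, so that the coefficient $\max(1, \bangPotMult_y{t_1})$ weighting $\multiSize{t_2}$ is untouched by the substitution; the induction hypotheses on $t_1$ and $t_2$ together with $\bangPotMult_x{t} = \bangPotMult_x{t_1} + \max(1, \bangPotMult_y{t_1})\cdot\bangPotMult_x{t_2}$ then close the case.

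For the main induction, in the base case $\bangLCtxt = \Hole$ we have $\multiSize{s_1\esub{x}{\oc s_2}} = \mset{\bangPotMult_x{s_1}} \uplus \multiSize{s_1} \uplus \max(1, \bangPotMult_x{s_1}) \cdot \multiSize{s_2}$ (using $\multiSize{\oc s_2} = \multiSize{s_2}$), while the substitution estimate and $\bangPotMult_x{s_1} \le \max(1, \bangPotMult_x{s_1})$ yield $\multiSize{s_1\isub{x}{s_2}} \preceq \multiSize{s_1} \uplus \max(1, \bangPotMult_x{s_1}) \cdot \multiSize{s_2}$; the left-hand side carries the extra singleton $\mset{\bangPotMult_x{s_1}}$, so the comparison is strict. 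In the inductive case $\bangLCtxt = \bangLCtxt'\esub{z}{w}$, by $\alpha$-conversion $z \neq x$ and $z \notin \freeVar{s_1}$, hence $\bangPotMult_z{s_1} = 0$; set $a := \bangPotMult_x{s_1}$, $b := \bangPotMult_z{\bangLCtxt'<\oc s_2>}$ and $c := \bangPotMult_z{\bangLCtxt'<s_1\isub{x}{s_2}>}$. \Cref{lem:Bang_Substitution_decreases_PotMult} (instantiated with $t := s_1$, $u := s_2$, the context $\bangLCtxt'$ and $y := z$) gives $c \le \bangPotMult_z{s_1\esub{x}{\bangLCtxt'<\oc s_2>}} = \bangPotMult_z{s_1} + \max(1,a)\cdot b = \max(1,a)\cdot b$, whence also $\max(1,c) \le \max(1,a)\cdot\max(1,b)$. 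Unfolding the outer explicit substitution via the $\multiSize{\cdot}$-clause on both sides and distributing the scalar $\max(1,a)$, it remains to compare $\mset{c} \uplus \multiSize{\bangLCtxt'<s_1\isub{x}{s_2}>} \uplus \max(1,c)\cdot\multiSize{w}$ with $\mset{a} \uplus \multiSize{s_1} \uplus \mset{\max(1,a)\cdot b} \uplus \max(1,a)\cdot\multiSize{\bangLCtxt'<\oc s_2>} \uplus (\max(1,a)\cdot\max(1,b))\cdot\multiSize{w}$. Bounding $\mset{c} \preceq \mset{\max(1,a)\cdot b}$ and $\max(1,c)\cdot\multiSize{w} \preceq (\max(1,a)\cdot\max(1,b))\cdot\multiSize{w}$, and bounding $\multiSize{\bangLCtxt'<s_1\isub{x}{s_2}>} \prec \mset{a} \uplus \multiSize{s_1} \uplus \max(1,a)\cdot\multiSize{\bangLCtxt'<\oc s_2>}$ by the induction hypothesis on $\bangLCtxt'$, monotonicity of $\uplus$ yields the desired strict inequality.

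I expect the inductive step $\bangLCtxt = \bangLCtxt'\esub{z}{w}$ to be the main obstacle — not mathematically deep, but it requires carefully tracking the three nested $\max(1,\cdot)$ coefficients, invoking \Cref{lem:Bang_Substitution_decreases_PotMult} with the right parameter instantiation, and exploiting $\bangPotMult_z{s_1} = 0$ (an $\alpha$-conversion hygiene point) to turn the bound on $c$ into the clean $\max(1,a)\cdot b$. A secondary care point is to keep every multiset-order manipulation valid, in particular that scalar multiplication distributes over $\uplus$ and is monotone, so that the final ``factoring out $\max(1,a)$'' step that reconstructs $\multiSize{s_1\esub{x}{\bangLCtxt<\oc s_2>}}$ is legitimate.
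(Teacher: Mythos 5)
Your overall skeleton (outer induction on $\bangLCtxt$, with the inductive case $\bangLCtxt = \bangLCtxt'\esub{z}{w}$ handled via \Cref{lem:Bang_Substitution_decreases_PotMult} and the hygiene fact $\bangPotMult_z{s_1}=0$) matches the paper's, and that part of your argument is sound. The gap is in the base case: the ``substitution estimate'' $\multiSize{t\isub{x}{u}} \preceq \multiSize{t} \uplus \bangPotMult_x{t}\cdot\multiSize{u}$ on which it entirely rests is false. Recall that the paper defines $n\cdot M$ as multiplying \emph{each element} of $M$ by $n$ (so $n \cdot M$ has the same cardinality as $M$), not as $n$ disjoint copies of $M$. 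Consequently your ``identity'' $(n+m)\cdot M = n\cdot M\uplus m\cdot M$ fails (for instance $2\cdot\mset{1}=\mset{2}$ while $1\cdot\mset{1}\uplus 1\cdot\mset{1}=\mset{1,1}$), and it does not even hold as a $\succeq$ in general: for $M=\mset{0}$ and $n=m=1$ one gets $(n+m)\cdot M = \mset{0}\prec\mset{0,0} = n \cdot M \uplus m \cdot M$. This breaks the application case of your estimate. Concretely, take $t=\app{x}{x}$ and $u=y\esub{z}{y'}$, so that $\multiSize{u}=\mset{0}$ and $\bangPotMult_x{t}=2$: then $\multiSize{t\isub{x}{u}}=\mset{0,0}$ whereas $\multiSize{t}\uplus\bangPotMult_x{t}\cdot\multiSize{u}=\mset{0}$, and $\mset{0,0}\succ\mset{0}$, contradicting the claimed $\preceq$.

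The lemma itself survives this example only because of the extra singleton $\mset{\bangPotMult_x{t}}$ contributed by the explicit substitution: $\multiSize{t\esub{x}{\oc u}}=\mset{2}\uplus\mset{0}\succ\mset{0,0}$, the element $2$ being strictly larger than all the duplicated small elements and hence able to absorb them in the multiset order. This is precisely why the paper does not factor out a standalone substitution estimate: it proves $\multiSize{s_1\esub{x}{\oc s_2}}\succ\multiSize{s_1\isub{x}{s_2}}$ directly by an inner induction on $s_1$, keeping the entire left-hand side --- in particular the singleton $\mset{\bangPotMult_x{s_1}}$ --- present at every step, so that inequalities such as $\mset{n+m}\uplus(n+m)\cdot M\succ\mset{n}\uplus\mset{m}\uplus n\cdot M\uplus m\cdot M$ can be justified by charging the surplus small elements to the replacement of the large element $n+m$. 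To repair your proof you would have to carry that singleton through your auxiliary lemma, at which point it just becomes the base case itself, proved by induction on $t$ as in the paper; the clean modular estimate you propose is not available under the paper's definition of $n \cdot M$.
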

\stableProof{%
    \begin{proof}
By induction on $\bangLCtxt$. Cases:
\begin{itemize}
\item[\bltI] $\bangLCtxt = \Hole$: Then $t = s_1\esub{x}{\oc s_2}$ and
    $u = s_1\isub{x}{s_2}$. By induction on $s_1$:
    \begin{itemize}
    \item[\bltII] $s_1 = y$: Then:
        \begin{equation*}
            \begin{array}{rcl}
                \multiSize{s_1\esub{x}{\oc s_2}}
                    &=& \mset{\bangPotMult_x{s_1}} \uplus \multiSize{s_1} \uplus \max(1, \bangPotMult_x{s_1}) \multisetScProd \multiSize{\oc s_2}
                \\
                    &=& \mset{\bangPotMult_x{y}} \uplus \multiSize{y} \uplus \max(1, \bangPotMult_x{y}) \multisetScProd \multiSize{\oc s_2}
                \\
                &=& \mset{\bangPotMult_x{y}} \uplus \multiSize{\oc s_2}\\
                &=& \mset{\bangPotMult_x{y}} \uplus \multiSize{ s_2}
            \end{array}
        \end{equation*}
        We distinguish two cases:
        \begin{itemize}
        \item[\bltIII] $x = y$: Then $\multiSize{t} =
            \mset{\bangPotMult_x{y}} \uplus \multiSize{s_2} = \mset{1}
            \uplus \multiSize{ s_2}$ and $s_1\isub{x}{s_2} = s_2$ thus
            $\multiSize{u} = \multiSize{s_1\isub{x}{s_2}} =
            \multiSize{ s_2} $ and therefore $\multiSize{t} \succ
            \multiSize{u}$.

        \item[\bltIII] $x \neq y$: Then $\multiSize{t} =
            \mset{\bangPotMult_x{y}} \uplus \multiSize{s_2} = \mset{0}
            \uplus \multiSize{s_2}$ and $s_1\isub{x}{s_2} = y$ thus
            $\multiSize{u} = \multiSize{s_1\isub{x}{s_2}} = \emptymset$
            and therefore $\multiSize{t} \succ \multiSize{u}$.
        \end{itemize}

    \item[\bltII] $s_1 = \abs{y}{s'_1}$: 
    We can suppose without loss of generality that $y \notin \freeVar{s_2} \cup \{x\}$,
    thus $s_1\isub{x}{s_2} = \abs{y}{(s'_1\isub{x}{s_2})}$.
        By \ih on $s'_1$, $\multiSize{s'_1\esub{x}{\oc s_2}} \succ
        \multiSize{s'_1\isub{x}{s_2}}$. 
        Therefore:
        \begin{equation*}
            \begin{array}{rcl}
                \multiSize{s_1\esub{x}{\oc s_2}}
                    &=& \multiSize{(\abs{y}{s'_1})\esub{x}{ \oc s_2}}
                \\
                    &=& \mset{\bangPotMult_x{\abs{y}{s'_1}}} \uplus \multiSize{\abs{y}{s'_1}} \uplus \max(1, \bangPotMult_x{\abs{y}{s'_1}}) \multisetScProd \multiSize{\oc  s_2}
                \\
                    &=& \mset{\bangPotMult_x{s'_1}} \uplus \multiSize{s'_1} \uplus \max(1, \bangPotMult_x{s'_1}) \multisetScProd \multiSize{ \oc s_2}
                \\
                    &=& \multiSize{s'_1\esub{x}{\oc s_2}}
                \\
                    &\overset{\ih}{\succ}& \multiSize{s'_1\isub{x}{s_2}}
                \\
                    &=& \multiSize{\abs{y}{(s'_1\isub{x}{s_2})}}
                    \quad=\quad \multiSize{s_1\isub{x}{s_2}}
            \end{array}
        \end{equation*}

    \item[\bltII] $s_1 = \app{s_1^1}{s_1^2}$: Then $s_1\isub{x}{s_2} =
        \app{(s_1^1\isub{x}{s_2})}{(s_1^2\isub{x}{s_2})}$.
        By \ih
        on $s_1^1$ and $s_1^2$, 
        $\multiSize{s_1^1\esub{x}{\oc s_2}} \succ
        \multiSize{s_1^1\isub{x}{s_2}}$ and
        $\multiSize{s_1^2\esub{x}{\oc s_2}} \succ
        \multiSize{s_1^2\isub{x}{s_2}}$. We distinguish four cases:
        \begin{itemize}
        \item[\bltIII] $x \notin \freeVar{s_1^1}$ and $x \notin
            \freeVar{s_1^2}$: Then $\bangPotMult_x{s_1^1} = 0$,
            $\bangPotMult_x{s_1^2} = 0$, $s_1^1\isub{x}{s_2} = s_1^1$
            and $s_1^1\isub{x}{s_2} = s_1^1$. Thus:
            \begin{equation*}
                \begin{array}{rcl}
                    \multiSize{s_1\esub{x}{\oc  s_2}}
                        &=& \multiSize{(\app{s_1^1}{s_1^2})\esub{x}{\oc  s_2}}
                    \\
                        &=& \mset{\bangPotMult_x{\app{s_1^1}{s_1^2}}} \uplus \multiSize{\app{s_1^1}{s_1^2}} \uplus \max(1, \bangPotMult_x{\app{s_1^1}{s_1^2}}) \multisetScProd \multiSize{\oc  s_2}
                    \\
                        &=& \mset{0} \uplus \multiSize{s_1^1} \uplus \multiSize{s_1^2} \uplus \multiSize{\oc s_2}
                    \\
                        &\succ& \multiSize{s_1^1} \uplus \multiSize{s_1^2}
                    \\
                        &=& \multiSize{s_1^1\isub{x}{s_1}} \uplus \multiSize{s_1^2\isub{x}{s_1}}
                    \\
                        &=& \multiSize{\app{(s_1^1\isub{x}{s_2})}{(s_1^2\isub{x}{s_2})}}
                        \quad=\quad \multiSize{s_1\isub{x}{s_2}}
                \end{array}
            \end{equation*}

        \item[\bltIII] $x \in \freeVar{s_1^1}$ and $x \notin
            \freeVar{s_1^2}$: Then $\bangPotMult_x{s_1^1} \geq 1$,
            $\bangPotMult_x{s_1^2} = 0$ and $s_1^2\isub{x}{s_2} =
            s_1^2$. Thus:
            \begin{equation*}
                \begin{array}{rcl}
                    \multiSize{s_1\esub{x}{\oc s_2}}
                        &=& \multiSize{(\app{s_1^1}{s_1^2})\esub{x}{\oc  s_2}}
                    \\
                        &=& \mset{\bangPotMult_x{\app{s_1^1}{s_1^2}}} \uplus \multiSize{\app{s_1^1}{s_1^2}} \uplus \max(1, \bangPotMult_x{\app{s_1^1}{s_1^2}}) \multisetScProd \multiSize{\oc  s_2}
                    \\
                        &=& \mset{\bangPotMult_x{s_1^1}} \uplus \multiSize{s_1^1} \uplus \multiSize{s_1^2} \uplus \max(1, \bangPotMult_x{s_1^1}) \multisetScProd \multiSize{\oc s_2}
                    \\
                        &=& \multiSize{s_1^1\esub{x}{\oc s_2}} \uplus \multiSize{s_1^2}
                    \\
                        &\overset{\ih}{\succ}& \multiSize{s_1^1\isub{x}{s_2}} \uplus \multiSize{s_1^2}
                    \\
                        &=& \multiSize{s_1^1\isub{x}{s_1}} \uplus \multiSize{s_1^2\isub{x}{s_1}}
                    \\
                        &=& \multiSize{\app{(s_1^1\isub{x}{s_2})}{(s_1^2\isub{x}{s_2})}}
                        \quad=\quad \multiSize{s_1\isub{x}{s_2}}
                \end{array}
            \end{equation*}

        \item[\bltIII] $x \notin \freeVar{s_1^1}$ and $x \in
            \freeVar{s_1^2}$: Then $\bangPotMult_x{s_1^1} = 0$,
            $\bangPotMult_x{s_1^2} \geq 1$ and $s_1^1\isub{x}{s_2} =
            s_1^2$. Thus:
            \begin{equation*}
                \begin{array}{rcl}
                    \multiSize{s_1\esub{x}{\oc s_2}}
                        &=& \multiSize{(\app{s_1^1}{s_1^2})\esub{x}{\oc s_2}}
                    \\
                        &=& \mset{\bangPotMult_x{\app{s_1^1}{s_1^2}}} \uplus \multiSize{\app{s_1^1}{s_1^2}} \uplus \max(1, \bangPotMult_x{\app{s_1^1}{s_1^2}}) \multisetScProd \multiSize{\oc s_2}
                    \\
                        &=& \mset{\bangPotMult_x{s_1^2}} \uplus \multiSize{s_1^1} \uplus \multiSize{s_1^2} \uplus \max(1, \bangPotMult_x{s_1^2}) \multisetScProd \multiSize{\oc s_2}
                    \\
                        &=& \multiSize{s_1^1} \uplus \multiSize{s_1^2\esub{x}{\oc s_2}}
                    \\
                        &\overset{\ih}{\succ}& \multiSize{s_1^1} \uplus \multiSize{s_1^2\isub{x}{s_2}}
                    \\
                        &=& \multiSize{s_1^1\isub{x}{s_1}} \uplus \multiSize{s_1^2\isub{x}{s_1}}
                    \\
                        &=& \multiSize{\app{(s_1^1\isub{x}{s_2})}{(s_1^2\isub{x}{s_2})}}
                        \quad=\quad \multiSize{s_1\isub{x}{s_2}}
                \end{array}
            \end{equation*}

        \item[\bltIII] $x \in \freeVar{s_1^1}$ and $x \in
            \freeVar{s_1^2}$: Then $\bangPotMult_x{s_1^1} \geq 1$ and
            $\bangPotMult_x{s_1^2} \geq 1$. Thus:
            \begin{equation*}
                \begin{array}{rcl}
                    \multiSize{s_1\esub{x}{\oc s_2}}
                        &=& \multiSize{(\app{s_1^1}{s_1^2})\esub{x}{\oc s_2}}
                    \\
                        &=& \mset{\bangPotMult_x{\app{s_1^1}{s_1^2}}} \uplus \multiSize{\app{s_1^1}{s_1^2}} \uplus \max(1, \bangPotMult_x{\app{s_1^1}{s_1^2}}) \multisetScProd \multiSize{\oc s_2}
                    \\
                        &=& \mset{\bangPotMult_x{s_1^1} + \bangPotMult_x{s_1^2}} \uplus \multiSize{s_1^1} \uplus \multiSize{s_1^2} \uplus (\bangPotMult_x{s_1^1} + \bangPotMult_x{s_1^2}) \multisetScProd \multiSize{\oc s_2}
                    \\
                        &\succ& \mset{\bangPotMult_x{s_1^1}} \uplus \mset{\bangPotMult_x{s_1^2}} \uplus \multiSize{s_1^1} \uplus \multiSize{s_1^2} \uplus (\bangPotMult_x{s_1^1} \multisetScProd \multiSize{\oc s_2}) \uplus (\bangPotMult_x{s_1^2} \multisetScProd \multiSize{\oc s_2})
                    \\
                        &=& \multiSize{s_1^1\esub{x}{\oc s_2}} \uplus \multiSize{s_1^2\esub{x}{\oc s_2}}
                    \\
                        &\overset{\ih}{\succ}& \multiSize{s_1^1\isub{x}{s_2}} \uplus \multiSize{s_1^2\isub{x}{s_2}}
                    \\
                        &=& \multiSize{\app{(s_1^1\isub{x}{s_2})}{(s_1^2\isub{x}{s_2})}}
                        \quad=\quad \multiSize{s_1\isub{x}{s_2}}
                \end{array}
            \end{equation*}
        \end{itemize}

    \item[\bltII] $s_1 = s_1^1\esub{y}{s_1^2}$: 
    We can suppose without loss of generality that  $y \notin \freeVar{s_2} \cup\{x\}$, then
        $s_1\isub{x}{s_2} =
        (s_1^1\isub{x}{s_2})\esub{y}{(s_1^2\isub{x}{s_2})}$ and by \ih
        on $s_1^1$ and $s_1^2$, one has that
        $\multiSize{s_1^1\esub{x}{\oc s_2}} \succ
        \multiSize{s_1^1\isub{x}{s_2}}$ and
        $\multiSize{s_1^2\esub{x}{\oc s_2}} \succ
        \multiSize{s_1^2\isub{x}{s_2}}$. We distinguish four cases:
        \begin{itemize}
        \item[\bltIII] $x \notin \freeVar{s_1^1}$ and $x \notin
            \freeVar{s_1^2}$: Then $\bangPotMult_x{s_1^1} = 0$,
            $\bangPotMult_x{s_1^2} = 0$, $s_1^1\isub{x}{s_2} = s_1^1$
            and $s_1^2\isub{x}{s_2} = s_1^2$. Thus:
            \begin{equation*}
                \begin{array}{rcl}
                    \multiSize{s_1\esub{x}{\oc s_2}}
                        &=& \multiSize{(s_1^1\esub{y}{s_1^2})\esub{x}{\oc s_2}}
                    \\
                        &=& \mset{\bangPotMult_x{s_1^1\esub{y}{s_1^2}}}
                            \uplus \multiSize{s_1^1\esub{y}{s_1^2}}
                            \uplus \max(1, \bangPotMult_x{s_1^1\esub{y}{s_1^2}}) \multisetScProd \multiSize{\oc s_2}
                    \\
                        &=& \mset{0}
                            \uplus \mset{\bangPotMult_y{s_1^1}}
                            \uplus \multiSize{s_1^1}
                            \uplus \max(1, \bangPotMult_y{s_1^1}) \multisetScProd \multiSize{s_1^2}\uplus \multiSize{\oc s_2}
                    \\
                        &=& \mset{0}
                            \uplus \mset{\bangPotMult_y{s_1^1\isub{x}{s_2}}}
                            \uplus \multiSize{s_1^1\isub{x}{s_2}}
                            \uplus \max(1, \bangPotMult_y{s_1^1\isub{x}{s_2}}) \multisetScProd \multiSize{s_1^2\isub{x}{s_2}}
                            \uplus \multiSize{\oc s_2}
                    \\
                        &\succ& \mset{\bangPotMult_y{s_1^1\isub{x}{s_2}}}
                            \uplus \multiSize{s_1^1\isub{x}{s_2}}
                            \uplus \max(1, \bangPotMult_y{s_1^1\isub{x}{s_2}}) \multisetScProd \multiSize{s_1^2\isub{x}{s_2}}
                    \\
                        &=& \multiSize{s_1^1\isub{x}{s_2}\esub{y}{s_1^2\isub{x}{s_2}}}
                        \quad=\quad \multiSize{s_1\isub{x}{s_2}}
                \end{array}
            \end{equation*}

        \item[\bltIII] $x \in \freeVar{s_1^1}$ and $x \notin
            \freeVar{s_1^2}$: Then $\bangPotMult_x{s_1^1} \geq 1$,
            $\bangPotMult_x{s_1^2} = 0$ and $s_1^2\isub{x}{s_2} =
            s_1^2$. Then using
            \Cref{lem:Bang_Substitution_compat_PotMult}, one has that:
            \begin{equation*}
                \begin{array}{rcl}
                    \multiSize{s_1\esub{x}{\oc s_2}}
                        &=& \multiSize{(s_1^1\esub{y}{s_1^2})\esub{x}{\oc s_2}}
                    \\
                        &=& \mset{\bangPotMult_x{s_1^1\esub{y}{s_1^2}}} \uplus \multiSize{s_1^1\esub{y}{s_1^2}} \uplus \max(1, \bangPotMult_x{s_1^1\esub{y}{s_1^2}}) \multisetScProd \multiSize{\oc s_2}
                    \\
                        &=& \mset{\bangPotMult_x{s_1^1} + \max(1, \bangPotMult_y{s_1^1}) \cdot \bangPotMult_x{s_1^2}}
                            \uplus \mset{\bangPotMult_y{s_1^1}}
                            \uplus \multiSize{s_1^1}
                            \uplus \max(1, \bangPotMult_y{s_1^1}) \multisetScProd \multiSize{s_1^2}
                        \\ && \hspace{0.5cm}
                            \uplus \max(1, \bangPotMult_x{s_1^1} + \max(1, \bangPotMult_y{s_1^1}) \cdot \bangPotMult_x{s_1^2}) \multisetScProd \multiSize{\oc s_2}
                    \\
                        &=& \mset{\bangPotMult_x{s_1^1}}
                            \uplus \mset{\bangPotMult_y{s_1^1}}
                            \uplus \multiSize{s_1^1}
                            \uplus \max(1, \bangPotMult_y{s_1^1}) \multisetScProd \multiSize{s_1^2}
                            \uplus \max(1, \bangPotMult_x{s_1^1}) \multisetScProd \multiSize{\oc s_2}
                    \\
                        &=& \mset{\bangPotMult_y{s_1^1}}
                            \uplus \left(\mset{\bangPotMult_x{s_1^1}}
                            \uplus \multiSize{s_1^1}
                            \uplus \max(1, \bangPotMult_x{s_1^1}) \multisetScProd \multiSize{\oc s_2} \right)
                            \uplus \max(1, \bangPotMult_y{s_1^1}) \multisetScProd \multiSize{s_1^2}
                    \\
                        &=& \mset{\bangPotMult_y{s_1^1}}
                            \uplus \multiSize{s_1^1\esub{x}{\oc s_2}}
                            \uplus \max(1, \bangPotMult_y{s_1^1}) \multisetScProd \multiSize{s_1^2}
                    \\
                        &\overset{\ih}{\succ}& \mset{\bangPotMult_y{s_1^1}}
                            \uplus \multiSize{s_1^1\isub{x}{s_2}}
                            \uplus \max(1, \bangPotMult_y{s_1^1}) \multisetScProd \multiSize{s_1^2}
                    \\
                        &=& \mset{\bangPotMult_y{s_1^1\isub{x}{s_2}}}
                            \uplus \multiSize{s_1^1\isub{x}{s_2}}
                            \uplus \max(1, \bangPotMult_y{s_1^1\isub{x}{s_2}}) \multisetScProd \multiSize{s_1^2\isub{x}{s_2}}
                    \\
                        &=& \multiSize{s_1^1\isub{x}{s_2}\esub{y}{s_1^2\isub{x}{s_2}}}
                        \quad=\quad \multiSize{s_1\isub{x}{s_2}}
                \end{array}
            \end{equation*}

        \item[\bltIII] $x \notin \freeVar{s_1^1}$ and $x \in
            \freeVar{s_1^2}$: Then $\bangPotMult_x{s_1^1} = 0$,
            $\bangPotMult_x{s_1^2} \geq 1$ and $s_1^1\isub{x}{s_2} =
            s_1^2$. We distinguish two cases:
            \begin{itemize}
            \item[\bltIV] $y \notin \freeVar{s_1^1}$: Then
                $\bangPotMult_y{s_1^1} = 0$. By
                \Cref{lem:Bang_Substitution_compat_PotMult},
                $\bangPotMult_y{s_1^1\isub{x}{s_2}} = 0$, thus:
                \begin{equation*}
                    \begin{array}{rcl}
                        \multiSize{s_1\esub{x}{\oc s_2}}
                            &=& \multiSize{(s_1^1\esub{y}{s_1^2})\esub{x}{\oc s_2}}
                        \\
                            &=& \mset{\bangPotMult_x{s_1^1\esub{y}{s_1^2}}} \uplus \multiSize{s_1^1\esub{y}{s_1^2}} \uplus \max(1, \bangPotMult_x{s_1^1\esub{y}{s_1^2}}) \multisetScProd \multiSize{\oc s_2}
                        \\
                            &=& \mset{\bangPotMult_x{s_1^1} + \max(1, \bangPotMult_y{s_1^1}) \cdot \bangPotMult_x{s_1^2}}
                                \uplus \mset{\bangPotMult_y{s_1^1}}
                                \uplus \multiSize{s_1^1}
                                \uplus \max(1, \bangPotMult_y{s_1^1}) \multisetScProd \multiSize{s_1^2}
                            \\ && \hspace{0.5cm}
                                \uplus \max(1, \bangPotMult_x{s_1^1} + \max(1, \bangPotMult_y{s_1^1}) \cdot \bangPotMult_x{s_1^2}) \multisetScProd \multiSize{\oc s_2}
                        \\
                            &=& \mset{\bangPotMult_x{s_1^2}}
                                \uplus \mset{\bangPotMult_y{s_1^1}}
                                \uplus \multiSize{s_1^1}
                                \uplus \max(1, \bangPotMult_y{s_1^1}) \multisetScProd \multiSize{s_1^2}
                                \uplus \max(1, \bangPotMult_x{s_1^2}) \multisetScProd \multiSize{\oc s_2}
                        \\
                            &=& \mset{\bangPotMult_x{s_1^2}}
                                \uplus \mset{\bangPotMult_y{s_1^1\isub{x}{s_2}}}
                                \uplus \multiSize{s_1^1\isub{x}{s_2}}
                                \uplus \max(1, \bangPotMult_y{s_1^1\isub{x}{s_2}}) \multisetScProd \multiSize{s_1^2}
                            \\ && \hspace{0.5cm}
                                \uplus \max(1, \bangPotMult_x{s_1^2}) \multisetScProd \multiSize{\oc s_2}
                        \\
                            &\succ& \mset{\bangPotMult_y{s_1^1\isub{x}{s_2}}}
                                \uplus \multiSize{s_1^1\isub{x}{s_2}}
                                \uplus \left( \mset{\bangPotMult_x{s_1^2}}
                                \uplus \multiSize{s_1^2}
                                \uplus \max(1, \bangPotMult_x{s_1^2}) \multisetScProd\multiSize{\oc s_2} \right)
                        \\
                            &=& \mset{\bangPotMult_y{s_1^1\isub{x}{s_2}}}
                                \uplus \multiSize{s_1^1\isub{x}{s_2}}
                                \uplus \multiSize{s_1^2\esub{x}{\oc s_2}}
                        \\
                            &=& \mset{\bangPotMult_y{s_1^1\isub{x}{s_2}}}
                                \uplus \multiSize{s_1^1\isub{x}{s_2}}
                                \uplus \multiSize{s_1^2\esub{x}{\oc s_2}}
                        \\
                            &\overset{\ih}{\succ}& \mset{\bangPotMult_y{s_1^1\isub{x}{s_2}}}
                                \uplus \multiSize{s_1^1\isub{x}{s_2}}
                                \uplus \multiSize{s_1^2\isub{x}{s_2}}
                        \\
                            &=& \mset{\bangPotMult_y{s_1^1\isub{x}{s_2}}}
                                \uplus \multiSize{s_1^1\isub{x}{s_2}}
                                \uplus \max(1, \bangPotMult_y{s_1^1\isub{x}{s_2}}) \multisetScProd \multiSize{s_1^2\isub{x}{s_2}}
                        \\
                            &=& \multiSize{s_1^1\isub{x}{s_2}\esub{y}{s_1^2\isub{x}{s_2}}}
                            \quad=\quad \multiSize{s_1\isub{x}{s_2}}
                    \end{array}
                \end{equation*}

            \item[\bltIV] $y \in \freeVar{s_1^1}$: Then
                $\bangPotMult_y{s_1^1} \geq 1$. By
                \Cref{lem:Bang_Substitution_compat_PotMult},
                $\bangPotMult_y{s_1^1\isub{x}{s_2}} =
                \bangPotMult_y{s_1^1}$, thus:
                \begin{equation*}
                    \begin{array}{rcl}
                        \multiSize{s_1\esub{x}{\oc s_2}}
                            &=& \multiSize{(s_1^1\esub{y}{s_1^2})\esub{x}{\oc s_2}}
                        \\
                            &=& \mset{\bangPotMult_x{s_1^1\esub{y}{s_1^2}}} \uplus \multiSize{s_1^1\esub{y}{s_1^2}} \uplus \max(1, \bangPotMult_x{s_1^1\esub{y}{s_1^2}}) \multisetScProd \multiSize{\oc s_2}
                        \\
                            &=& \mset{\bangPotMult_x{s_1^1} + \max(1, \bangPotMult_y{s_1^1}) \cdot \bangPotMult_x{s_1^2}}
                                \uplus \mset{\bangPotMult_y{s_1^1}}
                                \uplus \multiSize{s_1^1}
                                \uplus \max(1, \bangPotMult_y{s_1^1}) \multisetScProd \multiSize{s_1^2}
                            \\ && \hspace{0.5cm}
                                \uplus \max(1, \bangPotMult_x{s_1^1} + \max(1, \bangPotMult_y{s_1^1}) \cdot \bangPotMult_x{s_1^2}) \multisetScProd \multiSize{\oc s_2}
                        \\
                            &=& \mset{\bangPotMult_y{s_1^1} \cdot \bangPotMult_x{s_1^2}}
                                \uplus \mset{\bangPotMult_y{s_1^1}}
                                \uplus \multiSize{s_1^1}
                                \uplus \max(1, \bangPotMult_y{s_1^1}) \multisetScProd \multiSize{s_1^2}
                            \\ && \hspace{0.5cm}
                                \uplus \max(1, \bangPotMult_y{s_1^1} \cdot \bangPotMult_x{s_1^2}) \multisetScProd \multiSize{\oc s_2}
                        \\
                            &=& \mset{\bangPotMult_y{s_1^1} \cdot \bangPotMult_x{s_1^2}}
                                \uplus \mset{\bangPotMult_y{s_1^1\isub{x}{s_2}}}
                                \uplus \multiSize{s_1^1\isub{x}{s_2}}
                                \uplus \max(1, \bangPotMult_y{s_1^1\isub{x}{s_2}}) \multisetScProd \multiSize{s_1^2}
                            \\ && \hspace{0.5cm}
                                \uplus \max(1, \bangPotMult_y{s_1^1} \cdot \bangPotMult_x{s_1^2}) \multisetScProd \multiSize{\oc s_2}
                        \\
                            &=& \bangPotMult_y{s_1^1} \multisetScProd \mset{\bangPotMult_x{s_1^2}
                                \uplus \mset{\bangPotMult_y{s_1^1\isub{x}{s_2}}}
                                \uplus \multiSize{s_1^1\isub{x}{s_2}}
                                \uplus \bangPotMult_y{s_1^1} \multisetScProd \multiSize{s_1^2}}
                            \\ && \hspace{0.5cm}
                                \uplus\; \bangPotMult_y{s_1^1} \multisetScProd (\max(1, \bangPotMult_x{s_1^2}) \multisetScProd \multiSize{\oc s_2})
                        \\
                            &=& \mset{\bangPotMult_y{s_1^1\isub{x}{s_2}}}
                                \uplus \multiSize{s_1^1\isub{x}{s_2}}
                                \uplus \bangPotMult_y{s_1^1} \multisetScProd \left( \mset{\bangPotMult_x{s_1^2}}
                                \uplus \multiSize{s_1^2}
                                \uplus \max(1, \bangPotMult_x{s_1^2}) \multisetScProd\multiSize{\oc s_2} \right)
                        \\
                            &=& \mset{\bangPotMult_y{s_1^1\isub{x}{s_2}}}
                                \uplus \multiSize{s_1^1\isub{x}{s_2}}
                                \uplus \bangPotMult_y{s_1^1} \multisetScProd \multiSize{s_1^2\esub{x}{\oc s_2}}
                        \\
                            &=& \mset{\bangPotMult_y{s_1^1\isub{x}{s_2}}}
                                \uplus \multiSize{s_1^1\isub{x}{s_2}}
                                \uplus \bangPotMult_y{s_1^1} \multisetScProd \multiSize{s_1^2\esub{x}{\oc s_2}}
                        \\
                            &\overset{\ih}{\succ}& \mset{\bangPotMult_y{s_1^1\isub{x}{s_2}}}
                                \uplus \multiSize{s_1^1\isub{x}{s_2}}
                                \uplus \bangPotMult_y{s_1^1} \multisetScProd \multiSize{s_1^2\isub{x}{s_2}}
                        \\
                            &=& \mset{\bangPotMult_y{s_1^1\isub{x}{s_2}}}
                                \uplus \multiSize{s_1^1\isub{x}{s_2}}
                                \uplus \max(1, \bangPotMult_y{s_1^1\isub{x}{s_2}}) \multisetScProd \multiSize{s_1^2\isub{x}{s_2}}
                        \\
                            &=& \multiSize{s_1^1\isub{x}{s_2}\esub{y}{s_1^2\isub{x}{s_2}}}
                            \quad=\quad \multiSize{s_1\isub{x}{s_2}}
                    \end{array}
                \end{equation*}
            \end{itemize}

        \item[\bltIII] $x \in \freeVar{s_1^1}$ and $x \in
            \freeVar{s_1^2}$: Then $\bangPotMult_x{s_1^1} \geq 1$ and
            $\bangPotMult_x{s_1^2} \geq 1$. We distinguish two
            cases:
            \begin{itemize}
            \item[\bltIV] $y \notin \freeVar{s_1^1}$: Then
                $\bangPotMult_y{s_1^1} = 0$. By
                \Cref{lem:Bang_Substitution_compat_PotMult},
                $\bangPotMult_y{s_1^1\isub{x}{s_2}} =
                \bangPotMult_y{s_1^1}$ thus one has that:
                \begin{equation*}
                    \begin{array}{l}
                        \multiSize{s_1\esub{x}{\oc s_2}}
                        \\
                            \quad= \multiSize{(s_1^1\esub{y}{s_1^2})\esub{x}{\oc s_2}}
                        \\
                            \quad= \mset{\bangPotMult_x{s_1^1\esub{y}{s_1^2}}}
                                \uplus \multiSize{s_1^1\esub{y}{s_1^2}}
                                \uplus \max(1, \bangPotMult_x{s_1^1\esub{y}{s_1^2}}) \multisetScProd \multiSize{\oc s_2}
                        \\
                            \quad= \mset{\bangPotMult_x{s_1^1} + \max(1, \bangPotMult_y{s_1^1})} \cdot \bangPotMult_x{s_1^2}
                                \uplus \mset{\bangPotMult_y{s_1^1}}
                                \uplus \multiSize{s_1^1} \uplus \max(1, \bangPotMult_y{s_1^1}) \multisetScProd \multiSize{s_1^2}
                            \\
                            \quad\;\;\quad
                                \uplus \max(1, \bangPotMult_x{s_1^1} + \max(1, \bangPotMult_y{s_1^1}) \cdot \bangPotMult_x{s_1^2}) \multisetScProd \multiSize{\oc s_2}
                        \\
                            \quad= \mset{\bangPotMult_x{s_1^1}}
                                \uplus \mset{\bangPotMult_x{s_1^2}}
                                \uplus \mset{\bangPotMult_y{s_1^1}}
                                \uplus \multiSize{s_1^1}
                                \uplus \multiSize{s_1^2}
                                \uplus \max(1, \bangPotMult_x{s_1^1}) \multisetScProd \multiSize{\oc s_2}
                                \uplus \max(1, \bangPotMult_x{s_1^2}) \multisetScProd \multiSize{\oc s_2}
                        \\
                            \quad= \mset{\bangPotMult_y{s_1^1}}
                                \uplus \left(\mset{\bangPotMult_x{s_1^1}}
                                \uplus \multiSize{s_1^1}
                                \uplus \max(1, \bangPotMult_x{s_1^1}) \multisetScProd \multiSize{\oc s_2}\right)
                                \uplus \left(\mset{\bangPotMult_x{s_1^2}}
                                \uplus \multiSize{s_1^2}
                                \uplus (\max(1, \bangPotMult_x{s_1^2}) \multisetScProd \multiSize{ \oc s_2})\right)
                        \\
                            \quad= \mset{\bangPotMult_y{s_1^1}}
                                \uplus \multiSize{s_1^1\esub{x}{\oc s_2}}
                                \uplus \multiSize{s_1^2\esub{x}{\oc s_2}}
                        \\
                            \quad\overset{\ih}{\succ} \mset{\bangPotMult_y{s_1^1}}
                                \uplus \multiSize{s_1^1\isub{x}{s_2}}
                                \uplus \multiSize{s_1^1\isub{x}{s_2}}
                        \\
                            \quad= \mset{\bangPotMult_y{s_1^1}}
                                \uplus \multiSize{s_1^1\isub{x}{s_2}}
                                \uplus \max(1, \bangPotMult_y{s_1^1\isub{x}{s_2}}) \multisetScProd \multiSize{s_1^1\isub{x}{s_2}}
                        \\
                                \quad= \multiSize{s_1^1\isub{x}{s_2}\esub{y}{s_1^2\isub{x}{s_2}}}
                                \quad=\quad \multiSize{s_1\isub{x}{s_2}}
                    \end{array}
                \end{equation*}

            \item[\bltIV] $y \in \freeVar{s_1^1}$: Then
                $\bangPotMult_y{s_1^1} \geq 1$. By
                \Cref{lem:Bang_Substitution_compat_PotMult},
                $\bangPotMult_y{s_1^1\isub{x}{s_2}} =
                \bangPotMult_y{s_1^1}$ thus one has that:
                \begin{equation*}
                    \begin{array}{l}
                        \multiSize{s_1\esub{x}{\oc s_2}}
                        \\
                            \quad= \multiSize{(s_1^1\esub{y}{s_1^2})\esub{x}{\oc s_2}}
                        \\
                            \quad= \mset{\bangPotMult_x{s_1^1\esub{y}{s_1^2}}}
                                \uplus \multiSize{s_1^1\esub{y}{s_1^2}}
                                \uplus \max(1, \bangPotMult_x{s_1^1\esub{y}{s_1^2}}) \multisetScProd \multiSize{\oc s_2}
                        \\
                            \quad= \mset{\bangPotMult_x{s_1^1} + \max(1, \bangPotMult_y{s_1^1})} \cdot \bangPotMult_x{s_1^2}
                                \uplus \mset{\bangPotMult_y{s_1^1}}
                                \uplus \multiSize{s_1^1} \uplus \max(1, \bangPotMult_y{s_1^1}) \multisetScProd \multiSize{s_1^2}
                            \\ 
                             \quad\;\;\quad
                                \uplus \max(1, \bangPotMult_x{s_1^1} + \max(1, \bangPotMult_y{s_1^1}) \cdot \bangPotMult_x{s_1^2}) \multisetScProd \multiSize{\oc s_2}
                        \\
                           \quad= \mset{\bangPotMult_x{s_1^1}}
                                \uplus \bangPotMult_y{s_1^1} \multisetScProd \mset{\bangPotMult_x{s_1^2}}
                                \uplus \mset{\bangPotMult_y{s_1^1}}
                                \uplus \multiSize{s_1^1}
                                \uplus \bangPotMult_y{s_1^1} \multisetScProd \multiSize{s_1^2}
                            \\
                            \quad\;\;\quad
                                \uplus \max(1, \bangPotMult_x{s_1^1}) \multisetScProd \multiSize{\oc s_2}
                                \uplus \bangPotMult_y{s_1^1} \multisetScProd (\max(1, \bangPotMult_x{s_1^2}) \multisetScProd \multiSize{\oc  s_2})
                        \\
                            \quad= \mset{\bangPotMult_y{s_1^1}}
                                \uplus \left(\mset{\bangPotMult_x{s_1^1}}
                                \uplus \multiSize{s_1^1}
                                \uplus \max(1, \bangPotMult_x{s_1^1}) \multisetScProd \multiSize{\oc s_2}\right)
                        \\
                            \quad\;\;\quad
                                \uplus\; \bangPotMult_y{s_1^1} \multisetScProd \left(\mset{\bangPotMult_x{s_1^2}}
                                \uplus \multiSize{s_1^2}
                                \uplus (\max(1, \bangPotMult_x{s_1^2}) \multisetScProd \multiSize{\oc  s_2})\right)
                        \\
                            \quad= \mset{\bangPotMult_y{s_1^1}}
                                \uplus \multiSize{s_1^1\esub{x}{\oc s_2}}
                                \uplus \multiSize{s_1^2\esub{x}{\oc s_2}}
                        \\
                            \quad\overset{\ih}{\succ} \mset{\bangPotMult_y{s_1^1}}
                                \uplus \multiSize{s_1^1\isub{x}{s_2}}
                                \uplus \bangPotMult_y{s_1^1} \multisetScProd \multiSize{s_1^1\isub{x}{s_2}}
                        \\
                            \quad= \mset{\bangPotMult_y{s_1^1\isub{x}{s_2}}}
                                \uplus \multiSize{s_1^1\isub{x}{s_2}}
                                \uplus \max(1, \bangPotMult_y{s_1^1\isub{x}{s_2}}) \multisetScProd \multiSize{s_1^1\isub{x}{s_2}}
                        \\
                                \quad= \multiSize{s_1^1\isub{x}{s_2}\esub{y}{s_1^2\isub{x}{s_2}}}
                                \quad=\quad \multiSize{s_1\isub{x}{s_2}}
                    \end{array}
                \end{equation*}
            \end{itemize}
        \end{itemize}

    \item[\bltII] $s_1 = \der{s'_1}$: Then $s_1\isub{x}{s_2} =
        \der{s'_1\isub{x}{s_2}}$ and by \ih on $s'_1$, one has that
        $\multiSize{s'_1\esub{x}{\oc s_2}} \succ
        \multiSize{s'_1\isub{x}{s_2}}$ thus:
        \begin{equation*}
            \begin{array}{rcl}
                \multiSize{s_1\esub{x}{\oc  s_2}}
                    &=& \multiSize{\der{s'_1}\esub{x}{\oc  s_2}}
                \\
                    &=& \mset{\bangPotMult_x{\der{s'_1}}} \uplus \multiSize{\der{s'_1}} \uplus \max(1, \bangPotMult_x{\der{s'_1}}) \multisetScProd \multiSize{\oc s_2}
                \\
                    &=& \mset{\bangPotMult_x{s'_1}} \uplus \multiSize{s'_1} \uplus \max(1, \bangPotMult_x{s'_1}) \multisetScProd \multiSize{\oc s_2}
                \\
                    &=& \multiSize{s'_1\esub{x}{\oc s_2}}
                \\
                    &\overset{\ih}{\succ}& \multiSize{s'_1\isub{x}{s_2}}
                \\
                    &=& \multiSize{\der{s'_1\isub{x}{s_2}}}
                    \quad=\quad \multiSize{s_1\isub{x}{s_2}}
            \end{array}
        \end{equation*}

    \item[\bltII] $s_1 = \oc s'_1$: Then $s_1\isub{x}{s_2} =
        \oc(s'_1\isub{x}{s_2})$ and by \ih on $s'_1$, one has that
        $\multiSize{s'_1\esub{x}{\oc s_2}} \succ
        \multiSize{s'_1\isub{x}{s_2}}$ thus:
        \begin{equation*}
            \begin{array}{rcl}
                \multiSize{s_1\esub{x}{\oc s_2}}
                    &=& \multiSize{(\oc s'_1)\esub{x}{\oc  s_2}}
                \\
                    &=& \mset{\bangPotMult_x{\oc s'_1}} \uplus \multiSize{\oc s'_1} \uplus \max(1, \bangPotMult_x{\oc s'_1}) \multisetScProd \multiSize{\oc s_2}
                \\
                    &=& \mset{\bangPotMult_x{s'_1}} \uplus \multiSize{s'_1} \uplus \max(1, \bangPotMult_x{s'_1}) \multisetScProd \multiSize{\oc  s_2}
                \\
                    &=& \multiSize{s'_1\esub{x}{\oc s_2}}
                \\
                    &\overset{\ih}{\succ}& \multiSize{s'_1\isub{x}{s_2}}
                \\
                    &=& \multiSize{\oc (s'_1\isub{x}{s_2})}
                    \quad=\quad \multiSize{s_1\isub{x}{s_2}}
            \end{array}
        \end{equation*}
    \end{itemize}

\item[\bltI] $\bangLCtxt = \bangLCtxt'\esub{y}{s}$: By \ih on
    $\bangLCtxt'$, one has that
    $\multiSize{s_1\esub{x}{\bangLCtxt'<\oc s_2>}} \succ
    \multiSize{\bangLCtxt'<s_1\isub{x}{s_2}>}$. Since $y \notin
    \freeVar{s_1}$ then $\bangPotMult_y{s_1} = 0$, and using
    \Cref{lem:Bang_Substitution_decreases_PotMult}, one therefore has
    that:
    \begin{equation*}
        \begin{array}{l}
            \multiSize{s_1\esub{x}{\bangLCtxt<\oc s_2>}}
        \\
            \quad = \multiSize{s_1\esub{x}{\bangLCtxt'<\oc s_2>\esub{y}{s}}}
        \\
            \quad = \mset{\bangPotMult_x{s_1}}
                \uplus \multiSize{s_1}
                \uplus \max(1, \bangPotMult_x{s_1}) \multisetScProd \multiSize{\bangLCtxt'<\oc s_2>\esub{y}{s}}
        \\
            \quad = \mset{\bangPotMult_x{s_1}}
                \uplus \multiSize{s_1}
                \uplus \max(1, \bangPotMult_x{s_1}) \multisetScProd
                    \left( \mset{\bangPotMult_y{\bangLCtxt'<\oc s_2>}}
                    \uplus \multiSize{\bangLCtxt'<\oc s_2>}
                    \uplus \bangPotMult_y{\bangLCtxt'<\oc s_2>} \multisetScProd \multiSize{s} \right)
        \\
            \quad = \left( \mset{\bangPotMult_x{s_1}}
                \uplus \multiSize{s_1}
                \uplus \max(1, \bangPotMult_x{s_1}) \multisetScProd \multiSize{\bangLCtxt'<\oc s_2>} \right)
            \\
                \qquad\; \uplus \max(1, \bangPotMult_x{s_1}) \multisetScProd \mset{\bangPotMult_y{\bangLCtxt'<\oc s_2>}}
                    \uplus \max(1, \bangPotMult_x{s_1}) \multisetScProd \bangPotMult_y{\bangLCtxt'<\oc s_2>} \multisetScProd \multiSize{s}
        \\
            \quad = \multiSize{s_1\esub{x}{\bangLCtxt'<\oc s_2>}}
                \uplus \max(1, \bangPotMult_x{s_1}) \multisetScProd \mset{\bangPotMult_y{\bangLCtxt'<\oc s_2>}}
                \uplus \max(1, \bangPotMult_x{s_1}) \multisetScProd \bangPotMult_y{\bangLCtxt'<\oc s_2>} \multisetScProd \multiSize{s}
        \\
            \quad\overset{\ih}{\succ} \multiSize{\bangLCtxt'<s_1\isub{x}{s_2}>}
                \uplus \max(1, \bangPotMult_x{s_1}) \multisetScProd \mset{\bangPotMult_y{\bangLCtxt'<\oc s_2>}}
                \uplus \max(1, \bangPotMult_x{s_1}) \multisetScProd \bangPotMult_y{\bangLCtxt'<\oc s_2>} \multisetScProd \multiSize{s}
        \\
            \quad = \multiSize{\bangLCtxt'<s_1\isub{x}{s_2}>}
                \uplus \mset{\max(1, \bangPotMult_x{s_1}) \cdot \bangPotMult_y{\bangLCtxt'<\oc s_2>}}
                \uplus (\max(1, \bangPotMult_x{s_1}) \cdot \bangPotMult_y{\bangLCtxt'<\oc s_2>}) \multisetScProd \multiSize{s}
        \\
            \quad = \multiSize{\bangLCtxt'<s_1\isub{x}{s_2}>}
                \uplus \mset{\bangPotMult_y{s_1} + \max(1, \bangPotMult_x{s_1}) \cdot \bangPotMult_y{\bangLCtxt'<\oc s_2>}}
                \uplus (\bangPotMult_y{s_1} + \max(1, \bangPotMult_x{s_1}) \cdot \bangPotMult_y{\bangLCtxt'<\oc s_2>}) \multisetScProd \multiSize{s}
        \\
            \quad = \multiSize{\bangLCtxt'<s_1\isub{x}{s_2}>}
                \uplus \mset{\bangPotMult_y{s_1\esub{x}{\bangLCtxt'<\oc s_2>}}}
                \uplus (\bangPotMult_y{s_1\esub{x}{\bangLCtxt'<\oc s_2>}}) \multisetScProd \multiSize{s}
        \\
            \quad \succ \multiSize{\bangLCtxt'<s_1\isub{x}{s_2}>}
                \uplus \mset{\bangPotMult_y{\bangLCtxt'<s_1\isub{x}{s_2}>}}
                \uplus \bangPotMult_y{\bangLCtxt'<s_1\isub{x}{s_2}>} \multisetScProd \multiSize{s}
        \\
            \quad = \multiSize{\bangLCtxt'<s_1\isub{x}{s_2}>\esub{y}{s}}
        \end{array}
    \end{equation*}
\end{itemize}
\end{proof}
 } \deliaLu \giulioLu%

\begin{lemma}
    \label{lem:Bang_Full_PotMulti_decreases_Substitution}%
    Let $t \bangArr_{S_\indexOmega}<s!> u$, then $\bangPotMult_x{t}
    \geq \bangPotMult_x{u}$.
\end{lemma}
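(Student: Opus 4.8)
The plan is to argue by induction on the term $t$, proving the slightly stronger statement that $\bangPotMult_x{t} \geq \bangPotMult_x{u}$ holds for \emph{every} variable $x$ simultaneously; this strengthening is needed so that the induction hypothesis can later be instantiated not only at $x$ but also at the bound variable of an explicit substitution. The base case of the whole argument is already available: it is exactly \Cref{lem:Bang_Substitution_decreases_PotMult}, which gives $\bangPotMult_x{\bangLCtxt<t_1\isub{y}{s_2}>} \leq \bangPotMult_x{t_1\esub{y}{\bangLCtxt<\oc s_2>}}$ and thus covers the situation where the fired $\bangSymbSubs$-redex is the whole term. Everything else reduces to checking that $\bangPotMult_x{\cdot}$ is monotone in each subterm position of every constructor.

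First I would dispatch the easy constructors. The case $t = x$ is vacuous since a variable does not reduce. If $t$ is $\abs{y}{t'}$, $\der{t'}$ or $\oc{t'}$, the step must occur inside $t'$, say $t' \bangArr_{S_\indexOmega}<s!> u'$; since $\bangPotMult_x{\cdot}$ of each of these terms equals $\bangPotMult_x{t'}$ (and symmetrically for $u$), the \ih applied to $t'$ concludes. If $t = \app{t_1}{t_2}$ the step is inside $t_1$ or inside $t_2$; since $\bangPotMult_x{\app{t_1}{t_2}} = \bangPotMult_x{t_1} + \bangPotMult_x{t_2}$, applying the \ih to the reduced component and leaving the other one fixed gives the inequality in both subcases.

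The only subcase requiring attention is $t = t_1\esub{y}{t_2}$, where I would distinguish three situations (working up to $\alpha$-conversion so that $y$ is suitably fresh). If $t$ is itself the redex, then $t_2 = \bangLCtxt<\oc s_2>$ and $u = \bangLCtxt<t_1\isub{y}{s_2}>$, and the claim is precisely \Cref{lem:Bang_Substitution_decreases_PotMult}. If the step takes place inside $t_1$, with $t_1 \bangArr_{S_\indexOmega}<s!> u_1$ and $u = u_1\esub{y}{t_2}$, I apply the \ih to $t_1$ at \emph{both} $x$ and $y$, so that $\bangPotMult_x{t_1} \geq \bangPotMult_x{u_1}$ and $\bangPotMult_y{t_1} \geq \bangPotMult_y{u_1}$; then, using monotonicity of $\max(1,-)$ and $\bangPotMult_x{t_2} \geq 0$, the defining equation $\bangPotMult_x{t_1\esub{y}{t_2}} = \bangPotMult_x{t_1} + \max(1,\bangPotMult_y{t_1}) \cdot \bangPotMult_x{t_2}$ yields $\bangPotMult_x{t} \geq \bangPotMult_x{u_1} + \max(1,\bangPotMult_y{u_1}) \cdot \bangPotMult_x{t_2} = \bangPotMult_x{u}$. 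Finally, if the step takes place inside $t_2$, with $t_2 \bangArr_{S_\indexOmega}<s!> u_2$ and $u = t_1\esub{y}{u_2}$, the \ih on $t_2$ gives $\bangPotMult_x{t_2} \geq \bangPotMult_x{u_2}$, and multiplying by the nonnegative coefficient $\max(1,\bangPotMult_y{t_1})$ and adding $\bangPotMult_x{t_1}$ gives $\bangPotMult_x{t} \geq \bangPotMult_x{u}$.

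I do not expect a genuine obstacle here: the substantive step (the root case) is \Cref{lem:Bang_Substitution_decreases_PotMult}, and the remaining work is routine bookkeeping about how $\bangPotMult_x{\cdot}$ composes through the constructors. The one point that must be handled with a little care is the subcase where the step is inside the body of an explicit substitution: there the induction hypothesis is needed at the bound variable $y$, not just at $x$, which is why the universal quantification over variables has to be built into the statement proved by induction from the start.
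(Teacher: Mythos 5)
Your proof is correct and follows essentially the same route as the paper's: the root case is delegated to \Cref{lem:Bang_Substitution_decreases_PotMult}, and the congruence cases are handled by monotonicity of $\bangPotMult_x{\cdot}$ in each constructor, with the same care taken in the explicit-substitution case to invoke the induction hypothesis at the bound variable $y$ as well as at $x$. The only (cosmetic) difference is that you induct on the term while the paper inducts on the surface context surrounding the fired redex.
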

\stableProof{%
    \begin{proof}
Let $t, u \in \bangSetTerms$ such that $t \bangArr_{S_\indexOmega}<s!>
u$. By definition, there exist a context $\bangStratCtxt_\indexOmega$
and terms $s_t, s_u \in \bangSetTerms$ such that $t =
\bangStratCtxt_\indexOmega<s_t>$, $\bangStratCtxt_\indexOmega<s_u>$
and $t \mapstoR[\bangSymbBang] u$. By induction on
$\bangStratCtxt_\indexOmega$:
\begin{itemize}
\item[\bltI] $\bangStratCtxt_\indexOmega = \Hole$: Then $t = s_t$, $u
    = s_u$ and there exist a context $\bangLCtxt$ and terms $s_1, s_2
    \in \bangSetTerms$ such that $s_t = s_1\esub{y}{\bangLCtxt<\oc
    s_2>}$ and $s_u = \bangLCtxt<s_1\isub{y}{s_2}>$ and one concludes
    using \Cref{lem:Bang_Substitution_decreases_PotMult}.

\item[\bltI] $\bangStratCtxt_\indexOmega =
    \abs{y}{\bangStratCtxt'_\indexOmega}$: By \ih on
    $\bangStratCtxt'_\indexOmega$, one has that
    $\bangPotMult_x{\bangStratCtxt'_\indexOmega<s_t>} \geq
    \bangPotMult_x{\bangStratCtxt'_\indexOmega<s_u>}$ thus:
    \begin{equation*}
        \bangPotMult_x{t}
        = \bangPotMult_x{\bangStratCtxt_\indexOmega<s_t>}
        = \bangPotMult_x{\abs{y}{\bangStratCtxt'_\indexOmega<s_t>}}
        = \bangPotMult_x{\bangStratCtxt'_\indexOmega<s_t>}
        \geq \bangPotMult_x{\bangStratCtxt'_\indexOmega<s_u>}
        = \bangPotMult_x{\abs{y}{\bangStratCtxt'_\indexOmega<s_u>}}
        = \bangPotMult_x{\bangStratCtxt_\indexOmega<s_u>}
        = \bangPotMult_x{u}
    \end{equation*}

\item[\bltI] $\bangStratCtxt_\indexOmega =
    \app[\,]{\bangStratCtxt'_\indexOmega}{s}$: By \ih on
    $\bangStratCtxt'_\indexOmega$, one has that
    $\bangPotMult_x{\bangStratCtxt'_\indexOmega<s_t>} \geq
    \bangPotMult_x{\bangStratCtxt'_\indexOmega<s_u>}$ thus:
    \begin{equation*}
        \begin{array}{l}
            \bangPotMult_x{t}
            \;=\; \bangPotMult_x{\bangStratCtxt_\indexOmega<s_t>}
            \;=\; \bangPotMult_x{\app[\,]{\bangStratCtxt'_\indexOmega<s_t>}{s}}
            \;=\; \bangPotMult_x{\bangStratCtxt'_\indexOmega<s_t>} + \bangPotMult_x{s}
        \\
            \qquad \; \geq \; \bangPotMult_x{\bangStratCtxt'_\indexOmega<s_u>} + \bangPotMult_x{s}
            \;=\; \bangPotMult_x{\app[\,]{\bangStratCtxt'_\indexOmega<s_u>}{s}}
            \;=\; \bangPotMult_x{\bangStratCtxt_\indexOmega<s_u>}
            \;=\; \bangPotMult_x{u}
        \end{array}
    \end{equation*}

\item[\bltI] $\bangStratCtxt_\indexOmega =
    \app[\,]{s}{\bangStratCtxt'_\indexOmega}$: By \ih on
    $\bangStratCtxt'_\indexOmega$, one has that
    $\bangPotMult_x{\bangStratCtxt'_\indexOmega<s_t>} \geq
    \bangPotMult_x{\bangStratCtxt'_\indexOmega<s_u>}$ thus:
    \begin{equation*}
        \begin{array}{l}
            \bangPotMult_x{t}
            \;=\; \bangPotMult_x{\bangStratCtxt_\indexOmega<s_t>}
            \;=\; \bangPotMult_x{\app[\,]{s}{\bangStratCtxt'_\indexOmega<s_t>}}
            \;=\; \bangPotMult_x{s} + \bangPotMult_x{\bangStratCtxt'_\indexOmega<s_t>}
        \\
            \qquad \; \geq \; \bangPotMult_x{s} + \bangPotMult_x{\bangStratCtxt'_\indexOmega<s_u>}
            \;=\; \bangPotMult_x{\app[\,]{s}{\bangStratCtxt'_\indexOmega<s_u>}}
            \;=\; \bangPotMult_x{\bangStratCtxt_\indexOmega<s_u>}
            \;=\; \bangPotMult_x{u}
        \end{array}
    \end{equation*}

\item[\bltI] $\bangStratCtxt_\indexOmega =
    \bangStratCtxt'_\indexOmega\esub{y}{s}$: By \ih on
    $\bangStratCtxt'_\indexOmega$, one has that
    $\bangPotMult_x{\bangStratCtxt'_\indexOmega<s_t>} \geq
    \bangPotMult_x{\bangStratCtxt'_\indexOmega<s_u>}$ and
    $\bangPotMult_y{\bangStratCtxt'_\indexOmega<s_t>} \geq
    \bangPotMult_y{\bangStratCtxt'_\indexOmega<s_u>}$ thus:
    \begin{equation*}
        \begin{array}{l}
            \bangPotMult_x{t}
            \;=\; \bangPotMult_x{\bangStratCtxt<s_t>}
            \;=\; \bangPotMult_x{\bangStratCtxt'<s_t>\esub{y}{s}}
            \;=\; \bangPotMult_x{\bangStratCtxt'<s_t>} + \max(1, \bangPotMult_y{\bangStratCtxt'<s_t>}) \cdot \bangPotMult_x{s}
        \\
            \qquad \;\geq\; \bangPotMult_x{\bangStratCtxt'<s_u>} + \max(1, \bangPotMult_y{\bangStratCtxt'<s_u>}) \cdot \bangPotMult_x{s}
            \;=\; \bangPotMult_x{\bangStratCtxt'<s_u>\esub{y}{s}}
            \;=\; \bangPotMult_x{\bangStratCtxt<s_u>}
            \;=\; \bangPotMult_x{u}
        \end{array}
    \end{equation*}

\item[\bltI] $\bangStratCtxt_\indexOmega =
    s\esub{y}{\bangStratCtxt'_\indexOmega}$: By \ih on
    $\bangStratCtxt'_\indexOmega$, one has that
    $\bangPotMult_x{\bangStratCtxt'_\indexOmega<s_t>} \geq
    \bangPotMult_x{\bangStratCtxt'_\indexOmega<s_u>}$ thus:
    \begin{equation*}
        \begin{array}{l}
            \bangPotMult_x{t}
            \;=\; \bangPotMult_x{\bangStratCtxt<s_t>}
            \;=\; \bangPotMult_x{s\esub{y}{\bangStratCtxt'<s_t>}}
            \;=\; \bangPotMult_x{s} + \max(1, \bangPotMult_y{s}) \cdot \bangPotMult_x{\bangStratCtxt'<s_t>}
        \\
            \qquad \;\geq\; \bangPotMult_x{s} + \max(1, \bangPotMult_y{s}) \cdot \bangPotMult_x{\bangStratCtxt'<s_u>}
            \;=\; \bangPotMult_x{s\esub{y}{\bangStratCtxt'<s_u>}}
            \;=\; \bangPotMult_x{\bangStratCtxt<s_u>}
            \;=\; \bangPotMult_x{u}
        \end{array}
    \end{equation*}

\item[\bltI] $\bangStratCtxt_\indexOmega =
    \der{\bangStratCtxt'_\indexOmega}$: By \ih on
    $\bangStratCtxt'_\indexOmega$, one has that
    $\bangPotMult_x{\bangStratCtxt'_\indexOmega<s_t>} \geq
    \bangPotMult_x{\bangStratCtxt'_\indexOmega<s_u>}$ thus:
    \begin{equation*}
        \bangPotMult_x{t}
        = \bangPotMult_x{\bangStratCtxt_\indexOmega<s_t>}
        = \bangPotMult_x{\der{\bangStratCtxt'_\indexOmega<s_t>}}
        = \bangPotMult_x{\bangStratCtxt'_\indexOmega<s_t>}
        \geq \bangPotMult_x{\bangStratCtxt'_\indexOmega<s_u>}
        = \bangPotMult_x{\der{\bangStratCtxt'_\indexOmega<s_u>}}
        = \bangPotMult_x{\bangStratCtxt_\indexOmega<s_u>}
        = \bangPotMult_x{u}
    \end{equation*}

\item[\bltI] $\bangStratCtxt_\indexOmega =
    \oc\bangStratCtxt'_\indexOmega$: By \ih on
    $\bangStratCtxt'_\indexOmega$, one has that
    $\bangPotMult_x{\bangStratCtxt'_\indexOmega<s_t>} \geq
    \bangPotMult_x{\bangStratCtxt'_\indexOmega<s_u>}$ thus:
    \begin{equation*}
        \bangPotMult_x{t}
        = \bangPotMult_x{\bangStratCtxt_\indexOmega<s_t>}
        = \bangPotMult_x{\oc \bangStratCtxt'_\indexOmega<s_t>}
        = \bangPotMult_x{\bangStratCtxt'_\indexOmega<s_t>}
        \geq \bangPotMult_x{\bangStratCtxt'_\indexOmega<s_u>}
        = \bangPotMult_x{\oc \bangStratCtxt'_\indexOmega<s_u>}
        = \bangPotMult_x{\bangStratCtxt_\indexOmega<s_u>}
        = \bangPotMult_x{u}
    \end{equation*}
\end{itemize}
\end{proof}
} \deliaLu \giulioLu

\begin{lemma}
    \label{lem:Bang_Full_MultiSize_decreases_Substitution}%
    Let $t \bangArr_{S_\indexOmega}<s!> u$, then $\multiSize{t} \succ
    \multiSize{u}$.
\end{lemma}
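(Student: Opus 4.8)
The plan is to prove $\multiSize{t}\succ\multiSize{u}$ by induction on the full context $\bangStratCtxt_\indexOmega$ witnessing the step, exactly as in the proof of \Cref{lem:Bang_Full_PotMulti_decreases_Substitution}: write $t = \bangStratCtxt_\indexOmega<s_t>$ and $u = \bangStratCtxt_\indexOmega<s_u>$ with $s_t \mapstoR[\bangSymbSubs] s_u$. The base case $\bangStratCtxt_\indexOmega = \Hole$ is immediate: then $t = s_1\esub{x}{\bangLCtxt<\oc s_2>}$ and $u = \bangLCtxt<s_1\isub{x}{s_2}>$ for some list context $\bangLCtxt$ and terms $s_1, s_2$, and $\multiSize{t}\succ\multiSize{u}$ is precisely \Cref{lem:Bang_MultiSize_Decreases_Substitution}.

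For the inductive step I would dispatch the ``size-transparent'' constructors $\abs{y}{\cdot}$, $\der{\cdot}$ and $\oc\cdot$ together: in each case $\multiSize{\cdot}$ of the whole term coincides with $\multiSize{\cdot}$ of the body, so the claim follows directly from the induction hypothesis on the body. For $t = \app{t_1}{t_2}$, the step lies in $t_1$ or in $t_2$; since $\multiSize{\app{t_1}{t_2}} = \multiSize{t_1}\uplus\multiSize{t_2}$ and the strict multiset order is compatible with $\uplus$ (if $A\succ B$ then $A\uplus C\succ B\uplus C$), the induction hypothesis on the reduced component finishes the case. The variable case $t = x$ is vacuous, as $x$ contains no $\bangSymbSubs$-redex.

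The genuinely delicate case is $t = t_1\esub{y}{t_2}$ with the step in a proper subterm, because $\multiSize{t_1\esub{y}{t_2}} = \mset{\bangPotMult_y{t_1}}\uplus\multiSize{t_1}\uplus\max(1,\bangPotMult_y{t_1})\cdot\multiSize{t_2}$ depends on $\bangPotMult_y{t_1}$ both as a singleton element and as a scaling factor for $\multiSize{t_2}$. If the step is in $t_2$, then $t_1$, hence $\bangPotMult_y{t_1}$, is unchanged, and using that $\max(1,\bangPotMult_y{t_1})\ge 1$ together with the fact that scaling by a positive natural preserves the strict multiset order ($M\succ N$ and $n\ge 1$ imply $n\cdot M\succ n\cdot N$), the induction hypothesis $\multiSize{t_2}\succ\multiSize{u_2}$ gives the conclusion. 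If the step is in $t_1$, producing $u_1$, I would invoke \Cref{lem:Bang_Full_PotMulti_decreases_Substitution} to get $\bangPotMult_y{t_1}\ge\bangPotMult_y{u_1}$, whence $\mset{\bangPotMult_y{t_1}}\succeq\mset{\bangPotMult_y{u_1}}$ and $\max(1,\bangPotMult_y{t_1})\cdot\multiSize{t_2}\succeq\max(1,\bangPotMult_y{u_1})\cdot\multiSize{t_2}$; combining these non-strict decreases with the strict decrease $\multiSize{t_1}\succ\multiSize{u_1}$ from the induction hypothesis, and again using compatibility of $\succ$ with $\uplus$ (a strict summand dominates non-strict ones), we obtain $\multiSize{t}\succ\multiSize{u}$. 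This last subcase is the main obstacle: one must route the monotonicity of potential multiplicities (precisely why \Cref{lem:Bang_Full_PotMulti_decreases_Substitution} was established beforehand) into the multiset bookkeeping so that the varying scaling factor $\max(1,\bangPotMult_y{t_1})$ can only help, never hurt, the strict inequality.
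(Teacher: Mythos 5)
Your proposal is correct and follows essentially the same route as the paper: induction on the surrounding context, with the base case discharged by \Cref{lem:Bang_MultiSize_Decreases_Substitution}, the transparent and application cases by compatibility of $\succ$ with $\uplus$, and the delicate explicit-substitution case by feeding the monotonicity of potential multiplicities (\Cref{lem:Bang_Full_PotMulti_decreases_Substitution}) into both the singleton $\mset{\bangPotMult_y{\cdot}}$ and the scaling factor $\max(1,\bangPotMult_y{\cdot})$. If anything, your bookkeeping of the singleton component in that last case is slightly more explicit than the paper's.
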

\stableProof{%
    \begin{proof}
Let $t, u \in \bangSetTerms$ such that $t \bangArr_{S_\indexOmega}<s!>
u$. By definition, there exist a context $\bangStratCtxt_\indexOmega$
and terms $s_t, s_u \in \bangSetTerms$ such that $t =
\bangStratCtxt_\indexOmega<s_t>$, $\bangStratCtxt_\indexOmega<s_u>$
and $t \mapstoR[\bangSymbBang] u$. By induction on
$\bangStratCtxt_\indexOmega$:
\begin{itemize}
\item[\bltI] $\bangStratCtxt_\indexOmega = \Hole$: Then $t = s_t$, $u
    = s_u$ and there exist a context $\bangLCtxt$ and terms $s_1, s_2
    \in \bangSetTerms$ such that $s_t = s_1\esub{x}{\bangLCtxt<\oc
    s_2>}$ and $s_u = \bangLCtxt<s_1\isub{x}{s_2}>$ and one concludes
    using \Cref{lem:Bang_MultiSize_Decreases_Substitution}.

\item[\bltI] $\bangStratCtxt_\indexOmega =
    \abs{y}{\bangStratCtxt'_\indexOmega}$: By \ih on
    $\bangStratCtxt'_\indexOmega$, one has that
    $\multiSize{\bangStratCtxt'_\indexOmega<s_t>} \succ
    \multiSize{\bangStratCtxt'_\indexOmega<s_u>}$ thus:
    \begin{equation*}
        \multiSize{t}
        = \multiSize{\bangStratCtxt_\indexOmega<s_t>}
        = \multiSize{\abs{y}{\bangStratCtxt'_\indexOmega<s_t>}}
        = \multiSize{\bangStratCtxt'_\indexOmega<s_t>}
        \succ \multiSize{\bangStratCtxt'_\indexOmega<s_u>}
        = \multiSize{\abs{y}{\bangStratCtxt'_\indexOmega<s_u>}}
        = \multiSize{\bangStratCtxt_\indexOmega<s_u>}
        = \multiSize{u}
    \end{equation*}

\item[\bltI] $\bangStratCtxt_\indexOmega =
    \app[\,]{\bangStratCtxt'_\indexOmega}{s}$: By \ih on
    $\bangStratCtxt'_\indexOmega$, one has that
    $\multiSize{\bangStratCtxt'_\indexOmega<s_t>} \succ
    \multiSize{\bangStratCtxt'_\indexOmega<s_u>}$ thus:
    \begin{equation*}
        \begin{array}{l}
            \multiSize{t}
            \;=\; \multiSize{\bangStratCtxt_\indexOmega<s_t>}
            \;=\; \multiSize{\app[\,]{\bangStratCtxt'_\indexOmega<s_t>}{s}}
            \;=\; \multiSize{\bangStratCtxt'_\indexOmega<s_t>} \uplus \multiSize{s}
        \\
            \qquad \; \succ \; \multiSize{\bangStratCtxt'_\indexOmega<s_u>} \uplus \multiSize{s}
            \;=\; \multiSize{\app[\,]{\bangStratCtxt'_\indexOmega<s_u>}{s}}
            \;=\; \multiSize{\bangStratCtxt_\indexOmega<s_u>}
            \;=\; \multiSize{u}
        \end{array}
    \end{equation*}

\item[\bltI] $\bangStratCtxt_\indexOmega =
    \app[\,]{s}{\bangStratCtxt'_\indexOmega}$: By \ih on
    $\bangStratCtxt'_\indexOmega$, one has that
    $\multiSize{\bangStratCtxt'_\indexOmega<s_t>} \succ
    \multiSize{\bangStratCtxt'_\indexOmega<s_u>}$ thus:
    \begin{equation*}
        \begin{array}{l}
            \multiSize{t}
            \;=\; \multiSize{\bangStratCtxt_\indexOmega<s_t>}
            \;=\; \multiSize{\app[\,]{s}{\bangStratCtxt'_\indexOmega<s_t>}}
            \;=\; \multiSize{s} \uplus \multiSize{\bangStratCtxt'_\indexOmega<s_t>}
        \\
            \qquad \; \succ \; \multiSize{s} \uplus \multiSize{\bangStratCtxt'_\indexOmega<s_u>}
            \;=\; \multiSize{\app[\,]{s}{\bangStratCtxt'_\indexOmega<s_u>}}
            \;=\; \multiSize{\bangStratCtxt_\indexOmega<s_u>}
            \;=\; \multiSize{u}
        \end{array}
    \end{equation*}

\item[\bltI] $\bangStratCtxt_\indexOmega =
    \bangStratCtxt'_\indexOmega\esub{y}{s}$: By \ih on
    $\bangStratCtxt'_\indexOmega$, one has that
    $\multiSize{\bangStratCtxt'_\indexOmega<s_t>} \succ
    \multiSize{\bangStratCtxt'_\indexOmega<s_u>}$. Moreover, by
    \Cref{lem:Bang_Full_PotMulti_decreases_Substitution}, one deduces
    that $\bangPotMult_y{\bangStratCtxt'_\indexOmega<s_t>} \geq
    \bangPotMult_y{\bangStratCtxt'_\indexOmega<s_u>}$, thus:
    \begin{equation*}
        \begin{array}{l}
            \multiSize{t}
            \;=\; \multiSize{\bangStratCtxt_\indexOmega<s_t>}
            \;=\; \multiSize{\bangStratCtxt'_\indexOmega<s_t>\esub{y}{s}}
            \;=\; \multiSize{\bangStratCtxt'_\indexOmega<s_t>} \uplus \max(1, \bangPotMult_y{\bangStratCtxt'_\indexOmega<s_t>}) \multisetScProd \multiSize{s}
        \\
            \qquad \; \succ \; \multiSize{\bangStratCtxt'_\indexOmega<s_u>} \uplus \max(1, \bangPotMult_y{\bangStratCtxt'_\indexOmega<s_u>}) \multisetScProd \multiSize{s}
            \;=\; \multiSize{\bangStratCtxt'_\indexOmega<s_u>\esub{y}{s}}
            \;=\; \multiSize{\bangStratCtxt_\indexOmega<s_u>}
            \;=\; \multiSize{u}
        \end{array}
    \end{equation*}

\item[\bltI] $\bangStratCtxt_\indexOmega =
    s\esub{y}{\bangStratCtxt'_\indexOmega}$: By \ih on
    $\bangStratCtxt'_\indexOmega$, one has that
    $\multiSize{\bangStratCtxt'_\indexOmega<s_t>} \succ
    \multiSize{\bangStratCtxt'_\indexOmega<s_u>}$ thus:
    \begin{equation*}
        \begin{array}{l}
            \multiSize{t}
            \;=\; \multiSize{\bangStratCtxt_\indexOmega<s_t>}
            \;=\; \multiSize{s\esub{y}{\bangStratCtxt'_\indexOmega<s_t>}}
            \;=\; \multiSize{s} \uplus \max(1, \bangPotMult_y{s}) \multisetScProd\multiSize{\bangStratCtxt'_\indexOmega<s_t>}
        \\
            \qquad \; \succ \; \multiSize{s} \uplus \max(1, \bangPotMult_y{s}) \multisetScProd\multiSize{\bangStratCtxt'_\indexOmega<s_u>}
            \;=\; \multiSize{s\esub{y}{\bangStratCtxt'_\indexOmega<s_u>}}
            \;=\; \multiSize{\bangStratCtxt_\indexOmega<s_u>}
            \;=\; \multiSize{u}
        \end{array}
    \end{equation*}

\item[\bltI] $\bangStratCtxt_\indexOmega =
    \der{\bangStratCtxt'_\indexOmega}$: By \ih on
    $\bangStratCtxt'_\indexOmega$, one has that
    $\multiSize{\bangStratCtxt'_\indexOmega<s_t>} \succ
    \multiSize{\bangStratCtxt'_\indexOmega<s_u>}$ thus:
    \begin{equation*}
        \multiSize{t}
        = \multiSize{\bangStratCtxt_\indexOmega<s_t>}
        = \multiSize{\der{\bangStratCtxt'_\indexOmega<s_t>}}
        = \multiSize{\bangStratCtxt'_\indexOmega<s_t>}
        \succ \multiSize{\bangStratCtxt'_\indexOmega<s_u>}
        = \multiSize{\der{\bangStratCtxt'_\indexOmega<s_u>}}
        = \multiSize{\bangStratCtxt_\indexOmega<s_u>}
        = \multiSize{u}
    \end{equation*}

\item[\bltI] $\bangStratCtxt_\indexOmega =
    \oc\bangStratCtxt'_\indexOmega$: By \ih on
    $\bangStratCtxt'_\indexOmega$, one has that
    $\multiSize{\bangStratCtxt'_\indexOmega<s_t>} \succ
    \multiSize{\bangStratCtxt'_\indexOmega<s_u>}$ thus:
    \begin{equation*}
        \multiSize{t}
        = \multiSize{\bangStratCtxt_\indexOmega<s_t>}
        = \multiSize{\oc \bangStratCtxt'_\indexOmega<s_t>}
        = \multiSize{\bangStratCtxt'_\indexOmega<s_t>}
        \succ \multiSize{\bangStratCtxt'_\indexOmega<s_u>}
        = \multiSize{\oc \bangStratCtxt'_\indexOmega<s_u>}
        = \multiSize{\bangStratCtxt_\indexOmega<s_u>}
        = \multiSize{u}
    \end{equation*}
\end{itemize}
\end{proof}%
} \deliaLu \giulioLu

\begin{corollary}
    \label{lem:Bang_Full_s!_Terminating}%
    The reduction $\bangArr_{S_\indexOmega}<s!>$ is terminating.
\end{corollary}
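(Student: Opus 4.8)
The plan is to derive termination of $\bangArr_{S_\indexOmega}<s!>$ directly from the decreasing-measure lemma just established, via the standard argument that a strictly decreasing sequence in a well-founded order cannot be infinite. First I would recall the ambient well-foundedness fact: the strict order $\prec$ on finite multisets over $\Nat$ (the Dershowitz--Manna multiset order) is well-founded, since $<$ on $\Nat$ is. This is textbook and I would simply cite it rather than reprove it; it is the only ingredient external to the preceding development.

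The key step is then to invoke \Cref{lem:Bang_Full_MultiSize_decreases_Substitution}: for all $t, u \in \bangSetTerms$, if $t \bangArr_{S_\indexOmega}<s!> u$ then $\multiSize{t} \succ \multiSize{u}$. Combining this with well-foundedness, I would argue by contradiction. Suppose there is an infinite reduction sequence $t_0 \bangArr_{S_\indexOmega}<s!> t_1 \bangArr_{S_\indexOmega}<s!> t_2 \bangArr_{S_\indexOmega}<s!> \cdots$. Applying \Cref{lem:Bang_Full_MultiSize_decreases_Substitution} at each step yields an infinite strictly decreasing chain $\multiSize{t_0} \succ \multiSize{t_1} \succ \multiSize{t_2} \succ \cdots$ of finite multisets over $\Nat$, contradicting the well-foundedness of $\prec$. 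Hence $\bangArr_{S_\indexOmega}<s!>$ admits no infinite sequence, i.e.\ it is terminating.

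I do not expect any genuine obstacle at this point: the corollary is immediate once \Cref{lem:Bang_Full_MultiSize_decreases_Substitution} is in hand, and all the substantive work has already been done --- the definitions of the potential multiplicity $\bangPotMult_x{\cdot}$ and the multiset measure $\multiSize{\cdot}$, the substitution-compatibility lemma \Cref{lem:Bang_Substitution_compat_PotMult}, the ``substitution does not increase potential multiplicity'' estimate \Cref{lem:Bang_Substitution_decreases_PotMult}, the base-case strict decrease \Cref{lem:Bang_MultiSize_Decreases_Substitution}, and its propagation through contexts in \Cref{lem:Bang_Full_PotMulti_decreases_Substitution} and \Cref{lem:Bang_Full_MultiSize_decreases_Substitution}. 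If one wanted to point to the delicate part of the overall argument, it is \Cref{lem:Bang_MultiSize_Decreases_Substitution}, whose closure case $t = t_1\esub{y}{t_2}$ requires carefully tracking how the multiplicative factor $\max(1,\bangPotMult_y{t_1})$ interacts with the inner substitution and relies on \Cref{lem:Bang_Substitution_compat_PotMult} to keep the potential multiplicities of bound variables synchronized; but for the present corollary this is a black box.
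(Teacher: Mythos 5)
Your proposal is correct and matches the paper's argument: both rest entirely on \Cref{lem:Bang_Full_MultiSize_decreases_Substitution} together with the well-foundedness of the multiset order $\prec$ on finite multisets over $\Nat$. The only (inessential) difference is presentational — the paper phrases it as a well-founded induction on $\multiSize{t}$, showing every one-step reduct is strongly normalizing, whereas you phrase it as a contradiction with an infinite strictly descending $\prec$-chain; these are the same argument.
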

    \begin{proof}
Let $t \in \bangSetTerms$. As the finite multisets over natural
numbers is well-ordered by $\prec$, we can proceed by induction on
$\multiSize{t}$. If $t$ is a $\bangSymbSubs$-normal form, then it is
trivially terminating. Otherwise, $t \bangArr_{S_\indexOmega}<s!> u$,
then $\multiSize{t} \succ \multiSize{u}$
(\Cref{lem:Bang_Full_MultiSize_decreases_Substitution}), and so $u$ is
strongly $\bangSymbSubs$-normalizing, by the \ih\ Therefore, $t$ is
strongly $\bangSymbSubs$-normalizing.
\end{proof}
%

\subsubsection{Confluence of $\bangArr_{S_\indexOmega}<s!>$}

\begin{corollary}
    \label{lem:Bang_Full_s!_Confluent}%
    The reduction $\bangArr_{S_\indexOmega}<s!>$ is confluent.
\end{corollary}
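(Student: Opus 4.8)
The plan is to invoke Newman's Lemma, which states that a strongly normalizing abstract rewriting system that is locally confluent is confluent. All the genuine work has already been carried out: local confluence of $\bangArr_{S_\indexOmega}<s!>$ is established in \Cref{lem:Bang_Full_s!_Locally_Confluent} (by induction on the source term, handling the overlaps between the substitution rule and itself, using the context-lifting \Cref{lem:Bang_Full_s!_Context_Reduction_Lifted_to_Terms}), and strong normalization is \Cref{lem:Bang_Full_s!_Terminating} (proved via the multiset measure $\multiSize{\cdot}$, which strictly decreases along every $\bangArr_{S_\indexOmega}<s!>$-step by \Cref{lem:Bang_Full_MultiSize_decreases_Substitution}).

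Concretely, I would argue as follows. Let $t \in \bangSetTerms$. By \Cref{lem:Bang_Full_s!_Terminating}, $\bangArr_{S_\indexOmega}<s!>$ is well-founded on the set of terms reachable from $t$, so we may reason by well-founded induction on that order. The base case, where $t$ is $\bangSymbSubs$-normal, is trivial since any diverging pair of reductions is empty. For the inductive step, suppose $t \bangArr*_{S_\indexOmega}<s!> u_1$ and $t \bangArr*_{S_\indexOmega}<s!> u_2$; unfolding one step on each side, apply \Cref{lem:Bang_Full_s!_Locally_Confluent} to the two immediate reducts of $t$ to obtain a common reduct, then close the remaining tiles by applying the induction hypothesis to the (strictly smaller) immediate reducts of $t$. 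This is exactly the standard diagram chase in the proof of Newman's Lemma, so it goes through without incident.

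I do not expect any real obstacle here: the statement is a bookkeeping corollary, and the only thing to be careful about is that the well-founded induction is on the termination order supplied by \Cref{lem:Bang_Full_s!_Terminating} rather than on term structure, since a single $\bangArr_{S_\indexOmega}<s!>$-step can enlarge a term (substitution duplicates subterms). Since that termination result is already in hand, the argument is immediate, and the proof can be stated in a couple of lines citing Newman's Lemma together with \Cref{lem:Bang_Full_s!_Locally_Confluent,lem:Bang_Full_s!_Terminating}.
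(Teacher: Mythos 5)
Your proposal is correct and matches the paper's own proof exactly: the paper also derives this corollary from \Cref{lem:Bang_Full_s!_Locally_Confluent} (local confluence) and \Cref{lem:Bang_Full_s!_Terminating} (termination) via Newman's Lemma. Your additional remark that the induction must be on the termination order rather than on term structure is the right point of care, but nothing further is needed.
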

\begin{proof} \giulioLu \deliaLu
    By \Cref{lem:Bang_Full_s!_Locally_Confluent,lem:Bang_Full_s!_Locally_Confluent,lem:Bang_Full_s!_Terminating}, using Newman's Lemma.
\end{proof}

\subsection{Confluence of $\bangArr_{S_\indexOmega}$}

\begin{lemma}
    \label{lem:Bang_Full_dB_U_d!_and_s!_strongly_commute}%
    The reductions $\bangArr_{S_\indexOmega}<dB> \cup
    \bangArr_{S_\indexOmega}<d!>$ and $\bangArr_{S_\indexOmega}<s!>$
    strongly commute.
\end{lemma}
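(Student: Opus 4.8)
The plan is to prove strong commutation by the usual local--diagram argument. Consider a peak in which $t \bangArr_{S_\indexOmega}<dB> u_1$ or $t \bangArr_{S_\indexOmega}<d!> u_1$ by firing a $\bangSymbBeta$- or $\bangSymbBang$-redex $\rho_1$, and $t \bangArr_{S_\indexOmega}<s!> u_2$ by firing a $\bangSymbSubs$-redex $\rho_2$. First I would record the key preliminary observation that there are \emph{no genuine critical pairs}: the root constructor of a $\bangSymbBeta$-, a $\bangSymbBang$-, and a $\bangSymbSubs$-redex is, respectively, an application, a dereliction, and an explicit substitution, so two such redexes can never overlap at the root. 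Hence, up to the trivial case $\rho_1=\rho_2$ (which cannot occur here since the rules are distinct), the analysis reduces to three cases: (i) $\rho_1$ and $\rho_2$ are disjoint; (ii) $\rho_1$ (strictly) contains $\rho_2$; (iii) $\rho_2$ (strictly) contains $\rho_1$. Because of action at a distance, "containment" must be read liberally: e.g. in case (ii) the top explicit substitution of $\rho_2$ may be one of the explicit substitutions forming the list context $\bangLCtxt$ of $\rho_1$, and symmetrically in case (iii). Each such sub-case is handled by pushing the relevant meta-level substitution through the intervening list context (using $\alpha$-conversion and a routine substitution-through-list-context lemma) and then closing the square.

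In case (i) the two steps plainly commute, each side of the diagram closing in exactly one step. In case (ii), firing $\rho_1$ neither erases nor duplicates $\rho_2$ (the $\bangSymbBeta$- and $\bangSymbBang$-rules are linear: they create at most one new explicit substitution and preserve all pre-existing ones), so $\rho_2$ survives in $u_1$ as a $\bangSymbSubs$-redex; firing it produces a common reduct that $u_2$ reaches by a single $\bangSymbBeta/\bangSymbBang$-step, the relocation of $\rho_2$ caused by the distance mechanism being absorbed by \Cref{lem:Bang_Full_s!_Context_Reduction_Lifted_to_Terms} together with closure of $\bangArr_{S_\indexOmega}<s!>$ under list contexts. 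Case (iii) splits according to where $\rho_1$ sits in the redex pattern $t'\esub{x}{\bangLCtxt<\oc u>}$ of $\rho_2$: if $\rho_1$ is in $t'$, inside $\bangLCtxt$, or acts on an explicit substitution of $\bangLCtxt$, the square again closes in one step on each side, using \Cref{lem:Bang_Full_dB_d!_Context_Reduction_Lifted_to_Terms} and the fact that $\isub{\cdot}{\cdot}$ maps $\bangSymbBeta/\bangSymbBang$-redexes to redexes of the same kind. The only delicate sub-case is when $\rho_1$ lies \emph{inside $u$}: firing $\rho_2$ first copies $u$ once per free occurrence of $x$ in $t'$, so $u_2 = \bangLCtxt<t'\isub{x}{u}>$ reaches the common reduct $\bangLCtxt<t'\isub{x}{u'}>$ (where $u \bangArr_{S_\indexOmega}<dB\,\text{or}\,d!> u'$) only by a multistep reduction $\bangArr*_{S_\indexOmega}$ reducing the independent copies one after another, whereas $u_1 = t'\esub{x}{\bangLCtxt<\oc u'>}$ reaches it in a single $\bangSymbSubs$-step. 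Crucially, a symmetric inspection shows one never needs more than \emph{one} $\bangSymbSubs$-step on the other side; this is exactly why strong commutation here takes the asymmetric shape "at most one $\bangArr_{S_\indexOmega}<s!>$-step against a possibly-long $\bangArr*_{S_\indexOmega}$-reduction", which is enough to feed Hindley--Rosen and, with \Cref{lem:Bang_Full_dB_U_d!_Confluent,lem:Bang_Full_s!_Confluent}, to obtain confluence of $\bangArr_{S_\indexOmega}$.

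The only facts I would need beyond the statements already proved are: $\isub{\cdot}{\cdot}$ preserves $\bangSymbBeta/\bangSymbBang$-steps and distributes over list contexts (up to $\alpha$-conversion); if $u \bangArr_{S_\indexOmega}<dB\,\text{or}\,d!> u'$ then $t'\isub{x}{u} \bangArr*_{S_\indexOmega} t'\isub{x}{u'}$ in as many steps as $x$ has free occurrences in $t'$; and $\bangArr_{S_\indexOmega}<dB>$, $\bangArr_{S_\indexOmega}<d!>$, $\bangArr_{S_\indexOmega}<s!>$ are closed under list contexts (a list context being in particular a $\bangStratCtxt_\indexOmega$-context). I expect the main obstacle to be not conceptual but combinatorial: the distance rules make cases (ii) and (iii) branch into several sub-cases according to how the list contexts of $\rho_1$ and $\rho_2$ interleave, and in each such sub-case one must carefully push a meta-level substitution through a list context while tracking which bound variables get $\alpha$-renamed and which free variables cannot occur where --- in particular the fact that the argument of a $\bangSymbBeta$-redex lies outside the list context preceding its abstraction, hence is untouched by substitutions originating in that list context. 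Getting these binding side conditions right in every sub-case is the crux of the formal proof.
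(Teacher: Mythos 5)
Your proposal is correct and follows essentially the same route as the paper's proof: a local diagram chase over the relative positions of the two redexes (which the paper formalizes as an induction on the term), closing with a single $\bangSymbSubs$-step on one side --- since $\bangSymbBeta$/$\bangSymbBang$-steps neither erase nor duplicate the substitution redex --- and a possibly empty multistep $\bangSymbBeta$/$\bangSymbBang$-reduction on the other, the only delicate case being, exactly as you identify, the duplication or erasure of the $\bangSymbBeta$/$\bangSymbBang$-redex when it sits inside the substituted argument. The auxiliary facts you list (lifting of context reductions to terms, stability of redexes under meta-level substitution, and the capture-freeness side conditions governing action at a distance) are precisely the ones the paper's case analysis invokes.
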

\stableProof{
    \begin{proof}
  Let $t, u_1, u_2 \in \bangSetTerms$ such that $t
\bangArr_{S_\indexOmega}<R> u_1$ and $t \bangArr_{S_\indexOmega}<s!>
u_2$ for some $\rel \in \{\bangSymbBeta, \bangSymbBang\}$. Let us show
that there exists $s \in \bangSetTerms$ such that $u_1
\bangArr_{S_\indexOmega}<s!> s$ and $u_2 \bangArr*_{S_\indexOmega}<R>
s$.

By induction on $t$:
\begin{itemize}
\item[\bltI] $t = x$: Impossible since it contradicts the hypothesis
    $t \bangArr_{S_\indexOmega}<s!> u_2$.

\item[\bltI] $t = \abs{x}{t'}$: Since by hypothesis $t
    \bangArr_{S_\indexOmega}<R> u_1$ and $t
    \bangArr_{S_\indexOmega}<s!> u_2$, then necessarily $u_1 =
    \abs{x}{u'_1}$ and $u_2 = \abs{x}{u'_2}$ for some $u'_1, u'_2 \in
    \bangSetTerms$ such that $t' \bangArr_{S_\indexOmega}<R> u'_1$ and
    $t' \bangArr_{S_\indexOmega}<s!> u'_2$. By \ih on $t'$, there
    exists $s' \in \bangSetTerms$ such that $u'_1
    \bangArr_{S_\indexOmega}<s!> s'$ and $u'_2
    \bangArr*_{S_\indexOmega}<R> s'$. We set $s := \abs{x}{s'}$ since
    by contextual closure $u_1 = \abs{x}{u'_1}
    \bangArr_{S_\indexOmega}<s!> \abs{x}{s'} = s'$ and $u_2 =
    \abs{x}{u'_2} \bangArr*_{S_\indexOmega}<R> \abs{x}{s'} = s'$.
    Graphically,
    \begin{equation*}
        \begin{array}{lll}
            \abs{x}{t'}                         &\bangArr_{S_\indexOmega}<R>    &  \abs{x}{u'_1}
        \\[0.2cm]
           \;\;\bangDownArr_{S_\indexOmega}<s!> &                               &\;\;\bangDownArr_{S_\indexOmega}<s!>
        \\[0.2cm]
           \abs{x}{u'_2}                        &\bangArr*_{S_\indexOmega}<R>   & \abs{x}{s'}
       \end{array}
    \end{equation*}
\item[\bltI] $t = \app{t_1}{t_2}$: We distinguish two cases:
    \begin{itemize}
    \item[\bltII] $t$ is the $\bangSymbBeta$-redex reduced in the step
        $t \bangArr_{S_\indexOmega}<R> u_1$: Then $t_1 =
        \bangLCtxt<\abs{x}{t'_1}>$ and $u_1 =
        \bangLCtxt<t'_1\esub{x}{t_2}>$. By hypothesis
        $\app{\bangLCtxt<\abs{x}{t'_1}>}{t_2}
        \bangArr_{S_\indexOmega}<s!> u_2$ so that four different cases
        can be distinguished:
        \begin{itemize}
        \item[\bltIII] $u_2 = \app{\bangLCtxt<\abs{x}{u'_1}>}{t_2}$
            for some $u'_1 \in \bangSetTerms$ such that $t'_1
            \bangArr_{S_\indexOmega}<s!> u'_1$: We set $s :=
            \bangLCtxt<u'_1\esub{x}{t_2}>$ which concludes this case
            since by contextual closure $u_1 =
            \bangLCtxt<t'_1\esub{x}{t_2}> \bangArr_{S_\indexOmega}<s!>
            \bangLCtxt<u'_1\esub{x}{t_2}> = s$ and $u_2 =
            \app{\bangLCtxt<\abs{x}{u'_1}>}{t_2}
            \bangArr_{S_\indexOmega}<dB> \bangLCtxt<u'_1\esub{x}{t_2}>
            = s$. Graphically,
            \begin{equation*}
                \begin{array}{lll}
                    \app{\bangLCtxt<\abs{x}{t'_1}>}{t_2}        &\bangArr_{S_\indexOmega}<dB>   &\bangLCtxt<t'_1\esub{x}{t_2}>
                \\[0.2cm]
                    \;\;\bangDownArr_{S_\indexOmega}<s!>        &                               &\;\;\bangDownArr_{S_\indexOmega}<s!>
                \\[0.2cm]
                    \app{\bangLCtxt<\abs{x}{u'_1}>}{t_2}        &\bangArr_{S_\indexOmega}<dB>   &\bangLCtxt<u'_1\esub{x}{t_2}>
                \end{array}
            \end{equation*}

        \item[\bltIII] $u_2 = \app{\bangLCtxt<\abs{x}{t'_1}>}{u'_2}$
            for some $u'_2 \in \bangSetTerms$ such that $t_2
            \bangArr_{S_\indexOmega}<s!> u'_2$: We set $s :=
            \bangLCtxt<t'_1\esub{x}{u'_2}>$ which concludes this case
            since by contextual closure $u_1 =
            \bangLCtxt<t'_1\esub{x}{t_2}> \bangArr_{S_\indexOmega}<s!>
            \bangLCtxt<t'_1\esub{x}{u'_2}> = s$ and $u_2 =
            \app{\bangLCtxt<\abs{x}{t'_1}>}{u'_2}
            \bangArr_{S_\indexOmega}<dB>
            \bangLCtxt<t'_1\esub{x}{u'_2}> = s$. Graphically,
            \begin{equation*}
                \begin{array}{lll}
                    \app{\bangLCtxt<\abs{x}{t'_1}>}{t_2}        &\bangArr_{S_\indexOmega}<dB>   &  \bangLCtxt<t'_1\esub{x}{t_2}>
                \\[0.2cm]
                    \;\;\bangDownArr_{S_\indexOmega}<s!>        &                               &\;\;\bangDownArr_{S_\indexOmega}<s!>
                \\[0.2cm]
                        \app{\bangLCtxt<\abs{x}{t'_1}>}{u'_2}   &\bangArr_{S_\indexOmega}<dB>  & \bangLCtxt<t'_1\esub{x}{u'_2}>
                \end{array}
            \end{equation*}

        \item[\bltIII] $u_2 = \app{\bangLCtxt'<\abs{x}{t'_1}>}{t_2}$
            for some context $\bangLCtxt'$ such that $\bangLCtxt
            \bangArr_{S_\indexOmega}<s!> \bangLCtxt'$: We set $s :=
            \bangLCtxt'<t'_1\esub{x}{t_2}>$ which concludes this case
            since by
            \Cref{lem:Bang_Full_dB_d!_Context_Reduction_Lifted_to_Terms},
            $u_1 = \bangLCtxt<t'_1\esub{x}{t_2}>
            \bangArr_{S_\indexOmega}<s!>
            \bangLCtxt'<t'_1\esub{x}{t_2}> = s$ and $u_2 =
            \app{\bangLCtxt'<\abs{x}{t'_1}>}{t_2}
            \bangArr_{S_\indexOmega}<dB>
            \bangLCtxt'<t'_1\esub{x}{t_2}> = s$. Graphically,
            \begin{equation*}
                \begin{array}{lll}
                    \app{\bangLCtxt<\abs{x}{t'_1}>}{t_2}    &\bangArr_{S_\indexOmega}<dB>   &  \bangLCtxt<t'_1\esub{x}{t_2}>
                \\[0.2cm]
                    \;\;\bangDownArr_{S_\indexOmega}<s!>    &                               &\;\;\bangDownArr_{S_\indexOmega}<s!>
                \\[0.2cm]
                    \app{\bangLCtxt'<\abs{x}{t'_1}>}{t_2}   &\bangArr_{S_\indexOmega}<dB>   & \bangLCtxt'<t'_1\esub{x}{t_2}>
                \end{array}
            \end{equation*}

        \item[\bltIII] $\bangLCtxt =
            \bangLCtxt_1<\bangLCtxt_3\esub{y}{\bangLCtxt_2<\oc
            t''_1>}>$, $u_2 =
            \app{\bangLCtxt_1<\bangLCtxt_2<\bangLCtxt_3\isub{y}{t''_1}\bangCtxtPlug{\abs{x}{t'_1\isub{y}{t''_1}}}>>}{t_2}$
            for some contexts $\bangLCtxt_1, \bangLCtxt_2,
            \bangLCtxt_3$ and some term $t''_1 \in \bangSetTerms$: We
            set $s :=
            \bangLCtxt_1<\bangLCtxt_2<\bangLCtxt_3\isub{y}{t''_1}\bangCtxtPlug{t'_1\isub{y}{t''_1}\esub{x}{t_2}}>>$
            which concludes this case since by $\alpha$-conversion $y
            \notin \freeVar{t_2}$ thus $t_2\isub{x}{t''_1} = t_2$ and
            therefore: Graphically,
            \begin{equation*}
                \begin{array}{lll}
                    \app{\bangLCtxt_1<\bangLCtxt_3<\abs{x}{t'_1}>\esub{y}{\bangLCtxt_2<\oc t''_1>}>}{t_2}                           &\bangArr_{S_\indexOmega}<dB>   &\bangLCtxt_1<\bangLCtxt_3<t'_1\esub{x}{t_2}>\esub{y}{\bangLCtxt_2<\oc t''_1>}>
                \\[0.2cm]
                    \;\;\bangDownArr_{S_\indexOmega}<s!>                                                                            &                               &\;\;\bangDownArr_{S_\indexOmega}<s!>
                \\[0.2cm]
                    \app{\bangLCtxt_1<\bangLCtxt_2<\bangLCtxt_3\isub{y}{t''_1}\bangCtxtPlug{\abs{x}{t'_1\isub{y}{t''_1}}}>>}{t_2}   &\bangArr*_{S_\indexOmega}<dB>  &\bangLCtxt_1<\bangLCtxt_2<\bangLCtxt_3\isub{y}{t''_1}\bangCtxtPlug{t'_1\isub{y}{t''_1}\esub{x}{t_2}}>>
                \end{array}
            \end{equation*}
        \end{itemize}

    \item[\bltII] Otherwise:  By hypothesis $t
        \bangArr_{S_\indexOmega}<R> u_1$ and $t
        \bangArr_{S_\indexOmega}<s!> u_2$, and four cases can thus be
        distinguished:
        \begin{itemize}
        \item[\bltIII] $u_1 = \app{u'_1}{t_2}$ and $u_2 =
            \app{u'_2}{t_2}$ with $t_1 \bangArr_{S_\indexOmega}<R>
            u'_1$ and $t_1 \bangArr_{S_\indexOmega}<s!> u'_2$ for some
            $u'_1, u'_2 \in \bangSetTerms$: By \ih on $t_1$, there
            exists $s' \in \bangSetTerms$ such that $u'_1
            \bangArr_{S_\indexOmega}<s!> s'$ and $u'_2
            \bangArr*_{S_\indexOmega}<R> s'$. We set $s :=
            \app{s'}{t_2}$ since by contextual closure $u_1 =
            \app{u'_1}{t_2} \bangArr_{S_\indexOmega}<s!> \app{s'}{t_2}
            = s'$ and $u_2 = \app{u'_2}{t_2}
            \bangArr*_{S_\indexOmega}<R> \app{s'}{t_2} = s'$.
            Graphically,
            \begin{equation*}
                \begin{array}{lll}
                    \app{t_1}{t_2}                          &\bangArr_{S_\indexOmega}<R>    &\app{u'_1}{t_2}
                \\[0.2cm]
                    \;\;\bangDownArr_{S_\indexOmega}<s!>    &                               &\;\;\bangDownArr_{S_\indexOmega}<s!>
                \\[0.2cm]
                    \app{u'_2}{t_2}                         &\bangArr*_{S_\indexOmega}<R>   &\app{s'}{t_2}
                \end{array}
            \end{equation*}

        \item[\bltIII] $u_1 = \app{u'_1}{t_2}$ and $u_2 =
            \app{t_1}{u'_2}$ with $t_1 \bangArr_{S_\indexOmega}<R>
            u'_1$ and $t_2 \bangArr_{S_\indexOmega}<s!> u'_2$ for some
            $u'_1, u'_2 \in \bangSetTerms$: We set $s' :=
            \app{u'_1}{u'_2}$ since by contextual closure $u_1 =
            \app{u'_1}{t_2} \bangArr_{S_\indexOmega}<s!>
            \app{u'_1}{u'_2} = s'$ and $u_2 = \app{t_1}{u'_2}
            \bangArr_{S_\indexOmega}<R> \app{u'_1}{u'_2} = s'$.
            Graphically,
            \begin{equation*}
                \begin{array}{lll}
                    \app{t_1}{t_2}                          &\bangArr_{S_\indexOmega}<R>    &\app{u'_1}{t_2}
                \\[0.2cm]
                    \;\;\bangDownArr_{S_\indexOmega}<s!>    &                               &\;\;\bangDownArr_{S_\indexOmega}<s!>
                \\[0.2cm]
                    \app{t_1}{u'_2}                         &\bangArr_{S_\indexOmega}<R>    &\app{u'_1}{u'_2}
                \end{array}
            \end{equation*}

        \item[\bltIII] $u_1 = \app{t_1}{u'_1}$ and $u_2 =
            \app{u'_2}{t_2}$ with $t_2 \bangArr_{S_\indexOmega}<R>
            u'_1$ and $t_1 \bangArr_{S_\indexOmega}<s!> u'_2$ for some
            $u'_1, u'_2 \in \bangSetTerms$: We set $s' :=
            \app{u'_2}{u'_1}$ since by contextual closure $u_1 =
            \app{t_1}{u'_1} \bangArr_{S_\indexOmega}<s!>
            \app{u'_2}{u'_1} = s'$ and $u_2 = \app{u'_2}{t_2}
            \bangArr_{S_\indexOmega}<R> \app{u'_2}{u'_1} = s'$.
            Graphically,
            \begin{equation*}
                \begin{array}{lll}
                    \app{t_1}{t_2}                          &\bangArr_{S_\indexOmega}<R>    &  \app{t_1}{u'_1}
                \\[0.2cm]
                    \;\;\bangDownArr_{S_\indexOmega}<s!>    &                               &\;\;\bangDownArr_{S_\indexOmega}<s!>
                \\[0.2cm]
                     \app{u'_2}{t_2}                        &\bangArr_{S_\indexOmega}<R>    &\app{u'_2}{u'_1}   
                 \end{array}
            \end{equation*}

        \item[\bltIII] $u_1 = \app{t_1}{u'_1}$ and $u_2 =
            \app{t_1}{u'_2}$ with $t_2 \bangArr_{S_\indexOmega}<R>
            u'_1$ and $t_2 \bangArr_{S_\indexOmega}<s!> u'_2$ for some
            $u'_1, u'_2 \in \bangSetTerms$: By \ih on $t_2$, there
            exists $s' \in \bangSetTerms$ such that $u'_1
            \bangArr_{S_\indexOmega}<s!> s'$ and $u'_2
            \bangArr*_{S_\indexOmega}<R> s'$. We set $s :=
            \app{t_1}{s'}$ since by contextual closure $u_1 =
            \app{t_1}{u'_1} \bangArr_{S_\indexOmega}<s!> \app{t_1}{s'}
            = s'$ and $u_2 = \app{t_1}{u'_2}
            \bangArr*_{S_\indexOmega}<R> \app{t_1}{s'} = s'$.
            Graphically,
            \begin{equation*}
                \begin{array}{lll}
                    \app{t_1}{t_2}                              &\bangArr_{S_\indexOmega}<R>    &  \app{t_1}{u'_1}
                \\[0.2cm]
                    \;\;\bangDownArr_{S_\indexOmega}<s!>   &                               &\;\;\bangDownArr_{S_\indexOmega}<s!>
                \\[0.2cm]
                    \app{t_1}{u'_2}                                   &\bangArr*_{S_\indexOmega}<R>  & \app{t_1}{s'} 
                 \end{array}
            \end{equation*}

        \end{itemize}
    \end{itemize}

\item[\bltI] $t = t_1\esub{x}{t_2}$: We distinguish two cases:
    \begin{itemize}
    \item[\bltII] $t$ is the $\bangSymbSubs$-redex reduced in the step
        $t \bangArr_{S_\indexOmega}<s!> u_2$: Then $t_2 =
        \bangLCtxt<\oc t'_2>$ and $u_2 =
        \bangLCtxt<t_1\isub{x}{t'_2}>$ for some context $\bangLCtxt$
        and some term $t'_2 \in \bangSetTerms$. Since
        $t_1\esub{x}{\bangLCtxt<\oc t'_2>} \bangArr_{S_\indexOmega}<R>
        u_1$ then three cases can be distinguished:
        \begin{itemize}
        \item[\bltIII] $t_1 \bangArr_{S_\indexOmega}<R> t'_1$ and $u_1
            = t'_1\esub{x}{t_2}$ for some term $t'_1 \in
            \bangSetTerms$: We set $s :=
            \bangLCtxt<t'_1\isub{x}{t'_2}>$ which concludes this case
            since by induction on $t_1$, one has that
            $t_1\isub{x}{t'_2} \bangArr_{S_\indexOmega}<R>
            t'_1\isub{x}{t'_2}$ thus by contextual closure $u_1 =
            t'_1\esub{x}{\bangLCtxt<\oc t'_2>}
            \bangArr_{S_\indexOmega}<s!>
            \bangLCtxt<t'_1\isub{x}{t'_2}> = s$ and $u_2 =
            \bangLCtxt<t_1\isub{x}{t'_2}> \bangArr_{S_\indexOmega}<R>
            \bangLCtxt<t'_1\isub{x}{t'_2}> = s$ Graphically,
            \begin{equation*}
                \begin{array}{lll}
                    t_1\esub{x}{\bangLCtxt<\oc t'_2>}       &\bangArr_{S_\indexOmega}<R>    &t'_1\esub{x}{\bangLCtxt<\oc t'_2>}
                \\[0.2cm]
                    \;\;\bangDownArr_{S_\indexOmega}<s!>    &                               &\;\;\bangDownArr_{S_\indexOmega}<s!>
                \\[0.2cm]
                    \bangLCtxt<t_1\isub{x}{t'_2}>           &\bangArr_{S_\indexOmega}<R>    &\bangLCtxt<t'_1\isub{x}{t'_2}>
            \end{array}
            \end{equation*}

        \item[\bltIII] $t'_2 \bangArr_{S_\indexOmega}<R> t''_2$ and
            $u_1 = t_1\esub{x}{\bangLCtxt<\oc t''_2>}$ for some term
            $t'_2 \in \bangSetTerms$: We set $s :=
            \bangLCtxt<t_1\isub{x}{t''_2}>$ which concludes this case
            since by induction on $t_1$, one has that
            $t_1\isub{x}{t'_2} \bangArr*_{S_\indexOmega}<R>
            t_1\isub{x}{t''_2}$ thus by contextual closure $u_1 =
            t_1\esub{x}{\bangLCtxt<\oc t''_2>}
            \bangArr_{S_\indexOmega}<s!>
            \bangLCtxt<t_1\isub{x}{t''_2}> = s$ and $u_2 =
            \bangLCtxt<t_1\isub{x}{t'_2}> \bangArr*_{S_\indexOmega}<R>
            \bangLCtxt<t_1\isub{x}{t''_2}> = s$. Graphically,
            \begin{equation*}
                \begin{array}{lll}
                    t_1\esub{x}{\bangLCtxt<\oc t'_2>}       &\bangArr_{S_\indexOmega}<R>    &t_1\esub{x}{\bangLCtxt<\oc t''_2>}
                \\[0.2cm]
                    \;\;\bangDownArr_{S_\indexOmega}<s!>    &                               &\;\;\bangDownArr_{S_\indexOmega}<s!>
                \\[0.2cm]
                   \bangLCtxt<t_1\isub{x}{t'_2}>            &\bangArr^*_{S_\indexOmega}<R>  &\bangLCtxt<t_1\isub{x}{t''_2}>
            \end{array}
            \end{equation*}

        \item[\bltIII] $\bangLCtxt \bangArr_{S_\indexOmega}<R>
            \bangLCtxt'$ and $u_1 = t_1\esub{x}{\bangLCtxt'<\oc
            t'_2>}$ for some context $\bangLCtxt'$: We set $s :=
            \bangLCtxt'<t_1\isub{x}{t'_2}>$ which concludes this case
            since by
            \Cref{lem:Bang_Full_dB_d!_Context_Reduction_Lifted_to_Terms},
            $u_1 = t_1\esub{x}{\bangLCtxt'<\oc t'_2>}
            \bangArr_{S_\indexOmega}<s!>
            \bangLCtxt'<t_1\isub{x}{t'_2}> = s$ and $u_2 =
            \bangLCtxt<t_1\isub{x}{t'_2}> \bangArr_{S_\indexOmega}<R>
            \bangLCtxt'<t_1\isub{x}{t'_2}> = s$. Graphically,
            \begin{equation*}
                \begin{array}{lll}
                    t_1\esub{x}{\bangLCtxt<\oc t'_2>}       &\bangArr_{S_\indexOmega}<R>    &   t_1\esub{x}{\bangLCtxt'<\oc t'_2>}
                \\[0.2cm]
                    \;\;\bangDownArr_{S_\indexOmega}<s!>    &                               &\;\;\bangDownArr_{S_\indexOmega}<s!>
                \\[0.2cm]
                   \bangLCtxt<t_1\isub{x}{t'_2}>            &\bangArr_{S_\indexOmega}<R>    & \bangLCtxt'<t_1\isub{x}{t'_2}>
            \end{array}
            \end{equation*}
        \end{itemize}

    \item[\bltII] Otherwise: By hypothesis $t
        \bangArr_{S_\indexOmega}<R> u_1$ and $t
        \bangArr_{S_\indexOmega}<s!> u_2$, and four cases can thus be
        distinguished:
        \begin{itemize}
        \item[\bltIII] $u_1 = u'_1\esub{x}{t_2}$ and $u_2 =
            u'_2\esub{x}{t_2}$ with $t_1 \bangArr_{S_\indexOmega}<R>
            u'_1$ and $t_1 \bangArr_{S_\indexOmega}<s!> u'_2$ for some
            $u'_1, u'_2 \in \bangSetTerms$: By \ih on $t_1$, there
            exists $s' \in \bangSetTerms$ such that $u'_1
            \bangArr_{S_\indexOmega}<s!> s'$ and $u'_2
            \bangArr*_{S_\indexOmega}<R> s'$. We set $s :=
            s'\esub{x}{t_2}$ since by contextual closure $u_1 =
            u'_1\esub{x}{t_2} \bangArr_{S_\indexOmega}<s!>
            s'\esub{x}{t_2} = s'$ and $u_2 = u'_2\esub{x}{t_2}
            \bangArr*_{S_\indexOmega}<R> u'_2\esub{x}{t_2} = s'$.
            Graphically,
            \begin{equation*}
                \begin{array}{lll}
                    t_1\esub{x}{t_2}                        &\bangArr_{S_\indexOmega}<R>    &  u'_1\esub{x}{t_2}
                \\[0.2cm]
                    \;\;\bangDownArr_{S_\indexOmega}<s!>    &                               &\;\;\bangDownArr_{S_\indexOmega}<s!>
                \\[0.2cm]
                    u'_2\esub{x}{t_2}                       &\bangArr*_{S_\indexOmega}<R>   & s'\esub{x}{t_2}   
              \end{array}
            \end{equation*}

        \item[\bltIII] $u_1 = u'_1\esub{x}{t_2}$ and $u_2 =
            t_1\esub{x}{u'_2}$ with $t_1 \bangArr_{S_\indexOmega}<R>
            u'_1$ and $t_2 \bangArr_{S_\indexOmega}<s!> u'_2$ for some
            $u'_1, u'_2 \in \bangSetTerms$: We set $s' :=
            u'_1\esub{x}{u'_2}$ since by contextual closure $u_1 =
            u'_1\esub{x}{t_2} \bangArr_{S_\indexOmega}<s!>
            u'_1\esub{x}{u'_2} = s'$ and $u_2 = u'_1\esub{x}{u'_2}
            \bangArr_{S_\indexOmega}<R> u'_1\esub{x}{u'_2} = s'$.
            Graphically,
            \begin{equation*}
                \begin{array}{lll}
                    t_1\esub{x}{t_2}                        &\bangArr_{S_\indexOmega}<R>    &u'_1\esub{x}{t_2}
                \\[0.2cm]
                    \;\;\bangDownArr_{S_\indexOmega}<s!>    &                               &\;\;\bangDownArr_{S_\indexOmega}<s!>
              \\[0.2cm]
                    t_1\esub{x}{u'_2}                       &\bangArr_{S_\indexOmega}<R>    &u'_1\esub{x}{u'_2}   
              \end{array}
            \end{equation*}

        \item[\bltIII] $u_1 = t_1\esub{x}{u'_1}$ and $u_2 =
            u'_2\esub{x}{t_2}$ with $t_2 \bangArr_{S_\indexOmega}<R>
            u'_1$ and $t_1 \bangArr_{S_\indexOmega}<s!> u'_2$ for some
            $u'_1, u'_2 \in \bangSetTerms$: We set $s' :=
            u'_2\esub{x}{u'_1}$ since by contextual closure $u_1 =
            t_1\esub{x}{u'_1} \bangArr_{S_\indexOmega}<s!>
            u'_2\esub{x}{u'_1} = s'$ and $u_2 = u'_2\esub{x}{t_2}
            \bangArr_{S_\indexOmega}<R> u'_2\esub{x}{u'_1} = s'$.
            Graphically,
            \begin{equation*}
                \begin{array}{lll}
                    t_1\esub{x}{t_2}                        &\bangArr_{S_\indexOmega}<R>    &  t_1\esub{x}{u'_1}
                \\[0.2cm]
                    \;\;\bangDownArr_{S_\indexOmega}<s!>    &                               &\;\;\bangDownArr_{S_\indexOmega}<s!>
                \\[0.2cm]
                    u'_2\esub{x}{t_2}                       &\bangArr_{S_\indexOmega}<R>    & u'_2\esub{x}{u'_1}
              \end{array}
            \end{equation*}

        \item[\bltIII] $u_1 = t_1\esub{x}{u'_1}$ and $u_2 =
            t_1\esub{x}{u'_2}$ with $t_2 \bangArr_{S_\indexOmega}<R>
            u'_1$ and $t_2 \bangArr_{S_\indexOmega}<s!> u'_2$ for some
            $u'_1, u'_2 \in \bangSetTerms$: By \ih on $t_2$, there
            exists $s' \in \bangSetTerms$ such that $u'_1
            \bangArr_{S_\indexOmega}<s!> s'$ and $u'_2
            \bangArr*_{S_\indexOmega}<R> s'$. We set $s :=
            t_1\esub{x}{s'}$ since by contextual closure $u_1 =
            t_1\esub{x}{u'_1} \bangArr_{S_\indexOmega}<s!>
            t_1\esub{x}{s'} = s'$ and $u_2 = t_1\esub{x}{u'_2}
            \bangArr*_{S_\indexOmega}<R> t_1\esub{x}{s'} = s'$.
            Graphically,
            \begin{equation*}
                \begin{array}{lll}
                    t_1\esub{x}{t_2}                        &\bangArr_{S_\indexOmega}<R>    &t_1\esub{x}{u'_1}
                \\[0.2cm]
                    \;\;\bangDownArr_{S_\indexOmega}<s!>    &                               &\;\;\bangDownArr_{S_\indexOmega}<s!>
                \\[0.2cm]
                    t_1\esub{x}{u'_2}                       &\bangArr^*_{S_\indexOmega}<R>  &t_1\esub{x}{s'}
              \end{array}
            \end{equation*}
        \end{itemize}
    \end{itemize}

\item[\bltI] $t = \der{t'_0}$:  We distinguish two cases:
    \begin{itemize}
    \item[\bltII] $t$ is the $\bangSymbBang$-redex reduced in the step
        $t \bangArr_{S_\indexOmega}<d!> u_1$: Then $t =
        \der{\bangLCtxt<\oc t'>}$ and $u_1 = \bangLCtxt<t'>$ for some
        context $\bangLCtxt$ and some term $t' \in \bangSetTerms$. By
        hypothesis $\der{\bangLCtxt<\oc t'>}
        \bangArr_{S_\indexOmega}<s!> u_2$ so that three different
        cases can be distinguished:
        \begin{itemize}
        \item[\bltIII] $t' \bangArr_{S_\indexOmega}<s!> t''$ and $u_2
            = \der{\bangLCtxt<\oc t''>}$ for some $t'' \in
            \bangSetTerms$: We set $s := \bangLCtxt<t''>$ which
            concludes this case since by contextual closure $u_1 =
            \bangLCtxt<t'> \bangArr_{S_\indexOmega}<s!>
            \bangLCtxt<t''> = s'$ and $u_2 = \der{\bangLCtxt<\oc t''>}
            \bangArr_{S_\indexOmega}<d!> \bangLCtxt<t''> = s'$.
            Graphically,
            \begin{equation*}
                \begin{array}{lll}
                    \der{\bangLCtxt<\oc t'>}                &\bangArr_{S_\indexOmega}<d!>   &\bangLCtxt<t'>
                \\[0.2cm]
                    \;\;\bangDownArr_{S_\indexOmega}<s!>    &                               &\;\;\bangDownArr_{S_\indexOmega}<s!>
                \\[0.2cm]
                    \der{\bangLCtxt<\oc t''>}               &\bangArr_{S_\indexOmega}<d!>   &\bangLCtxt<t''>
                \end{array}
            \end{equation*}

        \item[\bltIII] $\bangLCtxt \bangArr_{S_\indexOmega}<s!>
            \bangLCtxt'$ and $u_2 = \der{\bangLCtxt'<\oc t'>}$ for
            some context $\bangLCtxt'$: We set $s := \bangLCtxt'<t'>$
            which concludes this case since by
            \Cref{lem:Bang_Full_dB_d!_Context_Reduction_Lifted_to_Terms}
            $u_1 = \bangLCtxt<t'> \bangArr_{S_\indexOmega}<s!>
            \bangLCtxt'<t'> = s$ and $u_2 = \der{\bangLCtxt'<\oc t'>}
            \bangArr_{S_\indexOmega}<d!> \bangLCtxt'<t'> = s$.
            Graphically,
            \begin{equation*}
                \begin{array}{lll}
                    \der{\bangLCtxt<\oc t'>}                &\bangArr_{S_\indexOmega}<d!>   &\bangLCtxt<t'>
                \\[0.2cm]
                    \;\;\bangDownArr_{S_\indexOmega}<s!>    &                               &\;\;\bangDownArr_{S_\indexOmega}<s!>
                \\[0.2cm]
                    \der{\bangLCtxt'<\oc t'>}               &\bangArr_{S_\indexOmega}<d!>   &\bangLCtxt'<t'>
                \end{array}
            \end{equation*}

        \item[\bltIII] $\bangLCtxt =
            \bangLCtxt_1<\bangLCtxt_3\esub{x}{\bangLCtxt_2<\oc t''>}>$
            and $u_2 =
            \der{\bangLCtxt_1<\bangLCtxt_2<\bangLCtxt_3\isub{x}{t''}\bangCtxtPlug{\oc
            t'\isub{x}{t''}}>>}$ for some context $\bangLCtxt_1,
            \bangLCtxt_2, \bangLCtxt_3$ and some term $t'' \in
            \bangSetTerms$: We set $s =
            \bangLCtxt_1<\bangLCtxt_2<\bangLCtxt_3\isub{x}{t''}\bangCtxtPlug{t'}>>$
            which concludes this case since by contextual closure $u_1
            = \bangLCtxt_1<\bangLCtxt_3<t'>\esub{x}{\bangLCtxt_2<\oc
            t''>}> \bangArr_{S_\indexOmega}<s!>
            \bangLCtxt_1<\bangLCtxt_2<\bangLCtxt_3\isub{x}{t''}\bangCtxtPlug{t'\isub{x}{t''}}>>
            = s$ and $u_2 =
            \der{\bangLCtxt_1<\bangLCtxt_2<\bangLCtxt_3\isub{x}{t''}\bangCtxtPlug{\oc
            t'\isub{x}{t''}}>>} \bangArr_{S_\indexOmega}<d!>
            \bangLCtxt_1<\bangLCtxt_2<\bangLCtxt_3\isub{x}{t''}\bangCtxtPlug{t'\isub{x}{t''}}>>
            = s$. Graphically,
            \begin{equation*}
                \begin{array}{lll}
                    \der{\bangLCtxt_1<\bangLCtxt_3<\oc t'>\esub{x}{\bangLCtxt_2<\oc t''>}>}                         &\bangArr_{S_\indexOmega}<R>    &\bangLCtxt_1<\bangLCtxt_3<t'>\esub{x}{\bangLCtxt_2<\oc t''>}>
                \\[0.2cm]
                    \;\;\bangDownArr_{S_\indexOmega}<s!>                                                            &                               &\;\;\bangDownArr_{S_\indexOmega}<s!>
                \\[0.2cm]
                    \der{\bangLCtxt_1<\bangLCtxt_2<\bangLCtxt_3\isub{x}{t''}\bangCtxtPlug{\oc t'\isub{x}{t''}}>>}   &\bangArr_{S_\indexOmega}<R>    &\bangLCtxt_1<\bangLCtxt_2<\bangLCtxt_3\isub{x}{t''}\bangCtxtPlug{t'\isub{x}{t''}}>>   
                \end{array}
            \end{equation*}
        \end{itemize}

    \item[\bltII] Otherwise: Since by hypothesis $t
        \bangArr_{S_\indexOmega}<R> u_1$ and $t
        \bangArr_{S_\indexOmega}<s!> u_2$, then necessarily $u_1 =
        \der{u'_1}$ and $u_2 = \der{u'_2}$ for some $u'_1, u'_2 \in
        \bangSetTerms$ such that $t'_0 \bangArr_{S_\indexOmega}<R>
        u'_1$ and $t'_0 \bangArr_{S_\indexOmega}<s!> u'_2$. By \ih on
        $t'$, there exists $s' \in \bangSetTerms$ such that $u'_1
        \bangArr_{S_\indexOmega}<s!> s'$ and $u'_2
        \bangArr*_{S_\indexOmega}<R> s'$. We set $s := \der{s'}$ since
        by contextual closure $u_1 = \der{u'_1}
        \bangArr_{S_\indexOmega}<s!> \der{s'} = s'$ and $u_2 =
        \der{u'_2} \bangArr*_{S_\indexOmega}<R> \der{s'} = s'$.
        Graphically,
        \begin{equation*}
            \begin{array}{lll}
                \der{t'_0}                              &\bangArr_{S_\indexOmega}<R>    &\der{u'_1}
            \\[0.2cm]
                \;\;\bangDownArr_{S_\indexOmega}<s!>    &                               &\;\;\bangDownArr_{S_\indexOmega}<s!>
            \\[0.2cm]
                \der{u'_1}                              &\bangArr_{S_\indexOmega}<R>    &\der{s'}
            \end{array}
        \end{equation*}
    \end{itemize}

\item[\bltI] $t = \oc t'$: Since by hypothesis $t
    \bangArr_{S_\indexOmega}<R> u_1$ and $t
    \bangArr_{S_\indexOmega}<s!> u_2$, then necessarily $u_1 = \oc
    u'_1$ and $u_2 = \oc u'_2$ for some $u'_1, u'_2 \in \bangSetTerms$
    such that $t \bangArr_{S_\indexOmega}<R> u'_1$ and $t
    \bangArr_{S_\indexOmega}<s!> u'_2$. By \ih on $t'$, there exists
    $s' \in \bangSetTerms$ such that $u'_1
    \bangArr_{S_\indexOmega}<s!> s'$ and $u'_2
    \bangArr*_{S_\indexOmega}<R> s'$. We set $s := \oc s'$ since by
    contextual closure $u_1 = \oc u'_1 \bangArr_{S_\indexOmega}<s!>
    \oc s' = s'$ and $u_2 = \oc u'_2 \bangArr*_{S_\indexOmega}<R> \oc
    s' = s'$. Graphically,
    \begin{equation*}
        \begin{array}{lll}
            \oc t'                                  &\bangArr_{S_\indexOmega}<R>    &\oc u'_1
        \\[0.2cm]
            \;\;\bangDownArr_{S_\indexOmega}<s!>    &                               &\;\;\bangDownArr_{S_\indexOmega}<s!>
        \\[0.2cm]
          \oc u'_2                                  &\bangArr*_{S_\indexOmega}<R>   &\oc s'
        \end{array}
    \end{equation*}
\end{itemize}
\end{proof}

} \deliaLu \giulioLu

\begin{corollary}
    \label{lem:Bang_Full_dB_U_d!_and_s!_Strongly_Commute}%
    The reductions $\bangArr_{S_\indexOmega}<dB> \cup
    \bangArr_{S_\indexOmega}<d!>$ and $\bangArr_{S_\indexOmega}<s!>$
    commute.
\end{corollary}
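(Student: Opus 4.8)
The statement follows from \Cref{lem:Bang_Full_dB_U_d!_and_s!_strongly_commute} by a routine tiling argument: it is an instance of the classical principle that strong commutation implies commutation (Hindley's commutation lemma). For brevity, write $\to_1$ for $\bangArr_{S_\indexOmega}<dB> \cup \bangArr_{S_\indexOmega}<d!>$ and $\to_2$ for $\bangArr_{S_\indexOmega}<s!>$, and let $\to_1^*$ and $\to_2^*$ be their reflexive--transitive closures. \Cref{lem:Bang_Full_dB_U_d!_and_s!_strongly_commute} provides the local diagram: for every peak $u_1 \leftarrow_1 t \to_2 u_2$ there is $s$ with $u_1 \to_2 s$ and $u_2 \to_1^* s$. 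The plan is to lift this single-step diagram to arbitrary reduction sequences by two nested inductions.

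The first induction keeps the $\to_2$-side a single step. Fixing $t \to_2 u_2$, I would show by induction on the length $n$ of a reduction $t \to_1^* w$ that there is $s$ with $w \to_2 s$ and $u_2 \to_1^* s$: the case $n=0$ is trivial, and for $n+1$ one decomposes $t \to_1 t' \to_1^* w$, applies \Cref{lem:Bang_Full_dB_U_d!_and_s!_strongly_commute} to the peak $t' \leftarrow_1 t \to_2 u_2$ to get $s'$ with $t' \to_2 s'$ and $u_2 \to_1^* s'$, and then feeds $t' \to_1^* w$ and $t' \to_2 s'$ into the induction hypothesis, gluing the resulting $\to_1^*$-reductions by transitivity. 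The second induction then removes the restriction on the $\to_2$-side: by induction on the length $m$ of a reduction $t \to_2^* v$, I would show that every $t \to_1^* w$ can be closed, i.e.\ there is $s$ with $w \to_2^* s$ and $v \to_1^* s$ --- which is exactly the commutation claim. Again $m=0$ is trivial; for $m+1$ one decomposes $t \to_2 t'' \to_2^* v$, applies the first induction to $t \to_2 t''$ and $t \to_1^* w$ to obtain $s''$ with $w \to_2^* s''$ and $t'' \to_1^* s''$, and invokes the induction hypothesis on $t'' \to_2^* v$ and $t'' \to_1^* s''$.

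I do not anticipate a real obstacle: with \Cref{lem:Bang_Full_dB_U_d!_and_s!_strongly_commute} in hand, the corollary is pure diagram chasing. The only point worth a little attention is the asymmetry of the local diagram --- one side closes with a single $\to_2$-step, the other may require several $\to_1$-steps --- so that throughout the tiling one must consistently track which reduction plays the role of ``the step'' and which plays the role of ``the sequence''; this is precisely the bookkeeping encapsulated in Hindley's commutation lemma, which one could equally well just cite.
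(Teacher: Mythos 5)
Your proposal is correct and matches the paper, which simply declares the corollary an immediate consequence of \Cref{lem:Bang_Full_dB_U_d!_and_s!_strongly_commute}; you have just made explicit the standard tiling by which strong commutation (one $\bangArr_{S_\indexOmega}<s!>$-step closing one side, a $\bangArr_{S_\indexOmega}<dB>\cup\bangArr_{S_\indexOmega}<d!>$-sequence closing the other) lifts to commutation of the reflexive--transitive closures. The two nested inductions you describe are exactly the argument the paper leaves implicit.
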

\stableProof{
\begin{proof}
    Immediate consequence of
    \Cref{lem:Bang_Full_dB_U_d!_and_s!_strongly_commute}.
\end{proof}
} \deliaLu \giulioLu%

\RecBangSurfaceFullConfluence*
\label{prf:Bang_Surface_Full_Confluence}%
\stableProof{
\begin{proof}
\begin{itemize}
\item[\bltI] \textbf{(\surfaceTxt^)} Proof of confluence of the
    \surfaceTxt reduction can be found in \cite{BucciarelliKesnerRiosViso20}.

\item[\bltI] \textbf{(\fullTxt^)} By
  \Cref{lem:Bang_Full_dB_U_d!_Confluent,lem:Bang_Full_s!_Confluent,lem:Bang_Full_dB_U_d!_and_s!_Strongly_Commute}, using the Hindley--Rosen lemma
  \cite[Prop. 3.3.5]{barendregt84nh}.
    \qedhere
\end{itemize}
\end{proof}
} \giulioLu \deliaLu%

}

\section{Proofs of \Cref{sec:Meaningfulness}}

\begin{lemma}
    \label{lem:Bang_building_testing_ctxt_from_type_ctxt}%
    Let $\Pi \bangBKRVTr \Gamma \vdash t : \sigma$. If
    $\bangBKRVInhPred{\Gamma}$ then there are $\bangTCtxt$ and
    $\Pi'$ such that $\Pi' \bangBKRVTr \emptyset \vdash \bangTCtxt<t>
    : \sigma$, in particular $\bangBKRVInhPred{\sigma}$.
\end{lemma}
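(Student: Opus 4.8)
The plan is to proceed by induction on the cardinality $n$ of $\typeCtxtDom{\Gamma}$, closing off one typed free variable at a time by feeding it one of its inhabitants through a testing context of the elementary shape $\app{(\abs{x}{\Hole})}{u}$.

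For the base case $n = 0$ we have $\Gamma = \emptyset$, so $\bangTCtxt \coloneqq \Hole$ and $\Pi' \coloneqq \Pi$ work. For the inductive step, write $\Gamma = \Gamma', x\!:\!\M$ with $x \notin \typeCtxtDom{\Gamma'}$. From $\bangBKRVInhPred{\Gamma}$ we obtain both $\bangBKRVInhPred{\Gamma'}$ and $\bangBKRVInhPred{\M}$, so there is a derivation $\Pi_u \bangBKRVTr \emptyset \vdash u : \M$ for some term $u$. Applying rule \bangBKRVAbsRuleName\ to $\Pi$ yields $\Gamma' \vdash \abs{x}{t} : \M \typeArrow \sigma$, and then rule \bangBKRVAppRuleName\ with $\Pi_u$ (using $\Gamma' + \emptyset = \Gamma'$) yields a derivation $\Pi_0 \bangBKRVTr \Gamma' \vdash \app{(\abs{x}{t})}{u} : \sigma$. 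Since $\bangBKRVInhPred{\Gamma'}$ and $\typeCtxtDom{\Gamma'}$ has cardinality $n-1$, the induction hypothesis applied to $\Pi_0$ gives a testing context $\bangTCtxt'$ and a derivation $\Pi'' \bangBKRVTr \emptyset \vdash \bangTCtxt'<\app{(\abs{x}{t})}{u}> : \sigma$. Setting $\bangTCtxt \coloneqq \bangTCtxt'<\app{(\abs{x}{\Hole})}{u}>$ and $\Pi' \coloneqq \Pi''$ closes the step, because $\bangTCtxt<t> = \bangTCtxt'<\app{(\abs{x}{t})}{u}>$. Equivalently, one can give the construction in one shot: if $\typeCtxtDom{\Gamma} = \{x_1, \dots, x_n\}$ with respective inhabitants $u_1, \dots, u_n$, take $\bangTCtxt = \app{(\abs{x_1}{\app{(\abs{x_2}{\cdots \app{(\abs{x_n}{\Hole})}{u_n}\cdots})}{u_2}})}{u_1}$, but the inductive phrasing keeps the typing bookkeeping lightest.

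Two routine facts underpin the argument and should be spelled out. First, $\app{(\abs{x}{\Hole})}{u}$ is a testing context by definition, and testing contexts are closed under plugging one into another — an immediate induction on the outer context following the grammar $\bangTCtxt \coloneqq \Hole \vsep \app{\bangTCtxt}{s} \vsep \app{(\abs{x}{\bangTCtxt})}{s}$ — so $\bangTCtxt = \bangTCtxt'<\app{(\abs{x}{\Hole})}{u}>$ is again a testing context. Second, we do \emph{not} need the inhabitants $u$ to be closed (indeed in system $\bangBKRVTypeSys$ an inhabitant need not be closed, because the empty-premise instance of rule \bangBKRVBgRuleName\ leaves subterms under a $\oc$ untyped): the new binder $\abs{x}{\cdot}$ sits on the function side of $\app{(\abs{x}{t})}{u}$, hence never captures variables of $u$, and the typing step is justified purely by the shapes of rules \bangBKRVAbsRuleName\ and \bangBKRVAppRuleName, irrespective of $\freeVar{u}$. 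Finally, the clause ``in particular $\bangBKRVInhPred{\sigma}$'' is immediate: the constructed $\Pi' \bangBKRVTr \emptyset \vdash \bangTCtxt<t> : \sigma$ is by definition a witness that $\sigma$ is inhabited.

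I do not expect a genuine obstacle here; the only delicate points are making sure the composed object $\bangTCtxt$ really matches the testing-context grammar (handled by the closure-under-composition remark) and resisting the temptation to assume closedness of inhabitants (which fails but, as noted, is not needed). Everything else is the standard ``close the term with its inhabitants'' manipulation, carried out along the structure of $\Gamma$ rather than of $t$.
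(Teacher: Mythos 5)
Your proof is correct and follows essentially the same route as the paper's: induction on the cardinality of $\typeCtxtDom{\Gamma}$, discharging one typed variable at a time via the testing context $\app{(\abs{x}{\Hole})}{u}$ built from an inhabitant $u$ of its multitype, and composing with the context obtained from the induction hypothesis. The extra remarks on closure of testing contexts under composition and on inhabitants not needing to be closed are accurate but not points the paper found necessary to spell out.
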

\stableProof{%
    \begin{proof}
Let $\Pi \bangBKRVTr \Gamma \vdash t : \sigma$. We reason by induction
on the number of variables in $\typeCtxtDom{\Gamma}$:
\begin{itemize}
\item[\bltI] $\Gamma = \emptyset$: We set $\bangTCtxt \coloneqq \Hole$, so $\bangTCtxt<t> = t $ and we are done
     since $\Pi \bangBKRVTr \emptyset \vdash
    t : \sigma$ holds by hypothesis.

\item[\bltI] $\Gamma = \Gamma', x : \M$ with $\M \neq \emptymset$:
    Since $\bangBKRVInhPred{\Gamma}$, then $\bangBKRVInhPred{\Gamma'}$
    and $\bangBKRVInhPred{\M}$, thus there exists $\Pi_u \bangBKRVTr
    \emptyset \vdash u : \M$ for some $u \in \bangSetTerms$. Consider
    the following derivation:
    \begin{equation*}
        \begin{prooftree}
            \hypo{\Pi \bangBKRVTr \Gamma', x : \M \vdash t : \sigma}
            \inferBangBKRVAbs{\Gamma' \vdash \abs{x}{t} : \M \typeArrow \sigma}
            \hypo{\Pi_u \bangBKRVTr \emptyset \vdash u : \M}
            \inferBangBKRVEs{\Gamma' \vdash \app[\,]{(\abs{x}{t})}{u} : \sigma}
        \end{prooftree}
    \end{equation*}
    By the \ih on $\Gamma'$, one deduces that there exists $\Pi'
    \bangBKRVTr \emptyset \vdash
    \bangTCtxt'<\app[\,]{(\abs{x}{t})}{u}> : \sigma$. This concludes
    this case by taking $\bangTCtxt =
    \bangTCtxt'<\app[\,]{(\abs{x}{\Hole})}{u}>$.
    \qedhere
\end{itemize}
\end{proof}

} \deliaLu \giulioLu

\begin{lemma}
    \label{lem:Bang_building_testing_ctxt_from_arg_types}%
    Let $\bangBKRVTr \Gamma \vdash t : \sigma$. If
    $\bangBKRVInhPred{\bangTypeArgs{\sigma}}$, then there are a
    testing context $\bangTCtxt$ and a derivation $\bangBKRVTr \Gamma
    \vdash \bangTCtxt<t> : \tau$ for some type $\tau$ not functional.
\end{lemma}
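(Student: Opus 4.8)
The plan is to argue by induction on the structure of the type $\sigma$, peeling off one arrow at a time, in the same spirit as \Cref{lem:Bang_building_testing_ctxt_from_type_ctxt} but feeding arguments to $t$ rather than closing its free variables. Recall that in system $\bangBKRVTypeSys$ every multitype is observable, so $\bangTypeArgs{\sigma} = \emptyset$ whenever $\sigma$ is a type variable or a multitype, and $\bangTypeArgs{\M \typeArrow \sigma'} = \{\M\} \cup \bangTypeArgs{\sigma'}$. In the two base cases, where $\sigma$ is a type variable or a multitype, $\sigma$ itself is not functional; I then take $\bangTCtxt \coloneqq \Hole$ and $\tau \coloneqq \sigma$, and the given derivation $\bangBKRVTr \Gamma \vdash t : \sigma$ already witnesses the claim.

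For the inductive step, suppose $\sigma = \M \typeArrow \sigma'$. Then $\bangBKRVInhPred{\bangTypeArgs{\sigma}}$ yields both $\bangBKRVInhPred{\M}$ and $\bangBKRVInhPred{\bangTypeArgs{\sigma'}}$. From $\bangBKRVInhPred{\M}$ I obtain a term $u$ and a derivation $\bangBKRVTr \emptyset \vdash u : \M$. Applying rule $\bangBKRVAppRuleName$ to $\Gamma \vdash t : \M \typeArrow \sigma'$ and $\emptyset \vdash u : \M$, and using $\Gamma + \emptyset = \Gamma$, gives $\bangBKRVTr \Gamma \vdash \app{t}{u} : \sigma'$. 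The induction hypothesis applied to this derivation together with $\bangBKRVInhPred{\bangTypeArgs{\sigma'}}$ provides a testing context $\bangTCtxt'$ and a derivation $\bangBKRVTr \Gamma \vdash \bangTCtxt'<\app{t}{u}> : \tau$ with $\tau$ not functional. I then set $\bangTCtxt \coloneqq \bangTCtxt'<\app{\Hole}{u}>$: since $\app{\Hole}{u}$ is a testing context and the class of testing contexts is closed under plugging (a routine induction on the outer context), $\bangTCtxt$ is again a testing context, and $\bangTCtxt<t> = \bangTCtxt'<\app{t}{u}>$, so the same derivation concludes the case.

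The argument has no real obstacle; the only points requiring care are bookkeeping ones. First, the environment must remain exactly $\Gamma$ along the induction, which works precisely because the inhabitation hypothesis produces each argument $u$ from the \emph{empty} environment, so each application step contributes $\emptyset$ on the right premise. Second, one must check that testing contexts compose, so that the contexts built at successive stages stay testing contexts. Both are straightforward, and the heart of the proof is simply that inhabitation of the argument multitypes of $\sigma$ lets us supply $t$ with enough arguments to consume all of its arrows, leaving a non-functional type.
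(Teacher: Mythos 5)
Your proof is correct and follows essentially the same route as the paper: induction on $\sigma$, taking $\bangTCtxt = \Hole$ in the non-functional base cases, and in the arrow case using inhabitation of $\M$ to produce $u$ with $\emptyset \vdash u : \M$, applying the application rule (so the environment stays $\Gamma$), invoking the induction hypothesis, and composing with $\app{\Hole}{u}$. The bookkeeping points you flag (empty environment for arguments, closure of testing contexts under plugging) are exactly the ones the paper relies on implicitly.
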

\stableProof{%
    \begin{proof}
Let $\Pi \bangBKRVTr \Gamma \vdash t : \sigma$. We reason by induction
on $\sigma$:
\begin{itemize}
\item[\bltI] $\sigma = \alpha$ or $\sigma = \M$: We set $\bangTCtxt
    \coloneqq \Hole$ and $\tau \coloneqq \sigma$ which concludes this
    case since $\tau$ is not functional, $\bangTCtxt<t> = t$ and
    $\bangBKRVTr \Gamma \vdash t : \sigma$ holds by hypothesis.

\item[\bltI] $\sigma = \M \typeArrow \mu$: Since
	$\bangBKRVInhPred{\genericArgs{\bangBKRVTypeSys}{\sigma}}$, then
	$\bangBKRVInhPred{\genericArgs{\bangBKRVTypeSys}{\mu}}$ and $\bangBKRVInhPred{\M}$ thus
	there exists $\Pi_u \bangBKRVTr \emptyset \vdash u : \M$ for some
	$u \in \bangSetTerms$. Consider the following derivation:
    \begin{equation*}
        \begin{prooftree}
            \hypo{\Pi \bangBKRVTr \Gamma \vdash t : \M \typeArrow \mu}
            \hypo{\Pi_u \bangBKRVTr \emptyset \vdash u : \M}
            \inferBangBKRVApp{\Gamma \vdash \app[\,]{t}{u} : \mu}
        \end{prooftree}
    \end{equation*}
    By \ih on $\mu$, there is a testing context $\bangTCtxt'$ and a
    derivation $\bangBKRVTr \Gamma \vdash \bangTCtxt'<\app[\,]{t}{u}>
    : \tau$ for some type $\tau$ not functional. Setting $\bangTCtxt =
    \bangTCtxt'<\app[\,]{\Hole}{u}>$ concludes this case.
    \qedhere
\end{itemize}
\end{proof}

} \deliaLu \giulioLu %

\RecBangInhabitationFromTesting*
\label{prf:Bang_O_|-_T<t>_:_[]_==>_Gam_|-_t_:_sig_and_Gam_and_args(sig)_inh}
\stableProof{%
    \begin{proof}
We prove the stronger statement below, to have the right induction
hypothesis:
\begin{center}
    Let $\bangBKRVTr \Gamma \vdash \bangTCtxt<t> : \sigma$ with
    $\bangBKRVInhPred{\Gamma}$ and
    $\bangBKRVInhPred{\genericArgs{\bangBKRVTypeSys}{\sigma}}$, then $\bangBKRVTr \Gamma'
    \vdash t : \sigma'$ with $\bangBKRVInhPred{\Gamma'}$ and
    $\bangBKRVInhPred{\genericArgs{\bangBKRVTypeSys}{\sigma'}}$.
\end{center}

From that,
\Cref{lem:Bang_O_|-_T<t>_:_[]_==>_Gam_|-_t_:_sig_and_Gam_and_args(sig)_inh}
follows immediately, because $\bangBKRVInhPred{\emptyset}$ is
vacuously true.

Let $\Pi \bangBKRVTr \Gamma \vdash \bangTCtxt<t> : \sigma$ with
$\bangBKRVInhPred{\Gamma}$ and $\bangBKRVInhPred{\genericArgs{\bangBKRVTypeSys}{\sigma}}$.
We reason by induction on $\bangTCtxt$:
\begin{itemize}
\item[\bltI] $\bangTCtxt = \Hole$: Trivial by hypothesis.

\item[\bltI] $\bangTCtxt = \app[\,]{(\abs{x}{\bangTCtxt'})}{u}$: Then
    $\Pi$ has the following form:
    \begin{equation*}
        \begin{prooftree}
            \hypo{\Pi_1 \bangBKRVTr \Gamma_1; x : \M \vdash \bangTCtxt'<t> : \sigma}
            \inferBangBKRVAbs{\Gamma_1 \vdash \abs{x}{\bangTCtxt'<t>} : \M \typeArrow \sigma}
            \hypo{\Pi_2 \bangBKRVTr \Gamma_2 \vdash u : \M}
            \inferBangBKRVApp{\Gamma_1 + \Gamma_2 \vdash \app[\,]{(\abs{x}{\bangTCtxt'<t>})}{u} : \sigma}
        \end{prooftree}
    \end{equation*}
    with $\Gamma = \Gamma_1 + \Gamma_2$. By hypothesis
    $\bangBKRVInhPred{\Gamma}$, thus in particular
    $\bangBKRVInhPred{\Gamma_1}$ and $\bangBKRVInhPred{\Gamma_2}$. Using
    \Cref{lem:Bang_building_testing_ctxt_from_type_ctxt} on $\Pi_2$,
    one deduces $\bangBKRVInhPred{\M}$ and thus $\bangBKRVInhPred{\Gamma_1; x
    : \M}$. By \ih on $\Pi_1$, one obtains $\Pi' \bangBKRVTr \Gamma'
    \vdash t : \sigma'$ with $\bangBKRVInhPred{\Gamma'}$ and
    $\bangBKRVInhPred{\genericArgs{\bangBKRVTypeSys}{\sigma'}}$.

\item[\bltI] $\bangTCtxt = \app{\bangTCtxt'}{u}$: Then $\Pi$ has the
    following form:
    \begin{equation*}
        \begin{prooftree}
            \hypo{\Pi_1 \bangBKRVTr \Gamma_1 \vdash \bangTCtxt'<t> : \M \typeArrow \sigma}
            \hypo{\Pi_2 \bangBKRVTr \Gamma_2 \vdash u : \M}
            \inferBangBKRVApp{\Gamma_1 + \Gamma_2 \vdash \app{\bangTCtxt'<t>}{u} : \sigma}
        \end{prooftree}
    \end{equation*}
    with $\Gamma = \Gamma_1 + \Gamma_2$. By hypothesis
    $\bangBKRVInhPred{\Gamma}$, thus in particular
    $\bangBKRVInhPred{\Gamma_1}$ and $\bangBKRVInhPred{\Gamma_2}$. Using
    \Cref{lem:Bang_building_testing_ctxt_from_type_ctxt} on $\Pi_2$,
    one deduces $\bangBKRVInhPred{\M}$, and thus
    $\bangBKRVInhPred{\genericArgs{\bangBKRVTypeSys}{\M \typeArrow \sigma}}$. By the \ih on
    $\Pi_1$, one obtains $\Pi' \bangBKRVTr \Gamma' \vdash t : \sigma'$
    with $\bangBKRVInhPred{\Gamma'}$ and
    $\bangBKRVInhPred{\genericArgs{\bangBKRVTypeSys}{\sigma'}}$.
    \qedhere
\end{itemize}
\end{proof}

} \deliaLu \giulioLu

\begin{lemma}
    \label{lem:Bang_O_|-_t_:_M_and_SNF(t)_==>_t_=_!u}%
    Let $\Pi \bangBKRVTr \emptyset \vdash t : \sigma$ where $\sigma$
    is not functional. If $t$ is a $\bangStratCtxt$-normal form,
    then $t = \oc u$ for some $u \in \bangSetTerms$.
\end{lemma}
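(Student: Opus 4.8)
The plan is to prove, by structural induction on $t$, the following strengthening, whose extra alternative is exactly what is needed to handle the application case: \emph{if $\bangBKRVTr \emptyset \vdash t : \sigma$ and $t$ is a $\bangStratCtxt$-normal form, then either $\sigma$ is not functional and $t = \oc u$ for some $u \in \bangSetTerms$, or $\sigma = \M \typeArrow \rho$ for some $\M, \rho$ and $t = \bangLCtxt<\abs{x}{s}>$ for some list context $\bangLCtxt$, variable $x$ and $s \in \bangSetTerms$.} The lemma is then immediate, since its hypothesis that $\sigma$ is not functional excludes the second alternative.

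In each case I would invert the rule of \Cref{fig:BangBKRV_Typing_System_Rules} matching the head constructor of $t$ (the system is syntax-directed, so inversion is immediate) and unfold the $\bangStratCtxt$-normal-form condition for that constructor. If $t = x$, inversion forces the environment to be $x : \mset{\sigma} \neq \emptyset$, impossible. If $t = \abs{x}{t'}$, rule $\bangBKRVAbsRuleName$ gives $\sigma = \M \typeArrow \rho$ and $t = \bangLCtxt<\abs{x}{t'}>$ with $\bangLCtxt = \Hole$, so the second alternative holds. If $t = \oc u$, rule $\bangBKRVBgRuleName$ assigns a multitype, which is not functional, so the first alternative holds. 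The remaining three cases are all contradictory. If $t = \der{t'}$, being a $\bangStratCtxt$-normal form forces $t'$ to be a $\bangStratCtxt$-normal form with $t' \neq \bangLCtxt<\oc s>$; inversion of $\bangBKRVDerRuleName$ gives $\bangBKRVTr \emptyset \vdash t' : \mset{\sigma}$, which is not functional, so the induction hypothesis yields $t' = \oc u'$, contradicting $t' \neq \bangLCtxt<\oc s>$ (take $\bangLCtxt = \Hole$). If $t = \app{t_1}{u}$, surface normality forces $t_1 \neq \bangLCtxt<\abs{x}{s}>$; inversion of $\bangBKRVAppRuleName$ splits the conclusion environment as a sum equal to $\emptyset$, forcing both premises to have environment $\emptyset$, hence $\bangBKRVTr \emptyset \vdash t_1 : \M \typeArrow \sigma$, a functional type, so the induction hypothesis forces $t_1 = \bangLCtxt<\abs{x}{s}>$, a contradiction. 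If $t = t_1\esub{x}{t_2}$, surface normality forces $t_2 \neq \bangLCtxt<\oc s>$; inversion of $\bangBKRVEsRuleName$ again forces the environment of $t_2$ to be $\emptyset$, so $\bangBKRVTr \emptyset \vdash t_2 : \M$ with $\M$ a multitype, not functional, and the induction hypothesis gives $t_2 = \oc s$, contradicting $t_2 \neq \bangLCtxt<\oc s>$. (In the last two cases one also uses that $t_1$, resp.\ $t_2$, is a $\bangStratCtxt$-normal form, which holds because it occupies a $\bangStratCtxt$-position of the $\bangStratCtxt$-normal form $t$.)

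The only non-routine point — and the main obstacle — is this strengthening: in the application case the relevant subterm $t_1$ carries the \emph{functional} type $\M \typeArrow \sigma$, about which the bare statement says nothing, so one must carry along the ``functional type $\Rightarrow$ list context around an abstraction'' alternative inside the induction hypothesis. Everything else is inversion of syntax-directed typing rules together with the elementary description of $\bangStratCtxt$-normal forms per constructor; no appeal to confluence, clash-freeness, or the grammar of clash-free normal forms is needed.
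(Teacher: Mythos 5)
Your proof is correct and follows essentially the same route as the paper's: the paper also proves the strengthened ``$\sigma$ not functional $\Rightarrow$ $t = \oc u$, $\sigma$ functional $\Rightarrow$ $t$ is an abstraction'' statement by structural induction on $t$ with inversion of the syntax-directed rules, deriving contradictions in the variable, application, closure and dereliction cases. The only cosmetic difference is that your functional alternative allows a list context $\bangLCtxt<\abs{x}{s}>$ where the paper concludes $t = \abs{x}{u}$ outright; both suffice to contradict surface normality in the application case.
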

\stableProof{%
    \begin{proof} \deliaLu \giulioLu We prove the stronger statement
below:
\begin{center}
    Let $\bangBKRVTr \emptyset \vdash t : \sigma$ with $\sigma$ not
    functional (\resp functional). If $t$ is a
    $\bangStratCtxt$-normal form, then $t = \oc u$ (\resp $t =
    \abs{x}{u}$) for some $u \in \bangSetTerms$.
\end{center}
Let $\Pi \bangBKRVTr \emptyset \vdash t : \sigma$. By induction on
$t$:
\begin{itemize}
\item[\bltI] $t = x$: Then $\Pi$ can only have the following form:
    \begin{equation*}
        \begin{prooftree}
            \inferBangBKRVVar{x : \mset{\sigma} \vdash x : \sigma}
        \end{prooftree}
    \end{equation*}
    which contradicts the hypothesis that the type environment of
    $\Pi$ is empty.

\item[\bltI] $t = \abs{x}{t'}$: Then $\Pi$ has the following form:
    \begin{equation*}
        \begin{prooftree}
            \hypo{\Pi' \bangBKRVTr x : \N \vdash t' : \tau}
            \inferBangBKRVAbs{\emptyset \vdash \abs{x}{t'} : \N \typeArrow \tau}
        \end{prooftree}
    \end{equation*}
    where $\sigma = \N \typeArrow \tau$ is functional, which concludes
    this case.

\item[\bltI] $t = \app{t_1}{t_2}$: Then in particular $t_1$ is a
    $\bangStratCtxt$-normal form and $\Pi$ has the following form:
    \begin{equation*}
        \begin{prooftree}
            \hypo{\Pi_1 \bangBKRVTr \emptyset \vdash t_1 : \N \typeArrow \sigma}
            \hypo{\Pi_2 \bangBKRVTr \emptyset \vdash t_2 : \N}
            \inferBangBKRVApp{\emptyset \vdash \app{t_1}{t_2} : \sigma}
        \end{prooftree}
    \end{equation*}
    By the \ih on $t_1$, $t_1 = \abs{x}{t'_1}$ which contradicts the
    hypothesis that $t$ is a $\bangStratCtxt$-normal form.

\item[\bltI] $t = t_1\esub{x}{t_2}$: Then in particular $t_2$ is a
    $\bangStratCtxt$-normal form and $\Pi$ has the following form:
    \begin{equation*}
        \begin{prooftree}
            \hypo{\Pi_1 \bangBKRVTr x : \M \vdash t_1 : \sigma}
            \hypo{\Pi_2 \bangBKRVTr \emptyset \vdash t_2 : \M}
            \inferBangBKRVEs{\emptyset \vdash t_1\esub{x}{t_2} : \sigma}
        \end{prooftree}
    \end{equation*}
    By the \ih on $t_2$, $t_2 = \oc t'_2$ which contradicts the
    hypothesis that $t$ is a $\bangStratCtxt$-normal form.

\item[\bltI] $t = \der{t'}$: Then $t'$ is a $\bangStratCtxt$-normal
    form and $\Pi$ has the following form:
    \begin{equation*}
        \begin{prooftree}
            \hypo{\Pi' \bangBKRVTr \emptyset \vdash t' : \mset{\sigma}}
            \inferBangBKRVDer{\emptyset \vdash \der{t'} : \sigma}
        \end{prooftree}
    \end{equation*}
    By the \ih on $t'$, $t' = \oc t'$ which contradicts the hypothesis
    that $t$ is a $\bangStratCtxt$-normal form.

\item[\bltI] $t = \oc t'$: Then $\Pi$ has the following form:
    \begin{equation*}
        \begin{prooftree}
            \hypo{\Pi_i \bangBKRVTr \emptyset \vdash u : \tau_i}
            \delims{\left(}{\right)_{i \in I}}
            \inferBangBKRVBg{\quad\emptyset \vdash \oc t : \mset{\tau_i}_{i \in I}\quad}
        \end{prooftree}
    \end{equation*}
    where $\sigma = \mset{\tau_i}_{i \in I}$ is not functional, which
    concludes this case.
    \qedhere
\end{itemize}
\end{proof}
} \deliaLu \giulioLu

\RecBangMeaningfulTypabilityInhabitation*%
\label{prf:Bang_Meaningfulness_Typing_and_Inhabitation}%
\stableProof{%
    \begin{proof} \deliaLu \giulioLu%
\begin{description}
\item[$(\Rightarrow)$] Let $t$ be meaningful. Then there exists a
    testing context $\bangTCtxt$ and a term $u \in \bangSetTerms$ such
    that $\bangTCtxt<t> \bangArr*_S \oc u$. Using rule
    $\bangBKRVBgRuleName$ without any premises, the following typing
    derivation holds $\bangBKRVTr \emptyset \vdash \oc u :
    \emptymset$. By subject expansion
    (\Cref{lem:Bang_BKRV}.\ref{lem:Bang_BKRV_Surface_Typing_SRE}), one
    deduces that $\bangBKRVTr \emptyset \vdash \bangTCtxt<t> :
    \emptymset$. Finally, by
    \Cref{lem:Bang_O_|-_T<t>_:_[]_==>_Gam_|-_t_:_sig_and_Gam_and_args(sig)_inh},
    one concludes that $\bangBKRVTr \Gamma \vdash t : \sigma$ with
    $\bangBKRVInhPred{\Gamma}$ and
    $\bangBKRVInhPred{\bangTypeArgs{\sigma}}$, so that
    $\typing{\Gamma}{\sigma}$ is $\bangBKRVTypeSys$-testable.

\item[$(\Leftarrow)$] Let $\bangBKRVTr \Gamma \vdash t : \sigma$ with
    that $\bangBKRVInhPred{\Gamma}$ and
    $\bangBKRVInhPred{\bangTypeArgs{\sigma}}$. By
    \Cref{lem:Bang_building_testing_ctxt_from_type_ctxt}, one deduces
    that $\bangBKRVTr \emptyset \vdash \bangTCtxt<t> : \sigma$, and by
    \Cref{lem:Bang_building_testing_ctxt_from_arg_types}, $\bangBKRVTr
    \emptyset \vdash \bangTCtxt'<\bangTCtxt<t>> : \tau$ for some type
    $\tau$ not functional. By
    \Cref{lem:Bang_BKRV}.\ref{lem:Bang_BKRV_Characterization}, there
    exists a $\bangStratCtxt$-normal form $s \in \bangSetTerms$ such
    that $\bangTCtxt'<\bangTCtxt<t>> \bangArr*_S s$. Moreover,
    by~\Cref{lem:Bang_BKRV}.\ref{lem:Bang_BKRV_Surface_Typing_SRE},
    one has that $\bangBKRVTr \emptyset \vdash s : \tau$ thus, using
    \Cref{lem:Bang_O_|-_t_:_M_and_SNF(t)_==>_t_=_!u}, one concludes
    that $s = \oc u$ for some $u \in \bangSetTerms$. Since testing
    contexts are stable by composition, we conclude that $t$ is
    \bangCalculusSymb-meaningful.
    \qedhere
\end{description}
\end{proof}
} \deliaLu \giulioLu

\begin{definition}
	For every $t \in \bangSetTerms$, let $\llbracket t \rrbracket
	\coloneqq \{ (\Gamma;\sigma) \ \bangBKRVTypeSys\text{-testable}
	\mid \bangBKRVTr\; \Gamma \vdash t : \sigma\}$. The
	\emph{equational theory} induced by system $\bangBKRVTypeSys$ is
	the relation $\equiv_\bangBKRVTypeSys$ on $\bangSetTerms$ defined
	by: $t \equiv_\bangBKRVTypeSys u$ iff $\llbracket t \rrbracket =
	\llbracket u \rrbracket$.
\end{definition}

\begin{lemma}
	\label{lem:consistency-B}
	The equational theory $\equiv_\bangBKRVTypeSys$ induced by
	$\bangBKRVTypeSys$ is a consistent \bangTheory-theory.
\end{lemma}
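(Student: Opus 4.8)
The plan is to verify that $\equiv_\bangBKRVTypeSys$ is an equivalence relation containing $\bangArr_F$ and closed under full contexts, and then to exhibit two non-equated terms. Reflexivity, symmetry and transitivity are immediate, since $\equiv_\bangBKRVTypeSys$ is by definition the kernel of the map $t \mapsto \llbracket t \rrbracket$. Containment of $\bangArr_F$ follows from subject reduction and expansion: if $t \bangArr_F u$, then by \Cref{lem:Bang_BKRV}.\ref{lem:Bang_BKRV_Surface_Typing_SRE} one has $\bangBKRVTr\ \Gamma \vdash t : \sigma$ iff $\bangBKRVTr\ \Gamma \vdash u : \sigma$ for \emph{every} typing $\typing\Gamma\sigma$, hence in particular for every $\bangBKRVTypeSys$-testable one, so $\llbracket t \rrbracket = \llbracket u \rrbracket$, i.e.\ $t \equiv_\bangBKRVTypeSys u$.

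The substantial part is closure under full contexts, which I would derive from the following claim, proved by induction on $\bangFCtxt$: \emph{for every full context $\bangFCtxt$ and all $t, u \in \bangSetTerms$, if $\llbracket t \rrbracket = \llbracket u \rrbracket$ then $\llbracket \bangFCtxt<t> \rrbracket = \llbracket \bangFCtxt<u> \rrbracket$.} The base case $\bangFCtxt = \Hole$ is the hypothesis. For the inductive step, take a $\bangBKRVTypeSys$-testable typing $\typing\Gamma\sigma$ with $\Pi \bangBKRVTr \Gamma \vdash \bangFCtxt<t> : \sigma$; by \Cref{lem:Bang_BKRV_and_Inh_Equiv_AGKMean} we may take $\Pi \bangAGKMeanTr \Gamma \vdash \bangFCtxt<t> : \sigma$, i.e.\ every judgment in $\Pi$ is $\bangBKRVTypeSys$-testable. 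Decompose $\Pi$ by its last rule: the one premise — or, in the case of the $\bangBKRVBgRuleName$ rule, the several premises — lying on the hole branch has subject $\bangFCtxt'<t>$ for a strictly smaller full context $\bangFCtxt'$, and its conclusion is $\bangBKRVTypeSys$-testable, hence belongs to $\llbracket \bangFCtxt'<t> \rrbracket = \llbracket \bangFCtxt'<u> \rrbracket$ by the induction hypothesis; \Cref{lem:Bang_BKRV_and_Inh_Equiv_AGKMean} then yields an all-testable derivation of the \emph{same} conclusion with subject $\bangFCtxt'<u>$. Reassembling via the same last rule, leaving the remaining hole-free premises of $\Pi$ unchanged, produces an all-testable derivation of $\Gamma \vdash \bangFCtxt<u> : \sigma$, so $\llbracket \bangFCtxt<t> \rrbracket \subseteq \llbracket \bangFCtxt<u> \rrbracket$; the converse inclusion is symmetric. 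The only case needing separate handling is $\bangFCtxt = \oc\bangFCtxt'$ with the $\bangBKRVBgRuleName$ rule: when its index set is empty, the subterm $\bangFCtxt'<t>$ is left untyped and the conclusion is forced to $\typing\emptyset\emptymset$, which is trivially $\bangBKRVTypeSys$-testable and common to $\oc\bangFCtxt'<t>$ and $\oc\bangFCtxt'<u>$; when it is non-empty one runs the general argument in each premise. Closure of $\equiv_\bangBKRVTypeSys$ under full contexts is then immediate.

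For consistency I would use the logical characterization: by \Cref{t:logical-characterization}, $\llbracket t \rrbracket \neq \emptyset$ holds exactly when $t$ is \bangCalculusSymb-meaningful. Since $\Id$ is \bangCalculusSymb-meaningful while $\Omega$ is \bangCalculusSymb-meaningless, we obtain $\llbracket \Id \rrbracket \neq \emptyset = \llbracket \Omega \rrbracket$, hence $\Id \not\equiv_\bangBKRVTypeSys \Omega$, so $\equiv_\bangBKRVTypeSys$ does not equate all terms.

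The main obstacle is the context-closure step: one must carry the $\bangBKRVTypeSys$-testability side condition through the replacement of sub-derivations inside an arbitrary full context, and \Cref{lem:Bang_BKRV_and_Inh_Equiv_AGKMean} — which says both that testability propagates bottom-up through a derivation and that every testable typing admits an all-testable derivation — is exactly the tool that makes this bookkeeping go through; the variable-arity $\bangBKRVBgRuleName$ rule, and in particular its zero-premise instance where the hole is not typed at all, is the one case that has to be singled out.
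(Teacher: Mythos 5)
Your proof is correct and follows essentially the same route as the paper's: equivalence by definition, containment of $\bangArr_F$ via \Cref{lem:Bang_BKRV}.\ref{lem:Bang_BKRV_Surface_Typing_SRE}, closure under full contexts, and consistency by exhibiting two terms with distinct interpretations (the paper uses $xx$ and $\abs{x}{x\oc x}$ where you use $\Id$ and $\Omega$; both work). The only substantive difference is that where the paper dismisses contextual closure as ``immediate from the definition of the system,'' you rightly note that it is not quite --- since $\llbracket \cdot \rrbracket$ records only $\bangBKRVTypeSys$-testable typings, one must know that testability propagates to the sub-judgment(s) at the hole --- and you supply the missing induction on the context via \Cref{lem:Bang_BKRV_and_Inh_Equiv_AGKMean}, including the zero-premise bang case; this is a faithful and welcome expansion of the step the paper elides.
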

\stableProof{
\begin{proof}
	By definition, $\equiv_\bangBKRVTypeSys$ is an equivalence on
	$\bangSetTerms$. It contains $\bangArr_F$ (that is, if $t
	\bangArr_F u$ then $t \equiv_\bangBKRVTypeSys u$) by
	\Cref{lem:Bang_BKRV}.(\ref{lem:Bang_BKRV_Surface_Typing_SRE}).
	From the definition of system $\bangBKRVTypeSys$, it follows
	immediately that $t \equiv_\bangBKRVTypeSys u$ implies
	$\bangFCtxt<t> \equiv_\bangBKRVTypeSys \bangFCtxt<u>$. We have
	seen in \Cref{sec:Meaningfulness} that $\llbracket xx \rrbracket =
	\emptyset$ and $\llbracket \abs{x}{x \oc x} \rrbracket \neq
	\emptyset$, hence $xx \not \equiv_\bangBKRVTypeSys \abs{x}{x \oc
	x}$ and so $\equiv_\bangBKRVTypeSys$ is consistent. 
\end{proof}
} \deliaLu \giulioLu%

\RecConsistencyThH*%
\label{prf:Bang_Theory_H_Consistent}%
\stableProof{
\begin{proof}
	According to \Cref{t:logical-characterization}, for every $t \in
	\bangSetTerms$, $\llbracket t \rrbracket = \emptyset$ iff $t$ is
	\bangCalculusSymb-meaningless. Therefore,
	$\equiv_\bangBKRVTypeSys$ equates all
	\bangCalculusSymb-meaningless terms. Since $\bangTheoryH$ is the
	smallest \bangTheory-theory equating all
	\bangCalculusSymb-meaningless terms, and since
	$\equiv_\bangBKRVTypeSys$ is a \bangTheory-theory
	(\Cref{lem:consistency-B}), then $\equiv_{\bangTheoryH} \,
	\subseteq \, \equiv_\bangBKRVTypeSys$. Hence, since
	$\equiv_\bangBKRVTypeSys$ is consistent
	(\Cref{lem:consistency-B}), then so is $\equiv_{\bangTheoryH}$.
\end{proof}
} \deliaLu \giulioLu%

\section{Proofs of \Cref{sec:typed-genericity}}

\RecBangBKRVEquivAGKMean*
\label{prf:Bang_BKRV_and_Inh_Equiv_AGKMean}%
\stableProof{%
    \begin{proof}
  \begin{description}
	\item[$(\Leftarrow)$] Immediate consequence of the definition of 
	$\Pi \bangAGKMeanTr \Gamma \vdash t : \sigma$.
	
	\item[$(\Rightarrow)$] Let $\Pi \bangBKRVTr \Gamma \vdash t :
    \sigma$ with $\bangBKRVInhPred{\Gamma}$ and
    $\bangBKRVInhPred{\bangTypeArgs{\sigma}}$. We proceed by induction
    on $\Pi$, and we consider its last rule. Cases:
    \begin{description}
    \item[\bangBKRVVarRuleName] Then $t = x$ and $\Pi$ has the
        following form:
        \begin{equation*}
            \begin{prooftree}
                \inferBangBKRVVar{x : \mset{\sigma} \vdash x : \sigma}
            \end{prooftree}
        \end{equation*}
        with $\Gamma = x : \mset{\sigma}$. By hypothesis
        $\bangBKRVInhPred{x : \mset{\sigma}}$ and
        $\bangBKRVInhPred{\bangTypeArgs{\sigma}}$, hence  $x :
        \mset{\sigma} \vdash x : \sigma$, which is the only judgment
        of $\Pi$, is $\bangBKRVTypeSys$-testable. Therefore, $\Pi
        \bangAGKMeanTr \Gamma \vdash t : \sigma$.

    \item[\bangBKRVAbsRuleName] Then $t = \abs{x}{t'}$ and $\Pi$ has
        the following form, with $\sigma = \M \typeArrow \tau$:
        \begin{equation*}
            \begin{prooftree}
                \hypo{\Phi \bangBKRVTr \Gamma, x : \M \vdash t' : \tau}
                \inferBangBKRVAbs{\Gamma \vdash \abs{x}{t'} : \M \typeArrow \tau}
            \end{prooftree}
        \end{equation*}
        By hypothesis $\bangBKRVInhPred{\Gamma}$ and
        $\bangBKRVInhPred{\bangTypeArgs{\M \typeArrow \tau}}$, thus
        $\bangBKRVInhPred{\M}$ and
        $\bangBKRVInhPred{\bangTypeArgs{\tau}}$, hence
        $\bangBKRVInhPred{\Gamma, x : \M}$. By \ih $\Phi
        \bangAGKMeanTr \Gamma, x : \M \vdash t' : \tau$. So, all
        judgments in $\Pi$ are $\bangBKRVTypeSys$-testable, hence $\Pi
        \bangAGKMeanTr \Gamma \vdash t : \sigma$.

    \item[\bangBKRVAppRuleName] Then $t = \app{t_1}{t_2}$ and $\Pi$
        has the following form, with $\Gamma = \Gamma_1 + \Gamma_2$:
        \begin{equation*}
            \begin{prooftree}
                \hypo{\Phi_1 \bangBKRVTr \Gamma_1 \vdash t_1 : \M \typeArrow \sigma}
                \hypo{\Phi_2 \bangBKRVTr \Gamma_2 \vdash t_2 : \M}
                \inferBangBKRVApp{\Gamma_1 + \Gamma_2 \vdash \app{t_1}{t_2} : \sigma}
            \end{prooftree}
        \end{equation*}
        By hypothesis $\bangBKRVInhPred{\Gamma_1 + \Gamma_2}$ and
        $\bangBKRVInhPred{\bangTypeArgs{\sigma}}$ thus
        $\bangBKRVInhPred{\Gamma_1}$ and $\bangBKRVInhPred{\Gamma_2}$.
        By \Cref{lem:Bang_building_testing_ctxt_from_type_ctxt}
        applied to $\Phi_2$, $\bangBKRVInhPred{\M}$ and hence
        $\bangBKRVInhPred{\bangTypeArgs{\M \typeArrow \sigma}}$. By
        \ih on both $\Phi_1$ and $\Phi_2$, $\Phi_1 \bangAGKMeanTr
        \Gamma_1 \vdash t_1 : \M \typeArrow \sigma$ and $\Phi_2
        \bangAGKMeanTr \Gamma_2 \vdash t_2 : \M$. So, all judgments in
        $\Pi$ are $\bangBKRVTypeSys$-testable, hence $\Pi
        \bangAGKMeanTr \Gamma \vdash t : \sigma$.
        
    \item[\bangBKRVEsRuleName] Then $t = t_1\esub{x}{t_2}$ and $\Pi$
        has the following form, with $\Gamma = \Gamma_1 + \Gamma_2$:
        \begin{equation*}
            \begin{prooftree}
                \hypo{\Phi_1 \bangBKRVTr \Gamma_1; x : \M \vdash t_1 : \sigma}
                \hypo{\Phi_2 \bangBKRVTr \Gamma_2 \vdash t_2 : \M}
                \inferBangBKRVEs{\Gamma_1 + \Gamma_2 \vdash t_1\esub{x}{t_2} : \sigma}
            \end{prooftree}
        \end{equation*}
        By hypothesis $\bangBKRVInhPred{\Gamma_1 + \Gamma_2}$ and
        $\bangBKRVInhPred{\bangTypeArgs{\sigma}}$ thus
        $\bangBKRVInhPred{\Gamma_1}$ and $\bangBKRVInhPred{\Gamma_2}$.
        By \Cref{lem:Bang_building_testing_ctxt_from_type_ctxt}
        applied to $\Phi_2$, $\bangBKRVInhPred{\M}$ and hence
        $\bangBKRVInhPred{\Gamma_1, x : \M}$. By \ih on both $\Phi_1$
        and $\Phi_2$, $\Phi_1 \bangAGKMeanTr \Gamma_1, x : \M \vdash
        t_1 : \sigma$ and $\Phi_2 \bangAGKMeanTr \Gamma_2 \vdash t_2 :
        \M$. So, all judgments in $\Pi$ are
        $\bangBKRVTypeSys$-testable, hence $\Pi \bangAGKMeanTr \Gamma
        \vdash t : \sigma$.

    \item[\bangBKRVDerRuleName] Then $t = \der{t'}$ and $\Pi$ has the
        following form:
        \begin{equation*}
            \begin{prooftree}
                \hypo{\Phi \bangBKRVTr \Gamma \vdash t' : \mset{\sigma}}
                \inferBangBKRVDer{\Gamma \vdash \der{t'} : \sigma}
            \end{prooftree}
        \end{equation*}
        By hypothesis $\bangBKRVInhPred{\Gamma}$ and
        $\bangBKRVInhPred{\bangTypeArgs{\sigma}}$. By
        \Cref{lem:Bang_building_testing_ctxt_from_type_ctxt} applied
        to $\Phi$,     $\bangBKRVInhPred{\mset{\sigma}}$. By \ih on
        $\Phi$, $\Phi \bangAGKMeanTr \Gamma \vdash t' :
        \mset{\sigma}$. So, all judgments in $\Pi$ are
        $\bangBKRVTypeSys$-testable, hence $\Pi \bangAGKMeanTr \Gamma
        \vdash t : \sigma$.

    \item[\bangBKRVBgRuleName] Then $t = \oc t'$ and $\Pi$ has the
        following form,  with $\Gamma = +_{i \in I} \Gamma_i$ and $\sigma =
        \mset{\tau_i}_{i \in I}$:
        \begin{equation*}
            \begin{prooftree}
                \hypo{(\Phi_i \bangBKRVTr \Gamma_i \vdash t' : \tau_i)}
                \inferBangBKRVBg{+_{i \in I} \Gamma_i \vdash \oc t' : \mset{\tau_i}_{i \in I}}
            \end{prooftree}
        \end{equation*}
       	By hypothesis $\bangBKRVInhPred{+_{i \in I} \Gamma_i}$ and
        $\bangBKRVInhPred{\bangTypeArgs{\mset{\tau_i}_{i \in I}}}$.
        Thus for any $i \in I$, $\bangBKRVInhPred{\Gamma_i}$ According
        to \Cref{lem:Bang_building_testing_ctxt_from_type_ctxt}
        applied to $\Phi_i$, $\bangBKRVInhPred{\tau_i}$. By \ih on
        each $\Phi_i$, $\Phi_i \bangAGKMeanTr \Gamma_i \vdash t' :
        \tau_i$. So, all judgments in $\Pi$ are
        $\bangBKRVTypeSys$-testable, hence $\Pi \bangAGKMeanTr \Gamma
        \vdash t : \sigma$.
        \qedhere
    \end{description}
\end{description}
\end{proof}
} \deliaLu \giulioLu

\RecBangMeaningfulTypeMonotonicity*
\label{prf:Bang_AGKMean_Typed_Genericity}%
\stableProof{
    \begin{proof} 
Let $u \in \bangSetTerms$. Since every judgment in $\Pi \bangAGKMeanTr
\bangFCtxt<t>$ is inhabited and we shall build $\Pi' \bangBKRVTr
\bangFCtxt<u>$ from $\Pi$ by possibly changing the subject in some
judgments without altering their types, $\Pi' \bangAGKMeanTr
\bangFCtxt<u>$ by construction. We proceed to build $\Pi'$ by
induction on $\bangFCtxt$. Cases:
\begin{itemize}
\item[\bltI] $\bangFCtxt = \Hole$: Then $\bangFCtxt<t> = t$ is
    meaningless, which is impossible since it contradicts
    \Cref{lem:Bang_Meaningfulness_Typing_and_Inhabitation,lem:Bang_BKRV_and_Inh_Equiv_AGKMean}.

\item[\bltI] $\bangFCtxt = \abs{x}{\bangFCtxt'}$: Then $\Pi$ is
    necessarily of the following form, with $\sigma = \M \typeArrow
    \tau$:
    \begin{equation*}
        \begin{prooftree}
            \hypo{\Phi \bangAGKMeanTr \Gamma, x : \M \vdash \bangFCtxt'<t> : \tau}
            \inferBangAGKMeanAbs{\Gamma \vdash \abs{x}{\bangFCtxt'<t>} : \M \typeArrow \tau}
        \end{prooftree}
    \end{equation*}
    By \ih on $\bangFCtxt'$, there is $\Phi' \bangAGKMeanTr \Gamma, x
    : \M \vdash \bangFCtxt'<u> : \tau$. Let $\Pi'$ be as follows:
    \begin{equation*}
        \begin{prooftree}
            \hypo{\Phi' \bangAGKMeanTr \Gamma, x : \M \vdash \bangFCtxt'<u> : \tau}
            \inferBangAGKMeanAbs{\Gamma \vdash \abs{x}{\bangFCtxt'<u>} : \M \typeArrow \tau}
        \end{prooftree}
    \end{equation*}

\item[\bltI] $\bangFCtxt = \app{\bangFCtxt'}{s}$: Then $\Pi$ is
    necessarily as follows, with $\Gamma = \Gamma_1 + \Gamma_2$ and :
    \begin{equation*}
        \begin{prooftree}
            \hypo{\Pi_1 \bangAGKMeanTr \Gamma_1 \vdash \bangFCtxt'<t> : \M \typeArrow \sigma}
            \hypo{\Pi_2 \bangAGKMeanTr \Gamma_2 \vdash s : \M}
            \inferBangBKRVApp{\Gamma_1 + \Gamma_2 \vdash \app{\bangFCtxt'<t>}{s} : \sigma}
        \end{prooftree}
    \end{equation*}
   	By \ih on $\bangFCtxt'$, there is $\Pi'_1 \bangAGKMeanTr \Gamma_1
    \vdash \bangFCtxt'<u> : \M \typeArrow \sigma$. We set $\Pi'$ as
    follows:
    \begin{equation*}
        \begin{prooftree}
            \hypo{\Pi_1 \bangAGKMeanTr \Gamma_1 \vdash \bangFCtxt'<u> : \M \typeArrow \sigma}
            \hypo{\Pi_2 \bangAGKMeanTr \Gamma_2 \vdash s : \M}
            \inferBangBKRVApp{\Gamma_1 + \Gamma_2 \vdash \app{\bangFCtxt'<u>}{s} : \sigma}
        \end{prooftree}
    \end{equation*}

\item[\bltI] $\bangFCtxt = \app[\,]{s}{\bangFCtxt'}$: Then $\Pi$ is
    necessarily of the following form:
    \begin{equation*}
        \begin{prooftree}
            \hypo{\Pi_1 \bangAGKMeanTr \Gamma_1 \vdash s : \M \typeArrow \sigma}
            \hypo{\Pi_2 \bangAGKMeanTr \Gamma_2 \vdash \bangFCtxt'<t> : \M}
            \inferBangBKRVApp{\Gamma_1 + \Gamma_2 \vdash \app[\,]{s}{\bangFCtxt'<t>} : \sigma}
        \end{prooftree}
    \end{equation*}
    with $\Gamma = \Gamma_1 + \Gamma_2$. By \ih on $\bangFCtxt'$,
    there is $\Pi'_2 \bangAGKMeanTr \Gamma_2 \vdash \bangFCtxt'<u> :
    \M$. We set $\Pi'$ as follows:
    \begin{equation*}
        \begin{prooftree}
            \hypo{\Pi_1 \bangAGKMeanTr \Gamma_1 \vdash s : \M \typeArrow \sigma}
            \hypo{\Pi'_2 \bangAGKMeanTr \Gamma_2 \vdash \bangFCtxt'<u> : \M}
            \inferBangBKRVApp{\Gamma_1 + \Gamma_2 \vdash \app[\,]{s}{\bangFCtxt'<u>} : \sigma}
        \end{prooftree}
    \end{equation*}

\item[\bltI] $\bangFCtxt = \bangFCtxt'\esub{x}{s}$: Then $\Pi$ is
    necessarily of the following form:
    \begin{equation*}
        \begin{prooftree}
            \hypo{\Pi_1 \bangAGKMeanTr \Gamma_1, x : \M \vdash \bangFCtxt'<t> : \sigma}
            \hypo{\Pi_2 \bangAGKMeanTr \Gamma_2 \vdash s : \M}
            \inferBangBKRVEs{\Gamma_1 + \Gamma_2 \vdash \bangFCtxt'<t>\esub{x}{s} : \sigma}
        \end{prooftree}
    \end{equation*}
    with $\Gamma = \Gamma_1 + \Gamma_2$. By \ih on $\bangFCtxt'$,
    there is $\Pi'_1 \bangAGKMeanTr \Gamma_1, x : \M \vdash
    \bangFCtxt'<u> : \sigma$. We set $\Pi'$ as follows:
    \begin{equation*}
        \begin{prooftree}
            \hypo{\Pi'_1 \bangAGKMeanTr \Gamma_1, x : \M \vdash \bangFCtxt'<u> : \sigma}
            \hypo{\Pi_2 \bangAGKMeanTr \Gamma_2 \vdash s : \M}
            \inferBangBKRVEs{\Gamma_1 + \Gamma_2 \vdash \bangFCtxt'<u>\esub{x}{s} : \sigma}
        \end{prooftree}
    \end{equation*}

\item[\bltI] $\bangFCtxt = s\esub{x}{\bangFCtxt'}$: Then $\Pi$ is
    necessarily of the following form:
    \begin{equation*}
        \begin{prooftree}
            \hypo{\Pi_1 \bangAGKMeanTr \Gamma_1, x : \M \vdash s : \sigma}
            \hypo{\Pi_2 \bangAGKMeanTr \Gamma_2 \vdash \bangFCtxt'<t> : \M}
            \inferBangBKRVEs{\Gamma_1 + \Gamma_2 \vdash s\esub{x}{\bangFCtxt'<t>} : \sigma}
        \end{prooftree}
    \end{equation*}
    with $\Gamma = \Gamma_1 + \Gamma_2$. By \ih on $\bangFCtxt'$,
    there is $\Pi_2' \bangAGKMeanTr \Gamma_2 \vdash \bangFCtxt'<u> :
    \M$. We set $\Pi'$ as follows:
    \begin{equation*}
        \begin{prooftree}
            \hypo{\Pi_1 \bangAGKMeanTr \Gamma_1, x : \M \vdash s : \sigma}
            \hypo{\Pi_2' \bangAGKMeanTr \Gamma_2 \vdash \bangFCtxt'<u> : \M}
            \inferBangBKRVEs{\Gamma_1 + \Gamma_2 \vdash s\esub{x}{\bangFCtxt'<u>} : \sigma}
        \end{prooftree}
    \end{equation*}

\item[\bltI] $\bangFCtxt = \oc\bangFCtxt'$: Then $\Pi$ is necessarily
    of the following form, for some finite index set $I$:
    \begin{equation*}
        \begin{prooftree}
            \hypo{\Pi_i \bangAGKMeanTr \Gamma_i \vdash \bangFCtxt'<t> : \sigma_i}
            \delims{\left(}{\right)_{i \in I}}
            \inferBangBKRVBg{+_{i \in I} \Gamma_i &\vdash \oc \bangFCtxt'<t> : \mset{\sigma_i}_{i \in I}}
        \end{prooftree}
    \end{equation*}
	with $\sigma = \mset{\sigma_i}_{i \in I}$ and $\Gamma = +_{i \in
    I} \Gamma_i$. For each $i \in I$, by \ih on $\bangFCtxt'$, one
    obtains $\Pi'_i \bangAGKMeanTr \Gamma_i \vdash \bangFCtxt'<u> :
    \sigma_i$. We set $\Pi'$ as follows:
    \begin{equation*}
        \begin{prooftree}
            \hypo{\Pi'_i \bangAGKMeanTr \Gamma_i \vdash \bangFCtxt'<u> : \sigma_i}
            \delims{\left(}{\right)_{i \in I}}
            \inferBangBKRVBg{+_{i \in I} \Gamma_i &\vdash \oc \bangFCtxt'<u> : \mset{\sigma_i}_{i \in I}}
        \end{prooftree}
    \end{equation*}

\item[\bltI] $\bangFCtxt = \der{\bangFCtxt'}$: Then $\Pi$ is
    necessarily of the following form:
    \begin{equation*}
        \begin{prooftree}
            \hypo{\Phi \bangAGKMeanTr \Gamma \vdash \bangFCtxt'<t> : \mset{\sigma}}
            \inferBangBKRVDer{\Gamma \vdash \der{\bangFCtxt'<t>} : \sigma}
        \end{prooftree}
    \end{equation*}
    By \ih on $\bangFCtxt'$, there is $\Phi' \bangAGKMeanTr \Gamma
    \vdash \bangFCtxt'<u> : \mset{\sigma}$. We set $\Pi'$ as follows:
    \begin{equation*}
        \begin{prooftree}
            \hypo{\Phi \bangAGKMeanTr \Gamma \vdash \bangFCtxt'<u> : \mset{\sigma}}
            \inferBangBKRVDer{\Gamma \vdash \der{\bangFCtxt'<u>} : \sigma}
        \end{prooftree}
    \end{equation*}
	\qedhere
\end{itemize}
\end{proof}
} \deliaLu \giulioLu




\begin{lemma}
	\label{lem:Bang_Extension_of_Theory_H}%
	Let $t, u \in \bangSetTerms$. If the smallest \bangTheory-theory
	containing $\bangTheoryH$ and equating $t$ and $u$ is consistent,
	then $t \equiv_{\bangTheoryH*} u$.
\end{lemma}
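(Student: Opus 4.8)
The plan is to prove the contrapositive: if $(t,u) \notin \bangTheoryH*$, then the smallest $\mathcal{H}$-theory $\mathcal{T}$ containing $\bangTheoryH$ and equating $t$ and $u$ is inconsistent. Since both $\bangTheoryH*$ and $\mathcal{T}$ are symmetric in $t$ and $u$, by definition of $\bangTheoryH*$ we obtain from $(t,u) \notin \bangTheoryH*$ a full context $\bangFCtxt$ such that, up to swapping $t$ and $u$, $\bangFCtxt<t>$ is \bangCalculusSymb-meaningful while $\bangFCtxt<u>$ is \bangCalculusSymb-meaningless. Since $\mathcal{T}$ contains the pair $(t,u)$ and is closed under full contexts, $\bangFCtxt<t> \mathrel{\mathcal{T}} \bangFCtxt<u>$; and since $\bangTheoryH \subseteq \mathcal{T}$ equates all \bangCalculusSymb-meaningless terms, in particular $\bangFCtxt<u>$ with $\Omega$, we get $\bangFCtxt<t> \mathrel{\mathcal{T}} \Omega$, where $\bangFCtxt<t>$ is moreover meaningful.

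Next I would extract computational content from the meaningfulness of $\bangFCtxt<t>$. Let $\bangTCtxt$ be a testing context and $w \in \bangSetTerms$ with $\bangTCtxt<\bangFCtxt<t>> \bangArr*_S \oc w$. Because $\bangArr_S \subseteq \bangArr_F$, every $\mathcal{H}$-theory contains $\bangArr_F$ and is an equivalence closed under full contexts, and every testing context is a full context, we obtain $\oc w \mathrel{\mathcal{T}} \bangTCtxt<\bangFCtxt<t>> \mathrel{\mathcal{T}} \bangTCtxt<\Omega>$. Here one uses that $\bangTCtxt<\Omega>$ is \bangCalculusSymb-meaningless: testing contexts are closed under composition (plugging a testing context into the hole of a testing context yields a testing context, by a straightforward induction on the grammar of $\bangTCtxt$), so if $\bangTCtxt<\Omega>$ were meaningful then so would be $\Omega$, contradicting its meaninglessness. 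Hence $\bangTCtxt<\Omega> \mathrel{\mathcal{T}} \Omega$, and therefore $\oc w \mathrel{\mathcal{T}} \Omega$: the theory $\mathcal{T}$ equates the observable $\oc w$ with $\Omega$.

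Finally I would propagate this collapse to all terms. Let $a \in \bangSetTerms$ be arbitrary and pick $x \notin \freeVar{a}$. Then $a\esub{x}{\oc w} \bangArr_S a\isub{x}{w} = a$ by rule $\bangSymbSubs$ (with empty list context) together with $x \notin \freeVar{a}$, so $a \mathrel{\mathcal{T}} a\esub{x}{\oc w}$; closing the equation $\oc w \mathrel{\mathcal{T}} \Omega$ under the full context $a\esub{x}{\Hole}$ gives $a\esub{x}{\oc w} \mathrel{\mathcal{T}} a\esub{x}{\Omega}$. Now $a\esub{x}{\Omega}$ is \bangCalculusSymb-meaningless: $\Omega$ occurs in a \surfaceTxt position inside $a\esub{x}{\Omega}$, $\Omega$ has no \surfaceTxt normal form, and the substitution $\esub{x}{\Omega}$ can never be fired (its argument never reduces to a bang), so $a\esub{x}{\Omega}$ is not \surfaceTxt-normalizing, hence not $\bangBKRVTypeSys$-typable by \Cref{lem:Bang_BKRV}, hence not meaningful by \Cref{t:logical-characterization}. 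Thus $a\esub{x}{\Omega} \mathrel{\mathcal{T}} \Omega$, and chaining the equations yields $a \mathrel{\mathcal{T}} \Omega$. As $a$ was arbitrary, $\mathcal{T}$ equates all terms and is therefore inconsistent, which completes the contrapositive and hence the proof.

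The main obstacle is the middle step, namely turning the purely abstract hypothesis ``$\bangFCtxt<t>$ is meaningful'' into something with which one can collapse the whole theory. Meaningfulness only supplies \emph{some} bang $\oc w$ with an a priori unknown body $w$, not a canonical observable such as $\oc\Id$; the key idea is that the body is irrelevant, because feeding $\oc w$ into a fresh explicit substitution $a\esub{x}{\oc w}$ and firing it recovers an arbitrary $a$, so already knowing $\oc w \mathrel{\mathcal{T}} \Omega$ suffices. The two auxiliary meaninglessness facts ($\bangTCtxt<\Omega>$ and $a\esub{x}{\Omega}$ with $x \notin \freeVar a$) are routine, the former via closure of testing contexts under composition, the latter via the operational/typing characterization of \bangCalculusSymb-meaningfulness.
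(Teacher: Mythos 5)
Your proof is correct and follows essentially the same route as the paper's: argue the contrapositive, extract a testing context sending $\bangFCtxt<t>$ to a bang $\oc w$, and erase that bang through a vacuously binding context so that the theory collapses onto $\Omega$. The only (cosmetic) difference is that the paper wraps with $\app{(\abs{x}{s})}{\Hole}$ for an arbitrary \emph{meaningful} $s$ and concludes $s \equiv_{\mathcal{E}} \Omega$ (meaningless terms being already equated to $\Omega$), whereas you first derive $\oc w \equiv_{\mathcal{T}} \Omega$ and then erase it with $a\esub{x}{\Hole}$ for an arbitrary term $a$.
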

\stableProof{
\begin{proof}
	Let $t, u \in \bangSetTerms$ and $\mathcal{E}$ be the smallest
	\bangTheory-theory containing $\bangTheoryH$ and equating $t$ and
	$u$. Suppose by contrapositive that $t \not\equiv_{\bangTheoryH*}
	u$, that is, there is a context $\bangFCtxt$ such that
	$\bangFCtxt<t>$ is \bangCalculusSymb-meaningful and
	$\bangFCtxt<u>$ is \bangCalculusSymb-meaningless. We show that
	$\mathcal{E}$ is inconsistent.

	Let $s \in \bangSetTerms$ be a \bangCalculusSymb-meaningful term
	(as for example $\oc z$) and $x \in \bangSetVariables$ fresh. On
	one hand, by definition of meaningfulness, there exists a testing
	context $\bangTCtxt$ such that $\bangTCtxt<\bangFCtxt<t>>
	\bangArr*_S \oc u$ for some term $u \in \bangSetTerms$. In
	particular, $\app[\,]{(\abs{x}{s})}{\bangTCtxt<\bangFCtxt<t>>}
	\bangArr*_S \app[\,]{(\abs{x}{s})}{\oc u} \bangArr*_S s$. Since
	$\mathcal{E}$ is a \bangTheory-theory, then
	$\app[\,]{(\abs{x}{s})}{\bangTCtxt<\bangFCtxt<t>>}
	\equiv_\mathcal{E} s$. Since $\bangFCtxt<u>$ is
	\bangCalculusSymb-meaningless, then so are
	$\bangTCtxt<\bangFCtxt<u>>$. By
	typing, we deduce that
	$\app[\,]{(\abs{x}{s})}{\bangTCtxt<\bangFCtxt<u>>}$ is also
	\bangCalculusSymb-meaningless. Since $t \equiv_{\mathcal{E}} u$
	then by contextuality and transitivity
	$\app[\,]{(\abs{x}{s})}{\bangTCtxt<\bangFCtxt<u>>}
	\equiv_{\mathcal{E}} s$. Since $\mathcal{E}$ equates all
	meaningless terms, one concludes by transitivity that $\Omega
	\equiv_{\mathcal{E}} s$. Since $s$ is an arbitrary meaningful
	term, $\mathcal{E}$ is inconsistent.
\end{proof}
} \deliaLu \giulioLu%

\begin{corollary}
		\label{cor:inclusion-in-H*}
	Let $\mathcal{E}$ be a consistent \bangTheory-theory. If
	$\bangTheoryH \subseteq \mathcal{E}$ then $\mathcal{E} \subseteq
	\bangTheoryH*$. 
\end{corollary}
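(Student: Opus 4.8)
The plan is to obtain the corollary as an immediate consequence of \Cref{lem:Bang_Extension_of_Theory_H}. Fix a consistent \bangTheory-theory $\mathcal{E}$ with $\bangTheoryH \subseteq \mathcal{E}$; the goal is $\mathcal{E} \subseteq \bangTheoryH*$. So I would take an arbitrary pair $(t,u) \in \mathcal{E}$, that is, $t \equiv_{\mathcal{E}} u$ with $t,u \in \bangSetTerms$, and reduce the claim to showing $t \equiv_{\bangTheoryH*} u$.

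First I would observe that the smallest \bangTheory-theory containing $\bangTheoryH$ and equating $t$ and $u$ is well defined: it is the intersection of the family of all \bangTheory-theories with that property, a family that is nonempty since the total relation on $\bangSetTerms$ belongs to it, and an arbitrary intersection of \bangTheory-theories is again a \bangTheory-theory (being an equivalence, containing $\bangArr_F$, and closed under full contexts are all preserved under intersection). Call this minimal theory $\mathcal{E}_{t,u}$. By minimality, $\mathcal{E}_{t,u} \subseteq \mathcal{E}$, because $\mathcal{E}$ itself is a \bangTheory-theory that contains $\bangTheoryH$ and equates $t$ and $u$. Since $\mathcal{E}$ is consistent, i.e.\ it does not equate all terms, and a sub-relation of a relation that fails to equate some pair also fails to equate that pair, $\mathcal{E}_{t,u}$ is consistent as well.

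Now \Cref{lem:Bang_Extension_of_Theory_H} applies verbatim to $\mathcal{E}_{t,u}$: as the smallest \bangTheory-theory containing $\bangTheoryH$ and equating $t$ and $u$ is consistent, we get $t \equiv_{\bangTheoryH*} u$. Since $(t,u)$ ranged over all of $\mathcal{E}$, this establishes $\mathcal{E} \subseteq \bangTheoryH*$, which is exactly the statement. I do not expect any real obstacle here, since all the substance has already been absorbed into \Cref{lem:Bang_Extension_of_Theory_H}; the only point requiring a little care is the bookkeeping around ``smallest \bangTheory-theory equating $t$ and $u$'' — confirming that it exists, that it is contained in $\mathcal{E}$, and that consistency is inherited downward along inclusion — all of which is routine.
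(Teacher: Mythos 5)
Your proposal is correct and follows essentially the same route as the paper's proof: reduce to an arbitrary pair $t \equiv_{\mathcal{E}} u$, note that the smallest \bangTheory-theory containing $\bangTheoryH$ and equating $t$ and $u$ is contained in $\mathcal{E}$ and hence consistent, and conclude via \Cref{lem:Bang_Extension_of_Theory_H}. The only difference is that you spell out the existence of that smallest theory (as an intersection), which the paper leaves implicit.
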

\stableProof{
\begin{proof}
	Let $t, u \in \bangSetTerms$ such that $t \equiv_{\mathcal{E}} u$.
	Let us show that $t \equiv_{\bangTheoryH*} u$ (hence $\mathcal{E}
	\subseteq \bangTheoryH*$). Let $\bangTheoryH^{t,u}$ be the
	smallest \bangTheory-theory containing $\bangTheoryH$ and equating
	$t$ and $u$. Note that $\bangTheoryH^{t,u} \subseteq \mathcal{E}$.
	Hence, since by hypothesis $\mathcal{E}$ does not equate all
	terms, $\bangTheoryH^{t,u}$ is consistent too. Therefore, $t
	\equiv_{\bangTheoryH*} u$ by
	\Cref{lem:Bang_Extension_of_Theory_H}.
\end{proof}
} \deliaLu \giulioLu%

\RecMaximalityConsistency*
\label{prf:Bang_Theory_H*_Consistent}%
\stableProof{
\begin{proof}~
	\begin{description}
	\item[(\bangTheory-theory):] ~
		\begin{itemize}
		\item From its definition, it follows immediately that
			$\bangTheoryH*$ is an equivalence on $\bangSetTerms$.
		
		\item To prove that $\bangTheoryH*$ is closed under full
			contexts, suppose that $t \equiv_{\bangTheoryH*} u$: we
			have to prove that, for every full context $\bangFCtxt$,
			$\bangFCtxt<t> \equiv_{\bangTheoryH*} \bangFCtxt<u>$; that
			is, for every full contexts $\bangFCtxt, \bangFCtxt'$,
			$\bangFCtxt'<\bangFCtxt<t>>$ is
			\bangCalculusSymb-meaningful iff so is
			$\bangFCtxt'<\bangFCtxt<u>>$. Since
			$\bangFCtxt'<\bangFCtxt>$ is a full context and $t
			\equiv_{\bangTheoryH*} u$, we are done by definition.

		\item To prove that $\bangTheoryH*$ contains $\bangArr_F$,
			suppose that $t \bangArr_F u$, so that $\bangFCtxt<t>
			\bangArr_F \bangFCtxt<u>$ holds for every context
			$\bangFCtxt$. By
			\Cref{lem:Bang_BKRV}.(\ref{lem:Bang_BKRV_Surface_Typing_SRE}),
			for every typing (and in particular every testable typing)
			$(\Gamma; \sigma)$, $\bangBKRVTr \Gamma \vdash
			\bangFCtxt<t> : \sigma$ iff $\bangBKRVTr \Gamma \vdash
			\bangFCtxt<u> : \sigma$. Hence, by
			\Cref{t:logical-characterization}, $\bangFCtxt<t>$ is
			\bangCalculusSymb-meaningful iff so is $\bangFCtxt<u>$;
			hence, $t \equiv_{\bangTheoryH*} u$ by definition.
		\end{itemize}

	\item [(Containing $\bangTheoryH$):] As $\bangTheoryH*$ is also a
		\bangTheory-theory, it suffices to prove that $\bangTheoryH*$
		equates all \bangCalculusSymb-meaningless terms. Let $t, u \in
		\bangSetTerms$ be \bangCalculusSymb-meaningless. By surface
		genericity (\Cref{lem:Bang_Qualitative_Surface_Genericity}),
		for every full context $\bangFCtxt$, $\bangFCtxt<t>$ is
		\bangCalculusSymb-meaningful iff so is $\bangFCtxt<u>$;
		therefore, $t \equiv_{\bangTheoryH*} u$ by definition.

	\item [(Consistency):] The term $\oc x$ is trivially
		\bangCalculusSymb-meaningful, while we have seen in
		\Cref{sec:Meaningfulness} that $\Omega$ is
		\bangCalculusSymb-meaningless. Take $\bangFCtxt = \Hole$, then
		$\bangFCtxt<\oc x> = \oc x$ is \bangCalculusSymb-meaningful
		and $\bangFCtxt<\Omega> = \Omega$ is
		\bangCalculusSymb-meaningless. Therefore, $\oc x
		\not\equiv_{\bangTheoryH*} \Omega$.

	\item [(Maximality):] Let $\mathcal{E}$ be a consistent
		\bangTheory-theory containing $\bangTheoryH$. By
		\Cref{cor:inclusion-in-H*}, $\mathcal{E} \subseteq
		\bangTheoryH*$. Therefore, $\bangTheoryH*$ is maximal among
		the consistent \bangTheory-theories containing $\bangTheoryH$.

	\item [(Uniqueness):] Let $\mathcal{E}$ be a \emph{maximal}
		consistent \bangTheory-theory containing $\bangTheoryH$. By
		\Cref{cor:inclusion-in-H*}, $\mathcal{E} \subseteq
		\bangTheoryH*$. We conclude that $\mathcal{E} = \bangTheoryH*$
		by maximality of $\mathcal{E}$. Therefore, $\bangTheoryH*$ is
		unique among the maximal consistent \bangTheory-theories
		containing $\bangTheoryH$.
		\qedhere
	\end{description}
\end{proof}
} \deliaLu \giulioLu%







\section{Proofs of Section 5}

\subsection{Proofs of Subsection \Cref{subsec:Cbn_Meaningfulness}}

\begin{lemma}
    \label{lem:Cbn_building_testing_ctxt_from_type_ctxt}%
    Let $(\Pi_i \cbnBKRVTr \Gamma_i \vdash t : \sigma_i)_{i \in I}$ with $I$ finite.
    If $\cbnBKRVInhPred{+_{i \in I} \Gamma_i}$ then
    $\cbnBKRVInhPred{\mset{\sigma_i}_{i \in I}}$.
\end{lemma}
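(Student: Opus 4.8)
The plan is to adapt the proof of \Cref{lem:Bang_building_testing_ctxt_from_type_ctxt} to the call-by-name setting; the essential new feature is that we now handle a whole \emph{family} of derivations $(\Pi_i)_{i\in I}$ for the \emph{same} term $t$ but with \emph{different} environments $\Gamma_i$. I would proceed by induction on the number of variables in $\typeCtxtDom{+_{i\in I}\Gamma_i}$, exactly as in the $\bangCalculusSymb$ case.

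For the base case, $+_{i\in I}\Gamma_i = \emptyset$ forces $\Gamma_i = \emptyset$ for every $i\in I$ (since $\uplus_{i\in I}\Gamma_i(x)=\emptymset$ implies each $\Gamma_i(x)=\emptymset$), so $(\Pi_i \cbnBKRVTr \emptyset \vdash t : \sigma_i)_{i\in I}$ already witnesses that the single term $t\in\cbnSetTerms$ inhabits $\mset{\sigma_i}_{i\in I}$; and when $I=\emptyset$ the multitype is $\emptymset$, which is inhabited. For the inductive step, write $+_{i\in I}\Gamma_i = \Gamma', x:\M$ with $\M \neq \emptymset$, and accordingly $\Gamma_i = \Gamma_i', x:\M_i$ with $+_{i\in I}\Gamma_i' = \Gamma'$ and $\uplus_{i\in I}\M_i = \M$. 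From $\cbnBKRVInhPred{\Gamma', x:\M}$ we get both $\cbnBKRVInhPred{\Gamma'}$ (its image is a subset) and $\cbnBKRVInhPred{\M}$; the latter yields a term $w\in\cbnSetTerms$ with $\cbnBKRVTr\;\emptyset\vdash w : \rho$ for every $\rho\in\M$, hence in particular for every $\rho\in\M_i$, for each $i$. For each $i\in I$, apply the $\bangBKRVAbsRuleName$ rule to $\Pi_i$ to obtain $\Gamma_i'\vdash\abs{x}{t} : \M_i\typeArrow\sigma_i$, then the $\bangBKRVAppRuleName$ rule of system $\cbnBKRVTypeSys$ with the family $(\emptyset\vdash w : \rho)_{\rho\in\M_i}$ as right premises, obtaining $\Gamma_i'\vdash\app{(\abs{x}{t})}{w} : \sigma_i$ (the right premises being all typed in $\emptyset$, their sum is $\emptyset$, so the resulting environment is $\Gamma_i'$; this step is degenerate but still valid when $\M_i=\emptymset$, the rule then having no right premises). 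This produces a family $(\Gamma_i'\vdash\app{(\abs{x}{t})}{w} : \sigma_i)_{i\in I}$ for the single term $\app{(\abs{x}{t})}{w}$, with $\cbnBKRVInhPred{+_{i\in I}\Gamma_i'}$, so the induction hypothesis applies and concludes.

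The main obstacle is precisely the interaction between the family of derivations and the shape of the call-by-name application/substitution rule, whose right premise is a \emph{set} of typings for the argument: this is why the hypothesis must be inhabitation of the \emph{sum} $+_{i\in I}\Gamma_i$ rather than of each $\Gamma_i$ separately. The multitype $\M$ of the sum at $x$ contains each $\M_i$ as a sub-multiset, so a single inhabitant $w$ of $\M$ simultaneously supplies, for every $i$, the exact family of derivations $(\emptyset\vdash w:\rho)_{\rho\in\M_i}$ that rule $\bangBKRVAppRuleName$ demands. Everything else (peeling off one variable at a time, the membership of $\Gamma'$-multitypes in the inhabited image, the degenerate empty-premise cases) is routine, mirroring the $\bangCalculusSymb$ proof.
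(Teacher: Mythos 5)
Your proof is correct and follows essentially the same route as the paper's: induction on the number of variables in $\typeCtxtDom{+_{i\in I}\Gamma_i}$, with the base case witnessed directly by the family $(\emptyset \vdash t : \sigma_i)_{i\in I}$ and the inductive step peeling off one variable, using a single inhabitant of the summed multitype $\M$ to supply all the right premises of the \cbnCalculusSymb application rule for each $\M_i \subseteq \M$. The only cosmetic difference is that the paper names the multiset elements $\M_i = \mset{\rho_i^j}_{j\in J_i}$ explicitly, which changes nothing in substance.
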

\stableProof{%
\begin{proof}
Let $(\Pi_i \cbnBKRVTr \Gamma_i \vdash t : \sigma_i)_{i \in I}$ such
that $\cbnBKRVInhPred{+_{i \in I} \Gamma_i}$. We reason by induction
on the number of variables in $\typeCtxtDom{+_{i \in I} \Gamma_i}$:
\begin{itemize}
\item[\bltI] If $\typeCtxtDom{+_{i \in I} \Gamma_i}$ is empty, then $\Gamma_i = \emptyset$ for all $i \in I$. Thus $\Pi_i \cbnBKRVTr \emptyset \vdash t : \sigma_i$ for all $i \in I$ by hypothesis, hence $\cbnBKRVTr \emptyset \vdash \oc t : [\sigma_i]_{i \in I}$ by applying the rule \bangBKRVBgRuleName. Therefore, $\cbnBKRVInhPred{\mset{\sigma_i}_{i \in I}}$.

\item[\bltI] If $\typeCtxtDom{+_{i \in I} \Gamma_i}$ is not empty,
    then every $\Gamma_i = \Gamma'_i, x : \M_i$ (with at least one
    $\M_j \neq  \emptymset$): Since $\cbnBKRVInhPred{+_{i \in I}
    \Gamma_{i \in I}}$, then $\cbnBKRVInhPred{+_{i \in I} \Gamma'_{i
    \in I}}$ and $\cbnBKRVInhPred{\biguplus_{i \in I} \M_i}$. Let $\M_i =
    \mset{\rho_i^j}_{j \in J_i}$, then by definition, there exists
    $(\Pi_i^j \cbnBKRVTr \emptyset \vdash u : \rho_i^j)_{i \in I, j
    \in J_i}$. Let $\Phi_i$ be the following derivation:
    \begin{equation*}
        \begin{prooftree}
            \hypo{\Pi_i \cbnBKRVTr \Gamma'_i, x : \mset{\rho_i^j}_{j \in J_i} \vdash t : \sigma_i}
            \inferBangBKRVAbs{\Gamma'_i \vdash \abs{x}{t} : \mset{\rho_i^j}_{j \in J_i} \typeArrow \sigma_i}
            \hypo{(\Pi_i^j \cbnBKRVTr \emptyset \vdash u : \rho_i^j)_{j \in J_i}}
            \inferBangBKRVEs{\Gamma'_i \vdash \app[\,]{(\abs{x}{t})}{u} : \sigma_i}
        \end{prooftree}
    \end{equation*}
    We thus obtain $(\Phi_i \cbnBKRVTr \Gamma'_i \vdash
    \app[\,]{(\abs{x}{t})}{u} : \sigma_i)_{i \in I}$ and by \ih on the
    number of variables in $\typeCtxtDom{+_{i \in I} \Gamma'_i}$, one concludes that
    $\cbnBKRVInhPred{\mset{\sigma_i}_{i \in I}}$.
    \qedhere
\end{itemize}
\end{proof}

} \deliaLu \giulioLu

\begin{lemma}
    \label{lem:Cbn_O_|-_T<t>_:_[]_==>_Gam_|-_t_:_sig_and_Gam_and_args(sig)_inh}%
    Let $t \in \cbnSetTerms$ and $\cbnTCtxt$ be a testing context. If
    $\cbnBKRVTr\; \emptyset \vdash \cbnTCtxt<t> : \sigma$ with
    $\cbnBKRVInhPred{\cbnTypeArgs{\sigma}}$, then there exist $\Gamma$
    and $\sigma'$ such that $\cbnBKRVTr\; \Gamma \vdash t : \sigma'$
    with $\cbnBKRVInhPred{\Gamma}$ and
    $\cbnBKRVInhPred{\cbnTypeArgs{\sigma'}}$.
\end{lemma}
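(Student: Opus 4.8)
The plan is to mirror, in the \cbnCalculusSymb setting, the proof of the analogous \bangCalculusSymb statement (\Cref{lem:Bang_O_|-_T<t>_:_[]_==>_Gam_|-_t_:_sig_and_Gam_and_args(sig)_inh}), proceeding by induction on the testing context $\cbnTCtxt$. The base case $\cbnTCtxt = \Hole$ is immediate, taking $\Gamma$ and $\sigma'$ to be the given $\emptyset$ and $\sigma$. The two inductive cases are $\cbnTCtxt = \app[\,]{\cbnTCtxt'}{u}$ and $\cbnTCtxt = \app[\,]{(\abs{x}{\cbnTCtxt'})}{u}$; in each, I would invert the typing derivation of $\cbnTCtxt<t>$ at its root. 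Because the application and explicit-substitution rules of system $\cbnBKRVTypeSys$ (\Cref{fig:cbnBKRV_Typing_System_Rules}) have a \emph{family} of right premises $(\Delta_j \vdash u : \tau_j)_{j \in J}$ rather than a single one, the inversion yields $\cbnBKRVTr \Gamma_1 \vdash \cbnTCtxt'<t> : \mset{\tau_j}_{j\in J} \typeArrow \sigma$ (resp.\ with an abstraction) together with the family, where the global environment decomposes as $\Gamma_1 +_{j\in J} \Delta_j = \emptyset$, forcing $\Gamma_1 = \emptyset$ and every $\Delta_j = \emptyset$.

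From there the key move is to feed the family $(\emptyset \vdash u : \tau_j)_{j\in J}$ into \Cref{lem:Cbn_building_testing_ctxt_from_type_ctxt} to conclude $\cbnBKRVInhPred{\mset{\tau_j}_{j\in J}}$; combined with the hypothesis $\cbnBKRVInhPred{\cbnTypeArgs{\sigma}}$, this gives $\cbnBKRVInhPred{\cbnTypeArgs{\mset{\tau_j}_{j\in J}\typeArrow\sigma}}$ in the plain-application case. In the $\lambda$-prefixed case, after the extra abstraction-rule inversion the body is typed under an environment of the form $x : \mset{\tau_j}_{j\in J}$, whose image is exactly $\mset{\tau_j}_{j\in J}$, hence inhabited; so the whole environment of the body judgment is inhabited and its argument-types are those of $\sigma$, which are inhabited by hypothesis. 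In both cases the induction hypothesis applied to $\cbnTCtxt'$ (with the appropriate sub-term $\cbnTCtxt'<t>$, its derivation, its environment, and the type $\mset{\tau_j}_{j\in J}\typeArrow\sigma$ or $\sigma$ according to the case) produces the desired $\Gamma, \sigma'$ with $\cbnBKRVInhPred{\Gamma}$ and $\cbnBKRVInhPred{\cbnTypeArgs{\sigma'}}$.

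As in the \bangCalculusSymb proof, the statement has to be strengthened before running the induction: I would prove the slightly more general claim that for \emph{any} environment $\Delta$ with $\cbnBKRVInhPred{\Delta}$, if $\cbnBKRVTr \Delta \vdash \cbnTCtxt<t> : \sigma$ with $\cbnBKRVInhPred{\cbnTypeArgs{\sigma}}$ then $\cbnBKRVTr \Gamma \vdash t : \sigma'$ with $\cbnBKRVInhPred{\Gamma}$ and $\cbnBKRVInhPred{\cbnTypeArgs{\sigma'}}$; the lemma as stated is the instance $\Delta = \emptyset$ (vacuously inhabited). This is needed because, although the \emph{outer} environment is empty, the recursion must pass through judgments whose environments are no longer empty (e.g.\ $x : \mset{\tau_j}_{j\in J}$ for the body of a $\lambda$-prefix); keeping $\Delta$ general and only assuming it inhabited makes the induction go through cleanly.

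The main obstacle I anticipate is the bookkeeping around the non-binary application/ES rules of \cbnCalculusSymb: one must correctly split the (empty) global environment across the principal premise and the whole \emph{family} of argument premises, and then handle the possibility that the index set $J$ is empty — in which case $\mset{\tau_j}_{j\in J} = \emptymset$, which is trivially inhabited and whose $\cbnTypeArgs{\cdot}$ is empty, so the step still works but must be checked explicitly. Beyond that, the argument is a routine adaptation of the \bangCalculusSymb case; no genuinely new idea is required, only care that \Cref{lem:Cbn_building_testing_ctxt_from_type_ctxt} is invoked on the full family rather than on a single derivation.
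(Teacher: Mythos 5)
Your proposal matches the paper's proof essentially step for step: the same strengthening to an arbitrary inhabited environment, the same induction on the testing context, and the same use of \Cref{lem:Cbn_building_testing_ctxt_from_type_ctxt} applied to the whole family of argument premises to recover inhabitation of $\mset{\tau_j}_{j\in J}$ in both inductive cases. The argument is correct; your extra remark about the empty index set is a detail the paper leaves implicit but which works exactly as you describe.
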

\stableProof{%
\begin{proof}
We prove the stronger statement below, to have the right induction
hypothesis:
\begin{center}
    Let $\cbnBKRVTr \Gamma \vdash \cbnTCtxt<t> : \sigma$ with
    $\cbnBKRVInhPred{\Gamma}$ and
    $\cbnBKRVInhPred{\cbnTypeArgs{\sigma}}$,
    then there exists $\Gamma'$ and $\sigma'$ such that $\cbnBKRVTr \Gamma' \vdash t
    : \sigma'$ with $\cbnBKRVInhPred{\Gamma'}$ and
    $\cbnBKRVInhPred{\cbnTypeArgs{\sigma'}}$.
\end{center}

From that,
\Cref{lem:Cbn_O_|-_T<t>_:_[]_==>_Gam_|-_t_:_sig_and_Gam_and_args(sig)_inh}
follows immediately, because $\cbnBKRVInhPred{\emptyset}$ is vacuously
true.

Let $\Pi \cbnBKRVTr \Gamma \vdash \cbnTCtxt<t> : \sigma$ with
$\cbnBKRVInhPred{\Gamma}$ and $\cbnBKRVInhPred{\cbnTypeArgs{\sigma}}$.
We reason by induction on $\cbnTCtxt$:
\begin{itemize}
\item[\bltI] $\cbnTCtxt = \Hole$: Trivial by hypothesis by taking
    $\Gamma' := \Gamma$ and $\sigma ' := \sigma$.

\item[\bltI] $\cbnTCtxt = \app[\,]{(\abs{x}{\cbnTCtxt'})}{u}$: Then
    $\Pi$ has the following form, where $\Gamma = \Gamma_1 +_{i \in I} \Gamma_2^i$ and $I$ is finite:
    \begin{equation*}
        \begin{prooftree}
            \hypo{\Pi_1 \cbnBKRVTr \Gamma_1, x : \mset{\tau_i}_{i \in I} \vdash \cbnTCtxt'<t> : \sigma}
            \inferCbnBKRVAbs{\Gamma_1 \vdash \abs{x}{\cbnTCtxt'<t>} : \mset{\tau_i}_{i \in i} \typeArrow \sigma}
            \hypo{(\Pi_2^i \cbnBKRVTr \Gamma_2^i \vdash u : \tau_i)_{i \in I}}
            \inferCbnBKRVApp{\Gamma_1 +_{i \in I} \Gamma_2^i \vdash \app[\,]{(\abs{x}{\cbnTCtxt'<t>})}{u} : \sigma}
        \end{prooftree}
    \end{equation*}
   	By hypothesis
    $\cbnBKRVInhPred{\Gamma}$, thus in particular
    $\cbnBKRVInhPred{\Gamma_1}$ and $\cbnBKRVInhPred{+_{\iI}
    \Gamma_2^i}$. Using
    \Cref{lem:Cbn_building_testing_ctxt_from_type_ctxt} on
    $(\Pi_2^i)_{i \in I}$, one deduces that
    $\cbnBKRVInhPred{\mset{\tau_i}_{i \in I}}$ thus
    $\cbnBKRVInhPred{\Gamma_1, x : \mset{\tau_i}_{i \in I}}$. By \ih
    on $\Pi_1$, one obtains $\Pi' \cbnBKRVTr \Gamma' \vdash t :
    \sigma'$ with $\cbnBKRVInhPred{\Gamma'}$ and
    $\cbnBKRVInhPred{\typeArgs{\sigma'}}$.

\item[\bltI] $\cbnTCtxt = \app{\cbnTCtxt'}{u}$: Then $\Pi$ has the
    following form, with $\Gamma = \Gamma_1 +_{i \in I} \Gamma_2^i$:
    \begin{equation*}
        \begin{prooftree}
            \hypo{\Pi_1 \cbnBKRVTr \Gamma_1 \vdash \cbnTCtxt'<t> : \mset{\tau_i}_{i \in I} \typeArrow \sigma}
            \hypo{(\Pi_2^i \cbnBKRVTr \Gamma_2^i \vdash u : \tau_i)_{i \in I}}
            \inferBangBKRVApp{\Gamma_1 +_{i \in I} \Gamma_2^i \vdash \app{\cbnTCtxt'<t>}{u} : \sigma}
        \end{prooftree}
    \end{equation*}
    By hypothesis
    $\cbnBKRVInhPred{\Gamma}$, thus in particular
    $\cbnBKRVInhPred{\Gamma_1}$ and $\cbnBKRVInhPred{+_{i \in I}
    \Gamma_2^i}$. By
    \Cref{lem:Cbn_building_testing_ctxt_from_type_ctxt} on
    $(\Pi_2^i)_{i \in I}$, 
    $\cbnBKRVInhPred{\mset{\tau_i}_{i \in I}}$, and since
    $\cbnBKRVInhPred{\cbnTypeArgs{\sigma}}$ thus
    $\cbnBKRVInhPred{\cbnTypeArgs{\mset{\tau_i}_{i \in I} \typeArrow
    \sigma}}$. By the \ih on $\Pi_1$, one obtains $\Pi' \cbnBKRVTr
    \Gamma' \vdash t : \sigma'$ with $\cbnBKRVInhPred{\Gamma'}$ and
    $\cbnBKRVInhPred{\typeArgs{\sigma'}}$.
    \qedhere
\end{itemize}
\end{proof}

} \deliaLu \giulioLu

\RecCbnMeaningfulnessCharacterizations*
\label{prf:cbnBKRV_characterizes_meaningfulness}
\stableProof{
\begin{proof}
\begin{enumerate}
\item
    See~\cite{barendregt84nh,BucciarelliKR15}.

\item ~ \begin{itemize}
    \item [$(1) \Leftrightarrow (2):$]
        See~\cite{deCarvalho07,KesnerV14,BucciarelliKR21,BucciarelliKesnerRiosViso20}.

    \item[$(3) \Rightarrow (2):$] Trivial since
        $\cbnBKRVTypeSys$-testable requires to be
        $\cbnBKRVTypeSys$-typable.

    \item[$(1) \Rightarrow (3):$] Let $t$ be
        \cbnCalculusSymb-meaningful, then there exists a testing
        context $\cbnTCtxt$ such that $\cbnTCtxt<t> \cbnArr*_S \Id$.
        Notice that $\Id$ is $\cbnBKRVTypeSys$-testable thanks to the
        following derivation:
        \begin{equation*}
            \begin{prooftree}
                \inferCbnBKRVVar{x : \mset{\mset{\alpha} \typeArrow \alpha} \vdash x : \mset{\alpha} \typeArrow \alpha}
                \inferCbnBKRVAbs{\emptyset \vdash \abs{x}{x} : \mset{\mset{\alpha} \typeArrow \alpha} \typeArrow \mset{\alpha} \typeArrow \alpha}
            \end{prooftree}
        \end{equation*}
        Indeed, $\cbnBKRVInhPred{\emptyset}$ and
        $\cbnBKRVInhPred{\genericArgs{\cbnBKRVTypeSys}{\mset{\mset{\alpha}
        \typeArrow \alpha} \typeArrow \mset{\alpha} \typeArrow
        \alpha}}$ since
        $\genericArgs{\cbnBKRVTypeSys}{\mset{\mset{\alpha} \typeArrow
        \alpha} \typeArrow \mset{\alpha} \typeArrow \alpha} =
        \emptyset$. By subject expansion (see
        \cite{BucciarelliKesnerRiosViso20,BucciarelliKesnerRiosViso23}),
        we deduce that $\cbnTCtxt<t>$ is also
        $\cbnBKRVTypeSys$-testable, thus $t$ is
        $\cbnBKRVTypeSys$-testable using
        \Cref{lem:Cbn_O_|-_T<t>_:_[]_==>_Gam_|-_t_:_sig_and_Gam_and_args(sig)_inh}.
        \qedhere
    \end{itemize}
\end{enumerate}
\end{proof}
} \deliaLu \giulioLu%

\subsection{Proofs of Subsection \Cref{subsec:Cbv_Meaningfulness}}

\begin{lemma}
    \label{lem:Cbv_building_testing_ctxt_from_type_ctxt}%
    Let $\Pi \cbvBKRVTr \Gamma \vdash t : \sigma$. If
    $\cbvBKRVInhPred{\Gamma}$ then $\cbvBKRVInhPred{\sigma}$.
\end{lemma}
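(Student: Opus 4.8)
The plan is to mimic the proof of \Cref{lem:Bang_building_testing_ctxt_from_type_ctxt}, but in a more direct way, since here we only need the inhabitation of $\sigma$ and not the construction of a testing context. First I would proceed by induction on the number of variables in $\typeCtxtDom{\Gamma}$.

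In the base case $\Gamma = \emptyset$, the derivation $\Pi$ already has the shape $\cbvBKRVTr \emptyset \vdash t : \sigma$, so $t$ itself witnesses $\cbvBKRVInhPred{\sigma}$ by definition. For the inductive step, pick a variable $x \in \typeCtxtDom{\Gamma}$ and write $\Gamma = \Gamma', x : \M$ with $\M \neq \emptymset$. From the hypothesis $\cbvBKRVInhPred{\Gamma}$ we obtain both $\cbvBKRVInhPred{\Gamma'}$ and $\cbvBKRVInhPred{\M}$; the latter provides a term $u$ together with a derivation $\cbvBKRVTr \emptyset \vdash u : \M$. Applying the rule $\bangBKRVEsRuleName$ of system $\cbvBKRVTypeSys$ to $\Pi$ and this derivation yields $\cbvBKRVTr \Gamma' \vdash t\esub{x}{u} : \sigma$ (note that $\Gamma' + \emptyset = \Gamma'$). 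Since $\typeCtxtDom{\Gamma'}$ contains one fewer variable and $\cbvBKRVInhPred{\Gamma'}$ holds, the induction hypothesis applied to this new derivation gives $\cbvBKRVInhPred{\sigma}$, as required.

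There is no real obstacle here; the reasoning is entirely routine. The only point worth highlighting is that, in \cbvCalculusSymb, a multitype is itself a type, so $\cbvBKRVInhPred{\M}$ directly yields a \emph{single} derivation $\emptyset \vdash u : \M$ that fits the right premise of the explicit substitution rule. This is exactly what makes the \cbvCalculusSymb argument simpler than the \cbnCalculusSymb one, where arguments require a \emph{set} of derivations and the analogue (\Cref{lem:Cbn_building_testing_ctxt_from_type_ctxt}) has to be stated about multitypes of types rather than single types.
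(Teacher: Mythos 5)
Your proof is correct and follows essentially the same strategy as the paper's: induction on the number of variables in $\typeCtxtDom{\Gamma}$, discharging one variable at a time by feeding it an inhabitant of its multitype and invoking the induction hypothesis on the smaller environment. The only difference is cosmetic: the paper discharges $x$ by forming $\app[\,]{(\abs{x}{t})}{u}$ via the abstraction and application rules, whereas you form $t\esub{x}{u}$ via the explicit-substitution rule — both are valid instances of the rules of $\cbvBKRVTypeSys$, and since this lemma (unlike its \bangCalculusSymb analogue) does not need to exhibit a testing context, nothing is lost by your choice.
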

\stableProof{%
    \begin{proof}
Let $\Pi \cbvBKRVTr \Gamma \vdash t : \sigma$. We reason by induction
on the number of variables in $\typeCtxtDom{\Gamma}$:
\begin{itemize}
\item[\bltI] $\Gamma = \emptyset$: Then $\cbvBKRVInhPred{\sigma}$
    holds by hypothesis.

\item[\bltI] $\Gamma = \Gamma', x : \M$ with $\M \neq \emptymset$:
    Since $\cbvBKRVInhPred{\Gamma}$, then $\cbvBKRVInhPred{\Gamma'}$
    and $\cbvBKRVInhPred{\M}$, thus there exists $\Pi_u \cbvBKRVTr
    \emptyset \vdash u : \M$ for some $u \in \cbvSetTerms$. Consider
    the following derivation:
    \begin{equation*}
        \begin{prooftree}
            \hypo{\Pi \cbvBKRVTr \Gamma', x : \M \vdash t : \sigma}
            \inferCbvBKRVAbs{\Gamma' \vdash \abs{x}{t} : \mset{\M \typeArrow \sigma}}
            \hypo{\Pi_u \cbvBKRVTr \emptyset \vdash u : \M}
            \inferCbvBKRVEs{\Gamma' \vdash \app[\,]{(\abs{x}{t})}{u} : \sigma}
        \end{prooftree}
    \end{equation*}
    By the \ih on $\Gamma'$, one concludes that
    $\cbvBKRVInhPred{\sigma}$.
    \qedhere
\end{itemize}
\end{proof}

} \deliaLu \giulioLu%

\begin{lemma}
    \label{lem:Cbv_O_|-_T<t>_:_[]_==>_Gam_|-_t_:_sig_and_Gam_and_args(sig)_inh}%
    Let $t \in \cbvSetTerms$ and $\cbvTCtxt$ be a testing context.
    If $\cbvBKRVTr \emptyset \vdash \cbvTCtxt<t> : \sigma$ with
    $\cbvBKRVInhPred{\sigma}$, then there exist $\Gamma$ and $\sigma'$ such that $\cbvBKRVTr \Gamma \vdash t :
    \sigma'$ with $\cbvBKRVInhPred{\Gamma}$ and
    $\cbvBKRVInhPred{\cbvTypeArgs{\sigma'}}$.
\end{lemma}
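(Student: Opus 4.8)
The plan is to mirror the proofs of \Cref{lem:Bang_O_|-_T<t>_:_[]_==>_Gam_|-_t_:_sig_and_Gam_and_args(sig)_inh} and \Cref{lem:Cbn_O_|-_T<t>_:_[]_==>_Gam_|-_t_:_sig_and_Gam_and_args(sig)_inh}: first establish the following induction-friendly generalisation, and then specialise it.
\begin{center}
If $\cbvBKRVTr \Gamma \vdash \cbvTCtxt<t> : \sigma$ with $\cbvBKRVInhPred{\Gamma}$ and $\cbvBKRVInhPred{\cbvTypeArgs{\sigma}}$, then $\cbvBKRVTr \Gamma' \vdash t : \sigma'$ for some $\Gamma'$, $\sigma'$ with $\cbvBKRVInhPred{\Gamma'}$ and $\cbvBKRVInhPred{\cbvTypeArgs{\sigma'}}$.
\end{center}
The lemma follows by instantiating $\Gamma \coloneqq \emptyset$, which is vacuously inhabited, together with the observation that $\cbvBKRVInhPred{\sigma}$ entails $\cbvBKRVInhPred{\cbvTypeArgs{\sigma}}$: this is immediate when $\sigma$ is a multitype or a type variable, since then $\cbvTypeArgs{\sigma} = \emptyset$, and it is vacuous otherwise, because no closed term is $\cbvBKRVTypeSys$-typable with a proper arrow type (a $\cbvBKRVTypeSys$-typable closed term surface-normalises to a value, and values are typed only with multitypes). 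In the intended use of the lemma $\sigma$ is a multitype anyway, so this point is harmless.

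I would prove the displayed statement by induction on the testing context $\cbvTCtxt$. The case $\cbvTCtxt = \Hole$ is immediate. If $\cbvTCtxt = \app{\cbvTCtxt'}{u}$, inverting rule $\bangBKRVAppRuleName$ of $\cbvBKRVTypeSys$ on $\cbvBKRVTr \Gamma \vdash \app{\cbvTCtxt'<t>}{u} : \sigma$ yields $\cbvBKRVTr \Gamma_1 \vdash \cbvTCtxt'<t> : \mset{\M \typeArrow \sigma}$ and $\cbvBKRVTr \Gamma_2 \vdash u : \M$ with $\Gamma = \Gamma_1 + \Gamma_2$; from $\cbvBKRVInhPred{\Gamma}$ one gets $\cbvBKRVInhPred{\Gamma_1}$, and since $\mset{\M \typeArrow \sigma}$ is a multitype (hence an observable of $\cbvBKRVTypeSys$) one has $\cbvTypeArgs{\mset{\M \typeArrow \sigma}} = \emptyset$, so the induction hypothesis applies to $\cbvTCtxt'$ and concludes. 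If $\cbvTCtxt = \app{(\abs{x}{\cbvTCtxt'})}{u}$, the same inversion followed by inverting the abstraction rule on the singleton multitype $\mset{\M \typeArrow \sigma}$ gives $\cbvBKRVTr \Gamma_1, x : \M \vdash \cbvTCtxt'<t> : \sigma$ and $\cbvBKRVTr \Gamma_2 \vdash u : \M$ with $\Gamma = \Gamma_1 + \Gamma_2$; from $\cbvBKRVInhPred{\Gamma_2}$ and \Cref{lem:Cbv_building_testing_ctxt_from_type_ctxt} one gets $\cbvBKRVInhPred{\M}$, hence $\cbvBKRVInhPred{\Gamma_1, x : \M}$, while $\cbvBKRVInhPred{\cbvTypeArgs{\sigma}}$ holds by hypothesis, so the induction hypothesis applies to $\cbvTCtxt'$ and concludes.

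The calculations are light and parallel the Bang and CBN developments; the point that genuinely requires care is the \emph{formulation} of the induction invariant in terms of $\cbvTypeArgs$ rather than the bare type, so that in the $\app{\cbvTCtxt'}{u}$ case the intermediate singleton multitype $\mset{\M \typeArrow \sigma}$ — being already an observable in $\cbvBKRVTypeSys$ — creates no new argument-inhabitation obligations (this is the $\cbvCalculusSymb$-specific simplification, absent in $\cbnCalculusSymb$ and $\bangCalculusSymb$, where the analogous intermediate type is a proper arrow). The rest is routine bookkeeping of the $+$ on type environments under the two-premise application rule, plus the short vacuity argument recorded above for the $\cbvBKRVInhPred{\sigma} \Rightarrow \cbvBKRVInhPred{\cbvTypeArgs{\sigma}}$ step.
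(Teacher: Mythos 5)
Your proof is correct and follows essentially the same route as the paper's: the same strengthened statement with $\cbvBKRVInhPred{\Gamma}$ and $\cbvBKRVInhPred{\cbvTypeArgs{\sigma}}$ as invariant, the same induction on the testing context, and the same appeal to \Cref{lem:Cbv_building_testing_ctxt_from_type_ctxt} to inhabit $\M$ in the $\app{(\abs{x}{\cbvTCtxt'})}{u}$ case. Your two refinements --- exploiting that the intermediate singleton multitype $\mset{\M \typeArrow \sigma}$ is an observable with empty $\cbvTypeArgs$, and explicitly justifying that $\cbvBKRVInhPred{\sigma}$ entails $\cbvBKRVInhPred{\cbvTypeArgs{\sigma}}$ when specialising to the stated lemma (a step the paper leaves implicit) --- are welcome but do not change the argument.
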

\stableProof{%
    \begin{proof}
We prove the stronger statement below, to have the right induction
hypothesis:
\begin{center}
    Let $\cbvBKRVTr \Gamma \vdash \cbvTCtxt<t> : \sigma$ with
    $\cbvBKRVInhPred{\Gamma}$ and
    $\cbvBKRVInhPred{\cbvTypeArgs{\sigma}}$, then there exist $\Gamma'$ and $\sigma'$ such that $\cbvBKRVTr \Gamma'
    \vdash t : \sigma'$ with $\cbvBKRVInhPred{\Gamma'}$ and
    $\cbvBKRVInhPred{\cbvTypeArgs{\sigma'}}$.
\end{center}

From that,
\Cref{lem:Cbv_O_|-_T<t>_:_[]_==>_Gam_|-_t_:_sig_and_Gam_and_args(sig)_inh}
follows immediately, because $\cbvBKRVInhPred{\emptyset}$ is vacuously
true.

Let $\Pi \cbvBKRVTr \Gamma \vdash \cbvTCtxt<t> : \sigma$ with
$\cbvBKRVInhPred{\Gamma}$ and $\cbvBKRVInhPred{\cbvTypeArgs{\sigma}}$. We
reason by induction on $\cbvTCtxt$:
\begin{itemize}
\item[\bltI] $\cbvTCtxt = \Hole$: Trivial by hypothesis.

\item[\bltI] $\cbvTCtxt = \app[\,]{(\abs{x}{\cbvTCtxt'})}{u}$: Then
    $\Pi$ has the following form:
    \begin{equation*}
        \begin{prooftree}
            \hypo{\Pi_1 \cbvBKRVTr \Gamma_1, x : \M \vdash \cbvTCtxt'<t> : \sigma}
            \inferBangBKRVAbs{\Gamma_1 \vdash \abs{x}{\cbvTCtxt'<t>} : \mset{\M \typeArrow \sigma}}
            \hypo{\Pi_2 \cbvBKRVTr \Gamma_2 \vdash u : \M}
            \inferBangBKRVApp{\Gamma_1 + \Gamma_2 \vdash \app[\,]{(\abs{x}{\cbvTCtxt'<t>})}{u} : \sigma}
        \end{prooftree}
    \end{equation*}
    with $\Gamma = \Gamma_1 + \Gamma_2$. By hypothesis
    $\cbvBKRVInhPred{\Gamma}$, thus in particular
    $\cbvBKRVInhPred{\Gamma_1}$ and $\cbvBKRVInhPred{\Gamma_2}$. Using
    \Cref{lem:Cbv_building_testing_ctxt_from_type_ctxt} on $\Pi_2$,
    one deduces $\cbvBKRVInhPred{\M}$ and thus
    $\cbvBKRVInhPred{\Gamma_1, x : \M}$. By \ih on $\Pi_1$, one
    obtains $\Pi' \cbvBKRVTr \Gamma' \vdash t : \sigma'$ with
    $\cbvBKRVInhPred{\Gamma'}$ and
    $\cbvBKRVInhPred{\cbvTypeArgs{\sigma'}}$.

\item[\bltI] $\cbvTCtxt = \app{\cbvTCtxt'}{u}$: Then $\Pi$ has the
    following form:
    \begin{equation*}
        \begin{prooftree}
            \hypo{\Pi_1 \cbvBKRVTr \Gamma_1 \vdash \cbvTCtxt'<t> : \mset{\M \typeArrow \sigma}}
            \hypo{\Pi_2 \cbvBKRVTr \Gamma_2 \vdash u : \M}
            \inferBangBKRVApp{\Gamma_1 + \Gamma_2 \vdash \app{\cbvTCtxt'<t>}{u} : \sigma}
        \end{prooftree}
    \end{equation*}
    with $\Gamma = \Gamma_1 + \Gamma_2$. By hypothesis
    $\cbvBKRVInhPred{\Gamma}$, thus in particular
    $\cbvBKRVInhPred{\Gamma_1}$ and $\cbvBKRVInhPred{\Gamma_2}$. 
    By \Cref{lem:Cbv_building_testing_ctxt_from_type_ctxt}, $\cbvBKRVInhPred{\M}$.
    As $\cbvBKRVInhPred{\cbvTypeArgs{\sigma}}$ by hypothesis and $\cbvTypeArgs{\M \typeArrow \sigma} = \{\M\} \cup \cbvTypeArgs{\sigma}$, then  $\cbvBKRVInhPred{\cbvTypeArgs{\M \typeArrow \sigma}}$ 
and
    thus, by the \ih on $\Pi_1$, one obtains $\Pi' \cbvBKRVTr \Gamma'
    \vdash t : \sigma'$ with $\cbvBKRVInhPred{\Gamma'}$ and
    $\cbvBKRVInhPred{\cbvTypeArgs{\sigma'}}$.
    \qedhere
\end{itemize}
\end{proof}

} \deliaLu \giulioLu%

\RecCbvMeaningfulnessCharacterizations*
\label{prf:cbvBKRV_characterizes_meaningfulness}%
\stableProof{
\begin{proof} ~
\begin{enumerate}
\item See \cite{AccattoliPaolini12}
\item ~ \begin{itemize}
    \item[$(1) \Leftrightarrow (2):$] See
        \cite{CarraroGuerrieri14,AccattoliGuerrieri22bis}.

    \item[$(3) \Leftarrow (2):$] Trivial since
        $\cbvBKRVTypeSys$-testable require to be
        $\cbvBKRVTypeSys$-typable.

    \item[$(1) \Rightarrow (3):$] Let $t$ be
        \cbnCalculusSymb-meaningful, then there exists a testing
        context $\cbvTCtxt$ and a value $v \in \cbvSetValues$ such
        that $\cbvTCtxt<t> \cbvArr*_S v$. Without loss of generality,
        we can assume that $v = \abs{x}{u}$ for some $u \in
        \cbvSetTerms$. Notice that $v$ is $\cbvBKRVTypeSys$-testable
        thanks to the following derivation:
        \begin{equation*}
            \begin{prooftree}
                \inferCbvBKRVAbs[0]{\emptyset \vdash \abs{x}{u} : \emptymset}
            \end{prooftree}
        \end{equation*}
        By subject expansion (see
        \cite{BucciarelliKesnerRiosViso20,BucciarelliKesnerRiosViso23}),
        we deduce that $\cbvTCtxt<t>$ is also
        $\cbvBKRVTypeSys$-testable, thus $t$ is
        $\cbvBKRVTypeSys$-testable using
        \Cref{lem:Cbv_O_|-_T<t>_:_[]_==>_Gam_|-_t_:_sig_and_Gam_and_args(sig)_inh}.
    \end{itemize}
\end{enumerate}
\end{proof}
} \deliaLu \giulioLu%

\end{document}